\pdfoutput=1
\documentclass[a4paper,UKenglish,cleveref, autoref,thm-restate]{lipics-v2021}
\usepackage[usenames,svgnames,table]{xcolor}
\definecolor{myblue}{rgb}{0.1,0.1,0.5}
\definecolor{mygreen}{rgb}{0.00,0.26,0.15}

\usepackage{xspace}
\usepackage{framed}
\usepackage{xifthen}
\usepackage{tikz}
\usetikzlibrary{calc}
\usepackage{amsmath}
\usepackage{amsthm}
\usepackage{amssymb}
\usepackage{nicefrac}
\usepackage{algorithmic}
\usepackage{algorithm}
\usepackage{multirow}
\usepackage{lineno}
\usepackage{todonotes}

\usepackage[multiple]{footmisc}

\definecolor{shadecolor}{gray}{0.75}

\usepackage{comment}

\usepackage{bm}
\usepackage{apxproof}

\newlength{\RoundedBoxWidth}
\newsavebox{\GrayRoundedBox}
\newenvironment{GrayBox}[1]%
   {\setlength{\RoundedBoxWidth}{.93\textwidth}
    \def\boxheading{#1}
    \begin{lrbox}{\GrayRoundedBox}
       \begin{minipage}{\RoundedBoxWidth}}%
   {   \end{minipage}
    \end{lrbox}
    \begin{center}
    \begin{tikzpicture}%
       \node(Text)[draw=black!20,fill=white,rounded corners,%
             inner sep=2ex,text width=\RoundedBoxWidth]%
             {\usebox{\GrayRoundedBox}};
        \coordinate(x) at (current bounding box.north west);
        \node [draw=white,rectangle,inner sep=3pt,anchor=north west,fill=white] 
        at ($(x)+(6pt,.75em)$) {\boxheading};
    \end{tikzpicture}
    \end{center}}     
\newenvironment{defproblemx}[2][]{\noindent\ignorespaces%
                                \FrameSep=6pt%
                                \parindent=0pt%
                \vspace*{-1.5em}
                \ifthenelse{\isempty{#1}}{%
                  \begin{GrayBox}{\textsc{#2}}%
                }{%
                  \begin{GrayBox}{\textsc{#2}  parameterized by~{#1}}%
                }
                \begin{tabular*}{\textwidth}{@{\hspace{.1em}} >{\itshape} p{1.8cm} p{0.8\textwidth} @{}}%
            }{
                \end{tabular*}%
                \end{GrayBox}%
                \ignorespacesafterend
            }  

\newcommand{\defproblem}[3]{%
  \begin{defproblemx}{#1}
    Input:  & #2 \\
    Task: & #3
  \end{defproblemx}
}%

\newcommand{\FPT}{\textsf{FPT}}

\newcommand{\NPH}{\textsf{NP}-hard}

\newcommand{\card}[1]{\ensuremath{{\vert {#1} \vert }}}
\newcommand{\ca}[1]{\ensuremath{\mathcal{#1} }}
\newcommand{\set}[1]{\ensuremath{\left\{ {#1} \right\}}}
\newcommand{\fn}[3]{\ensuremath{{{#1} : {#2} \rightarrow {#3}}}}

\newcommand{\cO}{\mathcal{O}}

\newcommand{\ceil}[1]{\ensuremath{\left\lceil {#1} \right\rceil}}

\newcommand{\Lb}[1]{\ensuremath{\left({#1}\right)}}

\newcommand{\floor}[1]{\ensuremath{\left\lfloor {#1} \right\rfloor}}

\newtheorem{rrule}{Reduction Rule}

\newtheoremrep{lemmaapx}[lemma]{Lemma}

\newcommand{\poly}{\ensuremath{\text{poly}}}

\mathchardef\mh="2D

\newcommand{\mul}{\ensuremath{\mathtt{mul}}}

\newcommand{\lcomp}{\ensuremath{\Lambda}} %
\newcommand{\scomp}{\ensuremath{\lambda}} %

\newcommand{\spp}{\ensuremath{\mathtt{spp}}} %

\newcommand{\US}{\ensuremath{\mathtt{US}}}%

\newcommand{\cost}{\ensuremath{\mathtt{cost}}} %

\newcommand{\vol}{\ensuremath{\mathtt{vol}}} %

\newcommand{\CS}{\ensuremath{\mathtt{CS}}} %

\newcommand{\CC}{\ensuremath{\mathtt{CC}}} %

\newcommand{\wit}{\ensuremath{\mathtt{wit}}} %

\newcommand{\ndel}{\ensuremath{\#\mathtt{del}}} %

\newcommand{\ncom}{\ensuremath{\#\mathtt{com}}}

\newcommand{\tsum}{\ensuremath{\mathtt{sum}}} %

\newcommand{\binbfull}{{\sc Balls in Bins}}
\newcommand{\binb}{{\sc B-in-B}}

\newcommand{\bccfull}{{\sc Balanced Cluster Completion}}
\newcommand{\bcc}{{\sc BCC}}

\newcommand{\countbccfull}{{\sc \#Balanced Cluster Completion}}
\newcommand{\countbcc}{{\sc \#BCC}}

\newcommand{\countbcd}{{\sc \#BCD}}

\newcommand{\nsol}{{\# \mathtt{sol}}}

\newcommand{\countbce}{{\sc \#BCE}}

\newcommand{\crbccfull}{{\sc Colour-Restricted Balanced Cluster Completion}}
\newcommand{\crbcc}{{\sc CRBCC}}

\newcommand{\countcrbcc}{{\sc \#CRBCC}}

\newcommand{\cc}{$c$-closed}

\newcommand{\bcdfull}{{\sc Balanced Cluster Deletion}}
\newcommand{\bcd}{{\sc BCD}}

\newcommand{\bcefull}{{\sc Balanced Cluster Editing}}
\newcommand{\bce}{{\sc BCE}}

\newcommand{\cccdfull}{{\sc Cardinality-Constrained Cluster Deletion}}
\newcommand{\cccd}{{\sc CCCD}}

\newcommand{\algobcc}{{\textsf{Algo-BCC}}} %
\newcommand{\algobccruntime}{\ensuremath{2^{\cO(k)} n^{\cO(1)}}} %

\newcommand{\countbceruntime}{\ensuremath{2^{\cO(k \log k)} n^{\cO(1)}}} %

\newcommand{\colourruntime}{\ensuremath{2^{\cO(k)} (q + 1)^{\cO(k)} n^{\cO(1)}}}

\newcommand{\bralgobcc}{{\textsf{Branch-Algo-BCC}}}

\newcommand{\algobcec}{{\textsf{Algo-BCE-C}}}

\newcommand{\algocccd}{{\textsf{Algo-CCCD}}}%
\newcommand{\algobcdruntime}{\ensuremath{2^{k} n^{\cO(1)}}} %

\newcommand{\fastalgobcec}{{\textsf{Fast-Algo-BCE-C}}}

\newcommand{\algobinbruntime}{\ensuremath{\cO(2^{s} \cdot s^3 \cdot t \cdot W^2)}} %

\newcommand{\fastalgobcc}{{\textsf{Fast-Algo-BCC}}} %
\newcommand{\fastbccruntime}{\ensuremath{4^{k + o(k)} n^{\cO(1)}}} %

\newcommand{\annocmfull}{{\sc Annotated  Cluster Modification}}
\newcommand{\annocm}{{\sc Anno-CM}}

\hideLIPIcs
\nolinenumbers %

\bibliographystyle{plainurl}%

\title{Parameterized Algorithms for Balanced Cluster Edge Modification Problems} %

\titlerunning{Parameterized Algorithms for Balanced Cluster Edge Modification Problems} %

\author{Jayakrishnan Madathil}{University of Glasgow, UK}{jayakrishnan.madathil@glasgow.ac.uk}{}{Supported by the Engineering and Physical Sciences Research Council [EP/V032305/1].}

\author{Kitty Meeks}{University of Glasgow, UK}{kitty.meeks@glasgow.ac.uk}{}{Supported by the Engineering and Physical Sciences Research Council [EP/V032305/1].}

\authorrunning{J. Madathil and K. Meeks} %

\Copyright{Jayakrishnan Madathil and Kitty Meeks} %

\ccsdesc[100]{Theory of computation~Design and analysis of algorithms~Parameterized complexity and exact algorithms} %

\keywords{graph-based clustering, balanced clustering, graph modification, cluster editing, polynomial kernel, \FPT\ algorithms, parameterized complexity} %

\category{} %

\relatedversion{} %

\acknowledgements{We thank Jessica Enright for helpful discussions on the problems studied in this paper.}%

\EventEditors{John Q. Open and Joan R. Access}
\EventNoEds{2}
\EventLongTitle{42nd Conference on Very Important Topics (CVIT 2016)}
\EventShortTitle{CVIT 2016}
\EventAcronym{CVIT}
\EventYear{2016}
\EventDate{December 24--27, 2016}
\EventLocation{Little Whinging, United Kingdom}
\EventLogo{}
\SeriesVolume{42}
\ArticleNo{23}

\begin{document}

\maketitle

\begin{abstract}
We study {\sc Cluster Edge Modification} problems with constraints on the size of the clusters. A graph $G$ is a cluster graph if every connected component of $G$ is a clique. In a typical {\sc Cluster Edge Modification} problem such as the widely studied {\sc Cluster Editing}, we are given a graph $G$ and a non-negative integer $k$ as input, and we have to decide if we can turn $G$ into a cluster graph by way of at most $k$ edge modifications---that is, by adding or deleting edges. In this paper, we study the parameterized complexity of such problems, but with an additional constraint: The size difference between any two connected components of the resulting cluster graph should not exceed a given threshold. Depending on which modifications are permissible---only adding edges, only deleting edges, both adding and deleting edges---we have three different computational problems. We show that all three problems, when parameterized by $k$, admit single-exponential time \FPT\ algorithms and polynomial kernels. Our problems may be thought of as the size-constrained or balanced counterparts of the typical {\sc Cluster Edge Modification} problems, similar to the well-studied size-constrained or balanced counterparts of other clustering problems such as {\sc $k$-Means Clustering}. 
\end{abstract}

\section{Introduction}
\label{sec:intro}

The term clustering refers to a broad range of methods that aim to group data points into subgroups---or clusters, as they are often called---so that points within a cluster are ``close'' or ``similar'' to one another and points in different clusters are ``distant'' or ``dissimilar.'' 
See the surveys by Ezugwu et al.~\cite{DBLP:journals/eaai/EzugwuIOAAEA22} and by Xu and Tian~\cite{xu2015comprehensive} for an overview of various clustering models and algorithms. 
Applications of clustering in domains such as circuit design~\cite{DBLP:journals/tcad/HagenK92}, content-based image searching~\cite{althoff2011balanced}, load-balancing in wireless networks~\cite{liao2012load,shang2010energy} etc. require clusters of reasonable sizes. %
And while popular clustering methods such as {\sc $k$-means} explicitly prescribe the number of clusters (the $k$ in {\sc $k$-means} stands for grouping into $k$ clusters), there is no guarantee for the sizes of the clusters, which may result in clusters with too few data points. This prompted the study of ``size-constrained'' clustering that seeks to ensure size guarantees for the clusters, and in particular,  ``balanced'' clustering that requires all clusters to be of roughly the same size; see, for example,~\cite{bradley2000constrained,DBLP:conf/ijcai/LinHX19,DBLP:conf/sspr/MalinenF14} for results on size-constrained or balanced variants of {\sc $k$-Means} clustering. In this paper, we focus on graph-based clustering  with balance constraints. %

{\sc Cluster Editing}, also called {\sc Correlation Clustering}, is the canonical  graph-based clustering problem. %
A cluster graph is one in which every connected component is a clique, and {\sc Cluster Editing} asks if a given graph can be turned into a cluster graph by adding and deleting at most $k$ edges. The problem and its several variants have been studied extensively, resulting in a large volume of literature, including approximation algorithms~\cite{DBLP:journals/jacm/AilonCN08,DBLP:journals/mp/AprileDFH23,DBLP:journals/ml/BansalBC04,DBLP:journals/dam/ShamirST04}, parameterized algorithms~\cite{DBLP:journals/jda/Bocker12,DBLP:conf/cie/BockerB13,DBLP:conf/iwpec/CaoC10,DBLP:journals/jcss/ChenM12,DBLP:journals/jcss/FominKPPV14,DBLP:journals/mst/Tsur21} and heuristics~\cite{DBLP:journals/jcb/Ben-DorSY99,rahmann2007exact,wittkop2007large}. Notice that the problem imposes no restriction on the number of clusters or the size of each cluster. While variants of {\sc Cluster Editing} with constraints on the number of clusters have previously been studied~\cite{DBLP:journals/jcss/FominKPPV14,DBLP:journals/tcs/Guo09,DBLP:journals/dam/ShamirST04}, variants involving size or balance constraints have not received much attention, barring a couple of exceptions~\cite{DBLP:journals/jda/Abu-Khzam17,steinvik2020kernelization}. In particular, Steinvik~\cite{steinvik2020kernelization} studied variants of {\sc Cluster Modification} problems with balance constraints that require the size difference between the components of the cluster graph to be bounded. %
We too study {\sc Cluster Edge Modification} with a balance constraint. In particular, we define appropriate problems in this direction and design efficient parameterized algorithms and kernels (parameterized by the number of modified edges). %

\subsubsection*{Our Contribution}
To formally define our problems, we introduce the following definition. %
For a non-negative integer $\eta$, we say that a graph $G$ is \emph{$\eta$-balanced} if the (additive) size difference between any two connected components of $G$ is at most $\eta$; by the size of a connected component, we mean the number of vertices in that component. Notice that the definition is equivalent to saying that the size difference between a largest connected component and a smallest connected component is at most $\eta$.  %
Notice also that the definition of $\eta$-balanced graphs does not place any restriction on the number of components. For example, a connected graph is $\eta$-balanced for every $\eta \geq 0$; so is an edge-less graph consisting of only isolated vertices.

We study problems of the following type. Given a graph $G$ and non-negative integers $k$ and $\eta$, is it possible to turn $G$ into an $\eta$-balanced  cluster graph by way of at most $k$ edge modifications? We have three different computational problems depending on which modifications are permissible---only edge additions, only edge deletions, both edge additions and deletions. We formally define the problem corresponding to edge addition as follows.

\defproblem{\bccfull\ (\bcc)}{An $n$-vertex graph $G$ and non-negative integers $k$ and $\eta$.}{Decide if it is possible to add at most $k$ edges to $G$ so that the resulting graph is an $\eta$-balanced cluster graph.}
We define the problems \bcdfull\ (\bcd), in which only edge deletions are permitted, and \bcefull\ (\bce), in which both edge deletions and additions are permitted, analogously. Steinvik, in his master's thesis~\cite{steinvik2020kernelization}, considered a closely related variant of each of these three problems, where $\eta$ is fixed as part of the problem definition rather than given as part of the input, and showed that the problems admit kernels of size $\poly(k) + \eta$. We discuss this work in more detail later. 

We often refer to our three problems, i.e., \bcc, \bcd\ and \bce, as the completion, deletion or editing versions, respectively, and we refer to the requirement that the resulting cluster graph be $\eta$-balanced as the balance constraint. 
We also study the natural counting variant of each of these three problems, in which we are interested in computing the number of solutions rather than whether or not a solution exists. For example, in the counting variant of \bcc, we are given $G, k$ and $\eta$ as input, and we need to output the number of sets $F$ of at most $k$ edges such that adding these edges to $G$ would result in an $\eta$-balanced cluster graph. The counting variants of \bcd\ and \bce\ are defined analogously. 

Notice that each of the three problems---\bcc, \bcd\ and \bce---generalises its respective counterpart without the balance constraint; we can simply take $\eta = n$, where $n$ is the number of vertices in the input graph, as any $n$-vertex graph is $n$-balanced. As a consequence, we can immediately conclude that  \bcd\ and \bce\ (i.e., the deletion and editing versions) are \NPH; these follow  respectively from the \NPH ness of {\sc Cluster Deletion} and {\sc Cluster Editing} (i.e., the deletion and editing versions without the balance constraint)~\cite{DBLP:journals/dam/ShamirST04}. The case of \bcc, however, is different, for without the balance constraint, the completion version is polynomial-time solvable: we simply need to add edges until each connected component becomes a clique. But we observe that the \NPH ness of \bcc\ follows from a hardness result due to Froese et al.~\cite{DBLP:conf/aaai/FroeseKN22};\footnote{Froese et al.~\cite{DBLP:conf/aaai/FroeseKN22} showed that the following problem, called {\sc Cluster Transformation by Edge Addition}, is \NPH. We are given a cluster graph $G$ and a non-negative integer $k$, and we have to decide if we can add exactly $k$ edges to $G$ so that the resulting graph is also a cluster graph. The reduction that shows the \NPH ness of this problem~\cite[Theorem 6]{DBLP:conf/aaai/FroeseKN22} also shows the \NPH ness of \bcc. We outline a proof of this fact in Appendix~\ref{sec:completion-nph}.}\textsuperscript{,}\footnote{In fact,   Steinvik~\cite{steinvik2020kernelization} proved a stronger result that for each $\eta \geq 0$, the corresponding $\eta$-\bcc\ problem (in which we are given $G$ and $k$ as input, and we want to decide if we can turn $G$ into an $\eta$-balanced cluster graph by adding at most $k$ edges) is \NPH.}  %
Thus, all three problems---\bcc, \bcd \ and \bce---are \NPH.

\subparagraph*{Our Results.} We show that all three problems---\bcc, \bcd\ and \bce---when parameterized by $k$, admit \FPT\ algorithms and polynomial kernels. Specifically, we prove the following results. 

\begin{enumerate}
    \item \bcc\ admits a \algobcdruntime\ time algorithm and a kernel with  $10k$ vertices. 
    \item \bcd\ admits a \algobcdruntime\ time  algorithm and a kernel with  $\cO(k^4)$ vertices.\footnote{Using arguments that are nearly identical to the ones in our kernel for \bce, it is also possible to design a kernel for \bcd\ with $\cO(k^3)$ vertices. But we choose to present a kernel with $\cO(k^4)$ vertices as it involves new and simple  arguments. In particular, our arguments yield a simple kernel with $\cO(k^4)$ vertices for {\sc Cluster Deletion} (i.e., the deletion version without the balance constraint), which might be of independent interest. But these arguments work only for the deletion version and do not extend to the editing version.} 
    \item \bce\ admits a \algobccruntime\ time algorithm and a kernel with $\cO(k^3)$ vertices.
\end{enumerate}

The \FPT\ algorithm for \bcc\ is a simple branching procedure that runs in time \algobcdruntime. We also design another algorithm for \bcc\ that runs in time \algobccruntime; although the big-Oh in the $2^{\cO(k)}$ factor hides a fairly large constant, this algorithm makes use of a  correspondence between solutions for \bcc\ and integer partitions. This correspondence turns out to be quite useful, as the same idea   extends easily to \bce. Not just that, we show that we can use the same idea (i.e., the correspondence between solutions and integer partitions) to yield the following two sets of results. %

\begin{enumerate}
\setcounter{enumi}{3}
\item \bcc\ and \bce\ admit algorithms that run in time \fastbccruntime; these algorithms rely on fast polynomial multiplication. 
\item The counting variant of \bcc\ admits an algorithm that runs in time \algobccruntime, and the counting variants of \bcd\ and \bce\ admit algorithms that run in time \countbceruntime. 
\end{enumerate}

Our results add to a long line of work on the parameterized complexity of graph modification problems (see~\cite{DBLP:journals/csr/CrespelleDFG23}), and specifically on {\sc Cluster Editing} and its variants~\cite{DBLP:journals/jda/Bocker12,DBLP:conf/cie/BockerB13,DBLP:conf/apbc/BockerBBT08,DBLP:conf/iwpec/CaoC10,DBLP:conf/iwpec/Fellows06,DBLP:conf/fct/FellowsLRS07,DBLP:journals/jcss/FominKPPV14,DBLP:journals/mst/GrammGHN05,DBLP:journals/tcs/Guo09,DBLP:journals/mst/ProttiSS09}. In particular, after a series of improvements~\cite{DBLP:journals/algorithmica/GrammGHN04,DBLP:journals/mst/GrammGHN05,DBLP:conf/apbc/BockerBBT08,DBLP:conf/cie/BockerB13}, the current fastest parameterized algorithm for {\sc Cluster Editing} runs in time $\cO(1.62^k + n + m)$, where $n$ and $m$ respectively are the number of vertices and edges in the input graph~\cite{DBLP:journals/jda/Bocker12}, and the current smallest kernel, again, after a series of improvements~\cite{DBLP:journals/mst/GrammGHN05,DBLP:conf/iwpec/Fellows06,DBLP:conf/fct/FellowsLRS07,DBLP:journals/tcs/Guo09,DBLP:journals/mst/ProttiSS09}, has $2k$ vertices~\cite{DBLP:conf/iwpec/CaoC10,DBLP:journals/jcss/ChenM12}. We must also emphasise  that parameterized algorithms  for {\sc Cluster Editing} and its variants are not only theoretically significant, but have been found to be effective in practice, particularly in the clustering of biological data~\cite{morris2011clustermaker,wittkop2010partitioning}. Also, experimental evaluations indicate that pre-processing methods inspired by kernelization algorithms for {\sc Cluster Editing} are useful in practice~\cite{DBLP:conf/cie/BockerB13,DBLP:journals/algorithmica/BockerBK11,DBLP:conf/iwpec/KellerhalsKNZ21}. 

\paragraph*{Overview of our arguments}
 Before we discuss some of our arguments, we first highlight some of the challenges that our problems pose, when compared to a typical graph modification problem such as {\sc Cluster Editing}. %
 To begin with, for any fixed $\eta \geq 0$, the class of $\eta$-balanced graphs is not hereditary; a class of graphs is hereditary if it is closed under vertex deletions.  
 For example, a graph with exactly two equal-sized components is $0$-balanced, but deleting one vertex will destroy the $0$-balanced property. This poses an immediate difficulty in  designing kernels for our problems, as we cannot delete any ``irrelevant'' vertex; for example, when designing a kernel for {\sc Cluster Editing}, we can delete any component of the input graph that is a clique (because no edge of such a component needs to be modified), but that option is not available to us. Also, unlike {\sc Cluster Editing} and many other graph modification problems, for any of the three problems that we study, we cannot enumerate all minimal solutions of size at most $k$ in \FPT\ time as the number of such solutions may not be bounded by a function of $k$; see Example~\ref{eg:minimal}.  
\begin{example}
\label{eg:minimal}
Consider \bcc. For fixed $k$, we  construct a yes-instance $(G, k, \eta)$ of \bcc\ as follows. Let $k \in \mathbb{N}$ be a perfect square. Take $\eta = 1$ and $n$ to be a sufficiently large integer so that $(n - \sqrt{k})/{\sqrt{k}}$ is an integer. We define an $n$-vertex cluster graph $G$ as follows: $G$ has exactly $\sqrt{k}$ isolated vertices and exactly $(n - \sqrt{k})/{\sqrt{k}}$ components of size $\sqrt{k}$. Notice that to turn $G$ into an $\eta$-balanced cluster graph by adding at most $k$ edges, we must either merge all the isolated vertices into a single component by adding $\binom{\sqrt{k}}{2}$ edges, or merge each isolated vertex with a component of size $\sqrt{k}$, which requires the addition of $\sqrt{k} \cdot \sqrt{k} = k$ edges. For the latter option, notice that we have $\binom{(n - \sqrt{k})/{\sqrt{k}}}{\sqrt{k}} \cdot \sqrt{k}! = n^{\Omega(\sqrt{k})}$ ways in which we can form a solution.  \end{example}

Despite the challenges noted above, we are still able to make use of arguments from the existing literature. For example, we use the familiar strategy of branching on induced paths on $3$ vertices in our \FPT\ algorithms for \bcd\ and \bce. And we use a small-sized modulator---a set of vertices whose removal will turn the input graph into a cluster graph---to design our kernel for \bce. We also crucially rely on the fact that by  modifying at most $k$ (non-)edges, we modify the adjacencies of at most $2k$ vertices. But these arguments alone are insufficient for our problems. We now briefly outline some of the other  arguments that we use. 

 \subparagraph*{BCC is a problem about numbers.} The one idea that we invoke throughout this paper is this: \bcc, at its core, is a problem about numbers. To make our meaning clear, consider an instance $(G, k, \eta)$ of \bcc. First of all, we can assume that the input graph $G$ is a cluster graph; if any component of $G$ is not a clique, then we must necessarily add all the ``missing edges'' so that the component becomes a clique. But then, as $G$ is already a cluster graph, we only need to add edges to $G$ to satisfy the balance constraint (and ensure that the resulting graph is a cluster graph). Now, consider two connected components $H$ and $H'$ of $G$; again, $H$ and $H'$ are cliques. Notice then that in any solution for \bcc, either we do not add any edges between $H$ and $H'$, or we add all edges between $H$ and $H'$. That is, in the latter case, we merge $H$ and $H'$ together by adding all possible edges between them, and this merging requires the addition of exactly $\card{H} \card{H'}$ edges, where $\card{H}$ and $\card{H'}$ respectively the number of vertices in $H$ and $H'$. So solving an instance of \bcc\ amounts to deciding which components of $G$ are merged with which other components. In fact, we need not even decide which components are involved in the  merging, and instead we only need to decide the sizes of the components involved in the merging. We apply this idea to \bcd\ and \bce\ as well, after appropriately reducing these problems to problems on cluster graphs. In the next paragraph, we outline how we use this idea to design \FPT\ algorithms. 

 \subparagraph*{Enumerating all representative solutions using integer partitions.} Notice that our \FPT\ algorithms (for the decision variant of the problems) run in single-exponential time, i.e., in time \algobccruntime.  %
  As noted above (Example~\ref{eg:minimal}), we cannot enumerate all minimal solutions of size at most $k$ in \FPT\ time, let alone in time \algobccruntime. But we argue that we can still enumerate all ``representative solutions'' of size at most $k$ in time \algobccruntime. To do this, we use the idea in the preceding paragraph that only the sizes of the components are relevant; we also use the fact that we only modify the adjacencies of at most $2k$ vertices if $(G, k, \eta)$ is a yes-instance. Use these two observations, we associate every solution with a set of integers $\ell_1, \ell_2,\ldots, \ell_t$ and a partition of $\ell_i$ for every $i$; we will have the guarantee that $\ell_1 + \ell_2 + \cdots + \ell_t \leq 2k$. Each  $\ell_i$ corresponds to a component of the $\eta$-balanced cluster graph, and  a partition of $\ell_i$ corresponds to the sizes of the components of $G$ that were merged together to form a component of size $\ell_i$.  By enumerating all possible choices for $\ell_1, \ell_2,\ldots, \ell_t$ and their   partitions, we can enumerate all representative solutions. To bound the running time, we then use a result due to Hardy and Ramanujan~\cite{hardy1918asymptotic} that says that the number of partitions of an integer $\ell$ is $2^{\cO(\sqrt{\ell})}$. 

  \subparagraph*{Faster algorithms using fast polynomial multiplication.} To design our algorithms for \bcc\ and \bce\ that run in time \fastbccruntime, we show that solving these problems amounts to solving $2^{o(k)}$ many instances of an assignment problem, where we have to assign balls of different sizes to bins of different capacities subject to a cost constraint. And we show that we can encode the solutions for this assignment problem as a polynomial so that solving the problem amounts to computing a  polynomial, for which we use the fact that we can multiply two polynomials of degree $d$ in time $\cO(d \log d)$~\cite{DBLP:conf/issac/Moenck76}.

 \subparagraph*{Kernel for BCD  based on the bound for Ramsey number of $c$-closed graphs.} For \bcd, while we can design a kernel with $\cO(k^3)$ vertices, we present a kernel with $\cO(k^4)$ vertices instead, as this kernel uses an elegant and completely new set of arguments that rely on the recently introduced class of $c$-closed graphs~\cite{DBLP:journals/siamcomp/FoxRSWW20}. For a positive integer $c$, a graph $G$ is $c$-closed if any two distinct non-adjacent vertices in $G$ have at most $c - 1$ common neighbours. We argue that if $(G, k, \eta)$ is a yes-instance of \bcd, then any two non-adjacent vertices of $G$ have at most $k$ common neighbours, and thus $G$ is $(k + 1)$-closed. We then fashion a reduction based on a polynomial bound for the Ramsey number of $c$-closed graphs~\cite{DBLP:journals/siamdm/KoanaKS22} to derive our kernel. 

\subparagraph*{Polynomial-time algorithm for \bcd\ on cluster graphs.} As noted above, \bcc\ is \NPH, even when the input graph is a cluster graph. Interestingly, we show that \bcd\ is polynomial-time solvable when the input graph is a cluster graph. In fact, we solve a more general variant of the problem, where we want to delete edges from a graph $G$ so that every component of the resulting cluster graph has size at least $\gamma_1$ and at most $\gamma_2$, for given $\gamma_1, \gamma_2 \geq 1$. 
Our algorithm, again, relies on the idea that when the graph is already a cluster graph, the problem is essentially one about numbers.

\subsubsection*{Related Work} 

We first discuss the work of Steinvik~\cite{steinvik2020kernelization} that has some overlap with our work. 
\subparagraph*{The master's thesis of Andreas Steinvik.} Stenvik~\cite{steinvik2020kernelization} studied the kernelization complexity of nearly identical problems as we do, but with one difference: The balance constraint $\eta$ is fixed as part of the problem definition, rather than given as part of the input. Thus for each value of $\eta$, there is a different problem. The thesis also considered the related set of problems where the requirement is that the multiplicative difference rather than the additive difference between the components be at most $\eta$. And also the {\sc Cluster Vertex Deletion} problem with  (additive and multiplicative) balance constraints. Steinvik designed  polynomial kernels for all these problems. But as $\eta$ is a fixed-constant, the kernel sizes depend on both $k$ and $\eta$. In particular, Steinvik's kernel for \bcc\ has size $\cO(k) + \eta$, and the kernels for \bcd\ and \bce\ have sizes $\cO(k^2) + \eta$. While these kernels imply \FPT\ algorithms with a $k^{\cO(k + \eta)}$ running time, the thesis explicitly asked whether the problems admit single-exponential \FPT\ algorithms; our \FPT\ algorithms thus settle this question. We must also add that our kernel for \bcc\ uses arguments that are quite similar to the ones in \cite{steinvik2020kernelization}, but we need additional  arguments, as $\eta$ is not a fixed constant in our problem.  
To the best of our knowledge, the thesis of Steinvik~\cite{steinvik2020kernelization} is the only work that considered {\sc Cluster Editing} with balance constraints, and we came to know about this thesis only after the conclusion of our work.

\subparagraph*{Graph-based clustering in the parameterized complexity framework.} {\sc Cluster Editing} was introduced by Ben-Dor et al.~\cite{DBLP:journals/jcb/Ben-DorSY99} and by Bansal et al.~\cite{DBLP:journals/ml/BansalBC04}. The problem is known to be \NPH~\cite{DBLP:journals/dam/ShamirST04}. As for the parameterized complexity of {\sc Cluster Editing}, it is textbook knowledge that the problem admits an algorithm that runs in time $3^k \cdot n^{\cO(1)}$ and a kernel with $\cO(k^3)$ vertices. As mentioned earlier, there has been a steady series of improvements~\cite{DBLP:journals/algorithmica/GrammGHN04,DBLP:journals/mst/GrammGHN05,DBLP:conf/apbc/BockerBBT08,DBLP:conf/cie/BockerB13,DBLP:conf/iwpec/Fellows06,DBLP:conf/fct/FellowsLRS07,DBLP:journals/tcs/Guo09,DBLP:journals/mst/ProttiSS09}, with the current best being a $\cO(1.62^k + n + m)$ time algorithm~\cite{DBLP:journals/jda/Bocker12} and a $2k$-vertex kernel~\cite{DBLP:conf/iwpec/CaoC10,DBLP:journals/jcss/ChenM12}. Even though {\sc Cluster Editing} was conjectured to be solvable in subexponential time~\cite{DBLP:conf/iwpec/CaoC10}, this was later disproved (under the Exponential Time Hypothesis) by Komusiewicz and Uhlmann~\cite{DBLP:journals/dam/KomusiewiczU12}. On  the approximation algorithms front, the problem is known to be {\sf APX}-hard~\cite{DBLP:journals/jcss/CharikarGW05}, but  constant factor approximation algorithms were obtained by Ailon et al.~\cite{DBLP:journals/jacm/AilonCN08}, Bansal et al.~\cite{DBLP:journals/ml/BansalBC04} and Charikar et al.~\cite{DBLP:journals/jcss/CharikarGW05}. 

The variant of {\sc Cluster Editing} with the additional requirement that there be $p$ components in the resulting cluster graph has also been studied in the literature. This problem was shown to be \NPH\ for every fixed $p \geq 2$ by Shamir et al.~\cite{DBLP:journals/dam/ShamirST04}, and a kernel with $(p + 2)k + p$ vertices was given by Guo~\cite{DBLP:journals/tcs/Guo09}. Fomin et al.~\cite{DBLP:journals/jcss/FominKPPV14} showed that the problem admits a sub-exponential algorithm running in time $\cO(2^{\cO(\sqrt{pk)}} + n + m)$ and that for $p \geq 6$, no algorithm running in time $2^{o(\sqrt{k})}$ exists unless the Exponential Time Hypothesis fails. Froese et al.~\cite{DBLP:conf/aaai/FroeseKN22} studied {\sc Cluster Editing} under a fairness constraint: Roughly speaking, there are two kinds of vertices, red and blue, and for each color we require that the number of vertices of that color involved in the edge edits must be almost proportional to the total number of vertices of that color. Froese et al.~~\cite{DBLP:conf/aaai/FroeseKN22} showed that this problem is \NPH\ even under various restricted settings, but admits an algorithm that runs in time $2^{\cO(k \log k)}$.

Abu-Khzam~\cite{DBLP:journals/jda/Abu-Khzam17} introduced a variant of {\sc Cluster Editing} with multiple optimisation objectives, including the total number of edge edits, the number of edge additions and deletions incident with each vertex and a lower bound for the size of each connected component of the resulting cluster graph.     %
This problem is \NPH\ even under restricted settings in which the only constraints are on the number of edge additions and deletions incident with each vertex. This special case has been studied by Komusiewicz and Uhlmann~\cite{DBLP:journals/dam/KomusiewiczU12} and Gutin and Yeo~\cite{DBLP:journals/dam/GutinY23} as well.

Lokshtanov and Marx~\cite{DBLP:journals/iandc/LokshtanovM13} introduced graph-based clustering problems where each ``cluster'' has to satisfy local restrictions. They considered three main variants that ask if the vertices of a graph can be partitioned in such a way that at most $q$ edges leave each part and (a) the size of each part is at most $p$, or (b) the number of non-edges in each part is at most $p$, or (c) the maximum number of non-neighbors a vertex has in its part is at most $p$. Lokshtanov and Marx~\cite{DBLP:journals/iandc/LokshtanovM13} showed that all these variants admit algorithms that run in time $2^{\cO(p)} n^{\cO(1)}$ and in time $2^{\cO(q)} n^{\cO(1)}$. %

\subparagraph*{Clustering in the parameterized complexity framework.}  Apart from the graph-based clustering results mentioned above, there have also been several works that explore the parameterized complexity of various clustering problems. Recent examples include \FPT-approximation algorithms for {\sc $k$-Means Clustering with Outliers}~\cite{DBLP:conf/aaai/Agrawal00023}, parameterized algorithms for {\sc Categorical Clustering with Size Constraints}~\cite{DBLP:journals/jcss/FominGP23} and lossy kernels for {\sc Same Size Clustering}~\cite{DBLP:journals/mst/BandyapadhyayFGPS23}; in particular, \cite{DBLP:journals/mst/BandyapadhyayFGPS23} deals with {\sc $k$-Median} clustering with the additional requirement that the clusters be of equal size.

\paragraph*{Organisation of the paper} %
The rest of the paper is organised as follows. In Section~\ref{sec:prelims}, we introduce terminology and notation   that we will use throughout the paper. 
In Section~\ref{sec:kernel}, we discuss our kernelization results, with a subsection dedicated to each of the three problems. Section 4 consists of our \FPT\ algorithms for the decision versions of the three problems. And in Section~\ref{sec:counting}, we present our \FPT\ algorithms for the counting variants of the problems. In Section~\ref{sec:conclusion}, we conclude with some pointers for future work. 

\section{Preliminaries}
\label{sec:prelims}
For $n \in \mathbb{N}$, we use $[n]$ to denote the set $\set{1, 2,\ldots, n}$, and $[n]_0 = [n] \cup \set{0}$. For integers $a$ and $b$ with $a \leq b$, we use $[a, b]$ to denote the set $\set{a, a + 1, a + 2,\ldots, b}$. For a multiset $X$ and $x \in X$, we denote the multiplicity of $x$ in $X$ by $\mul(x, X)$.

For a graph $G$, we use $V(G)$ and $E(G)$ to denote the vertex set and the edge set of $G$, respectively. Consider a graph $G$.  We use $\card{G}$ as a shorthand for $\card{V(G)}$, i.e., the number of vertices of $G$. For $F \subseteq \binom{V(G)}{2}$, we use $V(F)$ to denote the set of vertices that are the endpoints the elements of $F$, i.e., $V(F) = \set{v \in V(G) ~|~ vw \in F  \text{ for some } w \in V(G)}$; for each vertex $v \in V(F)$, we say that \emph{$F$ modifies $v$}. For $F \subseteq \binom{V(G)}{2}$ and a connected component $H$ of $G$, we say that $F$ modifies $H$ if $F$ modifies at least one vertex of $H$. For $F \subseteq \binom{V(G)}{2} \setminus E(G)$, by $G + F$, we mean the graph with vertex set $V(G)$ and edge set $E(G) \cup F$. For $F \subseteq E(G)$, by $G - F$, we mean the graph with vertex set $V(G)$ and edge set $E(G)\setminus F$. Also, for $F \subseteq \binom{V(G)}{2}$, by $G \triangle F$, we denote the graph obtained from $G$ by ``editing'' the edges and non-edges of $F$; that is, $G \triangle F$ is the graph whose vertex set is $V(G)$ and edge set is the symmetric difference of $E(G)$ and $F$, i.e., $E(G) \triangle F = (E(G) \setminus F) \cup (F \setminus E(G))$. {\bf We use $\bm{\lcomp(G)}$ to denote the number of vertices in a largest connected component of $\bm{G}$; and $\bm{\scomp(G)}$ to denote the number of vertices in a smallest connected component of $\bm{G}$.} %
For $\eta \geq 0$ and a connected component $H$ of $G$, we say that $H$ is an \emph{$\eta$-blocker} in $G$ if there exists a component $H'$ of $G$ such that $\card{\card{H} - \card{H'}} > \eta$, i.e., the size difference between $H$ and $H'$ strictly exceeds $\eta$. For $\eta \geq 0$, we say that $G$ is \emph{$\eta$-balanced} if no component of $G$ is an $\eta$-blocker; that is, $G$ is $\eta$-balanced if the size difference between no two components exceeds $\eta$. For future reference, we record the following facts that follow immediately from the definitions of an $\eta$-blocker and an $\eta$-balanced graph. 

\begin{observation} 
\label{obs:blocker-balanced}
Consider a graph $G$ and $\eta \geq 0$. 
    \begin{enumerate}
        \item Consider a connected component $H$ of $G$. Then $H$ is an $\eta$-blocker if and only if $H'$ is an $\eta$-blocker for every component $H'$ of $G$ with $\card{H'} = \card{H}$. 

        \item The graph $G$ is $\eta$-balanced if and only if $\lcomp(G) - \scomp(G) \leq \eta$. 
    \end{enumerate}
\end{observation}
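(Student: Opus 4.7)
Both parts of the observation are purely definitional, so the plan is to unfold the definitions from Section~\ref{sec:prelims} and observe that whether a component is an $\eta$-blocker depends on it only through its cardinality. I do not expect any genuine obstacle; the only care needed is tracking absolute values and the direction of implication in part~(2).

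For part~(1), I would begin from the definition: a component $H$ of $G$ is an $\eta$-blocker iff there exists a component $H'$ of $G$ with $\card{\card{H} - \card{H'}} > \eta$. The crucial observation is that the right-hand side refers to $H$ only through its size $\card{H}$. Hence if $H''$ is another component of $G$ with $\card{H''} = \card{H}$, the very same witness $H'$ shows that $H''$ is an $\eta$-blocker, which gives the forward direction. The backward direction is immediate by taking $H'' = H$, since $H$ trivially has the same size as itself.

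For part~(2), I would prove both implications separately. For the forward direction, suppose $G$ is $\eta$-balanced, and pick components $H_{\max}$ and $H_{\min}$ of $G$ with $\card{H_{\max}} = \lcomp(G)$ and $\card{H_{\min}} = \scomp(G)$. Since no component of $G$ is an $\eta$-blocker, in particular $H_{\max}$ is not one with respect to the witness $H_{\min}$, giving $\lcomp(G) - \scomp(G) = \card{\card{H_{\max}} - \card{H_{\min}}} \leq \eta$. For the backward direction, assume $\lcomp(G) - \scomp(G) \leq \eta$, and let $H_1, H_2$ be any two components of $G$. Since $\scomp(G) \leq \card{H_i} \leq \lcomp(G)$ for $i \in \{1, 2\}$, we get $\card{\card{H_1} - \card{H_2}} \leq \lcomp(G) - \scomp(G) \leq \eta$, so no component is an $\eta$-blocker and $G$ is $\eta$-balanced.

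The hardest step, if one can call it that, is merely keeping track of the nested absolute values; bounding every component size between $\scomp(G)$ and $\lcomp(G)$ handles this cleanly and avoids any case split on which of $\card{H_1}, \card{H_2}$ is larger.
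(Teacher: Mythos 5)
Your proof is correct and follows exactly the definitional unfolding that the paper has in mind: the observation is stated there without proof as an immediate consequence of the definitions of an $\eta$-blocker and an $\eta$-balanced graph, and your argument (the blocker condition depends on a component only through its size; bounding every component size between $\scomp(G)$ and $\lcomp(G)$) is the intended justification.
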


\begin{observation}
\label{obs:component-delete}
Let $G$ be a graph. 
\begin{enumerate}
\item Then, $\lcomp(G') \leq \lcomp(G)$ for every subgraph $G'$ of $G$. 

\item\label{item:component-delete} Suppose that $G$ has at least two connected components. Let $H$ be a connected component of $G$, and let $G''$ be the graph obtained from $G$ by deleting $H$. Then, $\scomp(G'') \geq \scomp(G)$.  

\item Observe that the inequality in item~\ref{item:component-delete} need not hold for \emph{all} subgraphs $G''$ of $G$. 
\end{enumerate} 
\end{observation}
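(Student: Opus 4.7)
The plan is to verify each part from first principles, appealing directly to the definitions of connected components, subgraph, and the quantities $\lcomp$ and $\scomp$.

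For part~1, I would begin by recalling that whenever $G'$ is a subgraph of $G$ (so $V(G') \subseteq V(G)$ and $E(G') \subseteq E(G)$), every connected component $C'$ of $G'$ is a subset of vertices that are mutually connected in $G$ as well, so $C'$ is contained in some connected component $C$ of $G$. Hence $\card{C'} \leq \card{C} \leq \lcomp(G)$. Taking the maximum over the components $C'$ of $G'$ gives $\lcomp(G') \leq \lcomp(G)$.

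For part~2, I would observe that deleting the entire vertex set of a connected component $H$ from $G$ does not alter any of the remaining components: the components of $G'' = G - V(H)$ are precisely the components of $G$ other than $H$. Because $G$ has at least two components by assumption, the collection of components of $G''$ is non-empty. Therefore
\[
\scomp(G'') = \min\{\card{C} : C \text{ is a component of } G,\ C \neq H\} \geq \min\{\card{C} : C \text{ is a component of } G\} = \scomp(G),
\]
since the minimum taken over a subset of a set of positive numbers is at least the minimum over the full set.

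For part~3, I would simply exhibit a counterexample to show that the inequality in part~2 fails for general subgraphs. Take $G$ to be the disjoint union of two copies of $K_4$, so that $\scomp(G) = 4$, and let $G''$ be obtained from $G$ by deleting a single vertex from one of the cliques; then $G''$ consists of a $K_3$ and a $K_4$, giving $\scomp(G'') = 3 < 4 = \scomp(G)$. This demonstrates that removing vertices from inside a component (rather than removing an entire component) can decrease $\scomp$, and so the statement of part~2 cannot be extended to arbitrary subgraphs. There is no real obstacle here; the statements are immediate from the definitions, and the only care needed is to ensure that in part~2 the deleted object is an entire component so that no other component loses any of its vertices.
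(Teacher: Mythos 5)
Your verification is correct: the paper states this as an observation with no written proof, treating all three parts as immediate from the definitions, and your direct arguments (components of a subgraph lie inside components of $G$, deleting a whole component leaves the other components untouched so the minimum is taken over a subset, and the two-cliques counterexample for part~3) are exactly the intended elementary reasoning. Nothing is missing.
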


Throughout this paper, whenever we deal with an instance $(G, k, \eta)$ of one of the three problems---\bcc, \bcd\ or \bce---by a \emph{\bf solution} for $(G, k, \eta)$, we mean a set $F \subseteq \binom{V(G)}{2}$ such that $\card{F} \leq k$ and (a) in the case of \bcc, $F \subseteq \binom{V(G)}{2} \setminus E(G)$ such that $G + F$ is an $\eta$-balanced cluster graph, (b) in the case of \bcd, $F \subseteq E(G)$ such that $G - F$ is an $\eta$-balanced cluster graph, and (c) in the case of \bce, $G \triangle F$ is an $\eta$-balanced cluster graph. 

We use standard terminology from parameterized complexity. For terms not defined here, we refer the reader to Cygan et. al~\cite{DBLP:books/sp/CyganFKLMPPS15}.

\section{Polynomial Kernels for Balanced Cluster Modification Problems}\label{sec:kernel}
In this section, we show that \bcc, \bcd\ and  \bce\ admit polynomial kernels. 

\subsection{Polynomial Kernel for Balanced Cluster Completion} 
We start with \bcc. First of all, notice that given an instance $(G, k, \eta)$ of \bcc, we can assume without loss of generality that $G$ is a cluster graph, as every solution for $(G, k, \eta)$ must necessarily contain all the edges that we must add to make $G$ a cluster graph. So, throughout this paper, whenever we deal with \bcc, we assume that the input graph is a cluster graph. Accordingly, we restate the problem \bccfull\ (\bcc) as follows.

\defproblem{\bccfull\ (\bcc)}{An $n$-vertex  \emph{cluster} graph $G$ and non-negative integers $k$ and $\eta$.}{Decide if there exists $F \subseteq \binom{V(G)}{2} \setminus E(G)$ such that $\card{F} \leq k$ and $G + F$ is an $\eta$-balanced cluster graph.}

The following theorem is the main result of this subsection.  

\begin{theorem}
\label{thm:completion-kernel}
\bcc\ admits a kernel with at most $10k$ vertices. 
\end{theorem}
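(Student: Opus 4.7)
The plan is to prove the kernel bound through a sequence of reduction rules, culminating in an instance with at most $10k$ vertices. My approach rests on a key structural observation: in any solution $F$ of size at most $k$, the components of $G$ that $F$ modifies have total size at most $2k$. To see this, note that merging a group of components of total size $T$ into a single clique requires at least $T-1$ added edges (the minimum is attained when one component has size $T-1$ and another has size $1$); if $F$ realises $r$ merges with aggregate modified size $T$, then $|F| \ge T - r$, and since each merge involves at least two components, $r \le |V(F)|/2 \le k$. Combining, $T \le k + r \le 2k$. By an exchange argument I would further show that, without loss of generality, the modified components are drawn from a prefix of the components of $G$ sorted by increasing size, so only the ``smallest'' portion of $G$ is ever touched by a solution.

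Next the plan is a case analysis based on $L := \lcomp(G)$. If $G$ is already $\eta$-balanced, output a trivial yes-instance. If $L > 2k + \eta$, then every merged clique would need size exceeding $2k$, contradicting the $2k$ bound on total modified size; hence the instance is a yes-instance iff $G$ is already $\eta$-balanced, which we decide in polynomial time. In the main case $L > 2k$ and $L \le 2k + \eta$, I apply three reductions. \emph{Reduction~1} replaces all components of size $> 2k$ with a single size-$L$ component; this is sound because such components are never modified in a solution and their sizes all lie in $[L - \eta, L]$. \emph{Reduction~2} caps, for each size $s \in [1, 2k]$, the number of size-$s$ components at $\lceil 2k/s \rceil + 1$; since at most $\lceil 2k/s \rceil$ size-$s$ components can be modified in any solution, one unmodified witness preserves the instance. \emph{Reduction~3} handles the case $L > 8k$: it replaces the size-$L$ component with a size-$L'$ component where $L' := 8k$, and simultaneously updates $\eta' := \eta - (L - L')$. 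This preserves the lower bound $L - \eta = L' - \eta'$ on allowed clique sizes while the upper bound drops harmlessly from $L$ to $L'$, since every merged clique has size at most $2k \le L'$. The remaining case $L \le 2k$ is handled by Reduction~2 alone together with a direct bound on the number of components.

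The final vertex count then yields: the (shrunk) large component contributes at most $8k$ vertices, and the remaining components---the prefix needed to host all modifications---contribute at most $2k$ vertices, for a total of $10k$. The main obstacle will be the exchange argument underlying the claim that modifications can be restricted to the smallest components of $G$, together with the careful bookkeeping required to verify Reduction~3 in corner cases, for instance when $\eta'$ drops to zero and forces all cliques in the final graph to have exactly the same size. A secondary subtlety is that multiple distinct sizes of ``medium'' components (sizes in $[L - \eta, 2k]$) may legitimately appear as unmodified cliques in the final graph; for the vertex count to come out to exactly $10k$ rather than a higher polynomial in $k$, these must be shown either to collapse into the single large witness via a shift argument or to be absorbable within the smallest-prefix budget of size $2k$.
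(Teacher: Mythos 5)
There is a genuine gap at the step that is actually the heart of this kernel: bounding the total number of vertices in the small components by $O(k)$. Your Reduction~2 (keeping $\lceil 2k/s\rceil+1$ components of each size $s\le 2k$) is sound but only yields $\sum_{s=1}^{2k} s(\lceil 2k/s\rceil+1)=\Theta(k^2)$ vertices, not the $2k$ you charge for ``the remaining components'' in your final count. The only way your arithmetic $8k+2k=10k$ works is if you may delete every small component outside a smallest prefix of total size $2k$, and that rule is unsound. Concretely: take $\eta=0$, $G$ consisting of $N\gg k$ isolated vertices plus one component of size $2$. The original instance is a \no-instance (almost all singletons are untouchable with budget $k$, yet they must match the size-$2$ component). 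But after keeping only $2k$ singletons plus the size-$2$ component, pairing up the $2k$ singletons with $k$ added edges produces a $0$-balanced cluster graph, so the reduced instance is a \yes-instance. Your exchange argument (``modifications live on the smallest prefix'') is true but does not license deleting the unmodified components: they persist in $G+F$ and constrain the balance, which is exactly why non-hereditarity bites here. You flag this yourself as a ``secondary subtlety,'' but it is the main difficulty, and your sketch leaves it unresolved.

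The paper's proof resolves it by keeping a prefix of smallest components of total size up to $4k$ (not $2k$) \emph{together with} the largest component $H_r$ as a control, and deleting only the ``middle'' components (Reduction Rule~\ref{rule:mid-components}). The extra slack $4k$ versus the $2k$ bound on modified vertices is what makes the backward direction work: either some kept small component is untouched by the reduced solution and serves as a size witness when the deleted components are restored, or all kept small components are touched, in which case the prefix definition forces every deleted component to have size $>2k$, so it sits safely near $H_r$. In the counterexample above, the $4k$-prefix correctly keeps enough singletons that no budget-$k$ solution can absorb them all, and the reduced instance stays a \no-instance. Your Reduction~1 (collapsing all components of size $>2k$ into one of size $\lcomp(G)$, valid when $\lcomp(G)\le 2k+\eta$) and Reduction~3 (shrinking the big component while decrementing $\eta$ by the same amount, which is the same idea as the paper's final rule with $\eta'=4k$) are essentially fine; the missing piece is a safe analogue of the middle-component deletion with the correct budget, and with it the count becomes (prefix $\le 4k$) $+$ (shrunk large component $\le 6k$) $=10k$ rather than your $2k+8k$ split.
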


Throughout this subsection, $(G, k, \eta)$ is a given instance of \bcc. Consider a solution $F$ for the instance $(G, k, \eta)$. Recall that for a vertex $v \in V(G)$, we say that \emph{$F$ modifies $v$} if $v \in V(F)$, and for a connected component $H$ of $G$, we say that \emph{$F$ modifies $H$} if $F$ modifies at least one vertex of $H$. 
We first present an outline of our kernel. 

\subparagraph*{Outline of the kernel.} Given an instance $(G, k, \eta)$ of \bcc, we proceed as follows. We first perform a  couple of sanity checks (Reduction Rules~\ref{rule:sanity-check-1} and \ref{rule:sanity-check-2}) that eliminate obvious yes and no-instances. Let $H_1, H_2,\ldots, H_r$ be an ordering of the connected components of $G$ such that $\card{H_1} \leq \card{H_2} \leq \cdots \leq \card{H_r}$. For a carefully chosen index $s \in [r]$ such that $\sum_{j = 1}^s \card{H_j} =  \cO(k)$, we keep the components $H_1, H_2,\ldots, H_s$ and $H_r$, and delete $H_{s + 1}, H_{s + 2},\ldots, H_{r - 1}$ (Reduction Rule~\ref{rule:mid-components}). This is possible because we argue that the components $H_{s + 1}, H_{s + 2},\ldots, H_{r - 1}$ need not get modified if $(G, k, \eta)$ were a yes-instance. Finally, we delete sufficiently many vertices from $H_r$ so that we will have $\card{H_r} = \cO(k)$, and we adjust $\eta$  accordingly. These steps yield the required kernel  with $\cO(k)$ vertices. \lipicsEnd

We now present a  formal proof of Theorem~\ref{thm:completion-kernel}. And we begin by applying the following reduction rule, which rules out obvious yes-instances. Recall that the input graph $G$ is a cluster graph. %
\begin{rrule} 
\label{rule:sanity-check-1}
If $G$ is $\eta$-balanced, then we return a trivial yes-instance.
\end{rrule}

Assume from now on that Reduction Rule~\ref{rule:sanity-check-1} is not applicable. We will use the following two  facts throughout this subsection. Suppose $F$ is a solution for $(G, k, \eta)$. (1) Since $G$ and $G + F $ are both cluster graphs, $F$ modifies a vertex $v$ if and only if $F$ modifies every vertex in the connected component of $G$ that contains $v$. In fact, $F$ modifies $v$ if and only if $F$ modifies every vertex in the connected component of $G + F$ that contains $v$. (2) As $\card{F} \leq k$, $F$ modifies $\card{V(F)} \leq 2k$ vertices. These facts immediately lead to the following lemma.  

\begin{lemma}\label{lem:enlarged-cliques}
Assume that $(G, k, \eta)$ is a yes-instance and let $F \subseteq \binom{V(G)}{2} \setminus E(G)$ be a solution for $(G, k, \eta)$. Let $\ca{H}$ be the set of connected components of $G$ that $F$ modifies. Then, (1)  $\card{H} \leq k$ for every $H \in \ca{H}$, and (2) $\sum_{H \in \ca{H}} \card{H} \leq 2k$. 
\end{lemma}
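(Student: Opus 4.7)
The plan is to derive both bounds directly from the two facts highlighted just before the lemma statement, namely that $|V(F)| \le 2|F| \le 2k$ and that $F$ modifies a vertex $v$ if and only if $F$ modifies every vertex in the connected component of $G$ containing $v$ (because both $G$ and $G+F$ are cluster graphs). From the second fact, every vertex of every $H \in \ca{H}$ belongs to $V(F)$.

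For item (2), the argument is almost immediate. Since each component $H \in \ca{H}$ has all of its vertices in $V(F)$, and since distinct components of $G$ have disjoint vertex sets, we obtain
\[
\sum_{H \in \ca{H}} \card{H} \;\le\; \card{V(F)} \;\le\; 2\card{F} \;\le\; 2k.
\]

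For item (1), I would fix an arbitrary $H \in \ca{H}$ and count edges of $F$ incident to $H$. Because $G$ is a cluster graph, $H$ is already a clique, so no non-edge of $G$ lies inside $H$; consequently every edge of $F$ incident to a vertex of $H$ has its other endpoint outside $H$. Since $F$ modifies every vertex of $H$, each $v \in H$ must be incident to at least one edge of $F$, and such an edge contributes exactly one endpoint to $H$. Therefore the number of edges of $F$ with exactly one endpoint in $H$ is at least $\card{H}$, which gives $\card{H} \le \card{F} \le k$.

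There is no real obstacle here; the whole content of the lemma is packaged in the two preceding observations. The only point that needs a sentence of justification is the claim that edges of $F$ never lie inside a component of $G$, which follows from $F \subseteq \binom{V(G)}{2}\setminus E(G)$ together with the fact that each component of the cluster graph $G$ is already a clique.
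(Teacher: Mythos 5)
Your proof is correct and follows essentially the same route as the paper: for (1), each vertex of $H$ must be hit by an edge of $F$, and since $H$ is a clique and $F$ consists of non-edges, each such edge has at most one endpoint in $H$, giving $\card{H}\leq\card{F}\leq k$; for (2), all vertices of the modified components lie in $V(F)$ with $\card{V(F)}\leq 2\card{F}\leq 2k$. No gaps.
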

\begin{proof}
Consider $H \in \ca{H}$. 
Since $F$ modifies $H$, $F$ contains at least one edge incident with each vertex of $H$. And notice that each edge in $F$ has at most one endpoint in $V(H)$. Thus $F$ must contain at least $\card{H}$ edges incident with $V(H)$, and hence $\card{H} \leq \card{F} \leq k$. 
Now, notice that since  $\card{F} \leq k$ and since each edge in $F$ modifies exactly two vertices, the number of vertices that $F$ modifies is at most $2\card{F} \leq 2k$. Notice also that $F$ modifies a vertex $v$ if and only if $F$ modifies every vertex in the connected component of $G$ that contains $v$. Thus $F$ modifies every vertex of $H$ for every $H \in \ca{H}$. Hence statement (2) follows. %
\end{proof}

Recall that $\lcomp(G)$ and $\scomp(G)$ respectively denote the number of vertices in a largest connected component of $G$ and the number of vertices in a smallest connected component of $G$. We now prove the following lemma that establishes bounds on $\scomp(G)$ and $\lcomp(G)$, if $(G, k, \eta)$ were indeed a yes-instance. 
\begin{lemma}
\label{lem:sanity-check-2}
If $(G, k, \eta)$ is a yes-instance of \bcc, then $\scomp(G) \leq k$ and $\lcomp(G) \leq 2k + \eta$.
\end{lemma}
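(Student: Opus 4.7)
The plan is to prove the two bounds separately, in each case arguing by contradiction and combining Lemma~\ref{lem:enlarged-cliques} with the standing assumption that Reduction Rule~\ref{rule:sanity-check-1} is inapplicable, so that $G$ itself is not $\eta$-balanced. Throughout, I would fix a solution $F \subseteq \binom{V(G)}{2}\setminus E(G)$ with $\card{F}\leq k$ and let $\ca{H}$ denote the set of connected components of $G$ that $F$ modifies.

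First I would tackle $\scomp(G)\leq k$. Assume for contradiction that every component of $G$ has more than $k$ vertices. By part~(1) of Lemma~\ref{lem:enlarged-cliques}, every component of $G$ modified by $F$ has at most $k$ vertices, so $\ca{H}=\emptyset$. But then no vertex lies in $V(F)$, forcing $F=\emptyset$ and hence $G+F=G$; since Reduction Rule~\ref{rule:sanity-check-1} does not apply, $G$ is not $\eta$-balanced, contradicting the fact that $F$ is a solution.

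Next I would turn to $\lcomp(G)\leq 2k+\eta$. Assume for contradiction that $\lcomp(G)>2k+\eta$ and let $H^*$ be a largest component of $G$. Since $\card{H^*}>2k+\eta\geq k$, part~(1) of Lemma~\ref{lem:enlarged-cliques} shows that $H^*\notin\ca{H}$, so $H^*$ survives intact as a component of $G+F$ and $\lcomp(G+F)\geq\card{H^*}>2k+\eta$. If $\ca{H}=\emptyset$, I finish as in the previous paragraph. Otherwise, I pick any $H\in\ca{H}$ and let $C$ be the component of $G+F$ that contains $V(H)$. The key observation is that $V(C)\subseteq\bigcup_{H'\in\ca{H}}V(H')$: any $v\in V(C)\setminus V(H)$ belongs to a component of $G$ distinct from $H$, so since $C$ is a clique in the cluster graph $G+F$, the edge $vh$ for any $h\in V(H)$ lies in $G+F$ but not in $E(G)$, hence in $F$; this places $v$ in $V(F)$ and therefore in a component that belongs to $\ca{H}$. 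Applying part~(2) of Lemma~\ref{lem:enlarged-cliques} then gives $\card{C}\leq\sum_{H'\in\ca{H}}\card{H'}\leq 2k$, so $\scomp(G+F)\leq 2k$ while $\lcomp(G+F)>2k+\eta$, contradicting the $\eta$-balance of $G+F$.

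The only real subtlety, and the step I would be careful to write out, is the ``key observation'' above: that in the cluster graph $G+F$ the enlarged clique around a modified component cannot absorb any vertex of an unmodified component. Once this is secured, both bounds drop out of Lemma~\ref{lem:enlarged-cliques} essentially for free, with the $\lcomp$ bound coming from pitting the untouched giant $H^*$ (size more than $2k+\eta$) against a merged clique whose total size is capped by $2k$.
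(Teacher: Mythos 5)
Your proof is correct and follows essentially the same route as the paper: both bounds come from Lemma~\ref{lem:enlarged-cliques} together with the facts that $G$ itself is not $\eta$-balanced (Reduction Rule~\ref{rule:sanity-check-1} being inapplicable) and that $G+F$ is $\eta$-balanced. The only cosmetic difference is in how the small component of $G+F$ is exhibited for the $\lcomp$ bound: the paper directly bounds the component containing the smallest component $G_1$ by $\card{G_1}+\card{F}\leq 2k$, whereas you bound the merged component around a modified piece by $\sum_{H'\in\ca{H}}\card{H'}\leq 2k$ via your ``key observation''; both are sound.
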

\begin{proof}
Assume that $(G, k, \eta)$ is a yes-instance of \bcc, and let $F$ be a solution for $(G, k, \eta)$.  Recall that since Reduction Rule~\ref{rule:sanity-check-1} is not applicable, $G$ is not $\eta$-balanced. Let $G_1$ be a smallest connected component of $G$ and $G_2$ a largest connected component of $G$. Then $\scomp(G) = \card{G_1}$ and $\lcomp(G) = \card{G_2}$, and since $G$ is not $\eta$-balanced, $\card{G_2} - \card{G_1} > \eta$.  But then $F$ must modify $G_1$. Hence, by Lemma~\ref{lem:enlarged-cliques}, we have $\scomp(G) = \card{G_1} \leq k$.  

To see that $\lcomp(G) \leq 2k + \eta$, consider the graph $G' = G + F$. Since $F$ is a solution, we have $\lcomp(G') - \scomp(G') \leq \eta$. Let $G'_1$ and $G'_2$ be the connected components of $G'$ that contain $G_1$ and $G_2$, respectively. (Notice that we may have $G'_1 = G'_2$.) Since $F$ is a solution, the graph $G' = G + F$ is $\eta$-balanced, and therefore, we have $\card{G'_2} - \card{G'_1} \leq \eta$. Now, observe the following facts. (1) Since $\card{G_1} \leq k$ and $\card{F} \leq k$, we have $\card{G'_1} \leq \card{G_1} + \card{F} \leq 2k$. (2) We have $\card{G_2} \leq \card{G'_2}$ and hence $\card{G_2} - \card{G'_1} \leq \card{G'_2} - \card{G'_1} \leq \eta$, which implies that $\card{G_2} \leq \card{G'_1} + \eta \leq 2k + \eta$. That is, $\lcomp(G) = \card{G_2} \leq 2k + \eta$. 
\end{proof}

We now introduce the following reduction rule, the correctness of which follows immediately from Lemma~\ref{lem:sanity-check-2}. Recall that we are still under the assumption that $G$ is not $\eta$-balanced. 

\begin{rrule} 
\label{rule:sanity-check-2}
If $\scomp(G) \geq k + 1$ or $\lcomp(G) \geq 2k + \eta + 1$, then we return a trivial no-instance.
\end{rrule}

\begin{sloppypar}
Let $r$ be the number of connected components of $G$.  Fix an ordering $(H_1, H_2,\ldots, H_r)$ of the components of $G$ such that $\card{H_1} \leq \card{H_2} \leq \cdots \leq \card{H_r}$. Recall that since Reduction Rules~\ref{rule:sanity-check-1} and \ref{rule:sanity-check-2} are not applicable, we have $\card{H_1} = \scomp(G) \leq k$. We define $s \in [r]$ as follows. {\bf Let $\bm{s}$ be the largest index in $\bm{[r]}$ such that $\bm{\sum_{j = 1}^s \card{H_j} \leq 4k}$.} We now apply the following reduction rule. 
\end{sloppypar}

\begin{rrule}
\label{rule:mid-components}
If $s + 1 < r$, then we delete the components $H_{s + 1}, H_{s + 2},\ldots, H_{r - 1}$ from $G$. 
\end{rrule}

\begin{lemma}
\label{lem:mid-components-safe}
Reduction Rule~\ref{rule:mid-components} is safe. 
\end{lemma}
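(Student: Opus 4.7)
My plan is to show $(G,k,\eta)\in\bcc\iff(G',k,\eta)\in\bcc$, where $G'$ denotes the graph produced by Reduction Rule~\ref{rule:mid-components}. The workhorses throughout are Lemma~\ref{lem:enlarged-cliques} (each component modified by a solution has at most $k$ vertices and the total size of modified components is at most $2k$) together with the defining property of $s$, which gives $\sum_{i=1}^{s+1}\card{H_i}>4k$.

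For the direction $(G',k,\eta)\in\bcc\Rightarrow(G,k,\eta)\in\bcc$, I would reuse a solution $F'$ for $(G',k,\eta)$ as a candidate solution for $(G,k,\eta)$ and verify $\eta$-balance of $G+F'$. The components of $G+F'$ are exactly those of $G'+F'$ together with the intact middle components $H_{s+1},\dots,H_{r-1}$, whose sizes lie in $[\card{H_s},\card{H_r}]$. Using Observation~\ref{obs:blocker-balanced}, it suffices to show $\scomp(G'+F')\leq\card{H_s}$ and $\lcomp(G'+F')\geq\card{H_r}$. When $H_s$ and $H_r$ are unmodified by $F'$ the inequalities are immediate; otherwise Lemma~\ref{lem:enlarged-cliques} forces the relevant component to have size at most $k$, and combining $\sum_{i=1}^{s+1}\card{H_i}>4k$ with the $\leq 2k$ cap on modified mass pins down enough surviving components at the correct end of the size spectrum to restore the inequality.

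The harder direction reduces to the structural claim that some solution $F$ for $(G,k,\eta)$ modifies none of $H_{s+1},\dots,H_{r-1}$. Given the claim, $F$ already lives inside $\binom{V(G')}{2}$, and $G'+F$ is obtained from the $\eta$-balanced graph $G+F$ by removing a set of unmodified components whose sizes lie in $[\card{H_s},\card{H_r}]\subseteq[\scomp(G+F),\lcomp(G+F)]$. Since deleting middle-sized components cannot widen the spread of component sizes (cf.\ Observation~\ref{obs:component-delete}), $G'+F$ stays $\eta$-balanced. To prove the structural claim I would take a solution $F$ that minimises the number of middle components it modifies, and suppose for contradiction that $F$ still modifies some $H_j$ with $s<j<r$. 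Lemma~\ref{lem:enlarged-cliques} gives $\card{H_j}\leq k$, hence $\card{H_{s+1}}\leq k$ by the sorted order, and then the maximality of $s$ forces $\sum_{i=1}^s\card{H_i}>3k$. Subtracting the at most $2k$ modified vertices leaves more than $k$ unmodified vertices inside $H_1,\dots,H_s$, from which I would locate an unmodified $H_i$ of size \emph{exactly} $\card{H_j}$ by a pigeonhole over size classes. Splicing $H_i$ in place of $H_j$ inside the merge group of $F$ containing $H_j$ produces a new edge set of the same cardinality that yields the same multiset of clique sizes---hence is still a valid solution---but modifies one fewer middle component, contradicting minimality.

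The main obstacle is the exact-size matching used in the swap: a combination of several unmodified small components summing to $\card{H_j}$ would inflate $|F|$ by the intra-combination edges, so the exchange must use a single component of size exactly $\card{H_j}$. The numerical slack ($>3k$ unmodified small-component mass versus $\leq 2k$ modification budget) is ample, but converting it into a single-size hit will require a careful pigeonhole tracking how the modification budget interacts with the multiplicity of each size class inside $H_1,\dots,H_s$; I expect to argue that assuming no such $H_i$ existed forces every size-$\card{H_j}$ component in $H_1,\dots,H_s$ to be modified, and then derive an arithmetic contradiction between that constraint and $\sum_{i=1}^s\card{H_i}>3k$.
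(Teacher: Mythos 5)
Your overall strategy (pick a solution minimising the number of modified middle components, then do an exchange) matches the paper, but the exchange step as you have set it up has a genuine gap. You insist on swapping $H_j$ with an \emph{unmodified component of exactly the same size} so that the multiset of component sizes of $G+F$ is preserved, and you plan to find such a component by a pigeonhole over size classes. No such component need exist: the sizes are sorted, so every $H_i$ with $i\in[s]$ may be strictly smaller than $\card{H_j}$ (e.g.\ $H_1,\dots,H_s$ all isolated vertices while $\card{H_j}>1$), in which case the statement ``every size-$\card{H_j}$ component among $H_1,\dots,H_s$ is modified'' is vacuously true and yields no contradiction. The numerical slack ($>3k$ versus $2k$) guarantees an unmodified $H_{q'}$ with $q'\in[s]$, but only with $\card{H_{q'}}\leq\card{H_j}$, not equality. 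The fix, which is what the paper does, is to swap with \emph{any} such smaller unmodified $H_{q'}$ and drop the requirement that the size multiset be preserved: letting $X_q$ be the outside endpoints of the $F$-edges into $H_j$, replace those edges by all edges between $X_q$ and $H_{q'}$; this does not increase $\card{F}$, and $\eta$-balance is re-verified directly by showing $\lcomp$ does not increase and $\scomp$ does not decrease --- the new merged component is contained-size-wise in the old one and strictly contains $H_{q'}$ (hence has size $>\card{H_{q'}}\geq\scomp(G+F)$), while $H_j$ reappears intact with $\card{H_{q'}}\leq\card{H_j}<\card{C_{F,q}}$.

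A secondary inaccuracy: in the direction from $(G',k,\eta)$ to $(G,k,\eta)$, your sufficient condition $\scomp(G'+F')\leq\card{H_s}$ is stronger than needed and can fail --- if $F'$ modifies all of $H_1,\dots,H_s$ and merges them, every component of $G'+F'$ other than $H_r$ may exceed $\card{H_s}$. What you actually need is $\scomp(G'+F')\leq\card{H_{s+1}}$ (and $\lcomp(G'+F')\geq\card{H_{r-1}}$, which is immediate since $H_r$ sits inside a component of $G'+F'$). In the all-modified case, Lemma~\ref{lem:enlarged-cliques} gives $\sum_{j\leq s}\card{H_j}\leq 2k$, so by the choice of $s$ we get $\card{H_{s+1}}\geq 2k+1$; since $\card{H_r}\geq\card{H_{s+1}}>k$ the component $H_r$ is untouched, so the remaining at most $2k$ vertices form a component of $G'+F'$ of size at most $2k<\card{H_{s+1}}$, which closes this case exactly as in the paper.
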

\begin{proof}
Let $G'$ be the graph obtained from $G$ by applying Reduction Rule~\ref{rule:mid-components}. We will show that $(G, k, \eta)$ is a yes-instance of \bcc\ if and only if $(G', k, \eta)$ is a yes-instance. 

Assume that $(G, k, \eta)$ is a yes-instance. 
We first prove the following claim. 
\begin{sloppypar}
\begin{claim}
\label{claim:mid-components}
There exists a solution $F$ for $(G, k, \eta)$ such that $F$ does not modify any of the components $H_{s + 1}, H_{s + 2},\ldots, H_{r - 1}$.
\end{claim}
\end{sloppypar}
\begin{claimproof}
Let $F$ be a solution for $(G, k, \eta)$ such that of all the solutions for $(G, k, \eta)$, $F$ modifies the fewest number of components from $\set{H_{s + 1}, H_{s + 2},\ldots, H_{r - 1}}$.  We will show that $F$ does not modify any of the components $H_{s + 1}, H_{s + 2},\ldots, H_{r - 1}$. Assume for a contradiction that $F$ does modify $H_q$ for some $q \in [s + 1, r - 1]$. 

We first claim that there exists $q' \in [s]$ such that $F$ does not modify $H_{q'}$. Suppose this is not true. Then $F$ modifies $H_j$ for every $j \in [s]$. Thus, $F$ modifies $H_1, H_2,\ldots, H_s$ and $H_q$, where $q \geq s + 1$. That is, $F$ modifies at least $\card{H_q} + \sum_{j = 1}^s \card{H_j} \geq \sum_{j = 1}^{s + 1} \card{H_j} > 4k$ vertices, which, by Lemma~\ref{lem:enlarged-cliques}, is not possible; the last inequality holds because of the definition of $s$. Fix $q' \in [s]$ such that $F$ does not modify $H_{q'}$. 

We now construct another solution $F'$ from $F$ as follows. Informally, we obtain $F'$ from $F$ by swapping the roles of $H_{q}$ and $H_{q'}$.   
More formally, let Let $F_{q} \subseteq F$ be the set of edges in $F$ that have an endpoint in $H_{q}$, and let $X_{q} = \bigcup_{v \in V(H_q)}\{u \in V(G)\setminus V(H_{q}) ~|~ uv \in F_{q} \}$; that is, $X_{q}$ consists of the ``non-$H_{q}$-endpoints'' of the edges in $F_{q}$. Notice that since $G + F$ is a cluster graph  and since $H_q$ is a connected component of $G$, for a vertex $u \in V(G)$, we have $u \in X_q$ if and only if $u$ is not adjacent to any vertex of $H_q$ in the graph $G$, but adjacent to every vertex of $H_q$ in the graph $G + F$, (in which case all the edges between $u$ and $H_q$ belong to $F_q \subseteq F$). We thus have $\card{F_q} = \card{X_q}\card{H_q}$. Now, let $F_{q'} = \{uw \in \binom{V(G)}{2} ~|~ u \in X_{q} \text{ and } w \in V(H_{q'})\}$; that is, $F_{q'}$ consists of ``new edges'' between $V(H_{q'})$ and all those vertices in $V(G) \setminus V(H_q)$ that are modified by $F_q$. Notice that $\card{F_{q'}} = \card{X_q} \card{H_{q'}}$. Finally, we define $F' = (F \setminus F_{q}) \cup F_{q'}$. Notice that $F'$ does not modify $H_{q}$ and hence $H_q$ is a component of $G + F'$. 

We now show that $F'$ is also a solution for the instance $(G, k, \eta)$. Let us first see that $\card{F'} \leq k$. As $q' \leq s < s + 1 \leq q$, we have $\card{H_{q'}} \leq \card{H_{q}}$, and hence $\card{F_{q'}} = \card{X_q}\card{H_{q'}} \leq \card{X_q} \card{H_q} = \card{F_q}$. Notice also that $F_q \cap F_{q'} = \emptyset$. We thus have $\card{F'} = \card{(F \setminus F_q) \cup F_{q'}} = \card{F} - \card{F_q} + \card{F_{q'}} \leq \card{F} \leq k$. Also, since $G + F$ is a cluster graph, so is $G + F'$. To show that $F'$ is a solution for $(G, k, \eta)$, we now only need to show that $G + F'$ is $\eta$-balanced. And for this, it is enough to show  that {\bf (i)} $\lcomp(G + F') \leq \lcomp(G + F)$ and {\bf (ii)} $\scomp(G + F') \geq \scomp(G + F)$, which will imply that $\lcomp(G + F') - \scomp(G + F') \leq \lcomp(G + F) - \scomp(G + F) \leq \eta$. 
To see that (i) and (ii) hold, observe that the only difference between the graphs $G + F$ and $G + F'$ are the components of the these graphs that contain $V(H_q)$ and $V(H_{q'})$. Let $C_{F, q}$ be the component of $G + F$ that contains $V(H_{q})$ and $C_{F', q'}$ be the component of $G + F'$ that contains $H_{q'}$. Because $\card{H_{q'}} \leq \card{H_q}$ and $C_{F, q} - V(H_q) = C_{F', q'} - V(H_{q'})$, we have $\card{C_{F', q'}} \leq \card{C_{F, q}}$. Let us first see that (i) holds. 
\begin{enumerate}[(a)]
    \item If $C_{F', q'}$ is a largest component of $G + F'$, then $\lcomp(G + F') = \card{C_{F', q'}} \leq \card{C_{F, q}} \leq \lcomp(G + F)$. 
    \item If $H_q$ is largest component of $G + F'$, then again, $\lcomp(G + F') = \card{H_q} < \card{C_{F, q}} \leq \lcomp(G + F)$. 
    \item If the previous two cases do not hold, then $\lcomp(G + F') = \lcomp(G + F)$, as every component $H'$ of $G + F'$ such that $H' \neq C_{F', q'}$ and $H' \neq H_{q}$ is also a component of $G + F$.  
\end{enumerate}
In any case, (i) holds. Now, to see that (ii) holds,  observe the following facts. 
\begin{enumerate}[(a)]
    \item If $C_{F', q'}$ is a smallest component of $G + F'$, then $\scomp(G + F') = \card{C_{F', q'}} > \card{H_{q'}} \geq \scomp(G + F)$. 
    \item If $H_q$ is a smallest component of $G + F'$, then again, $\scomp(G + F') = \card{H_q} \geq \card{H_{q'}} \geq \scomp(G + F)$. 
    \item If the previous two cases do not hold, then $\scomp(G + F') = \scomp(G + F)$, as every component $H'$ of $G + F'$ such that $H' \neq C_{F', q'}$ and $H' \neq H_{q}$ is also a component of $G + F$. 
\end{enumerate}
Thus (ii) holds as well. We have thus shown that $F'$ is also a solution for $(G, k, \eta)$. Notice that (i) $F$  modifies $H_q$ but $F'$ does not, and (ii) for every component $H_j$ of $G$ with $j \in [s + 1, r - 1] \setminus \set{q}$, $F$ modifies $H_j$ if and only if $F'$ modifies $H_{j}$. Thus, $F'$ modifies strictly fewer components from $\{H_j ~|~ j \in [s + 1, r - 1]\}$ than $F$ does, which contradicts the definition of $F$. 
\end{claimproof}

Now, consider a solution $F$ for $(G, k, \eta)$. In light of Claim~\ref{claim:mid-components}, we assume that $F$ does not modify any of the components $H_{s + 1}, H_{s + 2},\ldots, H_{r - 1}$. We now argue that $F$ is a solution for $(G', k, \eta)$ as well. (Recall that $G'$ is the graph obtained from $G$ by applying Reduction Rule~\ref{rule:mid-components}). Notice first that $G' + F$ is an induced subgraph of $G + F$, and hence $G' + F$ is a cluster graph. To see that $G' + F$ is $\eta$-balanced, notice that each component of $G'+ F$ is also a component of $G + F$, and hence $\lcomp(G' + F) \leq \lcomp(G + F)$ and $\scomp(G' + F) \geq \scomp(G + F)$. We thus have $\lcomp(G' + F) - \scomp(G' + F) \leq \lcomp(G + F) - \scomp(G + F) \leq \eta$. We have thus shown that $(G', k, \eta)$ is a yes-instance.

Conversely, assume that $(G', k, \eta)$ is a yes-instance, and let $R$ be a solution for $(G', k, \eta)$. We claim that $R$ is a solution for $(G, k, \eta)$ as well. Notice that $G + R$ is indeed a cluster graph. So we only need to prove that $G + R$ is $\eta$-balanced. Again, to prove this, it is enough to prove that (i) $\lcomp(G + R) \leq \lcomp(G' + R)$ and (ii) $\scomp(G + R) \geq \scomp(G' + R)$, which will imply that $\lcomp(G + R) - \scomp(G + R) \leq \lcomp(G' + R) - \scomp(G' + R) \leq \eta$. 

Let $C_{R, r}$ be the component of $G' + R$ that contains $V(H_{r})$. Now, notice that every component of $G' + R$ is also a component of $G + R$. In addition, $G + R$ contains the components $H_{s + 1}, H_{s + 2}, \ldots, H_{r - 1}$. Hence, we have 
\begin{align*}
\lcomp(G + R) 
&= \max\set{\lcomp(G' + R), \card{H_{s + 1}}, \card{H_{s + 2}},\ldots, \card{H_{r - 1}}} \\
&= \max\set{\lcomp(G' + R), \card{H_{r - 1}}} ~~~~~~~~~~(\text{Because } \card{H_{s + 1}} \leq \card{H_{s + 2}} \leq \cdots \leq \card{H_{r - 1}}.)\\
&\leq \max\set{\lcomp(G' + R), \card{H_{r}}} ~~~~~~~~~~~~~(\text{Because } \card{H_{r - 1}} \leq \card{H_r}.)\\
&\leq \max\set{\lcomp(G' + R), \card{C_{R, r}}} ~~~~~~~~~~~(\text{Because } \card{H_{r}} \leq \card{C_{R, r}}.)\\
&=\lcomp(G' + R). ~~~~~~~~~~~~~~~~~~~~~~~~~~~~~(\text{Because } C_{R, r} \text{ is a component of } G' + R.)
\end{align*} 
\begin{sloppypar}
Let us now prove that $\scomp(G + R) \geq \scomp(G' + R)$. As before, $\scomp(G + R) = \min\set{\scomp(G' + R), \card{H_{s + 1}}, \ldots, \card{H_{ r - 1}}} = \min\set{\lcomp(G' + R), \card{H_{s + 1}}}$. There are two possible cases. 
\end{sloppypar}
\begin{description}
    \item[Case 1:] $R$ does not modify $H_j$ for some $j \in [s]$. Then $H_j$ is a component of $G' + R$. We thus have $\scomp(G' + R) \leq \card{H_j} \leq \card{H_{s + 1}}$, which implies that $\scomp(G + R) = \min\set{\scomp(G' + R), \card{H_{s + 1}}} = \scomp(G' + R)$. 
    \item[Case 2:] $R$ modifies $H_j$ for every $j \in [s]$. Then, by Lemma~\ref{lem:enlarged-cliques}, we have $\sum_{j \in [s]}\card{H_j} \leq 2k$. But then, by the definition of $s$, we must have $\card{H_{s + 1}} \geq 2k + 1$. We now argue that $G' + R$ contains a component of size at most $2k$, which will imply that $\scomp(G' + R) \leq 2k$.

    Since $\card{H_{s + 1}} \geq 2k + 1$, we have $\card{H_r} \geq \card{H_{s + 1}} \geq 2k + 1$. Thus, by Lemma~\ref{lem:enlarged-cliques}, $R$ does not modify the component $H_r$. But recall that the only components of $G'$ are $H_1, H_2,\ldots, H_s$ and $H_r$. Since $R$ does not modify $H_r$, we can  conclude that $G' + R$ contains a component with at most $\card{G'} - \card{H_r} = \sum_{j = 1}^s \card{H_j} \leq 2k$ vertices. 
    Therefore, $\scomp(G' + R) \leq 2k < \card{H_{s + 1}}$, which implies that $\scomp(G + R) = \min\set{\scomp(G' + R), \card{H_{s + 1}}} = \scomp(G' + R)$.
\end{description}
We have thus shown that $\lcomp(G + R) \leq \lcomp(G' + R)$ and $\scomp(G + R) \geq \scomp(G' + R)$, which implies that $\lcomp(G + R) - \scomp(G + R) \leq \lcomp(G' + R) - \scomp(G' + R) \leq \eta$. Thus $G + R$ is $\eta$-balanced, and hence $(G, k, \eta)$ is a yes-instance. 
\end{proof}

Assume from now on that Reduction Rule~\ref{rule:mid-components} is no longer applicable. Then $G$ contains $s + 1$ components: $H_1, H_2,\ldots, H_s$ and $H_r$, and we have $\sum_{j = 1}^s \card{H_{j}} \leq 4k$ and $\lcomp(G) = \card{H_r}$. We observe the following facts. 

\begin{observation}
\label{obs:bounds}
\begin{enumerate}

\item\label{item:eta-is-small} If $\eta \geq \card{H_r} = \lcomp(G)$, then $(G, k, \eta)$ is a yes-instance as $G$ is $\eta$-balanced; and since Reduction Rule~\ref{rule:sanity-check-1} is not applicable, we must have $\eta < \lcomp(G)$. 
\item If $\lcomp(G) = \card{H_r} \leq 4k$, then we already have $\card{G} = \card{H_r} + \sum_{j = 1}^s \card{H_{j}} \leq 4k + 4k = 8k$. Thus, $(G, k, \eta)$ is a kernel with $8k$ vertices. 
\item If $\eta \leq 4k$, then since Reduction Rule~\ref{rule:sanity-check-2} is not applicable, we have $\card{H_r} = \lcomp(G) \leq 2k + \eta \leq 2k + 4k = 6k$. We thus have $\card{G} = \card{H_r} + \sum_{j = 1}^s \card{H_{j}} \leq 6k + 4k = 10k$. Thus, $(G, k, \eta)$ is a kernel with $10k$ vertices in this case. 
\end{enumerate}
\end{observation}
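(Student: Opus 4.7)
The plan is to verify the three items in sequence, each relying only on the inapplicability of Reduction Rules~\ref{rule:sanity-check-1}, \ref{rule:sanity-check-2}, and~\ref{rule:mid-components}, together with the bound $\sum_{j=1}^s \card{H_j} \leq 4k$ guaranteed by the choice of the index $s$. Since $G$ is already a cluster graph and has exactly the $s+1$ connected components $H_1,\ldots,H_s,H_r$ at this stage, each item reduces to an arithmetic bound on $\card{G} = \card{H_r} + \sum_{j=1}^s \card{H_j}$.

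For item~\ref{item:eta-is-small}, my plan is to apply Observation~\ref{obs:blocker-balanced}. If $\eta \geq \lcomp(G)$, then $\lcomp(G) - \scomp(G) \leq \lcomp(G) \leq \eta$ (using $\scomp(G) \geq 1$), so $G$ itself is $\eta$-balanced. Because $G$ is a cluster graph, $F = \emptyset$ is a valid solution, certifying that $(G,k,\eta)$ is a yes-instance. The ``conversely'' part of the item is just the contrapositive: if Reduction Rule~\ref{rule:sanity-check-1} has not triggered, then $G$ is not $\eta$-balanced, which by the implication just established forces $\eta < \lcomp(G)$.

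For the second item, I will simply combine $\card{H_r} \leq 4k$ with $\sum_{j=1}^s \card{H_j} \leq 4k$ to conclude $\card{G} \leq 8k$, so $(G,k,\eta)$ is itself a kernel with at most $8k$ vertices. For the third item, I will use that Reduction Rule~\ref{rule:sanity-check-2} is not applicable, which yields $\lcomp(G) \leq 2k + \eta$; combined with $\eta \leq 4k$, this gives $\card{H_r} = \lcomp(G) \leq 6k$, and thus $\card{G} \leq 6k + 4k = 10k$.

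There is no real obstacle in this proof: every line is immediate from a definition or from the negation of a reduction rule's trigger condition. The only point worth flagging explicitly is the logical form of ``not applicable,'' namely that the hypothesis of each reduction rule fails, which is precisely what supplies the two inequalities $G$ is not $\eta$-balanced and $\lcomp(G) \leq 2k + \eta$ used above. After recording Observation~\ref{obs:bounds}, one may then reduce the remaining case $\eta > 4k$ and $\lcomp(G) > 4k$ separately to complete the proof of Theorem~\ref{thm:completion-kernel}.
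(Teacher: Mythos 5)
Your proposal is correct and follows essentially the same reasoning the paper gives inline within the observation: item 1 is the contrapositive of "$\eta \geq \lcomp(G)$ implies $G$ is $\eta$-balanced" combined with the inapplicability of Reduction Rule~\ref{rule:sanity-check-1}, and items 2 and 3 are the same arithmetic using $\sum_{j=1}^s \card{H_j} \leq 4k$ and the bound $\lcomp(G) \leq 2k + \eta$ from Reduction Rule~\ref{rule:sanity-check-2} being inapplicable. No gaps.
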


In light of Observation~\ref{obs:bounds}, we  apply the following reduction rule.

\begin{rrule}
\label{rule:small-Hr-eta}
If $\lcomp(G) = \card{H_r} \leq 4k$ or $\eta \leq 4k$, then we simply return the instance $(G, k, \eta)$.  
\end{rrule}

We assume from now on that Reduction Rule~\ref{rule:small-Hr-eta} is no longer applicable, and hence $\lcomp(G) = \card{H_r} > 4k$ and $\eta > 4k$. {\bf We define $\bm{\eta' = 4k}$ and $\bm{N = \lcomp(G) - (\eta - 4k)}$.} 

\begin{observation}
\label{obs:new-bounds}
\begin{enumerate}
\item\label{item:lb-for-N} Recall that we have $\eta < \lcomp(G)$ (Observation~\ref{obs:bounds}-\ref{item:eta-is-small}). Therefore, $N = \lcomp(G) - (\eta - 4k) > \eta - (\eta - 4k) = 4k$. 
\item\label{item:ub-for-N} Recall also that we have $\lcomp(G) \leq 2k + \eta$. Therefore, $N = \lcomp(G) - (\eta - 4k) \leq 2k + \eta - (\eta - 4k) = 6k$. 
\end{enumerate}
\end{observation}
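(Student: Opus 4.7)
The plan is to establish both bounds by direct algebraic substitution into the defining identity $N = \lcomp(G) - (\eta - 4k)$, relying on two inequalities already secured earlier in this subsection: a lower bound on $\lcomp(G)$ in terms of $\eta$, and an upper bound on $\lcomp(G)$ in terms of $\eta$ and $k$.

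For item~\ref{item:lb-for-N}, I would first invoke the fact noted in Observation~\ref{obs:bounds}-\ref{item:eta-is-small}: since Reduction Rule~\ref{rule:sanity-check-1} is no longer applicable, the graph $G$ is not $\eta$-balanced, and by Observation~\ref{obs:blocker-balanced} this forces $\lcomp(G) - \scomp(G) > \eta$, hence in particular $\lcomp(G) > \eta$. Rewriting $N = \lcomp(G) - \eta + 4k$ and substituting $\lcomp(G) > \eta$ into the first two terms yields $N > 4k$.

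For item~\ref{item:ub-for-N}, I would rely on Lemma~\ref{lem:sanity-check-2} (which is precisely the justification encoded in Reduction Rule~\ref{rule:sanity-check-2} not being applicable), giving $\lcomp(G) \leq 2k + \eta$. Again rewriting $N = \lcomp(G) - \eta + 4k$ and substituting the bound $\lcomp(G) - \eta \leq 2k$ yields $N \leq 6k$. There is no real obstacle here; the only care required is to confirm that the hypotheses of the two inputs are actually in force at this point of the proof, namely that Reduction Rules~\ref{rule:sanity-check-1} and~\ref{rule:sanity-check-2} have both been applied without triggering (otherwise we would have already returned an answer), so the prerequisites of Observation~\ref{obs:bounds}-\ref{item:eta-is-small} and Lemma~\ref{lem:sanity-check-2} are indeed satisfied.
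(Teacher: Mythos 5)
Your proof is correct and is essentially the paper's own justification: both bounds follow by substituting $\eta < \lcomp(G)$ (from Reduction Rule~\ref{rule:sanity-check-1} not being applicable, via Observation~\ref{obs:bounds}) and $\lcomp(G) \leq 2k + \eta$ into $N = \lcomp(G) - (\eta - 4k)$. One cosmetic remark: at this point the bound $\lcomp(G) \leq 2k + \eta$ is in force because Reduction Rule~\ref{rule:sanity-check-2} did not trigger (Lemma~\ref{lem:sanity-check-2} itself is conditional on a yes-instance), and you correctly identify this as the operative hypothesis, so the argument stands as written.
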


We now apply the following reduction rule once. 
\begin{rrule}
\label{rule:final-rule-for-bcc}
 We delete $\lcomp(G) - N$ (arbitrarily chosen) vertices from $H_r$; let us denote the resulting component by $H'$ and the resulting graph by $G'$. We return the instance $(G', k, \eta')$
\end{rrule}

\begin{lemma}
Reduction Rule~\ref{rule:final-rule-for-bcc} is safe. 
\end{lemma}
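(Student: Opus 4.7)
The plan is to prove both directions of the equivalence by exploiting the fact that neither $H_r$ nor $H'$ can be touched by any solution, so every candidate solution lives on $V(H_1) \cup \cdots \cup V(H_s)$ and is simultaneously applicable to both $G$ and $G'$. Since $|H_r| > 4k \geq k$ and $|H'| = N > 4k \geq k$ (using Observation~\ref{obs:new-bounds}), Lemma~\ref{lem:enlarged-cliques} forbids any solution of size at most $k$ from modifying $H_r$ or $H'$. Hence the components of $G + F$ are $H_r$ together with some components $C_1, \dots, C_p$ contained in $V(H_1) \cup \cdots \cup V(H_s)$, and the components of $G' + F$ are exactly these same $C_i$'s together with $H'$. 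Since $\sum_{j=1}^{s} |H_j| \leq 4k$, every $|C_i| \leq 4k$; combined with $|H_r| > 4k$ and $N > 4k$, this pins $\lcomp(G+F) = |H_r|$ and $\lcomp(G'+F) = N$.

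The key arithmetic identity driving the equivalence is $N - 4k = \lcomp(G) - \eta$. For the forward direction, if $F$ is a solution for $(G, k, \eta)$ then the $\eta$-balance condition in $G + F$ gives $\scomp(G+F) = \min_i |C_i| \geq |H_r| - \eta = N - 4k$, so in $G' + F$ we get $\lcomp(G'+F) - \scomp(G'+F) \leq N - (N - 4k) = 4k = \eta'$ (with the case $N \leq \min_i |C_i|$ making the inequality trivial). For the reverse direction, given a solution $F'$ for $(G', k, \eta')$, I would observe that $\min_i |C_i| \leq \sum_j |H_j| \leq 4k < N$, so $\scomp(G'+F') = \min_i |C_i|$ and the $\eta'$-balance condition yields $\min_i |C_i| \geq N - 4k = \lcomp(G) - \eta$. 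Transferring to $G + F'$ then gives $\lcomp(G+F') - \scomp(G+F') = |H_r| - \min_i |C_i| \leq \eta$, as required.

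The main obstacle is keeping the size comparisons tight enough to correctly identify the largest and smallest components in each of $G + F$ and $G' + F$; this is precisely what forces the parameter choice $\eta' = 4k$ and $N = \lcomp(G) - (\eta - 4k)$, so that the ``small-side'' slack $\lcomp - \eta$ is preserved exactly while both the largest component and the balance threshold contract by the same amount $\eta - 4k$. The bounds $4k < N \leq 6k$ from Observation~\ref{obs:new-bounds} together with $\sum_{j=1}^s |H_j| \leq 4k$ are exactly what make both $H_r$ and $H'$ strictly dominate every $C_i$ in size, so once this accounting is in place each direction reduces to a short calculation.
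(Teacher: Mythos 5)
Your proof is correct and follows essentially the same route as the paper: both use Lemma~\ref{lem:enlarged-cliques} (via $\card{H_r}, \card{H'} > 4k$) to show no solution touches $H_r$ or $H'$, observe that the remaining components of $G+F$ and $G'+F$ coincide so the largest components are pinned to $\card{H_r}$ and $N$ respectively, and then shift the balance inequality by $\eta - 4k$ (your identity $N - 4k = \lcomp(G) - \eta$ is exactly the paper's arithmetic).
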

\begin{proof}
Informally, Reduction Rule~\ref{rule:final-rule-for-bcc} is safe because we deleted $\lcomp(G) - N = \eta - 4k$ vertices from $H_r$ and we decremented $\eta$ by the same amount; that is, we set $\eta' = 4k = \eta - (\eta - 4k)$. The safeness of the rule then follows from the facts that no solution for $(G, k, \eta)$ modifies $H_r$ as $\card{H_r} > 4k$, and no solution for $(G', k, \eta')$ modifies $H'$ as $\card{H'} = \lcomp(G) - (\lcomp(G) - N) = N > 4k$. We now prove this more formally. 

Recall that $G$ consists of the connected components $H_1, H_2,\ldots, H_s$ and $H_r$; and $G'$ consists of the connected components $H_1, H_2,\ldots, H_s$ and $H'$. Also, $\lcomp(G) = \card{H_r}$; and $\card{H'} = \card{H_r} - (\lcomp(G) - N) = N$. By Observation~\ref{obs:new-bounds}-\ref{item:lb-for-N}, we have $N > 4k$. That is, $H'$ is a component of $G'$ with at least $4k + 1$ vertices. And by the definition of $s$, we have $\sum_{j = 1}^{s} \card{H_j} \leq 4k$. Thus $H'$ is the unique largest component of $G'$, and hence $\lcomp(G') = \card{H'} = N$.  

Assume that $(G, k, \eta)$ is a yes-instance, and let $F$ be a solution for $(G, k, \eta)$. We argue that $F$ is a solution for $(G', k, \eta')$ as well. First, since $\card{H_r} > 4k$, by Lemma~\ref{lem:enlarged-cliques}, $F$ does not modify $H_r$. Now, consider the graph $G' + F$; this graph is well-defined and it is indeed a cluster graph as $F$ does not modify $H_r$. In particular, $F$ does not modify $H'$ (as $V(H') \subseteq V(H_r)$). Also, since $\sum_{j = 1}^{s} \card{H_j} \leq 4k$, $\card{H_r} > 4k$ and $\card{H'} > 4k$, $H_{r}$ and $H'$ respectively are the unique largest components of $G + F$ and $G' + F$. Thus, $\lcomp(G + F) =  \card{H_r} = \lcomp(G)$ and $\lcomp(G' + F) = \card{H'} = N  = \lcomp(G')$.  
Again, since $F$ does not modify $H_r$, each component of $G + F$, except $H_r$, is also a component of $G' +F$; and each component of $G' + F$, except $H'$, is also a component of $G +F$. In particular, a smallest component of $G + F$ is a smallest component of $G' + F$, and vice versa. That is, $\scomp(G + F) = \scomp(G' + F)$. 
Since $G + F$ is $\eta$-balanced, we have 
$\lcomp(G + F) - \scomp(G + F) \leq \eta$, which implies that $\lcomp(G + F) - (\eta - 4k) - \scomp(G + F) \leq \eta - (\eta - 4k)$, which implies that $N - \scomp(G' + F) \leq 4k$. That is, $\lcomp(G' + F) - \scomp(G' + F) \leq \eta'$, and thus $G' + F$ is $\eta$-balanced. 

Conversely, assume that $(G', k', \eta')$ is a yes-instance, and let $F'$ be a solution for $(G', k, \eta')$. Then, $F'$ does not modify $H'$, and  we can argue that $F'$ is a solution for $(G, k, \eta)$. In fact, we have (i) $\lcomp(G + F') = \card{H_r} = \lcomp(G)$, (ii) $\lcomp(G' + F') = \card{H'} = N = \lcomp(G')$ and (iii) $\scomp(G + F') = \scomp(G' + F')$. Since $G' + F'$ is $\eta'$-balanced, we have $\lcomp(G' + F') - \scomp(G' + F') \leq \eta'$, which implies that $N - 
\scomp(G' + F') \leq \eta'$, which implies that $N + (\eta - 4k) - \lcomp(G' + F') \leq \eta' + (\eta - 4k)$, which implies that $\lcomp(G + F') - \scomp(G + F') \leq \eta$. Thus $G + F'$ is $\eta$-balanced. 
\end{proof}

Consider the instance $(G', k, \eta')$ returned by Reduction Rule~\ref{rule:final-rule-for-bcc}. Recall that $\eta' = 4k$ and $\card{H'} = N \leq 6k$ (Observation~\ref{obs:new-bounds}). The components of $G'$ are $H_1, H_2, \ldots, H_s$ and $H'$, and we thus have $\card{G'} = \sum_{j = 1}^s \card{H_j} + \card{H'} \leq 4k + 6k = 10k$. That is, $(G', k, \eta')$ is a kernel with at most $10k$ vertices. We have thus proved Theorem~\ref{thm:completion-kernel}.

\subsection{Polynomial Kernel for Balanced Cluster Deletion}
We formally define the \bcdfull\ (\bcd) problem as follows. 

\defproblem{\bcdfull\ (\bcd)}{A graph $G$ and non-negative integers $k$ and $\eta$.}{Decide if there exists $F \subseteq E(G)$ such that $\card{F} \leq k$ and $G - F$ is an $\eta$-balanced cluster graph.}

We now show that \bcd\ admits a kernel with $\cO(k^4)$ vertices. To design our kernel, we rely on a structural property exhibited by yes-instances: We argue that if $(G, k, \eta)$ is a yes-instance, then any  two non-adjacent vertices in $G$ have at most $k$ common neighbours. To exploit this observation algorithmically,  we turn to the class of $c$-closed graphs, which was introduced by Fox et al.~\cite{DBLP:journals/siamcomp/FoxRSWW20}. We define $c$-closed graphs below, and briefly summarise their properties that we will be using.   
\subparagraph*{$\bm{c}$-Closed Graphs.} For a positive integer $c$, we say that a graph $G$ is $c$-closed if any two distinct non-adjacent vertices in $G$ have at most $c -1$ neighbours in common. That is, for distinct $u, v \in V(G)$, we have $\card{N(u) \cap N(v)} \leq c - 1$ if $uv \notin E(G)$. We will use the following lemma, which relies on the fact that $c$-closed graphs have polynomially bounded Ramsey numbers.  %

\begin{lemma}[\cite{DBLP:journals/siamdm/KoanaKS22}]
\label{lem:clique-or-indset}
For $a, b, c \in \mathbb{N}$, let $\bm{R_c(a, b) = (a - 1)(b - 1) + (c - 1){{b - 1} \choose {2}} + 1}$. 
There is an algorithm that, given $a, b, c \in \mathbb{N}$ and a \cc\ graph $G$ on at least $R_c(a, b)$ vertices as input, runs in polynomial time, and returns either a maximal clique in $G$ of size at least $a$ or an independent set in $G$ of size $b$. %
\end{lemma}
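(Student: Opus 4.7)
My plan is to prove the lemma by induction on $a$, with a recursive polynomial-time algorithm that realises the inductive step. The base cases $a = 1$ and $b = 1$ are immediate: when $b = 1$ any single vertex is an independent set of size $1$, and when $a = 1$ any vertex can be greedily extended to a maximal clique of $G$ of size at least $1$.

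For the inductive step, the algorithm first builds an independent set $I = \{v_1, v_2, \ldots, v_t\}$ greedily, adding at each step any vertex non-adjacent to the previously chosen ones, and stopping as soon as either $|I| = b$ (in which case $I$ is returned) or $I$ becomes maximal with $t \leq b - 1$. In the latter case, for each $v_i$ we greedily extend $\{v_i\}$ to a maximal clique $C_i$ of $G$; if $|C_i| \geq a$ for some $i$ we return $C_i$. Otherwise, every maximal clique of $G$ containing any $v_i$ has size at most $a - 1$, so the induced subgraph $G[N_G(v_i)]$ has clique number at most $a - 2$. The crucial counting step partitions $V(G) \setminus I$ by $S(u) := \{v_i \in I \mid u \sim v_i\}$, which is non-empty since $I$ is maximal. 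As any two vertices of $I$ are non-adjacent, the \cc\ property bounds the number of vertices $u$ with $|S(u)| \geq 2$ by $(c - 1)\binom{t}{2}$. For each $i$, writing $A_i = \{u \in V(G) \setminus I \mid S(u) = \{v_i\}\}$, the induced subgraph $G[A_i]$ is \cc, has clique number at most $a - 2$, and has independence number at most $b - t$ (any independent set of size $b - t + 1$ in $A_i$, combined with $I \setminus \{v_i\}$, would yield an independent set of size $b$ in $G$).

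Putting these ingredients together yields $|V(G)| \leq t + (c - 1)\binom{t}{2} + \sum_{i = 1}^{t} |A_i|$, so since $|V(G)| \geq R_c(a, b)$, an averaging argument produces an index $i$ with $|A_i| \geq R_c(a - 1, b - t + 1)$. Invoking the inductive hypothesis on $G[A_i]$ then returns either a maximal clique of $G[A_i]$ of size at least $a - 1$---which, together with $v_i$, forms a clique of size $a$ in $G$ that can be greedily extended to a maximal clique of $G$ of size at least $a$---or an independent set of size $b - t + 1$ in $G[A_i]$, which together with $I \setminus \{v_i\}$ gives the desired independent set of size $b$ in $G$. Each recursive call strictly decreases $a$, so the recursion depth is $\cO(a)$ and each level performs only polynomial work on $G$, yielding a polynomial-time algorithm overall. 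The main obstacle I expect is the arithmetic verification that $\lceil (R_c(a, b) - t - (c - 1)\binom{t}{2}) / t \rceil \geq R_c(a - 1, b - t + 1)$ for every $t \in [b - 1]$; this reduces to a routine polynomial comparison in $t$ that is tightest at $t = b - 1$, where the two sides agree after a short calculation.
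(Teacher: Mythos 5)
You should first note that the paper does not contain a proof of this lemma at all: it is imported as a black box from the cited work of Koana, Komusiewicz and Sommer, so your argument has to stand entirely on its own. Its skeleton is fine---the greedy maximal independent set $I=\set{v_1,\ldots,v_t}$ with $t\le b-1$, the count $\card{V(G)}\le t+(c-1)\binom{t}{2}+\sum_{i}\card{A_i}$ (vertices with $\card{S(u)}\ge 2$ are common neighbours of a non-adjacent pair in $I$), and the way a recursive answer on $G[A_i]$ is lifted back to $G$ are all correct. The genuine gap is exactly the step you deferred as a ``routine polynomial comparison'': the inequality $\bigl\lceil (R_c(a,b)-t-(c-1)\binom{t}{2})/t\bigr\rceil\ge R_c(a-1,b-t+1)$ is \emph{false} for intermediate values of $t$, and it is not tightest at $t=b-1$. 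Concretely, take $c=1$, $b=5$, $t=2$: then $R_1(a,5)=4a-3$, so the averaging only guarantees some $\card{A_i}\ge\lceil(4a-5)/2\rceil=2a-2$, while the recursion needs $R_1(a-1,4)=3a-5$; for every $a\ge 4$ the requirement exceeds the guarantee, and the deficit grows linearly in $a$ (an analogous failure occurs for every $c\ge 1$, and also at $t=3$). The structural reason is that $R_c(\cdot,b)$ is linear in $a$: averaging over $t$ private neighbourhoods divides the dominant $(a-1)(b-1)$ term by $t$, whereas lowering the target from $b$ to $b-t+1$ only shrinks the $b$-dependent factor; the two effects balance at the endpoints $t\in\set{1,b-1}$ (where you do have slack $b-1$ and $1$, respectively) but not near $t\approx b/2$, where $t(b-t)$ is largest. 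Since independent sets taken from different $A_i$'s cannot be combined (there may be edges between $A_i$ and $A_j$), you really do need a single $A_i$ of size $R_c(a-1,b-t+1)$, so when the inequality fails the recursive call has no precondition and the algorithm has no valid output; this is a missing idea, not a presentational issue.

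Two smaller remarks. From the fact that one greedily computed maximal clique $C_i\ni v_i$ has size at most $a-1$ you cannot conclude that every maximal clique through $v_i$ is small, i.e.\ that $G[N_G(v_i)]$ has clique number at most $a-2$ (determining that would amount to solving a clique problem); fortunately your algorithm never uses this claim, so it is a harmless non sequitur. More importantly, because the obstruction above is quantitative rather than cosmetic, a repair needs a genuinely different induction or a smarter choice than plain averaging over the private neighbourhoods of an arbitrary maximal independent set; for a correct argument (and the algorithmic version returning a \emph{maximal} clique) you should consult the cited source rather than try to salvage the $t$-averaging step.
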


\subparagraph*{Outline of the kernel.} Consider an instance $(G, k, \eta)$ of \bcd. Our kernel has three main steps. {\bf In the first step, we bound the number of vertices that belong to the non-clique components of $\bm{G}$ by $\bm{\cO(k^4)}$.} We first bound the number of non-clique components by $k$; this is straightforward as we must delete at least one edge from each non-clique component to turn $G$ into a cluster graph. We then bound the size of each non-clique component by $\cO(k^3)$. To do this, we argue that if $(G, k , \eta)$ is a yes-instance, then every pair of non-adjacent vertices have at most $k$ common neighbours, and thus $G$ is $(k + 1)$-closed. In particular, each non-clique component of $G$ is $(k + 1)$-closed. To bound the size of such components, we fashion a reduction rule based on Lemma~\ref{lem:clique-or-indset} that works as follows. For each non-clique component $H$ of $G$ of size at least $R_{k + 1}(k + 2, k + 2)$, we run the algorithm of Lemma~\ref{lem:clique-or-indset} on $H$ (with $a = b = k + 2$ and $c = k + 1$). If the algorithm returns an independent set of size $k + 2$, then we argue that $(G, k, \eta)$ is a no-instance. If the algorithm returns a maximal clique $Q$ of size at least $k + 2$, then we delete all the edges with exactly one endpoint in $V(Q)$; this is possible because we argue that $Q$ must necessarily be a connected component of $G - F$ for every solution $F$ for $(G, k, \eta)$. When this reduction rule is no longer applicable, every non-clique component will have size at most $R_{k + 1}(k + 2, k + 2) - 1 = \cO(k^3)$. 

Having dealt with non-clique components, we then turn to the clique-components of $G$. We classify the clique-components of $G$ into two types---small and large. By small cliques, we mean cliques of size at most $k + 1$, and by large cliques, we mean cliques of size at least $k + 2$. {\bf In the second step, we bound the number of vertices of $\bm{G}$ that belong to small clique-components  by $\bm{\cO(k^3)}$.} To do this, we show that for each $j \in [k + 1]$, we only need to keep at most $k + 1$ clique-components of size exactly $j$. 

{\bf In the third step, we bound the number of vertices of $\bm{G}$ that belong to large clique-components by $\bm{\cO(k^3)}$.} Specifically, we apply a reduction rule that works as follows. Let $H_1, H_2,\ldots, H_r$ be the large clique-components of $G$ such that $\card{H_1} \leq \card{H_2} \leq \cdots \leq \card{H_r}$. If $\card{H_1} \leq R_{k + 1}(k + 2, k + 2) = \cO(k^3)$, then we delete the components $H_2, H_3,\ldots, H_{r - 1}$; otherwise, we delete the components $H_1, H_2,\ldots, H_{r - 1}$. This is possible because we argue that no solution for $(G, k, \eta)$ modifies the large clique-components. Finally, we bound $\card{H_r}$ by $\cO(k^3)$ by deleting sufficiently many vertices from $H_r$, and we adjust $\eta$ accordingly. These steps lead to the required kernel with $\cO(k^4)$ vertices.  \lipicsEnd

We now proceed to designing our kernel for \bcd. Recall that for a graph $G$, $\lcomp(G)$ and $\scomp(G)$ respectively denote the number of vertices in a largest connected component and a smallest connected component of $G$. Consider an instance $(G, k, \eta)$ of \bcd. We first apply the following reduction rule that eliminates obvious yes-instances. 

\begin{rrule}
\label{rule:deletion-sanity-check}
If $G$ is an $\eta$-balanced cluster graph, then we return a trivial yes-instance. 
\end{rrule}

We now apply the following reduction rule, which upper bounds $\eta$ by $\lcomp(G)$. The correctness of the rule follows from the observation that $\eta$ need never be larger than $\lcomp(G)$. 

\begin{rrule}
\label{rule:deletion-bound-eta}
If $\eta > \lcomp(G)$, then we return the instance $(G, k, \hat \eta)$, where $\hat \eta = \lcomp(G)$. 
\end{rrule}

\begin{lemma}
\label{lem:deletion-bound-eta}
Reduction Rule~\ref{rule:deletion-bound-eta} is safe. 
\end{lemma}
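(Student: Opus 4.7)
The plan is to show that $(G,k,\eta)$ and $(G,k,\hat\eta)$ are equivalent instances, where $\hat\eta = \lcomp(G) < \eta$. The backward direction is immediate: any solution $F$ for $(G,k,\hat\eta)$ produces a graph $G - F$ that is an $\hat\eta$-balanced cluster graph, and since $\hat\eta < \eta$, $G - F$ is automatically $\eta$-balanced, so $F$ is a solution for $(G,k,\eta)$ too.

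For the forward direction, I will show the stronger statement that any solution $F$ for $(G,k,\eta)$ is already a solution for $(G,k,\hat\eta)$. The key observation is that $G - F$ is a spanning subgraph of $G$, so by the first item of Observation~\ref{obs:component-delete}, $\lcomp(G - F) \leq \lcomp(G) = \hat\eta$. On the other hand, every connected component of any (non-empty) graph contains at least one vertex, so $\scomp(G - F) \geq 1$. Combining these bounds with the second item of Observation~\ref{obs:blocker-balanced} gives
\[
\lcomp(G - F) - \scomp(G - F) \leq \hat\eta - 1 \leq \hat\eta,
\]
which means $G - F$ is $\hat\eta$-balanced. Since $G - F$ is a cluster graph and $|F| \leq k$, $F$ qualifies as a solution for $(G,k,\hat\eta)$.

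There is essentially no obstacle here; the only mild subtlety is ensuring that $\scomp(G - F)$ is well-defined, which is true whenever $V(G) \neq \emptyset$. The degenerate case $V(G) = \emptyset$ is a trivial $\eta$-balanced cluster graph and would already be dispatched by Reduction Rule~\ref{rule:deletion-sanity-check}, so we may safely assume $G$ has at least one vertex when this rule is invoked.
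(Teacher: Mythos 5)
Your proof is correct and follows essentially the same route as the paper: both arguments observe that for any $F \subseteq E(G)$ the subgraph $G - F$ satisfies $\lcomp(G - F) \leq \lcomp(G) = \hat\eta$, so $G - F$ is automatically both $\hat\eta$-balanced and $\eta$-balanced, making the two instances trivially equivalent. The only cosmetic difference is that you invoke $\scomp(G - F) \geq 1$ while the paper simply drops the $\scomp$ term; nothing substantive changes.
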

\begin{proof}
Suppose that $\eta > \lcomp(G)$. 
Observe that for any  set $F \subseteq E(G)$, the graph $G - F$ is trivially $\eta$-balanced and trivially $\hat \eta$-balanced. To see this, notice that as $G - F$ is a subgraph of $G$, we have $\lcomp(G - F) \leq \lcomp(G) = \hat \eta < \eta$. Therefore, we always have $\lcomp(G - F) - \scomp(G - F) \leq \lcomp(G - F) \leq \hat \eta < \eta$. Thus, $F$ is a solution for the instance $(G, k, \eta)$ if and only if $F$ is a solution for the instance $(G, k, \hat \eta)$.  
\end{proof}

Assume from now on that Reduction Rules~\ref{rule:deletion-sanity-check} and \ref{rule:deletion-bound-eta} are no longer applicable. Since Rule~\ref{rule:deletion-sanity-check} is not applicable, either $G$ contains a non-clique component or $G$ is a cluster graph but not $\eta$-balanced. Since Reduction Rule~\ref{rule:deletion-bound-eta} is not applicable, we have $\eta \leq \lcomp(G)$. 
We now apply a series of reduction rules that deal separately with the non-clique components (Reduction Rules~\ref{rule:deletion-sanity-check-2} and \ref{rule:deletion-clique-or-indset}), the components that are ``small cliques'' (Reduction Rule~\ref{rule:deletion-small-cliques}) and the components that are ``large'' cliques (Reduction Rules~\ref{rule:deletion-large-cliques-1}-\ref{rule:deletion-final-rule-for-bcd}). We begin with non-clique components, and first  apply the following reduction rule. The safeness of the rule follows immediately from the fact that we must delete at least one edge from each non-clique component of $G$ to turn $G$ into a cluster graph. 
\begin{rrule}
\label{rule:deletion-sanity-check-2}
If $G$ contains at least $k + 1$ non-clique components, then we return a trivial no-instance. 
\end{rrule}

We assume from now on that Reduction Rule~\ref{rule:deletion-sanity-check-2} is no longer applicable. Thus $G$ has at most $k$ non-clique connected components. We now introduce the following reduction rule. 
\begin{rrule}
\label{rule:non-adjacent}
If $G$ contains two distinct non-adjacent vertices with at least $k + 1$ common neighbours, then we return a trivial no-instance. 
\end{rrule}

\begin{lemma}
Reduction Rule~\ref{rule:non-adjacent} is safe. 
\end{lemma}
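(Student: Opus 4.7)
The plan is to prove the contrapositive: if $(G, k, \eta)$ is a yes-instance, then every pair of distinct non-adjacent vertices of $G$ has at most $k$ common neighbours. Equivalently, I will show that whenever $G$ contains two distinct non-adjacent vertices $u, v$ with at least $k + 1$ common neighbours $w_1, w_2, \ldots, w_{k + 1}$, every candidate $F \subseteq E(G)$ with $G - F$ a cluster graph must satisfy $|F| \geq k + 1$, so no solution of size at most $k$ exists.

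First I would fix such vertices $u, v$ and common neighbours $w_1, \ldots, w_{k + 1}$ and suppose for contradiction that some $F \subseteq E(G)$ with $|F| \leq k$ makes $G - F$ an $\eta$-balanced cluster graph. Since $uv \notin E(G)$, we also have $uv \notin E(G - F)$. Because $G - F$ is a cluster graph (every component is a clique), the fact that $u$ and $v$ are non-adjacent in $G - F$ forces $u$ and $v$ to lie in distinct connected components of $G - F$.

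Next I would exploit this separation at each common neighbour. For each $i \in [k + 1]$, both edges $uw_i$ and $vw_i$ belong to $E(G)$. If both edges survived in $G - F$, then $u, v, w_i$ would all lie in the same component of $G - F$ (as $w_i$ would be adjacent to both $u$ and $v$), contradicting the fact that $u$ and $v$ are in different components. Hence at least one of $uw_i$ or $vw_i$ belongs to $F$.

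Finally, I would observe that the edges $\{uw_i : i \in [k + 1]\} \cup \{vw_i : i \in [k + 1]\}$ are $2(k + 1)$ pairwise distinct edges of $G$ (the $w_i$ are distinct and $u \neq v$), so the $k + 1$ pairs $(uw_i, vw_i)$ are disjoint. Picking one required edge from each pair gives $k + 1$ distinct edges of $F$, so $|F| \geq k + 1$, contradicting $|F| \leq k$. Therefore $(G, k, \eta)$ must be a no-instance, which is exactly what Reduction Rule~\ref{rule:non-adjacent} outputs. There is no real obstacle here; the only subtle point is noting that the disjointness of the $k + 1$ forced choices follows from $u \neq v$ and distinctness of the $w_i$, which lets us pass from ``at least one per pair'' to a lower bound of $k + 1$ on $|F|$.
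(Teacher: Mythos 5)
Your proof is correct and follows essentially the same argument as the paper: since $u$ and $v$ remain non-adjacent in the cluster graph $G - F$ they lie in different components, so for each common neighbour $w_i$ the set $F$ must contain $uw_i$ or $vw_i$, and the $k+1$ disjoint pairs force $\card{F} \geq k + 1$, a contradiction. Your explicit remark about the disjointness of the pairs is a fine addition, but the reasoning is the paper's.
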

\begin{proof}
To prove the lemma, it is enough to show that $(G, k, \eta)$ is a no-instance if $G$ contains two distinct non-adjacent vertices with at least $k + 1$ common neighbours. Let $u, v \in V(G)$ be distinct vertices such that $uv \notin E(G)$ and $\card{N(u) \cap N(v)} \geq k + 1$, and let $x_1, x_2,\ldots, x_{k + 1} \in V(G)$ be $k + 1$ distinct common neighbours of $u$ and $v$.  
Suppose that $(G, k, \eta)$ is a yes-instance, and let $F$ be a solution for $(G, k, \eta)$. Then $G - F$ is a cluster graph. Since $u$ and $v$ are non-adjacent, $u$ and $v$ must be in different connected components of $G - F$. Therefore, for every $i \in [k + 1]$, $F$ must contain either the edge $ux_i$ or the edge $vx_i$. Thus $\card{F} \geq k + 1$, which contradicts the fact that $\card{F} \leq k$. 
\end{proof}

Assume from now on that Reduction Rule~\ref{rule:non-adjacent} is no longer applicable. Thus any two distinct non-adjacent vertices in $G$ have at most $k$ common neighbours, and hence $G$ is $(k + 1)$-closed. We now prove the following two lemmas, which we will use to fashion a reduction rule (Reduction Rule~\ref{rule:deletion-clique-or-indset}) that bounds the size of non-clique components. 

\begin{lemma}
\label{lem:indset-k+2}
If a connected component of $G$ contains an independent set of size $k + 2$, then $(G, k , \eta)$ is a no-instance. 
\end{lemma}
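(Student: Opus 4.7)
The plan is to derive a contradiction from the assumption that $(G,k,\eta)$ is a \yes-instance while some component $H$ of $G$ contains an independent set of size $k+2$. The key observation is that in any cluster graph, two non-adjacent vertices must lie in distinct connected components, so a large independent set surviving in $G-F$ forces $G-F$ to have many components in the region occupied by that independent set; but because these vertices all start off in the single component $H$, the deletion set $F$ must contain enough edges to fragment $H$, and a simple edge-versus-component counting argument will show this requires more than $k$ edges.

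More concretely, I would argue as follows. Suppose for contradiction that $(G,k,\eta)$ admits a solution $F\subseteq E(G)$ with $\card{F}\le k$, and let $\{v_1,v_2,\ldots,v_{k+2}\}\subseteq V(H)$ be an independent set in $G$, where $H$ is a connected component of $G$. Since $v_iv_j\notin E(G)$ and $E(G-F)\subseteq E(G)$, we have $v_iv_j\notin E(G-F)$ for all $i\ne j$. Because $G-F$ is a cluster graph, any two non-adjacent vertices belong to different connected components of $G-F$; hence $v_1,\ldots,v_{k+2}$ lie in $k+2$ pairwise distinct components of $G-F$.

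Next I would localize this within $H$. Let $F_H=F\cap E(H)$; since $V(H)$ induces a union of components in both $G$ and $G-F$, the restriction of $G-F$ to $V(H)$ is exactly $H-F_H$, and the $k+2$ vertices above lie in $k+2$ distinct components of $H-F_H$. Because $H$ is connected and each single edge removal can increase the number of connected components of a graph by at most one, going from one component (in $H$) to at least $k+2$ components (in $H-F_H$) requires $\card{F_H}\ge k+1$. This contradicts $\card{F}\le k\ge \card{F_H}$, completing the proof.

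There is no real obstacle here; the only subtle step is noting the monotone bound ``one edge removal raises component count by at most one,'' which is standard. The rest is just translating non-adjacency in a cluster graph into separation into distinct clique-components and applying it to the given independent set.
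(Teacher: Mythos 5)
Your proof is correct and follows essentially the same route as the paper: both use the facts that in the cluster graph $G-F$ the $k+2$ pairwise non-adjacent vertices must lie in distinct components, and that deleting at most $k$ edges from the connected component $H$ yields at most $k+1$ components, producing the contradiction. The paper phrases the final step as a pigeonhole argument while you count edge deletions directly, but these are the same observation.
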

\begin{proof}
Let $H$ be a connected component of $G$, and let $I \subseteq V(H)$ be an independent set of size $k + 2$ in $H$. Suppose now that $(G, k, \eta)$ is a yes-instance, and let $F \subseteq E(G)$ be a solution for $(G, k, \eta)$. Then $G - F$ is a cluster graph, and in particular, $H - F$ is a cluster graph. Therefore, as the vertices of $I$ are pairwise non-adjacent, every connected component of $H - F$ contains at most one vertex of $I$. Since $\card{F} \leq k$ and $H$ is connected, the graph $H - F$ has at most $k + 1$ connected components. Then, as $\card{I} = k + 2$, by the pigeonhole principle, there exists a connected component of $H - F$ that contains at least two vertices of $I$, which is a contradiction.   
\end{proof}

\begin{lemma}
\label{lem:deletion-clique-k+2}
Assume that $(G, k , \eta)$ is a yes-instance. If $G$ contains a maximal clique, say $Q$, of size at least $k + 2$, then $Q$ is a connected component of $G - F$ for every solution $F \subseteq E(G)$ for $(G, k, \eta)$. 
\end{lemma}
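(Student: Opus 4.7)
The plan is to show that every vertex of $Q$ lies in a single connected component of $G-F$, that this component is exactly a clique, and finally that this clique cannot strictly contain $Q$ without contradicting the maximality of $Q$ in $G$.

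First I would argue that $Q$ stays connected after the deletion of $F$. Since $Q$ is a clique on $|Q| \geq k+2$ vertices, the subgraph $G[Q]$ has edge-connectivity $|Q|-1 \geq k+1$, so deleting any $k$ or fewer edges cannot disconnect it. As $|F| \leq k$, the subgraph $(G-F)[Q]$ is still connected, and therefore all vertices of $Q$ belong to a common connected component $C$ of $G-F$. Because $F$ is a solution, $G-F$ is a cluster graph, so this component $C$ must induce a clique in $G-F$.

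Next I would rule out that $C$ strictly contains $Q$. Suppose, towards a contradiction, that there exists $w \in V(C) \setminus V(Q)$. Since $C$ is a clique in $G-F$, the vertex $w$ is adjacent in $G-F$ to every vertex of $Q$. But $F \subseteq E(G)$, so $G-F$ is a spanning subgraph of $G$; in particular every edge of $G-F$ is also an edge of $G$. Hence $w$ is adjacent in $G$ to every vertex of $Q$, and since $Q$ is already a clique in $G$, the set $V(Q) \cup \{w\}$ induces a clique in $G$ strictly larger than $Q$, contradicting the maximality of $Q$. Therefore $V(C) = V(Q)$, which means $Q$ is a connected component of $G-F$, as required.

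There is essentially no obstacle here: the argument is a short chain of two structural facts (edge-connectivity of a clique; cluster graphs have clique components) followed by the observation that deletion-only modifications cannot enlarge the neighbourhood of $Q$. The only step that needs care is the edge-connectivity bound, but it is immediate from $|Q| \geq k+2 > k \geq |F|$.
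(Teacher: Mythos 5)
Your proof is correct and follows essentially the same route as the paper: the paper also argues that splitting a clique on at least $k+2$ vertices requires deleting at least $k+1$ edges (your edge-connectivity bound is the same fact), and then uses the maximality of $Q$ together with the components of the cluster graph $G-F$ being cliques to exclude any extra vertex. Your explicit observation that $G-F$ is a spanning subgraph of $G$, so an extra vertex would extend $Q$ to a larger clique in $G$, is just a slightly more detailed rendering of the paper's final step.
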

\begin{proof}
Fix a solution $F$ for $(G, k, \eta)$. Let $Q$ be a maximal clique  in $G$ of size at least $k + 2$. Consider the cluster graph $G - F$. Notice that as $\card{Q} \geq k + 2$, we have to delete at least $k + 1$ edges from $Q$ to separate the vertices of $Q$ into two or more connected components. Since $\card{F} \leq k$, we can conclude that $Q$ is fully contained in a connected component of $G - F$. Now, since $Q$ is a maximal clique in $G$ and since each component of $G - F$ is a clique, the connected component of $G - F$ that contains $Q$ does not contain any vertex from $V(G) \setminus V(Q)$. We can thus conclude that $Q$ is a connected component of $G - F$. 
\end{proof}

Based on Lemmas~\ref{lem:clique-or-indset}, \ref{lem:indset-k+2} and \ref{lem:deletion-clique-k+2}, we now introduce the following reduction rule. The correctness of the rule follows from Lemmas~\ref{lem:indset-k+2} and \ref{lem:deletion-clique-k+2}. Recall that we are under the assumption that $G$ is $(k + 1)$-closed. 
\begin{rrule}
\label{rule:deletion-clique-or-indset}
    For each non-clique connected component $H$ of $G$ of size at least $R_{k + 1}(k + 2, k + 2)$, we run the algorithm of Lemma~\ref{lem:clique-or-indset} on $H$ with $c = k + 1$ and $a = b = k + 2$. If the algorithm returns an independent set of size $k + 2$, then we return a trivial no-instance. And if the algorithm returns a maximal clique, say $Q$, of size at least $k + 2$, then we do as follows. Let $\ell$ be the number of edges in $G$ that have exactly one endpoint in $V(Q)$. If $\ell > k$, then we return a trivial no-instance; otherwise, we delete all the edges from $G$ that have exactly one endpoint in $V(Q)$ and decrement $k$ by $\ell$. 
\end{rrule}

Assume from now on that Reduction Rule~\ref{rule:deletion-clique-or-indset} is no longer applicable. We can immediately derive the following bound for the number of vertices in $G$ that belong to non-clique components. 
\begin{observation}
\label{obs:deletion-non-clique}
    Every non-clique component of $G$ has size at most $R_{k + 1}(k + 2, k + 2) - 1 = \cO(k^3)$. Since Reduction Rule~\ref{rule:deletion-sanity-check-2} is no longer applicable, the number of non-clique components is at most $k$. Thus the number of vertices in $G$ that belong to non-clique components is $\cO(k^4)$.\footnote{Observe that for the standard {\sc Cluster Deletion} problem, i.e., deletion version without the balance constraint, these arguments are sufficient to yield a kernel with $\cO(k^4)$ vertices, as we can safely delete all connected components of the input graph that are cliques.} 
\end{observation}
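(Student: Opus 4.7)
The plan is to break Observation~\ref{obs:deletion-non-clique} into its three quantitative parts and verify each in turn; the bulk of the argument is bookkeeping that traces which reduction rule (and, where needed, which lemma) supplies each bound, so I do not anticipate any real obstacle.

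First I would handle the per-component size bound. Fix any non-clique connected component $H$ of the current graph $G$; I claim $|V(H)| < R_{k+1}(k+2, k+2)$. Suppose for contradiction that $|V(H)| \geq R_{k+1}(k+2, k+2)$. Since Reduction Rule~\ref{rule:non-adjacent} is no longer applicable, $G$ is $(k+1)$-closed, and the $(k+1)$-closed property is inherited by induced subgraphs (common neighbourhoods can only shrink on restriction), so $H$ is itself $(k+1)$-closed. Feeding $H$ into the algorithm of Lemma~\ref{lem:clique-or-indset} with $c = k+1$ and $a = b = k+2$ would then return either an independent set of size $k+2$ in $H$ or a maximal clique of size at least $k+2$ in $H$. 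In the first case, Reduction Rule~\ref{rule:deletion-clique-or-indset} would output a no-instance (justified by Lemma~\ref{lem:indset-k+2}); in the second, it would either output a no-instance or delete some edges and decrement $k$. Either way the rule would still fire, contradicting the assumption that it is no longer applicable. Expanding $R_{k+1}(k+2, k+2) = (k+1)^2 + k\binom{k+1}{2} + 1$ then certifies the $\cO(k^3)$ per-component size bound.

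Next, the bound of $k$ on the number of non-clique components is immediate from the inapplicability of Reduction Rule~\ref{rule:deletion-sanity-check-2}, which would otherwise return a trivial no-instance. Combining the two bounds by straightforward multiplication gives the asserted $\cO(k^4)$ bound on the number of vertices of $G$ that belong to non-clique components, completing the verification of the observation.

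The only point worth a moment's care---and essentially the only ``obstacle''---is that the symbol $k$ in the per-component bound refers to the \emph{current} (possibly reduced) value of the parameter rather than the original input value, because Rule~\ref{rule:deletion-clique-or-indset} may have decremented $k$ along the way; this is entirely consistent with the convention that $(G, k, \eta)$ always denotes the current instance throughout Section~\ref{sec:kernel}, and $R_{k+1}(k+2, k+2)$ only decreases under such decrements so the asymptotic $\cO(k^3)$ bound is in any case preserved.
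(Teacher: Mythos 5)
Your proposal is correct and follows essentially the same route as the paper: the per-component bound is exactly the inapplicability of Reduction Rule~\ref{rule:deletion-clique-or-indset} (using that $(k+1)$-closedness, guaranteed by the inapplicability of Reduction Rule~\ref{rule:non-adjacent}, passes to components so Lemma~\ref{lem:clique-or-indset} applies), and the bound of $k$ on the number of non-clique components is the inapplicability of Reduction Rule~\ref{rule:deletion-sanity-check-2}, giving $\cO(k^4)$ vertices in total. The extra care you take about closedness being hereditary and about $k$ denoting the current parameter value is consistent with, and slightly more explicit than, the paper's implicit justification.
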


We have thus bounded the number of vertices that belong to components that are not cliques. We now bound the number of vertices that belong to components that are cliques. To that end, we classify such components into two types as follows. Consider a connected component $H$ of $G$. We say that $H$ is \emph{manageable} if $H$ is a clique and $\card{H} \leq k + 1$. And we say that $H$ is \emph{unmanageable} if $H$ is a clique and $\card{H} > k + 1$. 
Before we bound the number of vertices that belong to these components, we prove the following lemma, which says that no solution for $(G, k, \eta)$ modifies an unmanageable component; this fact is essentially a corollary to Lemma~\ref{lem:deletion-clique-k+2}. 

\begin{lemma}
\label{lem:deletion-unmanageable}
Assume that $(G, k, \eta)$ is a yes-instance of \bcd. Consider any solution $F \subseteq E(G)$ for $(G, k , \eta)$.  Let $H$ be an unmanageable component of $G$. Then $F$ does not modify $H$; that is, $H$ is a connected component of $G - F$. 
\end{lemma}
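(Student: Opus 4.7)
The plan is to derive this statement as a direct corollary of Lemma~\ref{lem:deletion-clique-k+2}. That lemma already guarantees that every maximal clique of size at least $k+2$ in $G$ is preserved as a connected component of $G - F$ for any solution $F$, so the only task is to verify that an unmanageable component meets these two hypotheses (maximality and size).

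First, I would observe the size condition: by definition, an unmanageable component $H$ is a clique with $\card{H} > k+1$, hence $\card{H} \geq k + 2$. Next, I would argue that $H$ is a maximal clique in $G$. Since $H$ is a connected component of $G$, no vertex in $V(G) \setminus V(H)$ has a neighbour in $V(H)$; in particular, no vertex outside $V(H)$ is adjacent to every vertex of $H$, so $V(H)$ cannot be extended to a larger clique in $G$. Thus $H$ is a maximal clique of size at least $k+2$ in $G$.

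Applying Lemma~\ref{lem:deletion-clique-k+2} to the maximal clique $H$ then yields that $H$ is a connected component of $G - F$. It remains to translate ``$H$ is a component of $G - F$'' into ``$F$ does not modify $H$''. Since $V(H)$ induces the same vertex set as a component in both $G$ and $G - F$, and $H$ is already a clique in $G$, no edge of $F$ can have an endpoint in $V(H)$: such an edge would either be inside $H$ (contradicting that $H$ remains a clique in $G - F$) or between $V(H)$ and $V(G) \setminus V(H)$ (impossible, since $G$ contains no such edges to begin with). Hence $V(H) \cap V(F) = \emptyset$, i.e., $F$ does not modify $H$.

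I do not expect any real obstacle here; the lemma is essentially a restatement of Lemma~\ref{lem:deletion-clique-k+2} for the special case where the maximal clique happens to already be an isolated component of $G$, and the only substantive observation is the trivial maximality argument above.
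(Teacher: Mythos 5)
Your proposal is correct and follows exactly the paper's argument: note that an unmanageable component is a clique of size at least $k+2$ which, being a connected component, is a maximal clique, then invoke Lemma~\ref{lem:deletion-clique-k+2} and conclude that $F$ does not modify $H$. The only difference is that you spell out the final translation step (no edge of $F$ can be incident with $V(H)$) in slightly more detail than the paper does, which is harmless.
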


\begin{proof}
First, by the definition of an unmanageable component, $H$ is a clique of size at least $k + 2$; and since $H$ is a connected component of $G$, $H$ is a maximal clique in $G$. By Lemma~\ref{lem:deletion-clique-k+2}, $H$ is a component of $G - F$. As $H$ is a component of both $G$ and $G - F$, we can conclude that $F$ does not modify $H$. 
\end{proof}

We now bound the size of the components. First, to deal with manageable components, we introduce the following reduction rule. 

\begin{rrule}   
\label{rule:deletion-small-cliques}
For $j \in [k + 1]$, if $G$ has at least $k + 2$ manageable components of size exactly $j$, then we delete one such component. 
\end{rrule}

\begin{lemma}
Reduction Rule~\ref{rule:deletion-small-cliques} is safe. 
\end{lemma}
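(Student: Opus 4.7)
}
Fix $j \in [k+1]$, let $H^\ast$ be the manageable component of size exactly $j$ that the rule deletes, and let $G' = G - V(H^\ast)$. I need to show that $(G,k,\eta)$ is a yes-instance if and only if $(G',k,\eta)$ is a yes-instance. The central observation is a counting argument: since the rule applies only when $G$ contains at least $k+2$ manageable components of size exactly $j$, and since any solution $F$ (being a set of at most $k$ edges of $G$, each contained in a single connected component of $G$) modifies at most $k$ components, at least two of these size-$j$ components survive unmodified in $G - F$, and at least one survives unmodified in $G' - F$. This will let me transfer solutions between $(G,k,\eta)$ and $(G',k,\eta)$ essentially for free.

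For the $(\Leftarrow)$ direction, let $F'$ be a solution for $(G',k,\eta)$. Then $F' \subseteq E(G') \subseteq E(G)$ and $G - F'$ is still a cluster graph (since $G' - F'$ is and $H^\ast$ is already a clique-component disjoint from $V(G')$). By the counting argument applied to $G'$, which still has $\geq k+1$ size-$j$ manageable components, at least one of them survives as a size-$j$ component in $G' - F'$. Consequently the multiset of component sizes of $G - F'$ is obtained from that of $G' - F'$ simply by adding one more copy of the value $j$, which is already present; in particular $\lcomp(G - F') = \lcomp(G' - F')$ and $\scomp(G - F') = \scomp(G' - F')$. Hence $G - F'$ is $\eta$-balanced, so $F'$ is a solution for $(G,k,\eta)$.

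For the $(\Rightarrow)$ direction, let $F$ be a solution for $(G,k,\eta)$. I first argue that we may assume $F$ does not modify $H^\ast$. If it does, then by the counting argument there are still at least two unmodified manageable components of size $j$ distinct from $H^\ast$; pick one, call it $H_1$, fix a bijection $\phi \colon V(H^\ast) \to V(H_1)$, let $F_{H^\ast}$ be the edges of $F$ with both endpoints in $V(H^\ast)$ (all edges of $F$ incident to $V(H^\ast)$ are of this form, since $H^\ast$ is a full component of $G$), and set $F'' = (F \setminus F_{H^\ast}) \cup \{\phi(u)\phi(v) : uv \in F_{H^\ast}\}$. Then $|F''| = |F|$ and $G - F''$ has exactly the same multiset of component sizes as $G - F$ (the roles of $H^\ast$ and $H_1$ have simply been swapped), so $F''$ is also a solution for $(G,k,\eta)$ and does not modify $H^\ast$; replace $F$ by $F''$. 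Now $F \subseteq E(G')$, and $G' - F$ is obtained from $G - F$ by removing the component $H^\ast$ of size $j$. Since $F$ modifies at most $k$ of the $\geq k+1$ size-$j$ manageable components of $G$ other than $H^\ast$, at least one such component is still a size-$j$ component of $G' - F$, so the multisets of component sizes of $G - F$ and $G' - F$ have the same maximum and minimum. Hence $G' - F$ is $\eta$-balanced and $F$ is a solution for $(G',k,\eta)$.

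The only mildly delicate step is the swap argument in the forward direction; everything else reduces to the counting observation that $k+2$ size-$j$ components cannot all be touched by $k$ edges.
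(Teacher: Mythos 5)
Your proposal is correct and follows essentially the same route as the paper: both directions rest on the pigeonhole observation that a solution of at most $k$ edges leaves some size-$j$ twin component untouched, and the forward direction normalizes the solution so that it avoids the deleted component before transferring it to $G'$. The only real difference is in how that normalization is done: the paper takes a minimal solution and drops the edges inside the deleted component (then verifies balance by comparing $\lcomp$ and $\scomp$ against a surviving twin), whereas you relocate those edges into a surviving twin via an isomorphism, which keeps the solution size and the multiset of component sizes unchanged and so needs no further balance check.
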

\begin{proof}
Let $(G', k, \eta)$ be the instance obtained from $(G, k, \eta)$ by a single application of Reduction Rule~\ref{rule:deletion-small-cliques}. Let $H$ be the connected component of $G$ that we deleted from $G$ to obtain $G'$. Then $H$ is a clique and $\card{H} = j$ for some $j \in [k +2]$; and $G$ contains at least $k + 1$ other manageable components of size exactly $j$. 

Assume that $(G, k, \eta)$ is a yes-instance, and let $F \subseteq E(G)$ be a minimal solution for $(G, k, \eta)$. We first prove the following claim, which says that $F$ does not modify $H$. 
\begin{claim}
We have $F \cap E(H) = \emptyset$. 
\end{claim}
\begin{claimproof}
Suppose for a contradiction that $F \cap E(H) \neq \emptyset$. We will show that the set $F_H = F \setminus E(H)$ is also a solution for $(G, k, \eta)$, which will contradict the minimality of $F$. Observe first that $G - F_H$ is a cluster graph. To see this, notice that $H$ is a connected component of $G - F_H$, and every other connected component of $G - F_H$ is also a connected component of $G - F$. As $H$ is a clique and $G - F$ is a cluster graph, we can conclude that $G - F_H$ is a cluster graph. Now, to complete the proof of the claim, we only need to prove that $G - F_H$ is $\eta$-balanced. And for this, we will prove that $\lcomp(G - F_H) \leq \lcomp(G - F)$ and $\scomp(G - F_H) \geq \scomp(G - F)$, which will imply that $\lcomp(G - F_H) - \scomp(G - F_H) \leq \lcomp(G - F) - \scomp(G - F) \leq \eta$. Again, as $H$ is a component of $G - F_H$ and every other component of $G - F_H$ is also a component of $G - F$, we have $\lcomp(G - F_H) \leq \max\set{\card{H}, \lcomp(G - F)}$. Notice that since $\card{F} \leq k$ and since $G$ contains at least $k + 2$ manageable components of size exactly $j = \card{H}$, there exists a component $H'$ of $G$ such that $H' \neq H$,  $H'$ is a manageable component of size exactly $j$ and $F$ does not modify $H'$. Thus $H'$ is a component of $G - F$. Therefore, $\card{H'} \leq   \lcomp(G - F)$, which, along with the fact that $\card{H} = \card{H'} = j$, implies that $\lcomp(G - F_H) = \max\set{\card{H}, \lcomp(G - F)} = \lcomp(G - F)$. Now, to see that $\scomp(G - F_H) \geq \scomp(G - F)$, we consider two cases depending on whether or not $H$ is a smallest component of $G - F_H$. If $H$ is a smallest component of $G - F_H$, then $\scomp(G - F_H) = \card{H} = \card{H'} \geq \scomp(G - F)$; the last inequality follows from the fact that $H'$ is a component of $G - F$. On the other hand, if $H$ is not a smallest component of $G - F_H$, then, as every component of $G - F_H$, and in particular a smallest component of $G - F_H$, is also a component of $G - F$, we trivially have $\scomp(G - F_H) \geq \scomp(G - F)$. We thus have $\lcomp(G - F_H) - \scomp(G - F_H) \leq \lcomp(G - F) - \scomp(G - F) \leq \eta$, and hence $G - F_H$ is $\eta$-balanced, which contradicts the assumption that $F$ is a minimal solution for $(G, k , \eta)$.   
\end{claimproof}
We now show that $F$ is a solution for $(G', k, \eta)$. As $F$ does not modify $H$, we indeed have $F \subseteq E(H')$. Notice that each component of $G' - F$ is also a component of  $G - F$, which is an $\eta$-balanced cluster graph (in addition, $G - F$ contains the component $H$). 
Therefore, $G' - F$ is a cluster graph, and we have $\lcomp(G' - F) \leq \lcomp(G - F)$ and $\scomp(G' - F) \geq \scomp(G - F)$. We thus have $\lcomp(G' - F) - \scomp(G' - F) \leq \lcomp(G - F) - \scomp(G - F) \leq \eta$, which shows that $G' - F$ is $\eta$-balanced. We have thus shown that $(G', k, \eta)$ is a yes-instance. 

Assume now that $(G', k, \eta)$ is a yes-instance of \bcd, and let $F' \subseteq E(G')$ be a solution for $(G', k ,\eta)$. We claim that $F'$ is a solution for $(G, k, \eta)$ as well. Consider the graph $G - F'$. Notice that the only difference between the graphs $G - F'$ and $G' - F'$ is that $G - F'$ contains the component $H$ whereas $G' - F'$ does not. All the other components of $G - F'$ are also components of $G' - F'$. As $H$ is a clique and $G' - F'$ is a cluster graph, we can conclude that $G - F'$ is also a cluster graph. We now prove that $G' - F'$ is $\eta$-balanced. To prove this, notice that we only need to prove that $H$ is not an $\eta$-blocker in $G - F'$. Recall that $G$ contains at least $k + 2$ components of size exactly $j = \card{H}$. Therefore, $G'$ contains at least $k + 1$ components of size exactly $j$. Then, as $\card{F'} \leq k$, there exists a component $H''$ of $G'$ such that $\card{H''} = j$ and $F'$ does not modify $H''$. That is, $H''$ is a component of $G' - F'$, and therefore a component of $G - F'$. Since $H''$ is not an $\eta$-blocker in $G' - F'$ and since $\card{H''} = \card{H} = j$, we can conclude that $H''$ is not an $\eta$-blocker in $G' - F'$. But then, by Observation~\ref{obs:blocker-balanced}, $H$ is not an $\eta$-blocker in $G - F'$. This completes the proof. 
\end{proof}

\begin{observation}
\label{obs:deletion-small-cliques}
After an exhaustive application of Reduction Rule~\ref{rule:deletion-small-cliques}, for each $j \in [k + 1]$, $G$ has at most $k + 1$ manageable components of size exactly $j$. Hence the number of manageable components of $G$ is at most $(k + 1)^2 = \cO(k^2)$, and the number of vertices of $G$ that belong to manageable components is at most $\sum_{j = 1}^{k + 1} (k + 1) j = (k + 1) \sum_{j = 1}^{k + 1} j = (1/2)(k + 1)^2 (k + 2) = \cO(k^3)$.  
\end{observation}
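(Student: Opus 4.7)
The plan is straightforward: the observation chains together the stopping condition of Reduction Rule~\ref{rule:deletion-small-cliques}, the definition of a manageable component, and a simple counting argument. There is no substantive obstacle, and every step is either a direct definitional unpacking or an elementary arithmetic bound.

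First, I would invoke Reduction Rule~\ref{rule:deletion-small-cliques} directly. The rule deletes one manageable component of size $j$ whenever at least $k+2$ such components exist in $G$, so once it is no longer applicable, for every $j \in [k+1]$ the graph $G$ contains at most $k+1$ manageable components of size exactly $j$. This establishes the first assertion verbatim.

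Second, to bound the total number of manageable components, I would recall that by definition a manageable component is a clique of size at most $k+1$, so its size lies in $\{1, 2, \ldots, k+1\}$. Summing the per-size bound from the previous paragraph over the $k+1$ admissible values of $j$ yields at most $(k+1)(k+1) = (k+1)^2 = \cO(k^2)$ manageable components in total.

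Third, to pass from a count of components to a count of vertices, I would note that the at most $k+1$ manageable components of size exactly $j$ contribute at most $(k+1)\,j$ vertices, and then use the standard arithmetic-series identity:
\[
\sum_{j=1}^{k+1} (k+1)\, j \;=\; (k+1)\sum_{j=1}^{k+1} j \;=\; (k+1)\cdot\frac{(k+1)(k+2)}{2} \;=\; \frac{(k+1)^2(k+2)}{2} \;=\; \cO(k^3).
\]
Since each of the three parts follows mechanically from the reduction rule, the definition of ``manageable'', and the sum of the first $k+1$ positive integers, the hard part is essentially nonexistent; the only care needed is to remember that the size of a manageable component is bounded by $k+1$ so that the outer sum has only $k+1$ terms.
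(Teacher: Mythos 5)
Your proposal is correct and follows exactly the reasoning the paper intends: the stopping condition of Reduction Rule~\ref{rule:deletion-small-cliques} gives at most $k+1$ manageable components per size $j \in [k+1]$, and the rest is the definition of manageable plus the arithmetic series bound. This matches the paper's own (implicit) justification of the observation, so there is nothing to add.
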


We assume from now on that Reduction Rule~\ref{rule:deletion-small-cliques} is no longer applicable. We have thus bounded the number of vertices of $G$ that belong to non-clique components (by $\cO(k^4)$; Observation~\ref{obs:deletion-non-clique}) or manageable components (by $\cO(k^3)$; Observation~\ref{obs:deletion-small-cliques}). If $G$ has no unmanageable component, then we already have $\card{G} = \cO(k^4)$, and thus the instance $(G, k, \eta)$ is the required kernel.  (Recall that as Reduction Rule~\ref{rule:deletion-sanity-check-2} is not applicable, we have $\eta \leq \lcomp(G) \leq \card{G}$.) This observation immediately leads to the following reduction rule. 
\begin{rrule}
\label{rule:deletion-no-unmanageable}
If $G$ has no unmanageable component, then we simply return the instance $(G, k, \eta)$. 
\end{rrule}

Assume from now on that Reduction Rule~\ref{rule:deletion-no-unmanageable} is not applicable. So $G$ has at least one unmanageable component. Let $r \geq 1$ be the number of unmanageable components of $G$, and {\bf let $\bm{H_1, H_2, \ldots, H_r}$ be an ordering of the unmanageable components of $\bm{G}$ such that $\bm{\card{H_1} \leq \card{H_2} \leq \cdots \leq \card{H_r}}$.} 
Recall that to bound $\card{G}$, now we only need to bound the number of vertices that belong to unmanageable components. To that end, we first  introduce the following reduction rule, which rules out an obvious no-instance. 

\begin{rrule}
\label{rule:deletion-large-cliques-1}
If $\card{H_r}- \card{H_{1}} > \eta$, then we return a trivial no-instance. 
\end{rrule}

\begin{lemma}
Reduction Rule~\ref{rule:deletion-large-cliques-1} is safe. 
\end{lemma}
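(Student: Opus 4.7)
The plan is a one-line contradiction argument that directly invokes Lemma~\ref{lem:deletion-unmanageable}. Specifically, suppose for contradiction that $\card{H_r} - \card{H_1} > \eta$ and yet $(G, k, \eta)$ is a yes-instance, with $F \subseteq E(G)$ a solution. Since $H_1$ and $H_r$ are both unmanageable components of $G$, Lemma~\ref{lem:deletion-unmanageable} guarantees that $F$ modifies neither of them, so $H_1$ and $H_r$ both survive as connected components of the cluster graph $G - F$.

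From this, I would immediately read off the two bounds $\lcomp(G - F) \geq \card{H_r}$ and $\scomp(G - F) \leq \card{H_1}$, which combine to give
\[
\lcomp(G - F) - \scomp(G - F) \;\geq\; \card{H_r} - \card{H_1} \;>\; \eta.
\]
By Observation~\ref{obs:blocker-balanced}, this contradicts the assumption that $G - F$ is $\eta$-balanced, and hence no solution $F$ of size at most $k$ can exist. Therefore $(G, k, \eta)$ is a no-instance, so replacing it by any trivial no-instance is safe.

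There is no real obstacle here: the entire work has already been done in Lemma~\ref{lem:deletion-unmanageable}, which pins the large cliques in place across every solution. The only very small subtlety worth flagging is that the argument uses $r \geq 1$ \emph{and} that $H_1$ and $H_r$ need not be distinct components only when $r = 1$, in which case $\card{H_r} - \card{H_1} = 0 \leq \eta$ and the hypothesis of the rule never fires; so implicitly $r \geq 2$ whenever the rule applies, which is all that the bound $\scomp(G - F) \leq \card{H_1}$ needs.
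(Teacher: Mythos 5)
Your proof is correct and follows essentially the same route as the paper's: both invoke Lemma~\ref{lem:deletion-unmanageable} to conclude that $H_1$ and $H_r$ survive as components of $G - F$, and then derive a contradiction with $G - F$ being $\eta$-balanced (the paper leaves the final inequality implicit, whereas you spell it out via $\lcomp$ and $\scomp$). The extra remark about $r \geq 2$ is harmless but unnecessary, since $\scomp(G-F) \leq \card{H_1}$ holds even if $H_1 = H_r$.
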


\begin{proof}
To prove the lemma, it is enough to prove that $(G, k, \eta)$ is a no-instance if $\card{H_r} - \card{H_1} > \eta$. So suppose that $\card{H_r} - \card{H_1} > \eta$ and assume for a contradiction that $(G, k , \eta)$ is a yes-instance. Let $F$ be a solution for $(G, k, \eta)$. Then $G - F$ is $\eta$-balanced. 
Since $H_1$ and $H_r$ are unmanageable components, they are both cliques of size at least $k + 2$; since they are components of $G$, $H_1$ and $H_r$ are indeed maximal cliques in $G$. Then, by Lemma~\ref{lem:deletion-unmanageable}, $F$ does not modify $H_1$ or $H_r$. That is, $H_1$ and $H_r$ are components of $G - F$. But this is not possible as $G - F$ is $\eta$-balanced. 
\end{proof}

From now on, we assume that Reduction Rule~\ref{rule:deletion-large-cliques-1} is not applicable. We now bound the number of unmanageable components. To that end, 
{\bf we define $\bm{s \in \set{0, 1}}$ as follows: $\bm{s = 0}$ if $\bm{\card{H_1} > R_{k + 1}(k + 2, k + 2)}$, and $\bm{s = 1}$ otherwise.} 

\begin{rrule}
\label{rule:deletion-large-cliques-2}
If $s + 1 < r$, then we delete the components $H_{s + 1}, H_{s + 2},\ldots, H_{r - 1}$.  
\end{rrule}

\begin{lemma}
\label{lem:deletion-large-cliques-2}
Reduction Rule~\ref{rule:deletion-large-cliques-2} is safe. 
\end{lemma}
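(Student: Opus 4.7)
The plan is to show, in the spirit of the safeness proof of Reduction Rule~\ref{rule:mid-components}, that $(G, k, \eta)$ and the reduced instance $(G', k, \eta)$ are equivalent, where $G'$ is obtained from $G$ by deleting the components $H_{s+1}, \ldots, H_{r-1}$. The key structural fact is Lemma~\ref{lem:deletion-unmanageable}, which guarantees that no solution (on either side) modifies the unmanageable components $H_1, \ldots, H_r$ of $G$, nor the unmanageable components of $G'$ (which include $H_r$, and also $H_1$ when $s = 1$). So passing between $G - F$ and $G' - F$, or between $G - F'$ and $G' - F'$, only inserts or removes whole clique components.

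For the forward direction, if $F$ is a solution for $(G, k, \eta)$, then by Lemma~\ref{lem:deletion-unmanageable} $F$ is disjoint from $E(H_j)$ for every $j$, so $F \subseteq E(G')$, and $G' - F$ is obtained from $G - F$ by removing the intact components $H_{s+1}, \ldots, H_{r-1}$. Since $H_r$ is the largest $H_j$ and lies in both graphs, $\lcomp(G' - F) = \lcomp(G - F)$, while $\scomp(G' - F) \geq \scomp(G - F)$ is automatic because we only delete components. Hence the $\eta$-balance inequality transfers. For the backward direction, given a solution $F'$ for $(G', k, \eta)$, the graph $G - F'$ is $(G' - F') \cup H_{s+1} \cup \cdots \cup H_{r-1}$; this is clearly a cluster graph, so only $\eta$-balance needs checking. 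Applying Lemma~\ref{lem:deletion-unmanageable} to $(G', k, \eta)$, the component $H_r$ survives $F'$, so $\lcomp(G' - F') \geq \card{H_r} \geq \card{H_{r-1}}$ and therefore $\lcomp(G - F') = \lcomp(G' - F')$. When $s = 1$, the same reasoning applied to $H_1 \in V(G')$ yields $\scomp(G' - F') \leq \card{H_1} \leq \card{H_{s+1}}$, and hence $\scomp(G - F') = \scomp(G' - F')$; the $\eta$-balance inequality then follows immediately.

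The main obstacle is the case $s = 0$, where $H_1$ has also been removed. Here I would exploit the definition of $s$: $s = 0$ forces $\card{H_1} > R_{k+1}(k+2, k+2) - 1$, which together with Observation~\ref{obs:deletion-non-clique} and the size bound $k+1$ on manageable components implies that every non-unmanageable component of $G$ is strictly smaller than $H_1$. Consequently $H_r$ is the unique largest component of $G'$, and since $F'$ does not split it, $\lcomp(G' - F') = \card{H_r}$. If $\scomp(G' - F') \leq \card{H_1}$, reinserting $H_1, \ldots, H_{r-1}$ does not change $\scomp$ and we are done; otherwise $\scomp(G - F') = \card{H_1}$ and the required inequality $\lcomp(G - F') - \scomp(G - F') \leq \eta$ simplifies to $\card{H_r} - \card{H_1} \leq \eta$, which is precisely the hypothesis that Reduction Rule~\ref{rule:deletion-large-cliques-1} is no longer applicable. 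This is the step where the choice of $s$ and the previously applied rule both become indispensable, and I expect it to be the only place in the proof that requires more than routine bookkeeping.
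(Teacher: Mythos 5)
Your proof is correct, and it uses the same three pillars as the paper (Lemma~\ref{lem:deletion-unmanageable}, the definition of $s$, and the inapplicability of Reduction Rule~\ref{rule:deletion-large-cliques-1}), but the backward direction is organised differently. The paper proves that $G - F'$ is $\eta$-balanced by showing, for every deleted component $H_j$ and every component $H'$ of $G - F'$, that $\card{\card{H_j} - \card{H'}} \leq \eta$; this forces a lengthy case analysis over whether $F'$ modifies $H'$, whether the component of $G$ containing $H'$ is a clique, manageable or unmanageable, and finally $s = 0$ versus $s = 1$. You instead exploit the fact that $F' \subseteq E(G')$, so $G - F'$ is just $G' - F'$ with the intact cliques $H_{s+1},\ldots,H_{r-1}$ reinserted, and you compute $\lcomp(G - F')$ and $\scomp(G - F')$ directly: $\lcomp$ is unchanged because $H_r$ survives in $G' - F'$ and dominates all reinserted components, and $\scomp$ is unchanged except in the one sub-case ($s = 0$ with $\scomp(G' - F') > \card{H_1}$) where $H_1$ becomes the new minimum, which is exactly where $\card{H_r} - \card{H_1} \leq \eta$ from Rule~\ref{rule:deletion-large-cliques-1} is invoked; by Observation~\ref{obs:blocker-balanced} this suffices. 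This global min/max accounting is shorter and avoids reasoning about arbitrary fragments of non-clique components, at the cost of needing the observation (which you supply, via $\card{H_1} > R_{k+1}(k+2,k+2)$ when $s = 0$) that $H_r$ is then the unique largest component of $G'$; the paper's pairwise blocker analysis is more laborious but requires no such uniqueness claim. Minor nits only: you write $\card{H_1} > R_{k+1}(k+2,k+2) - 1$ where the definition of $s$ gives the slightly stronger $\card{H_1} > R_{k+1}(k+2,k+2)$ (your weaker bound still suffices), and in the forward direction you should note explicitly that $G' - F$ is a cluster graph, which is immediate since it arises from $G - F$ by deleting whole components.
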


\begin{proof}
Assume that $s + 1 < r$. Let $(G', k, \eta)$ be the instance obtained from $(G, k, \eta)$ by applying Reduction Rule~\ref{rule:deletion-large-cliques-2}. 

Assume that $(G, k, \eta)$ is a yes-instance, and let $F \subseteq E(G)$ be a solution for $(G, k, \eta)$. Recall that the components $H_{s + 1}$, $H_{s + 2},\ldots, H_{r - 1}$ are all unmanageable components. 
By Lemma~\ref{lem:deletion-unmanageable}, $F$ does not modify the components $H_{s + 1}, H_{s + 2},\ldots, H_{r - 1}$. Thus $F \subseteq E(G')$. Notice now that each component of $G' - F$ is also a component of $G - F$. Thus, $\lcomp(G' - F) \leq \lcomp(G - F)$ and $\scomp(G' - F) \geq \scomp(G - F)$. We thus have $\lcomp(G' - F) - \scomp(G' - F) \leq \lcomp(G - F) - \scomp(G' - F) \leq \eta$. That is, $G' - F$ is $\eta$-balanced, and hence $(G', k, \eta)$ is a yes-instance. 

Assume now that $(G', k, \eta)$ is a yes-instance, and let $F' \subseteq E(G')$ be a solution for $(G', k, \eta)$. We claim that $F'$ is a solution for $(G, k, \eta)$ as well. 
Notice that each component of $G' - F'$ is also a component of $G - F'$; in addition, $G - F'$ contains the components $H_{s + 1}, H_{s + 2},\ldots, H_{r - 1}$, which are all cliques.  Thus $G - F'$ is a cluster graph. So we only need to prove that $G - F'$ is $\eta$-balanced. And for  that, as $G' - F'$ is $\eta$-balanced, we only need to prove that for every $j \in [s + 1, r - 1]$, the component $H_j$ is not an $\eta$-blocker in $G - F'$; that is, $\card{\card{H_j} - \card{H'}} \leq \eta$ for every connected component $H'$ of $G - F'$. To prove this, we will consider several cases  below. Fix $j \in [s + 1, r - 1]$. Consider $H_j$ and any other connected component $H'$ of $G - F'$. 

Before we proceed further, we first highlight two arguments that we will repeatedly use in the following case analysis. {\bf (A1)} In several cases that we  consider below, we will show that $\scomp(G' - F') \leq \card{H_j}, \card{H'} \leq \lcomp(G' - F')$, which will imply that $\card{\card{H_j} - \card{H'}} \leq \lcomp(G' - F') - \scomp(G' - F') \leq \eta$; the last inequality follows from the fact that $G' - F'$ is $\eta$-balanced. {\bf (A2)} Notice that $H_r$ is a component of $G'$. In particular, $H_r$ is an unmanageable component of $G'$. Hence, by Lemma~\ref{lem:deletion-unmanageable}, $F'$ does not modify $H_r$. Thus $H_r$ is a component of $G' - F'$, and hence $\card{H_r} \leq \lcomp(G' - F')$. And since $j < r$, we have $\card{H_j} \leq \card{H_r} \leq \lcomp(G' - F')$. 

Suppose first that $F'$ does not modify $H'$; that is, $H'$ is a connected component of $G$, and in particular, $H'$ is a clique in $G$. There are two possible cases here depending on whether $H'$ is a manageable component or an unmanageable component. 
\begin{description}
\item[Case 1:] $H'$ is an unmanageable component of $G$. That is, $H' = H_i$ for some $i \in [r]$. Recall that we have $\card{H_1} \leq \card{H_2} \leq \cdots \card{H_r}$. Then, since Reduction Rule~\ref{rule:deletion-large-cliques-1} is not applicable, we have $\card{\card{H_j} - \card{H_i}} \leq \card{H_r} - \card{H_1} \leq \eta$. 

\item[Case 2:] $H'$ is a manageable component of $G$. Hence $H' \neq H_i$ for any $i \in [r]$, and in particular, $H' \neq H_i$ for any $i \in [s + 1, r - 1]$, and therefore, $H'$ is a component of $G'$. And since $F'$ does not modify $H'$, we can conclude that $H'$ is a component of $G' - F'$. Thus, $\scomp(G' - F') \leq \card{H'}$. %
Recall now that $H_j$ is an unmanageable component of $G$. By the definitions of manageable and unmanageable components, we have $ \card{H'} \leq \card{H_j}$. Recall also that $\card{H_j} \leq \card{H_r}$ as $j < r$. We thus have $\scomp(G' - F') \leq \card{H'} \leq \card{H_j} \leq \card{H_r} \leq \lcomp(G' - F')$; the last inequality follows from argument (A2) that we discussed above. We thus have $\scomp(G' - F') \leq \card{H'} \leq \card{H_j} \leq \lcomp(G' - F')$, and thus by argument (A1) that we discussed above, we have $\card{H_j} - \card{H'} \leq \lcomp(G' - F') - \scomp(G' - F') \leq \eta$. %
\end{description}

Suppose now that $F$ does modify $H'$. Let $H$ be the connected component of $G$ that contains $H'$. Notice that $H \neq H_i$ for any $i \in [s + 1, r - 1]$; and in particular, $H$ is a component of $G'$. Notice also that $F'$ modifies $H$ as $H$ contains $H'$. We again split the proof into two cases depending on whether or not $H$ is a clique. 
\begin{description}
\item[Case 1:] $H$ is a clique. Then, as $H$ is a component of $G'$, and  since $F'$ modifies $H$, Lemma~\ref{lem:deletion-unmanageable} implies that $H$ is a manageable component of $G'$ (and hence of $G$). We thus have $\card{H} \leq \card{H_j}$. As $\card{H'} \leq \card{H}$, we thus have $\card{H'} \leq \card{H_j}$. By argument (A2), we have $\card{H_j} \leq \card{H_r} \leq  \lcomp(G' - F')$. %
Finally, as $H'$ is a component of $G' - F'$, we have $\scomp(G' - F') \leq \card{H'}$.  Putting all these together, we have $\scomp(G' - F') \leq \card{H'} \leq  \card{H_j} \leq \lcomp(G' - F')$, which by argument (A1), implies that $\card{H_j} - \card{H'} \leq \eta$. 

\item[Case 2:] $H$ is not a clique. Then, by Observation~\ref{obs:deletion-non-clique}, $\card{H} \leq R_{k + 1}(k + 2, k+ 2) - 1$. Since $H'$, which is also a component of $G' - F'$, is contained in $H$, we have $\scomp(G' - F') \leq \card{H'} \leq \card{H} \leq R_{k + 1}(k + 2, k + 2) - 1$. We further split our analysis into two cases depending on whether $s = 0$ or $s = 1$. 
\begin{enumerate}
\item Suppose that $s = 0$. Then, by the definition of $s$, we have $\card{H_1} > R_{k + 1}(k + 2, k + 2)$, which implies that $\lcomp(G' - F') \geq \card{H_r} \geq \card{H_j} \geq \card{H_1} > R_{k + 1}(k + 2, k + 2)$; the first inequality follows from argument (A1). We thus have $\scomp(G' - F') \leq \card{H'} < R_{k + 1}(k + 2, k + 2) < \card{H_j} \leq \lcomp(G' - F')$, which implies that $\card{H_j} - \card{H'} \leq \lcomp(G' - F') - \scomp(G' - F') \leq \eta$. 

\item Suppose that $s =1$. We will first argue  that $\scomp(G' - F') \leq \card{H_j} \leq \lcomp(G' - F')$. Since $s = 1$, $H_1$ is a component of $G'$, and in fact, $H_1$ is an unmanageable component of $G'$. Hence, by Lemma~\ref{lem:deletion-unmanageable}, $F'$ does not modify $H_1$, and thus $H_1$ is a component of $G' - F'$. Therefore, $\scomp(G' - F') \leq \card{H_1}$. As $\card{H_1} \leq \card{H_j}$, we thus have $\scomp(G' - F') \leq \card{H_1} \leq \card{H_j}$. By argument (A2), we also have $ \card{H_r} \leq \lcomp(G' - F')$, and since $\card{H_j} \leq \card{H_r}$, we have $\card{H_j} \leq \card{H_r} \leq \lcomp(G' - F')$. We have thus shown that $\scomp(G' - F') \leq \card{H_j} \leq \lcomp(G' - F')$. Let us now prove that $\scomp(G' - F') \leq \card{H'} \leq \lcomp(G' - F')$. And to prove this, we only need to argue that $H'$ is a component of $G' - F'$. Recall that $H'$ is a component of $G - F'$, and that $H$ is the component of $G$ that contains $H'$. Now, since $H$ is not a clique, $H \neq H_i$ for any $i \in [s + 1, r - 1]$, and therefore, $H$ is a component of $G'$;  this implies that $H'$ is a component of $G' - F'$, and therefore, $\scomp(G' - F') \leq \card{H'} \leq \lcomp(G' - F')$. Since $\scomp(G' - F') \leq \card{H_j}, \card{H'} \leq \lcomp(G' - F')$, we have $\card{\card{H_j} - \card{H'}} \leq \lcomp(G' - F') - \scomp(G' - F') \leq \eta$. 
\end{enumerate}
\end{description}

We have thus shown that $G - F'$ is $\eta$-balanced. Therefore, $F'$ is a solution for $(G, k, \eta)$, and thus, $(G, k, \eta)$ is a yes-instance. 
\end{proof}
Assume from now on that Reduction Rules~\ref{rule:deletion-sanity-check}-\ref{rule:deletion-large-cliques-2} are no longer applicable. Thus, $G$ consists of non-clique components (at most $k$ such components, with at most $R_{k + 1}(k + 2, k + 2) - 1 = \cO(k^3)$ vertices in each of them;  Observation~\ref{obs:deletion-non-clique}), manageable components (at most $\cO(k^2)$ such components, with at most $k + 1$ vertices in each of them; Observation~\ref{obs:deletion-small-cliques}), the component $H_r$, and possibly the component $H_1$. Notice that $G$ contains $H_1$ only if $s = 1$, in which case we have $\card{H_1} \leq R_{k + 1}(k + 2, k + 2)$. To summarise, all components of $G$, except $H_r$, have size at most $R_{k + 1}(k +2, k +2)$, and in particular, {\bf the number of vertices of $\bm{G}$ that belong to components other than $\bm{H_r}$ is bounded by $\bm{\cO(k^4)}$. Thus $\bm{\card{G} = \card{H_r} + \cO(k^4)}$.} Hence, to bound $\card{G}$, we now need to bound only $\card{H_r}$. %
And to bound $\card{H_r}$, we first prove the following lemma, which says that $\card{H_r}$ cannot exceed $R_{k + 1}(k + 2, k + 2) + \eta$ if $(G, k, \eta)$ were indeed a yes-instance. 

\begin{lemma}
\label{lem:deletion-bound-Hr-1}
If $(G, k, \eta)$ is a yes-instance, then $\card{H_r} \leq R_{k + 1}(k + 2, k +2) + \eta$. 
\end{lemma}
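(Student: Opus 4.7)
The plan is to let $F$ be any solution for $(G, k, \eta)$ and extract the bound from the $\eta$-balance of $G - F$. The key starting observation is that $H_r$ is an unmanageable component of $G$, so by Lemma~\ref{lem:deletion-unmanageable}, $F$ does not modify $H_r$ and $H_r$ remains a connected component of $G - F$. This gives $\card{H_r} \leq \lcomp(G - F)$. Since $G - F$ is $\eta$-balanced, $\lcomp(G - F) \leq \scomp(G - F) + \eta$, so it suffices to exhibit some component of $G - F$ of size at most $R_{k+1}(k+2, k+2)$ in order to conclude $\card{H_r} \leq R_{k+1}(k+2, k+2) + \eta$.

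To produce such a small component of $G - F$, I would first argue that $G$ itself must contain a connected component $H^\dagger$ with $\card{H^\dagger} \leq R_{k+1}(k+2, k+2)$. Every component of $G$ falls into exactly one of three categories: non-clique (of size at most $R_{k+1}(k+2, k+2) - 1$ by Observation~\ref{obs:deletion-non-clique}), manageable (of size at most $k + 1$), or unmanageable. Because Reduction Rule~\ref{rule:deletion-large-cliques-2} is no longer applicable, the unmanageable components of $G$ are exactly $H_r$ together with $H_1$ when $s = 1$, and in the latter case $\card{H_1} \leq R_{k+1}(k+2, k+2)$ by the definition of $s$. If $G$ had no non-clique component, no manageable component, and $s = 0$, then $G$ would consist of the single clique $H_r$, which is trivially $\eta$-balanced, contradicting the assumption that Reduction Rule~\ref{rule:deletion-sanity-check} is no longer applicable. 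Hence the desired $H^\dagger$ exists.

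With $H^\dagger$ in hand, every connected component of $G - F$ whose vertex set is contained in $V(H^\dagger)$ has at most $\card{H^\dagger} \leq R_{k+1}(k+2, k+2)$ vertices, so such a component witnesses $\scomp(G - F) \leq R_{k+1}(k+2, k+2)$. Combined with $\card{H_r} \leq \lcomp(G - F)$ and $\lcomp(G - F) - \scomp(G - F) \leq \eta$, this yields the claimed bound $\card{H_r} \leq R_{k+1}(k+2, k+2) + \eta$. The only real subtlety is the existence step for $H^\dagger$, which requires combining the structural guarantees left behind by Reduction Rules~\ref{rule:deletion-sanity-check}--\ref{rule:deletion-large-cliques-2} with the case distinction on the value of $s$; the rest is a direct application of $\eta$-balance.
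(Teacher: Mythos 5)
Your proposal is correct and follows essentially the same route as the paper: use Lemma~\ref{lem:deletion-unmanageable} to keep $H_r$ intact in $G - F$, exhibit a component of $G$ other than (or of size at most) $R_{k+1}(k+2,k+2)$ using the inapplicability of Reduction Rule~\ref{rule:deletion-sanity-check} and the size bounds already established for all non-$H_r$ components, deduce $\scomp(G - F) \leq R_{k+1}(k+2,k+2)$, and finish with the $\eta$-balance of $G - F$. The paper just obtains the small component slightly more directly (``$G$ is not an $\eta$-balanced cluster graph, so it has a component $\hat H \neq H_r$''), whereas you reach the same fact via a case analysis over component types and the value of $s$; the substance is identical.
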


\begin{proof}
Assume that $(G, k, \eta)$ is a yes-instance, and let $F$ be a solution for $(G, k, \eta)$. As $H_r$ is an unmanageable component, by Lemma~\ref{lem:deletion-unmanageable}, $F$ does not modify $H_r$, and hence $H_r$ is a component of $G - F$. Thus $\card{H_r} \leq \lcomp(G - F)$. Now, since Reduction Rule~\ref{rule:deletion-sanity-check} is not applicable, $G$ is not an $\eta$-balanced cluster graph. Thus, either $G$ contains a non-clique component or $G$ is a cluster graph, but not $\eta$-balanced. In either case $G$ contains a component $\hat H$ such that $\hat H \neq H_r$, and hence $\card{\hat H} \leq R_{k + 1}(k + 2, k + 2)$. Thus $\scomp(G) \leq \card{\hat H} \leq R_{k + 1}(k + 2, k + 2)$. Now, since $G - F$ is a spanning subgraph of $G$, we have $\scomp(G - F) \leq \scomp(G) \leq R_{k + 1}(k + 2, k + 2)$. 
Finally, since $G - F$ is $\eta$-balanced, we have $\lcomp(G - F) - \scomp(G - F) \leq \eta$, which implies that $\lcomp(G - F) \leq \scomp(G - F) + \eta$. We thus get $\card{H_r} \leq \lcomp(G - F) \leq \scomp(G - F) + \eta \leq R_{k + 1}(k  + 2, k + 2) + \eta$. 
\end{proof}

Lemma~\ref{lem:deletion-bound-Hr-1} immediately yields the following reduction rule. 

\begin{rrule}
\label{rule:deletion-bound-Hr-1}
If $\card{H_r} > R_{k + 1}(k + 2, k + 2) + \eta$, then we return a trivial no-instance. 
\end{rrule}

Assume from now on that Reduction Rule~\ref{rule:deletion-bound-Hr-1} is not applicable. Hence $\card{H_r} \leq R_{k + 1}(k + 2, k + 2) + \eta$. We now observe the following facts. 

\begin{observation}
\label{obs:deletion-bound-eta-Hr}
\begin{enumerate}
    \item\label{item:deletion-eta} Recall that since Reduction Rule~\ref{rule:deletion-bound-eta} is not applicable, we have $\eta \leq \lcomp(G) \leq \max\set{\card{H_r}, R_{k + 1}(k +2, k + 2)}$. 

    \item\label{item:deletion-Hr} Recall also that since Reduction Rule~\ref{rule:deletion-bound-Hr-1} is not applicable, we have $\card{H_r} \leq R_{k + 1}(k + 2, k + 2) + \eta$. 
    \item If $\card{H_r} \leq 2 R_{k + 1}(k +2, k + 2) = \cO(k^3)$, then $\card{G}$ is bounded by $\cO(k^4)$, and thus $(G, k, \eta)$ is a kernel with $\cO(k^4)$ vertices. 
    
    \item If $\eta \leq 2 R_{k + 1}(k + 2, k + 2)$, then since Reduction Rule~\ref{rule:deletion-bound-Hr-1} is not applicable, we have $\card{H_r} \leq R_{k + 1}(k + 2, k + 2) + \eta \leq 3 R_{k + 1}(k + 2, k + 2) = \cO(k^3)$. Thus, $(G, k, \eta)$ is a kernel with $\cO(k^4)$ vertices in this case as well.  
\end{enumerate}
\end{observation}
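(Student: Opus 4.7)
The plan is to verify each of the four items in sequence, since they are essentially bookkeeping consequences of the preceding reduction rules and structural observations; no new argument is required beyond assembling what has already been established.

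For item~\ref{item:deletion-eta}, the lower bound $\eta \leq \lcomp(G)$ is immediate from the non-applicability of Reduction Rule~\ref{rule:deletion-bound-eta}, which would otherwise have reset $\eta$ to $\lcomp(G)$. For the upper bound $\lcomp(G) \leq \max\set{\card{H_r}, R_{k+1}(k+2,k+2)}$, I would appeal to the structural description of $G$ after the other rules have been exhaustively applied: every non-clique component has at most $R_{k+1}(k+2,k+2)-1$ vertices by Observation~\ref{obs:deletion-non-clique}, every manageable component has at most $k+1 < R_{k+1}(k+2,k+2)$ vertices by Observation~\ref{obs:deletion-small-cliques}, and the only remaining components are the unmanageable ones, namely $H_r$ and possibly $H_1$. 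In the latter case $s=1$, and the definition of $s$ guarantees $\card{H_1} \leq R_{k+1}(k+2,k+2)$. So $\lcomp(G)$ is attained either by $H_r$ or by some component bounded by $R_{k+1}(k+2,k+2)$, which is exactly the claimed bound.

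For item~\ref{item:deletion-Hr}, the bound $\card{H_r} \leq R_{k+1}(k+2,k+2)+\eta$ is a direct restatement of the negation of the trigger condition for Reduction Rule~\ref{rule:deletion-bound-Hr-1}; otherwise a trivial no-instance would have been returned. The remaining two items combine the per-component bounds into a total bound on $\card{G}$. By the same structural catalogue used in item~\ref{item:deletion-eta}, the number of vertices of $G$ lying outside $H_r$ is already known to be $\cO(k^4)$ (from non-clique components), $\cO(k^3)$ (from manageable components), and $\cO(k^3)$ (from $H_1$ when present), which together give $\card{G} = \card{H_r} + \cO(k^4)$. Thus, if $\card{H_r} = \cO(k^3)$, item~3 follows immediately. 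For item~4, the hypothesis $\eta \leq 2R_{k+1}(k+2,k+2)$ plugged into item~\ref{item:deletion-Hr} yields $\card{H_r} \leq 3R_{k+1}(k+2,k+2) = \cO(k^3)$, and item~3 then finishes the argument.

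I do not anticipate any genuine obstacle: the entire observation is a consolidation step, and the only subtle point is making sure every component other than $H_r$ has indeed been bounded in size by $R_{k+1}(k+2,k+2)$, which amounts to checking that $H_1$ (the only unmanageable component besides $H_r$ that may survive) is controlled whenever it is present---and this is ensured by the way $s$ was defined just before Reduction Rule~\ref{rule:deletion-large-cliques-2}.
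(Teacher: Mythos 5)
Your proposal is correct and follows essentially the same route as the paper: the paper justifies this observation by the same structural summary (every component other than $H_r$ has size at most $R_{k+1}(k+2,k+2)$, so $\card{G} = \card{H_r} + \cO(k^4)$), combined with the non-applicability of Reduction Rules~\ref{rule:deletion-bound-eta} and \ref{rule:deletion-bound-Hr-1}. Your handling of the only delicate point, namely that $H_1$ is bounded by $R_{k+1}(k+2,k+2)$ whenever it survives (i.e., when $s=1$), matches the paper's reasoning exactly.
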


In light of Observation~\ref{obs:deletion-bound-eta-Hr}, we apply the following reduction rule. 

\begin{rrule}
\label{rule:deletion-bound-Hr-2}
If $\card{H_r} \leq 2 R_{k + 1}(k + 2, k + 2)$ or $\eta \leq 2 R_{k + 1}(k + 2, k + 2)$, then we simply return the instance $(G, k, \eta)$. 
\end{rrule}

We assume from now on that Reduction Rule~\ref{rule:deletion-bound-Hr-2} is no longer applicable. Hence $\card{H_r} > 2 R_{k + 1}(k + 2, k + 2)$ and $\eta > 2 R_{k + 1}(k + 2, k + 2)$. Then, as every component of $G$ except $H_r$ has size at most $R_{k + 1}(k + 2, k + 2)$, we can conclude that $H_r$ is the unique largest component of $G$, and thus $\card{H_r} = \lcomp(G)$. {\bf Let $\bm{\eta' = 2 R_{k + 1}(k + 2, k + 2)}$ and $\bm{N = \lcomp(G) - (\eta - \eta')}$.} 

\begin{observation}
\label{obs:deletion-bound-N}
\begin{enumerate}
\item\label{item:deletion-N-lb} Recall that we have $\eta \leq \lcomp(G)$ (Observation~\ref{obs:deletion-bound-eta-Hr}-\ref{item:deletion-eta}). Therefore, $N = \lcomp(G) - (\eta - \eta') \geq \eta - (\eta - \eta') = \eta' = 2 R_{k + 1}(k + 2, k + 2)$. 

\item\label{item:deletion-N-ub} Recall also that we have $\lcomp(G) = \card{H_r} \leq R_{k + 1}(k + 2, k + 2) + \eta$ (Observation~\ref{obs:deletion-bound-eta-Hr}-\ref{item:deletion-Hr}). Therefore, $N = \lcomp(G) - (\eta - \eta') \leq R_{k + 1}(k + 2, k + 2) + \eta - (\eta - \eta') = R_{k + 1}(k + 2, k + 2) + \eta' = 3 R_{k + 1}(k + 2, k + 2)$. 
\end{enumerate}
\end{observation}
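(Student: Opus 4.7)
The plan is to prove both parts of Observation~\ref{obs:deletion-bound-N} by direct algebraic manipulation of the defining equation $N = \lcomp(G) - (\eta - \eta')$. It is convenient to rewrite this as $N = \eta' + \bigl(\lcomp(G) - \eta\bigr)$, which makes both bounds immediate from the two inequalities already recalled in Observation~\ref{obs:deletion-bound-eta-Hr}, namely $\eta \leq \lcomp(G)$ and $\lcomp(G) \leq R_{k+1}(k+2, k+2) + \eta$, together with the definition $\eta' = 2 R_{k+1}(k+2, k+2)$.

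For the lower bound in item~\ref{item:deletion-N-lb}, I would apply Observation~\ref{obs:deletion-bound-eta-Hr}-\ref{item:deletion-eta} to conclude $\lcomp(G) - \eta \geq 0$, whence $N \geq \eta' = 2 R_{k+1}(k+2, k+2)$. For the upper bound in item~\ref{item:deletion-N-ub}, I would apply Observation~\ref{obs:deletion-bound-eta-Hr}-\ref{item:deletion-Hr} to conclude $\lcomp(G) - \eta \leq R_{k+1}(k+2, k+2)$, whence $N \leq \eta' + R_{k+1}(k+2, k+2) = 3 R_{k+1}(k+2, k+2)$. There is no genuine obstacle here; the observation is essentially a bookkeeping step that repackages bounds already established by Reduction Rules~\ref{rule:deletion-bound-eta} and~\ref{rule:deletion-bound-Hr-1}, so that the subsequent reduction rule (analogous in spirit to Reduction Rule~\ref{rule:final-rule-for-bcc} for \bcc) can safely shrink $H_r$ down to $N$ vertices while simultaneously decrementing $\eta$ to $\eta'$, preserving the $\eta$-balance condition on both sides of the equivalence.
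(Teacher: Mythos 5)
Your proposal is correct and matches the paper's argument: both parts follow by substituting the recalled inequalities $\eta \leq \lcomp(G)$ and $\lcomp(G) \leq R_{k+1}(k+2,k+2) + \eta$ into $N = \lcomp(G) - (\eta - \eta')$ with $\eta' = 2R_{k+1}(k+2,k+2)$, exactly as the paper does inline. Rewriting $N = \eta' + (\lcomp(G) - \eta)$ is only a cosmetic repackaging of the same computation.
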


We now apply the following reduction rule \emph{once}. 

\begin{rrule}
\label{rule:deletion-final-rule-for-bcd}
 We delete $\lcomp(G) - N$ (arbitrarily chosen) vertices from $H_r$; let us denote the resulting component by $H'$ and the resulting graph by $G'$. We return the instance $(G', k, \eta')$.
\end{rrule}

\begin{lemma}
Reduction Rule~\ref{rule:deletion-final-rule-for-bcd} is safe. 
\end{lemma}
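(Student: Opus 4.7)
The plan is to prove safeness by the same two-sided correspondence used in the proof for Reduction Rule~\ref{rule:final-rule-for-bcc} (the analogue for \bcc), suitably adapted to edge deletion. The key structural observation is that both $H_r$ in $G$ and $H'$ in $G'$ are unmanageable components: by hypothesis $\card{H_r} > 2R_{k+1}(k+2, k+2) > k+1$, and $\card{H'} = N \geq 2R_{k+1}(k+2, k+2) > k+1$ by Observation~\ref{obs:deletion-bound-N}-\ref{item:deletion-N-lb}. Hence Lemma~\ref{lem:deletion-unmanageable} implies that no solution $F$ for $(G, k, \eta)$ modifies $H_r$, and no solution $F'$ for $(G', k, \eta')$ modifies $H'$. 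Consequently every candidate solution lies in the common edge set $E(G) \setminus E(H_r) = E(G') \setminus E(H')$, and $G - F$ and $G' - F$ are obtained from identical ``rest'' components by the same deletions, followed by reattaching the intact clique $H_r$ (respectively $H'$) of size $\lcomp(G)$ (respectively $N$).

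For the forward direction, let $F$ be a solution for $(G, k, \eta)$. The graph $G' - F$ is a cluster graph since each of its components is either $H'$ or a component of $G - F$ distinct from $H_r$. Because every component of $G$ other than $H_r$ has at most $R_{k+1}(k+2, k+2)$ vertices, the same bound holds after deletion, so $\lcomp(G - F) = \card{H_r} = \lcomp(G)$ and $\lcomp(G' - F) = \card{H'} = N$. Writing $X$ for the smallest component size obtained from applying $F$ to the rest, we have $\scomp(G - F) = \min(\card{H_r}, X)$ and $\scomp(G' - F) = \min(N, X)$. Using the identity $\card{H_r} - N = \eta - \eta'$, a short case split on whether $X \leq N$, $N < X \leq \card{H_r}$, or $X > \card{H_r}$ shows that $\lcomp(G - F) - \scomp(G - F) \leq \eta$ implies $\lcomp(G' - F) - \scomp(G' - F) \leq \eta'$; in the last two cases the latter quantity is even $0$.

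The backward direction is symmetric: given a solution $F'$ for $(G', k, \eta')$, the same analysis with the roles of $H_r$ and $H'$ swapped — and in particular exploiting $N - \scomp(G' - F') \leq \eta'$ together with $\card{H_r} = N + (\eta - \eta')$ — yields that $G - F'$ is an $\eta$-balanced cluster graph.

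I do not anticipate any serious obstacle; the proof is essentially bookkeeping that parallels Reduction Rule~\ref{rule:final-rule-for-bcc}. The only subtle point is confirming that $\lcomp(G - F) = \card{H_r}$ and $\lcomp(G' - F) = N$, which relies on the fact that every non-$H_r$, non-$H'$ component has size at most $R_{k+1}(k+2, k+2) < N < \card{H_r}$ — a bound secured by the earlier reduction rules and by the choice $\eta' = 2R_{k+1}(k+2, k+2)$.
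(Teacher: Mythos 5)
Your proposal is correct and follows essentially the same route as the paper's proof: both arguments hinge on $H_r$ and $H'$ being unmanageable (so, by Lemma~\ref{lem:deletion-unmanageable}, untouched by any solution), on the non-$H_r$/non-$H'$ components being identical in $G$ and $G'$ and bounded by $R_{k+1}(k+2,k+2)$, and on translating between the balance thresholds via $\card{H_r}-N=\eta-\eta'$. The only cosmetic difference is that your case split on $X$ includes two vacuous cases (since $X\le R_{k+1}(k+2,k+2)<N<\card{H_r}$), where the paper instead directly observes $\scomp(G-F)=\scomp(G'-F)$ using the fact that $G$ has a component other than $H_r$; this does not affect correctness.
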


\begin{proof}
Informally, Reduction Rule~\ref{rule:deletion-final-rule-for-bcd} is safe because we deleted $\lcomp(G) - N = \eta - \eta'$ vertices from $H_r$ and we decremented $\eta$ by the same amount; that is, we set $\eta' = \eta - (\eta - \eta')$. The safeness of the rule then follows from the facts that no solution for $(G, k, \eta)$ modifies $H_r$ as $H_r$ is an unmanageable component of $G$, and no solution for $(G', k, \eta')$ modifies $H'$ as $\card{H'} = \lcomp(G) - (\lcomp(G) - N) = N \geq 2 R_{k + 1}(k + 2, k + 2) \geq k + 2$, and thus $H'$ is an unmanageable component of $G'$. We now prove this more formally. 

Recall that each component of $G$ except $H_r$ is a component of $G'$, and each component of $G'$  except $H'$ is a component of $G$. Also, each such  component $H \neq H_r, H'$ (of $G$ or $G'$) has size at most $R_{k + 1}(k + 2, k + 2)$. But $\card{H_r} > 2 R_{k + 1}(k + 2)$ and $\card{H'} = N \geq 2 R_{k + 1}(k + 2)$, and thus $H_r$ is the unique largest component of $G$ and $H'$ is the unique largest component of $G'$. That is, $\lcomp(G) = \card{H_r}$ and $\lcomp(G') = \card{H'}$. 

We now prove that the instances $(G, k, \eta)$ and $(G', k, \eta')$ are equivalent. Assume first that $(G, k, \eta)$ is a yes-instance, and let $F$ be a solution for $(G, k, \eta)$. We argue that $F$ is a solution for $(G', k, \eta')$ as well. First, since $H_r$ is an unmanageable component of $G$, by Lemma~\ref{lem:deletion-unmanageable}, $F$ does not modify $H_r$. Now, consider the graph $G' - F$; this graph is well-defined as $F$ does not modify $H_r$, and and it is indeed a cluster graph.  In particular, $F$ does not modify $H'$ (as $V(H') \subseteq V(H_r)$). Thus, $F$ does not modify $H_r$ or $H'$, and hence $H_r$ is a component of $G - F$ and $H'$ is a component of $G' - F$. Thus $\card{H_r} \leq \lcomp(G - F)$ and $\card{H'} \leq \lcomp(G' - F)$. Now, as $G - F$ is a subgraph of $G$, we have $\lcomp(G - F) \leq \lcomp(G)$, which implies that $\card{H_r} \leq \lcomp(G - F) \leq \lcomp(G) = \card{H_r}$, and thus $\card{H_r} = \lcomp(G - F) = \lcomp(G)$. Using similar arguments, we get  $N = \card{H'} = \lcomp(G' - F) = \lcomp(G')$. Observe now that since Reduction Rule~\ref{rule:deletion-sanity-check} is not applicable, $G$ contains a component other than $H_r$, and hence $G'$ contains a component other than $H'$. Again, since $F$ does not modify $H_r$, each component of $G - F$, except $H_r$, is also a component of $G' - F$; and each component of $G' - F$, except $H'$, is also a component of $G - F$. In particular, a smallest component of $G - F$ is a smallest component of $G' - F$, and vice versa. That is, $\scomp(G - F) = \scomp(G' - F)$. 
Since $G - F$ is $\eta$-balanced, we have 
$\lcomp(G - F) - \scomp(G - F) \leq \eta$, which implies that $\lcomp(G - F) - (\eta - \eta') - \scomp(G - F) \leq \eta - (\eta - \eta')$, which implies that $N - \scomp(G' - F) \leq \eta'$. That is, $\lcomp(G' - F) - \scomp(G' - F) \leq \eta'$, and thus $G' - F$ is $\eta$-balanced. 

Conversely, assume that $(G', k', \eta')$ is a yes-instance, and let $F'$ be a solution for $(G', k, \eta')$. Then, as $H'$ is an unmanageable component of $G'$, $F'$ does not modify $H'$, and  we can argue that $F'$ is a solution for $(G, k, \eta)$. In fact, we have (i) $\lcomp(G - F') = \card{H_r} = \lcomp(G)$, (ii) $\lcomp(G' - F') = \card{H'} = N = \lcomp(G')$ and (iii) $\scomp(G - F') = \scomp(G' - F')$. Since $G' - F'$ is $\eta'$-balanced, we have $\lcomp(G' - F') - \scomp(G' - F') \leq \eta'$, which implies that $N - 
\scomp(G' - F') \leq \eta'$, which implies that $N + (\eta - \eta') - \lcomp(G' - F') \leq \eta' + (\eta - \eta')$, which implies that $\lcomp(G - F') - \scomp(G - F') \leq \eta$. Thus $G - F'$ is $\eta$-balanced. 
\end{proof}

Consider the instance $(G', k, \eta')$ returned by Reduction Rule~\ref{rule:deletion-final-rule-for-bcd}. Recall that $\eta' = 2 R_{k + 1}(k + 2, k + 2)$ and $\card{H'} = N \leq 3 R_{k + 1}(k + 2, k + 2) = \cO(k^3)$  (Observation~\ref{obs:deletion-bound-N}). The components of $G'$ are precisely those components of $G$, except $H_r$, and $H'$. As $\card{G'} = \card{H_r} + \cO(k^4)$, we have $\card{G'} = \card{H'} + \cO(k^4) = \cO(k^4)$. That is, $(G', k, \eta')$ is a kernel with $\cO(k^4)$ vertices. We have thus proved the following result. 

\begin{theorem}
\label{thm:bcd-par-solution-size}
\bcd\ admits a kernel with $\cO(k^4)$ vertices. 
\end{theorem}

\subsection{Polynomial Kernel for Balanced Cluster Editing}\label{sec:editing-kernel}

We formally define the \bcefull\ (\bce) problem as follows. 

\defproblem{\bcefull\ (\bce)}{A graph $G$ and non-negative integers $k$ and $\eta$.}{Decide if there exists $F \subseteq E(G)$ such that $\card{F} \leq k$ and $G \triangle F$ is an $\eta$-balanced cluster graph.}

In this section, we show that \bce\ admits a kernel with $\cO(k^3)$ vertices. We first briefly outline our strategy. 

\subparagraph*{Outline of the kernel.} Consider an instance $(G, k, \eta)$ of \bce. We first construct a modulator---a set $S$ of vertices such that $G - S$ is a cluster graph. If $(G, k, \eta)$ is a yes-instance then a modulator of size $\cO(k)$ exists, and we can find  such a modulator in polynomial time. We then bound the number of components and the size of each component of $G - S$. To do this, we first consider the components of $G - S$ that have at least one neighbour in $S$; we show that the number of such components is $\cO(k^2)$, and that the size of each such component is $\cO(k)$. Next, we consider the components of $G - S$ that have no neighbour in $S$; notice that these are indeed components of $G$. To bound the number of vertices that belong to such components, we use arguments that are identical to the ones we used for \bcd. That is, we classify the components into two types---small and large---based on their sizes, and apply a host of reduction rules to bound the number of vertices that belong to such components. \lipicsEnd

We now proceed to designing our kernel. To do this, we rely on the fact that a graph $G$ is a cluster graph if and only if $G$ does not contain $P_3$ as an induced subgraph, where $P_3$ is the path on three vertices. We say that an edge or non-edge $uv \in \binom{V(G)}{2}$ is part of an induced $P_3$ if there exists a vertex $w \in V(G)$ such that the subgraph of $G$ induced by $\set{u, v, w}$ is a $P_3$. Notice now that if an edge $uv \in E(G)$ is part of at least $k + 1$ distinct induced $P_3$s---that is, there exist $k + 1$ distinct vertices $w_1, w_2,\ldots, w_{k + 1}$ such that $G[\set{u, v, w_i}]$ is a $P_3$ for every $i \in [k + 1]$---then \emph{every} solution $F$ for $(G, k, \eta)$ must necessarily contain the edge $uv$; for otherwise, for every $i \in [k + 1]$, $F$ must contain either $uw_i$ or $vw_i$, which is not possible as $\card{F} \leq k$. A similar reasoning applies to non-edges $uv$ that are part of at least $k + 1$ induced $P_3$s. These observations immediately lead to the following two reduction rules, which we apply exhaustively. 
\begin{rrule}
\label{rule:editing-edge}
If an edge $uv \in E(G)$ is part of at least $k + 1$ induced $P_3$s, then delete the edge $uv$ from $G$ and decrement $k$ by $1$. 
\end{rrule}

\begin{rrule}
\label{rule:editing-non-edge}
If a non-edge edge $uv \in \binom{V(G)}{2} \setminus E(G)$ is part of at least $k + 1$ induced $P_3$s, then add the edge $uv$ to $G$ and decrement $k$ by $1$. 
\end{rrule}

We now apply the following two reduction rules. The first one rules out obvious yes-instances, and the second one bounds $\eta$. 

\begin{rrule}
\label{rule:editing-sanity-check}
If $G$ is an $\eta$-balanced cluster graph, then we return a trivial yes-instance. 
\end{rrule}
 
\begin{rrule}
\label{rule:editing-bound-eta}
If $\eta > \max \set{\lcomp(G), k}$, then we return the instance $(G, k, \hat \eta)$, where $\hat \eta = \max \set{\lcomp(G), k}$. 
\end{rrule}

\begin{lemma}
\label{lem:editing-bound-eta}
Reduction Rule~\ref{rule:editing-bound-eta} is safe. 
\end{lemma}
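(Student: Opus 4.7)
The forward direction is immediate: since $\eta>\hat\eta$, any $\hat\eta$-balanced cluster graph is automatically $\eta$-balanced, so a solution for $(G,k,\hat\eta)$ is a solution for $(G,k,\eta)$. The entire work is in the reverse direction, and the plan is to show that any solution $F$ for $(G,k,\eta)$ already satisfies $\lcomp(G\triangle F)-\scomp(G\triangle F)\le\hat\eta=\max\{\lcomp(G),k\}$, so that $F$ is already a solution for $(G,k,\hat\eta)$.

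The key structural ingredient I would rely on is a dichotomy on the components of $G\triangle F$. Let $C$ be a component of $G\triangle F$. If $C$ contains a vertex $u\in V(G)\setminus V(F)$ not modified by $F$, then $N_G(u)=N_{G\triangle F}(u)$; since $G\triangle F$ is a cluster graph, $C$ is a clique of $G\triangle F$ and $u\in C$, which forces $C\setminus\{u\}\subseteq N_{G\triangle F}(u)=N_G(u)$. Hence every other vertex of $C$ lies in the $G$-component of $u$, and so $C$ is fully contained in a single $G$-component, giving $|C|\le\lcomp(G)$. Otherwise, every vertex of $C$ is modified by $F$, so $C\subseteq V(F)$ and $|C|\le 2|F|\le 2k$.

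Applying this dichotomy to a largest component $C$ of $G\triangle F$, the easy case is when $C$ contains an unmodified vertex: then $|C|\le\lcomp(G)\le\hat\eta$, and because $\scomp(G\triangle F)\ge 1$, the difference is at most $\hat\eta-1<\hat\eta$. The main obstacle is the case where $C$ consists entirely of modified vertices and $|C|>\hat\eta\ge k$, since the crude bound $|C|\le 2k$ alone is not enough. My plan here is a ``budget'' argument: each vertex of $C$ is an endpoint of some edge of $F$, so at least $\lceil|C|/2\rceil$ of the at most $k$ edges of $F$ are incident to $C$, leaving at most $k-\lceil|C|/2\rceil$ edges of $F$ free to act on $V(G)\setminus C$. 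Combining this with the density fact that $G[C]\setminus F$ has at least $\binom{|C|}{2}-k$ edges (because $C$ is a clique in $G\triangle F$), I would argue that the ``rest'' of $G\triangle F$ is close enough to the ``rest'' of $G$ that $\scomp(G\triangle F)\ge |C|-\hat\eta$. Making this residual-budget comparison precise across all parameter regimes -- in particular when $\lcomp(G)<k$, so $\hat\eta=k$ while $|C|$ may be as large as $2k$ -- is the most delicate step of the proof.
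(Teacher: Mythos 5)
There is a genuine gap, and it sits exactly where you flag it: the case of a largest component $C$ of $G\triangle F$ all of whose vertices are modified is left as an unproven ``residual budget'' sketch, and that sketch is not the idea that closes the case. The paper's proof uses a different (and finer) dichotomy than yours: not ``does $C$ contain an unmodified vertex,'' but ``is $C$ contained in a single component of $G$ or does it intersect $r\ge 2$ components of $G$.'' In the first case $\card{C}\le\lcomp(G)$ regardless of which vertices are modified --- note that a fully-modified component can perfectly well sit inside one $G$-component (e.g.\ after deletions), so your dichotomy lumps this harmless situation together with the genuinely constrained one. In the second case, write $V(C)=X_1\cup\cdots\cup X_r$ with each $X_i$ in a distinct $G$-component; since $C$ is a clique of $G\triangle F$ and $G$ has no edges between different $X_i$'s, \emph{every} cross pair between distinct $X_i,X_j$ must be an added edge of $F$. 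Counting these pairs gives $\card{F}\ge\sum_{i<j}\card{X_i}\,\card{X_j}\ge\card{C}-1$, hence $\card{C}\le k+1$, and since $\scomp(G\triangle F)\ge 1$ this already yields $\lcomp(G\triangle F)-\scomp(G\triangle F)\le\max\{\lcomp(G),k\}=\hat\eta$. So the troublesome regime you worry about ($C$ fully modified, $\card{C}$ up to $2k$ while $\hat\eta=k$) simply cannot occur: either $C$ fits in one $G$-component and is bounded by $\lcomp(G)$, or the cross-edge count alone forces $\card{C}$ down to about $k$.

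Your proposed route --- lower-bounding $\scomp(G\triangle F)$ by $\card{C}-\hat\eta$ via how much budget remains after paying for edges incident to $C$ --- is both harder than necessary and not obviously true as stated (the leftover budget controls how much the \emph{rest} of the graph is edited, but says nothing that directly prevents a small leftover component; indeed $\scomp(G\triangle F)=1$ is entirely possible, and the proof must work with that). The missing ingredient is the cross-pair counting argument above, which bounds $\lcomp(G\triangle F)$ by $\max\{\lcomp(G),k\}$ (up to an additive one absorbed by $\scomp\ge 1$) without ever needing a nontrivial lower bound on $\scomp$. Your forward direction ($\hat\eta<\eta$, so any $\hat\eta$-balanced editing is $\eta$-balanced) matches the paper and is fine.
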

\begin{proof}
Suppose that $\eta > \max \set{\lcomp(G), k}$. 
Assume that $(G, k, \eta)$ is a yes-instance, and let $F \subseteq \binom{V(G)}{2}$ be a solution for $(G, k, \eta)$. We will show that $G \triangle F$ is $\hat \eta$ balanced. And for that, it is enough to show that $\lcomp(G \triangle F) \leq \max \set{\lcomp(G), k} = \hat \eta$. To prove this, consider a connected component $H$ of $G \triangle F$. We will show that $\card{H} \leq \max\set{\lcomp(G), k} = \hat \eta$. If $H$ is fully contained in a connected component of $G$, then $\card{H} \leq \lcomp(G)$. So suppose this is not the case. Then $H$ intersects $r$ distinct connected components of $G$ for some $r \geq 2$. Let $\{X_1, X_2, \ldots, X_r\}$ be the unique partition of $V(H)$ such that for each $i \in [r]$, $X_i$ is fully contained in a connected component of $G$, and for distinct $i, j \in [r]$, $X_i$ and $X_j$ are not contained in the same connected component of $G$. Then $\sum_{i \in [r]} \card{X_i} = \card{H}$. Also, for distinct $i, j \in [r]$, as $X_i$ and $X_j$ are contained in distinct connected components of $G$, $E(G)$ does not contain any edge with one endpoint in $X_i$ and the other in $X_j$. But as $H$ is a connected component of the cluster graph $G \triangle F$, $H$ is a clique, and therefore, $E(H)$ contains all the edges between $X_i$ and $X_j$. Thus, for distinct $i, j \in [r]$, we can conclude that $F$ contains all the edges between $X_i$ and $X_j$; that is, $\bigcup_{\set{i, j} \subseteq \binom{[r]}{2}} \set{u, v ~|~ u \in X_i, v \in X_j} \subseteq F$.  We thus have $\card{F} \geq \sum_{\set{i, j} \in \binom{[r]}{2}} \card{X_i} \card{X_j} \geq \sum_{i \in [r]} \card{X_i} = \card{H}$. That is, $\card{H} \leq \card{F} \leq k$. As $H$ is an arbitrary component of $G \triangle F$, we can conclude that $\lcomp(G \triangle F) \leq \max \set{\lcomp(G), k} = \hat \eta$, and hence $G \triangle F$ is $\hat \eta$-balanced. 

Recall that we are under the assumption that $\hat \eta = \max \set{\lcomp(G), k} < \eta$. Hence, for any $F' \subseteq \binom{V(G)}{2}$, if $G \triangle F'$ is $\hat \eta$-balanced, then  $G \triangle F'$ is $\eta$-balanced as well. Therefore, if  $(G, k, \hat \eta)$ is a yes-instance, then so is $(G, k, \eta)$. 
\end{proof}

Assume from now on that Reduction Rules~\ref{rule:editing-edge}-\ref{rule:editing-bound-eta} are no longer applicable. We will now use the well-known fact that given a graph $G$ and a non-negative integer $k$, in polynomial time, we can either find a set of vertices $S \subseteq V(G)$ such that $\card{S} \leq 3k$ and $G - S$ is a cluster graph, or conclude that $(G, k)$ is a no-instance of {\sc Cluster Deletion} (i.e., the deletion version without the balance constraint). If $(G, k)$ is a no-instance of {\sc Cluster Deletion}, then notice that $(G, k, \eta)$ is a no-instance of \bcd\ for any $\eta \geq 0$. We state this result below for future reference.  
\begin{lemma}[folklore]
\label{lem:editing-modulator}
There exists an algorithm that, given a graph $G$ and a non-negative integers $k$ and $\eta$ as input, runs in polynomial time, and either returns an inclusion-wise minimal set $S \subseteq V(G)$ such that $\card{S} \leq 3k$ and $G - S$ is a cluster graph, or correctly returns that $(G, k, \eta)$ is a no-instance of \bce. 
\end{lemma}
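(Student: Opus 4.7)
The plan is to use a standard greedy packing of induced $P_3$s. Recall that a graph is a cluster graph if and only if it contains no induced copy of $P_3$, so any vertex set whose removal produces a cluster graph must hit every induced $P_3$ in $G$.

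First, I would build, in polynomial time, a maximal vertex-disjoint packing $\mathcal{P} = \{T_1, T_2, \ldots, T_p\}$ of induced $P_3$s in $G$: starting from $\mathcal{P} = \emptyset$, repeatedly search for an induced $P_3$ in the remaining graph $G - \bigcup_{T \in \mathcal{P}} V(T)$ by scanning all triples of vertices, and add it to $\mathcal{P}$ whenever one is found. Each iteration takes $\cO(n^3)$ time and removes three vertices from consideration, so the entire construction runs in polynomial time.

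The key observation is that the $p$ triples yield pairwise disjoint constraints on any solution $F$ for $(G, k, \eta)$. Concretely, if $T_i = \{a_i, b_i, c_i\}$ with $a_ib_i, b_ic_i \in E(G)$ and $a_ic_i \notin E(G)$, then $G \triangle F$ being a cluster graph forces $F$ to contain at least one of the pairs $a_ib_i$, $b_ic_i$, $a_ic_i$. Because the sets $V(T_i)$ are pairwise disjoint, so are the sets $\binom{V(T_i)}{2}$, and hence the selected pairs across different $T_i$ are distinct, giving $\card{F} \geq p$. Therefore, if the packing procedure produces $p > k$, I correctly conclude that $(G, k, \eta)$ is a no-instance of \bce.

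Otherwise, let $S_0 = \bigcup_{i=1}^p V(T_i)$. Then $\card{S_0} \leq 3p \leq 3k$, and by maximality of $\mathcal{P}$, $G - S_0$ has no induced $P_3$ and is therefore a cluster graph. To ensure inclusion-wise minimality, I iterate through the vertices of $S_0$: for each $v$ in the current modulator $S$, remove $v$ from $S$ if $G - (S \setminus \{v\})$ remains a cluster graph, which can be checked in polynomial time by scanning for an induced $P_3$. The resulting $S \subseteq S_0$ is inclusion-wise minimal and satisfies $\card{S} \leq 3k$. There is no substantial obstacle in this proof; the one point requiring care is the counting argument showing $\card{F} \geq p$, which rests solely on the vertex-disjointness of the packing combined with the fact that each induced $P_3$ admits only three local edits, all of which lie inside that triple.
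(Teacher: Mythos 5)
Your proof is correct, and it is exactly the standard greedy $P_3$-packing argument that the paper invokes here as folklore without writing it out: the disjointness counting (each of the $p$ vertex-disjoint induced $P_3$s forces a distinct edit among the three pairs inside its triple, so $p > k$ certifies a no-instance of \bce\ for every $\eta$), the $3k$ bound via maximality, and the single-pass minimalization are all sound and run in polynomial time. A minor point in your favour: by allowing the non-edge $a_ic_i$ as one of the three possible edits you handle the editing version directly, which is what the lemma actually needs (the paper's surrounding text phrases the folklore fact via {\sc Cluster Deletion}, but the same packing bound applies to editing exactly as you argue).
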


We invoke the algorithm of Lemma~\ref{lem:editing-modulator} on the instance $(G, k, \eta)$. If the algorithm returns that $(G, k, \eta)$ is a no-instance, then we return a trivial no-instance of \bce. Otherwise, the algorithm returns an inclusion-wise minimal set $S$ such that $\card{S} \leq 3k$ and $G - S$ is a cluster graph; assume from now on that the algorithm of Lemma~\ref{lem:editing-modulator} returned such a set $S$. We will use $S$ and the fact that $G - S$ is a cluster graph throughout the remainder of this section. As $G - S$ is a cluster graph, every connected component of $G - S$ is a clique. Now, to bound $\card{G}$, we will bound the number of connected components of $G - S$ and the size of each such component.   

We classify the components of $G - S$ into two types depending on whether or not a component has a neighbour in $S$. Consider a connected component $H$ of $G - S$. We say that $H$ is \emph{visible} if a vertex in $V(H)$ is adjacent to a vertex in $S$; otherwise, we say that $H$ is \emph{invisible}. If $H$ is visible, then for any vertex $x \in S$ such that $x$ is adjacent to a vertex in $V(H)$, we say that $x$ sees the component $H$. We first bound the number of visible components. 

\begin{rrule}
\label{rule:editing-number-of-visible}
If there exists $x \in S$ such that $x$ sees at least $2k+ 2$ distinct components of $G - S$, then we return a trivial no-instance. 
\end{rrule}
\label{lem:editing-number-of-visible}

\begin{lemma}
Reduction Rule~\ref{rule:editing-number-of-visible} is safe. 
\end{lemma}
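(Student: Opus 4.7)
The plan is to show that if some $x \in S$ sees at least $2k+2$ distinct components of $G - S$, then no set $F \subseteq \binom{V(G)}{2}$ with $\card{F} \leq k$ can be a solution, which justifies returning a trivial \noinstance. For each component $H_i$ (with $i \in [2k+2]$) that $x$ sees, I would pick a vertex $v_i \in V(H_i)$ that is adjacent to $x$ in $G$; note that the $v_i$'s are pairwise distinct because the $H_i$'s are distinct components of $G - S$. Suppose, for contradiction, that $F$ is a solution for $(G, k, \eta)$, and let $C$ be the unique connected component of the cluster graph $G \triangle F$ that contains $x$.

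First, I would bound the number of indices $i$ for which $v_i \notin V(C)$. For every such $i$, the edge $xv_i$ belongs to $E(G)$ but not to $E(G \triangle F)$ (since $x$ and $v_i$ lie in different components of the cluster graph $G \triangle F$), so $xv_i \in F$. These edges are pairwise distinct because the $v_i$'s are distinct, so there can be at most $\card{F} \leq k$ such indices. Setting $I = \set{i \in [2k+2] : v_i \in V(C)}$, this gives $\card{I} \geq (2k+2) - k = k+2$.

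Next, I would examine the pairs inside $I$. For distinct $i, j \in I$, the vertices $v_i$ and $v_j$ lie in different connected components of $G - S$, so $v_iv_j \notin E(G)$; on the other hand, both belong to $V(C)$, which is a clique of the cluster graph $G \triangle F$, so $v_iv_j \in E(G \triangle F)$. Therefore $v_iv_j \in F$ for every such pair, and these pairs are pairwise distinct elements of $F$. Hence
\[
\card{F} \;\geq\; \binom{\card{I}}{2} \;\geq\; \binom{k+2}{2} \;=\; \frac{(k+1)(k+2)}{2},
\]
which exceeds $k$ for every $k \geq 0$, contradicting $\card{F} \leq k$.

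The main obstacle is just the two-bucket pigeonhole: the $2k+2$ components split into those merged with $x$ in $G \triangle F$ (at least $k+2$, forcing $\binom{k+2}{2}$ cross non-edges to be added by $F$) and those separated from $x$ in $G \triangle F$ (at most $k$, each costing $F$ one edge deletion). The threshold $2k+2$ is exactly what balances these two buckets so that both cases together force more than $k$ modifications; no structural property of the modulator $S$ beyond the definition of visibility is required.
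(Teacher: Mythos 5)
Your proof is correct, but it takes a different route from the paper's. The paper argues as follows: since $\card{F}\leq k$, the set $F$ modifies at most $2k$ components of $G-S$, so among the at least $2k+2$ components seen by $x$ there are two, say $H$ and $H'$, that $F$ leaves untouched; picking neighbours $y\in V(H)$ and $y'\in V(H')$ of $x$ yields $xy,xy'\in E(G\triangle F)$ and $yy'\notin E(G\triangle F)$, i.e.\ an induced $P_3$ in $G\triangle F$, contradicting that $G\triangle F$ is a cluster graph. You instead count edges of $F$ directly: the representatives $v_i$ of the seen components split into those separated from $x$ in $G\triangle F$ (each forcing a distinct deleted edge $xv_i\in F$, so at most $k$ of them) and those landing in $x$'s cluster (at least $k+2$, forcing $\binom{k+2}{2}>k$ added non-edges $v_iv_j\in F$). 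Both arguments are sound and elementary; the paper's is shorter and reuses the ``a solution modifies at most $2k$ components'' observation that recurs throughout the kernelization, producing a structural contradiction (a surviving $P_3$) without any counting of added edges, while yours is self-contained, makes the role of the threshold $2k+2$ explicit as the balance point between the deletion bucket and the quadratic addition bucket, and in fact gives a quantitative lower bound on $\card{F}$ rather than just a contradiction.
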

\begin{proof}
To prove the lemma, it is enough to prove that $(G, k, \eta)$ is a no-instance if there exists $x \in S$ such that $x$ sees at least $2k + 2$ distinct components of $G - S$. Suppose such an $x \in S$ exists, and assume for a contradiction that $(G, k, \eta)$ is a yes-instance. Let $F$ be a solution for $(G, k, \eta)$. Then, as $\card{F} \leq k$, $F$ modifies at most $2k$ components of $G - S$.  In particular, $F$ modifies at most $2k$ components of $G - S$ that are seen by $x$. Then, since $x$ sees at least $2k + 2$ components of $G - S$, there exists two distinct components $H$ and $H'$ of $G - S$ such that $x$ sees both $H$ and $H'$ and $F$ does not modify either of those components. As $x$ sees both $H$ and $H'$,  there exist vertices $y \in V(H)$ and $y' \in V(H')$ such that $xy, xy' \in E(G)$. Also, since $H$ and $H'$ are distinct components of $G - S$, we have $yy' \notin E(G)$. Since $F$ does not modify $H$ or $H'$, we can conclude that $xy, xy' \in E(G \triangle F)$ and $yy' \notin E(G \triangle F)$. But then $yxy'$ is a $P_3$ in $G \triangle F$, which contradicts the assumption that $G \triangle F$ is a cluster graph. 
\end{proof}

Assume from now on that Reduction Rule~\ref{rule:editing-number-of-visible} is no longer applicable. Hence, for every $x \in S$, $x$ sees at most $2k + 1$ components of $G - S$. But notice that every visible component is seen by at least one vertex of $S$. We can thus conclude that the number of visible components of $G - S$ is at most $\card{S} \cdot (2k + 1) \leq 3k \cdot (2k + 1) = \cO(k^2)$. We record this observation below for future reference. 

\begin{observation}
\label{obs:editing-number-of-visible}
The number of visible components of $G - S$ is at most $\card{S} \cdot (2k + 1) = \cO(k^2)$. 
\end{observation}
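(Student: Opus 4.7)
The plan is a direct double-counting argument that exploits what was just established about Reduction Rule~\ref{rule:editing-number-of-visible}. The key observation is that visibility is defined via adjacency from $S$, so the set of visible components is exactly the union over $x \in S$ of the family of components seen by $x$. A union bound then suffices.

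First, I would formalise the incidence between $S$ and the visible components. By the very definition of visibility, for every visible component $H$ of $G - S$, there exists at least one vertex $x \in S$ that sees $H$; that is, $H$ is counted in the family $\mathcal{F}_x = \{H' \text{ component of } G-S \mid x \text{ sees } H'\}$ for at least one $x \in S$. Hence the set of visible components is contained in $\bigcup_{x \in S} \mathcal{F}_x$.

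Second, since Reduction Rule~\ref{rule:editing-number-of-visible} is no longer applicable, every $x \in S$ satisfies $\card{\mathcal{F}_x} \leq 2k + 1$. By the union bound, the number of visible components is at most $\sum_{x \in S} \card{\mathcal{F}_x} \leq \card{S} \cdot (2k+1)$. Applying the bound $\card{S} \leq 3k$ from Lemma~\ref{lem:editing-modulator} yields at most $3k(2k+1) = \cO(k^2)$ visible components, as claimed.

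There is essentially no obstacle here, since all the real work was absorbed into the safeness of Reduction Rule~\ref{rule:editing-number-of-visible}; once that rule has been applied to exhaustion, the observation is an immediate consequence of the definition of visible components and a union bound over $S$.
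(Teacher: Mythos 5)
Your proof is correct and matches the paper's argument exactly: since Reduction Rule~\ref{rule:editing-number-of-visible} no longer applies, each $x \in S$ sees at most $2k+1$ components, every visible component is seen by some vertex of $S$, and the union bound with $\card{S} \leq 3k$ gives the $\cO(k^2)$ bound.
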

We now bound the size of each visible component by $\cO(k)$. To that end, we classify the visible components into two types. Consider a visible component $H$ of $G - S$. We say that $H$ is a \emph{type-1 component} if there exists $x \in S$ such that $x$ has both a neighbour and a non-neighbour in $V(H)$; otherwise, we say that $H$ is a \emph{type-2 component}. Observe that if $H$ is a type 1 component, then there exist $y, z \in V(H)$ such that $x y z$ is an induced $P_3$. And recall that as Reduction Rules~\ref{rule:editing-edge} and \ref{rule:editing-non-edge} are not applicable, every edge and every non-edge is part of at most $k$ $P_3$s. These observations immediately lead to the following lemma. 

\begin{lemma}
\label{lem:editing-type-1}
If $H$ is a type-1 component of $G - S$, then $\card{H} \leq 2k$.
\end{lemma}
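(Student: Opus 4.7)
The plan is to use the witness vertex $x \in S$ provided by the definition of type-1 component and split $V(H)$ into two parts according to adjacency with $x$. Specifically, I would set $A = V(H) \cap N(x)$ and $B = V(H) \setminus N(x)$. By the definition of a type-1 component, both $A$ and $B$ are nonempty, and since $H$ is a connected component of the cluster graph $G - S$, $H$ is a clique, so every vertex in $A$ is adjacent to every vertex in $B$ in $G$.

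The key observation is that this clique structure forces many induced $P_3$s through $x$. Fix any $y \in A$. Then for every $z' \in B$, the triple $\{x, y, z'\}$ induces a $P_3$: $xy \in E(G)$, $yz' \in E(G)$ because $H$ is a clique, and $xz' \notin E(G)$ because $z' \in B$. Thus the edge $xy$ is part of at least $|B|$ distinct induced $P_3$s. Symmetrically, fix any $z \in B$. Then for every $y' \in A$, the triple $\{x, y', z\}$ induces a $P_3$ with $y'$ as the middle vertex, so the non-edge $xz$ is part of at least $|A|$ induced $P_3$s.

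Now I would invoke the assumption that Reduction Rules~\ref{rule:editing-edge} and~\ref{rule:editing-non-edge} are no longer applicable. These rules guarantee that every edge of $G$ and every non-edge of $G$ is part of at most $k$ induced $P_3$s. Applied to the edge $xy$ this gives $|B| \leq k$, and applied to the non-edge $xz$ this gives $|A| \leq k$. Therefore $|V(H)| = |A| + |B| \leq 2k$, as claimed.

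There is no real obstacle here: the argument is a direct application of the reduction rules once the right edge/non-edge incident to $x$ is identified. The only thing to be careful about is to state explicitly that $H$ being a component of the cluster graph $G - S$ forces $H$ to be a clique, which is what supplies the edges $yz'$ and $y'z$ needed to exhibit the induced $P_3$s.
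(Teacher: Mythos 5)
Your proof is correct and follows essentially the same route as the paper's: both split $V(H)$ into the neighbours and non-neighbours of $x$, use the clique structure of $H$ to exhibit induced $P_3$s through $x$, and bound each part by $k$ via the inapplicability of Reduction Rules~\ref{rule:editing-edge} and~\ref{rule:editing-non-edge}. No gaps.
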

\begin{proof}
Consider a type-1 component $H$ of $G - S$. Then there exist $x \in S$ and $y, z \in V(H)$ such that $xy \in E(G)$ and $xz \notin E(G)$. Notice now that every vertex in $V(H)$ is either a neighbour or a non-neighbour of $x$. If $V(H)$ contains at least $k + 1$ neighbours of $x$, say $y_1, y_2,\ldots, y_{k + 1}$, then $x y_i z$ is an induced $P_3$ for every $i \in [k + 1]$. Thus the non-edge $xz$ is part of at least $k + 1$ induced $P_3$s, which is not possible, as Reduction Rule~\ref{rule:editing-non-edge} is not applicable. Similarly, if $V(H)$ contains at least $k + 1$ non-neighbours of $x$, then the edge $xy$ would be part of at least $k + 1$ induced $P_3$s, which is not possible either, as Reduction Rule~\ref{rule:editing-edge} is not applicable. Thus $V(H)$ contains at most $k$ neighbours and at most $k$ non-neighbours of $x$, and hence $\card{H} \leq k + k = 2k$. 
\end{proof}

Observe now that if $H$ is a type-2 component, then ($H$ is visible and) for every $x \in S$, either $x$ is adjacent to each vertex of $H$ or $x$ is not adjacent to any vertex of $H$. We use this observation and the minimality of $S$ to prove the following lemma, which bounds the size of every type-2 component. 

\begin{lemma}
\label{lem:editing-type-2}
If $H$ is a type-2 component of $G - S$, then $\card{H} \leq k$. 
\end{lemma}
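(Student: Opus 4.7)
The plan is to exploit the minimality of the modulator $S$ together with the non-applicability of Reduction Rule~\ref{rule:editing-edge}. Suppose $H$ is a type-2 component of $G - S$. Since $H$ is visible and type-2, there exists a vertex $x \in S$ that is adjacent to \emph{every} vertex of $V(H)$ (and all other vertices of $S$ are adjacent to none of $V(H)$). I would aim to produce an edge incident to $x$ that is part of at least $|V(H)|$ induced $P_3$s, which forces $|V(H)| \le k$.

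First, I would use minimality of $S$ to produce a useful neighbor of $x$ outside $V(H)$. Because $S$ is inclusion-wise minimal, $G - (S \setminus \{x\})$ is not a cluster graph, so it contains an induced $P_3$ on three vertices drawn from $(V(G) \setminus S) \cup \{x\}$. Since $G - S$ is a cluster graph, this $P_3$ must use $x$, so its vertex set has the form $\{x, u, v\}$ with $u, v \in V(G) \setminus S$. I will then do a short case analysis on the position of $x$ in the $P_3$ to find a neighbor $w$ of $x$ with $w \in V(G) \setminus (S \cup V(H))$. If $x$ is the middle vertex, then $xu, xv \in E(G)$ and $uv \notin E(G)$, so $u$ and $v$ lie in distinct components of $G - S$; at most one of them can lie in $V(H)$ (which is a clique), hence some endpoint of the $P_3$, call it $w$, lies outside $V(H)$. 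If $x$ is an endpoint, then $uv \in E(G)$ puts $u$ and $v$ in a common component $H'$ of $G - S$, and $H' \ne H$ because $x \in S_H$ is adjacent to all of $V(H)$ but $xv \notin E(G)$; so I may take $w = u$.

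Once such a $w$ is in hand, the main calculation is immediate: every $y \in V(H)$ satisfies $xy \in E(G)$ (as $x$ sees all of $H$), $xw \in E(G)$ by construction, and $yw \notin E(G)$, since $y$ and $w$ lie in different components of the cluster graph $G - S$. Hence $\{x, y, w\}$ induces a $P_3$ for every $y \in V(H)$, which shows that the edge $xw$ is part of at least $|V(H)|$ distinct induced $P_3$s in $G$. Because Reduction Rule~\ref{rule:editing-edge} is not applicable, this quantity is at most $k$, yielding $|V(H)| \le k$.

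The only step that requires care is the case analysis in the second paragraph—specifically, verifying that the $P_3$ handed to us by minimality always exhibits a neighbor of $x$ outside $V(H)$. This is the main obstacle, but it reduces to the observation that $V(H)$ is a clique (so two non-adjacent vertices of the $P_3$ cannot both live in $V(H)$) combined with the type-2 property (so $x$ cannot be non-adjacent to any vertex of $V(H)$). Everything else is a counting argument using the two reduction rules already established.
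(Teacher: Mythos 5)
Your proof is correct and follows essentially the same route as the paper: both exploit the inclusion-wise minimality of $S$ to produce a neighbour $w$ of $x$ outside $S \cup V(H)$ (the paper argues directly that $x$ must see a second component $H'$, since otherwise $S \setminus \set{x}$ would already be a modulator, whereas you extract $w$ from an explicit induced $P_3$ in $G - (S \setminus \set{x})$ via a short case analysis), and then both count the $\card{H}$ induced $P_3$s of the form $y\,x\,w$ through the edge $xw$, which the non-applicability of Reduction Rule~\ref{rule:editing-edge} caps at $k$. One harmless slip: your parenthetical claim that all other vertices of $S$ are adjacent to no vertex of $V(H)$ does not follow from the type-2 definition (another vertex of $S$ may be adjacent to all of $V(H)$), but you never use this claim, so the argument stands.
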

\begin{proof}
Consider a type-2 component $H$ of $G - S$. Then $H$ is visible, and hence there exists a vertex $x \in S$ such that $x$ sees $H$. And as $H$ is a type-2 component, $x$ is adjacent to every vertex of $H$. In particular, $G[V(H) \cup \set{x}]$ is a clique. Observe now that $x$ must see another component $H' \neq H$ of $G - S$; otherwise, $G - (S \setminus \set{x})$ is a cluster graph, which contradicts the fact that $S$ is inclusion-wise minimal. Fix such a component $H'$ such that $H' \neq H$ and $x$ sees $H$, and fix a vertex $y' \in V(H')$ such that $xy' \in E(H)$. Notice now that for every vertex $y \in V(H)$, $y x y'$ is an induced $P_3$. Hence, if $\card{H} \geq k + 1$, then the edge $xy'$ would be part of at least $k + 1$ induced $P_3$s, which is not possible as Reduction Rule~\ref{rule:editing-edge} is not applicable. We can thus conclude that $\card{H} \leq k$. 
\end{proof}

\begin{observation}
\label{obs:editing-size-of-visible}
By Observation~\ref{obs:editing-number-of-visible}, $G - S$ has at most $\card{S} (2k + 1) \leq 3k (2k + 1)$ visible components. 
By Lemmas~\ref{lem:editing-type-1} and \ref{lem:editing-type-2}, the size of each such component is at most $2k$. Hence the number of vertices of $G - S$ that belong to the visible components is at most $3k (2k + 1) \cdot 2k = \cO(k^3)$.  
\end{observation}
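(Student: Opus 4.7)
The plan is to derive the bound by combining the count of visible components with a uniform bound on the size of each such component, both of which have already been established.

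First, I would invoke Observation~\ref{obs:editing-number-of-visible}: since Reduction Rule~\ref{rule:editing-number-of-visible} is no longer applicable, every vertex of $S$ sees at most $2k+1$ components of $G - S$, and every visible component is, by definition, seen by at least one vertex of $S$. Double counting across $S$ therefore gives that the number of visible components of $G - S$ is at most $\card{S} \cdot (2k+1) \leq 3k(2k+1)$, where the last inequality uses $\card{S} \leq 3k$ from Lemma~\ref{lem:editing-modulator}.

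Next, I would bound the size of each visible component by a simple case split, using the type-1/type-2 dichotomy already introduced. Every visible component $H$ of $G - S$ is either type-1 or type-2. If $H$ is type-1, then Lemma~\ref{lem:editing-type-1} yields $\card{H} \leq 2k$. If $H$ is type-2, then Lemma~\ref{lem:editing-type-2} yields the even stronger bound $\card{H} \leq k \leq 2k$. In either case, $\card{H} \leq 2k$.

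Finally, I would multiply the two bounds to obtain the total. Summing the size bound over all visible components gives at most $3k(2k+1) \cdot 2k = 12k^3 + 6k^2$ vertices of $G - S$ that belong to visible components, which is $\cO(k^3)$. I do not anticipate any genuine obstacle here; the work has already been done in Observation~\ref{obs:editing-number-of-visible} and Lemmas~\ref{lem:editing-type-1} and~\ref{lem:editing-type-2}, and the remaining argument is a one-line counting step.
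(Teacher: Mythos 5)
Your proposal is correct and follows exactly the paper's reasoning: the component count comes from Observation~\ref{obs:editing-number-of-visible} together with $\card{S} \leq 3k$, the size bound $2k$ comes from the type-1/type-2 case split via Lemmas~\ref{lem:editing-type-1} and~\ref{lem:editing-type-2}, and the product gives the $\cO(k^3)$ bound. There is nothing further to add.
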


We have thus bounded the number of vertices that belong to the visible components of $G - S$. Now we only need to bound the number of vertices that belong to the invisible components of $G - S$.  To do this, we follow the same steps as we did for \bcdfull. As the arguments are identical to that of \bcd, we only briefly outline them here. Recall first that each invisible component $H$ has no neighbours in $S$, and hence $H$ is a connected component of $G$. Thus $H$ is a clique and it is a connected component of $G$. We classify the invisible components into two types---manageable and unmanageable---depending on the size of the components. Consider an invisible component $H$. We say that $H$ is \emph{manageable} if $\card{H} \leq k + 1$; otherwise we say that $H$ is \emph{unmanageable}. Notice now that as $\card{F} \leq k$ for every solution $F$, $F$ modifies at most $2k$ components of $G$, and in particular $F$ modifies at most $2k$ manageable components. Thus, for any $j \in [k + 1]$, if $G$ contains more than $2k + 1$ manageable components of size exactly $j$, then we can safely delete one of them. We thus have the following reduction rule, (which is similar to Reduction Rule~\ref{rule:deletion-small-cliques} for \bcd). 
\begin{rrule}
\label{rule:editing-small-cliques}
For $j \in [k + 1]$, if $G$ has at least $2k + 2$ manageable components, each of size exactly $j$, then we delete one such component. 
\end{rrule}

\begin{observation}
\label{obs:editing-small-cliques}
After an exhaustive application of Reduction Rule~\ref{rule:editing-small-cliques}, for each $j \in [k + 1]$, $G$ has at most $2k + 1$ manageable components of size exactly $j$. Hence the number of manageable components of $G$ is at most $(k + 1)(2k + 1) = \cO(k^2)$, and the number of vertices of $G$ that belong to manageable components is at most $\sum_{j = 1}^{k + 1} (2k + 1) j = (2k + 1) \sum_{j = 1}^{k + 1} j = (1/2) (2k + 1) (k + 1) (k + 2) = \cO(k^3)$.  
\end{observation}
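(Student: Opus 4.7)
The plan is straightforward: all three bounds in the observation follow from direct counting arguments based on the stopping condition of Reduction Rule~\ref{rule:editing-small-cliques} and the definition of a manageable component, so no structural graph theory is needed.

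First I would establish the per-size bound. Reduction Rule~\ref{rule:editing-small-cliques} is triggered precisely when $G$ contains at least $2k+2$ manageable components of some common size $j \in [k+1]$, in which case it deletes a single such component. Hence, upon exhaustive application, the rule is no longer applicable for any $j$, and this directly certifies that for every $j \in [k+1]$ the graph contains at most $2k+1$ manageable components of size exactly $j$. One small bookkeeping point to verify is that deleting a manageable component of size $j$ neither creates nor alters manageable components of any other size, so applications of the rule for different sizes do not interfere and the exhaustive application is well-defined.

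Next, I would use the fact that, by definition, a manageable component is an invisible clique component of $G$ of size at most $k+1$, and hence has size $j$ for some $j \in [k+1]$. Summing the per-size cap over all admissible sizes then yields at most
\[
\sum_{j=1}^{k+1}(2k+1) \;=\; (k+1)(2k+1) \;=\; \cO(k^2)
\]
manageable components in total.

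Finally, weighting the per-size count by the corresponding number of vertices per component and summing gives the vertex bound:
\[
\sum_{j=1}^{k+1}(2k+1)\,j \;=\; (2k+1)\cdot \frac{(k+1)(k+2)}{2} \;=\; \cO(k^3).
\]
There is no real obstacle here; the only substantive ingredients are the termination condition of the rule and the size cap in the definition of ``manageable,'' and the rest is arithmetic.
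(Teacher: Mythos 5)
Your proposal is correct and follows essentially the same reasoning as the paper, which states this observation without further proof precisely because it follows directly from the termination condition of Reduction Rule~\ref{rule:editing-small-cliques}, the size cap $j \leq k+1$ in the definition of manageable components, and the arithmetic summation. The bookkeeping point you flag is also fine, since deleting an entire (invisible) component leaves $S$ and all other components untouched.
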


We have so far bounded the number of vertices of $G$ that do not belong to the unmanageable components, and now we only have to bound the number of vertices that belong to the unmanageable components. Before proceeding further, we make the following observation, which summarises the bounds we have proved thus far. This will be useful in bounding the number of vertices that belong to the unmanageable components.

\begin{observation}
\label{obs:editing-H1-bound}
Recall that we have already bounded the number of vertices that belong to the visible components of $G - S$ by $6k^2(2k + 1)$ (Observation~\ref{obs:editing-size-of-visible}); and the number of vertices that belong to the manageable components of $G$ by $(1/2) (2k + 1) (k + 1) (k + 2)$ (Observation~\ref{obs:editing-small-cliques}); and we have $\card{S} \leq 3k$ (Lemma~\ref{lem:editing-modulator}). In other words, we have bounded the number of vertices of $G$ that do not belong to the unmanageable components---and this number is at most $6k^2 (2k + 1) + (1/2) (2k + 1) (k + 1) (k + 2) + 3k \leq 10(2k + 1)^3$. 
\end{observation}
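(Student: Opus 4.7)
The plan is essentially to just combine the three bounds established earlier in the subsection; Observation~\ref{obs:editing-H1-bound} is a summary statement and the proof contains no new content. First, I would note that every vertex of $G$ lies in exactly one of three pairwise disjoint sets: the modulator $S$, the vertex set of a visible component of $G-S$, or the vertex set of an invisible component of $G-S$. Moreover, every invisible component of $G-S$ has no neighbour in $S$ by definition, so it is a connected component of $G$ itself, and is therefore classified as either manageable or unmanageable. Hence the vertices of $G$ that do \emph{not} belong to an unmanageable component are exactly those in $S$, plus those in the visible components of $G-S$, plus those in the manageable components of $G$.

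Next I would cite the three previously proved bounds and add them. By Lemma~\ref{lem:editing-modulator}, $\card{S} \leq 3k$; by Observation~\ref{obs:editing-size-of-visible}, the visible components of $G-S$ together contain at most $6k^2(2k+1)$ vertices; and by Observation~\ref{obs:editing-small-cliques}, the manageable components of $G$ together contain at most $(1/2)(2k+1)(k+1)(k+2)$ vertices. Adding these three quantities yields the claimed bound
\[
6k^2(2k+1) + \tfrac{1}{2}(2k+1)(k+1)(k+2) + 3k.
\]

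Finally, it remains to verify the inequality $6k^2(2k+1) + (1/2)(2k+1)(k+1)(k+2) + 3k \leq 10(2k+1)^3$, which is a routine calculation. For instance, using $k \leq 2k+1$ and $k+1, k+2 \leq 2(2k+1)$, one gets $6k^2(2k+1) \leq 6(2k+1)^3/4 < 2(2k+1)^3$, $(1/2)(2k+1)(k+1)(k+2) \leq 2(2k+1)^3$, and $3k \leq (2k+1)^3$, so the sum is at most $5(2k+1)^3 \leq 10(2k+1)^3$. The main ``obstacle'' is simply making sure the partition into $S$, visible, and invisible components is stated cleanly; there is no nontrivial step here, as all the work has been done in the earlier lemmas and observations.
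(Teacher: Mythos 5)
Your proposal is correct and matches the paper's approach: the observation is just bookkeeping, obtained by partitioning the vertices into $S$, the visible components of $G-S$, and the invisible components (which are components of $G$, hence manageable or unmanageable), then summing the three previously established bounds and checking the crude inequality against $10(2k+1)^3$. Your numeric verification of that inequality is also fine.
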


We now bound the number of vertices that belong to the unmanageable components. As in the case of \bcd, we first show that no solution for $(G, k, \eta)$ modifies an unmanageable component. 

\begin{lemma}
\label{lem:editing-unmanageable}
Assume that $(G, k, \eta)$ is a yes-instance of \bce. Consider any solution $F \subseteq \binom{V(G)}{2}$ for $(G, k , \eta)$.  Let $H$ be an unmanageable component of $G$. Then $F$ does not modify $H$; that is, $H$ is a connected component of $G \triangle F$. 
\end{lemma}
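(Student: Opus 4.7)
The plan is to prove the statement by contradiction: assume that $F$ modifies $H$, and show that $\card{F} > k$, which contradicts $\card{F} \leq k$. First I would note that since $H$ is an invisible component, no vertex of $V(H)$ is adjacent to any vertex of $S$; combined with the fact that $G - S$ is a cluster graph in which $H$ is a (clique) component, this forces $H$ to be a connected component of $G$ itself, and in particular a clique of size $p := \card{H} \geq k + 2$.

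Next, I would analyse how the vertices of $V(H)$ are distributed among the components of the cluster graph $G \triangle F$. Let $C_1, C_2, \ldots, C_r$ be the distinct components of $G \triangle F$ that meet $V(H)$, and set $X_i := V(C_i) \cap V(H)$, so that $X_1, X_2, \ldots, X_r$ is a partition of $V(H)$. I would split the argument into two cases according to the value of $r$.

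In the case $r \geq 2$, every edge of $H$ with endpoints in two different parts $X_i, X_j$ is present in $G$ (because $H$ is a clique) but absent in $G \triangle F$ (because distinct $C_i$'s are different components of a cluster graph); hence every such edge belongs to $F$. Using the identity
\[
\sum_{i < j} \card{X_i}\card{X_j} \;=\; \binom{p}{2} - \sum_{i=1}^{r} \binom{\card{X_i}}{2},
\]
and observing that the right-hand sum is maximised (over partitions into at least two non-empty parts) when one part has $p-1$ elements and the other has $1$, I obtain $\card{F} \geq p - 1 \geq k + 1$, a contradiction. In the case $r = 1$, all of $V(H)$ lies in a single component $C_1$ of $G \triangle F$. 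If $V(C_1) \setminus V(H) \neq \emptyset$, pick any $w$ in this set; since $C_1$ is a clique and $w$ has no neighbour in $V(H)$ in $G$ (as $H$ is a component of $G$), the set $F$ must contain all $p$ edges from $w$ to $V(H)$, giving $\card{F} \geq p \geq k + 2$, again a contradiction. Otherwise $V(C_1) = V(H)$, and then no edge inside $V(H)$ and no (non-)edge between $V(H)$ and $V(G) \setminus V(H)$ is changed by $F$, so $F$ does not modify $H$, contradicting our initial assumption.

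The main (and essentially only) obstacle is producing the sharp lower bound $\sum_{i<j}\card{X_i}\card{X_j} \geq p - 1$ in the $r \geq 2$ case; once this is in place, the other cases reduce to one-line counting arguments. With the contradiction established in every case, $F$ cannot modify $H$, and so $H$ appears unchanged as a connected component of $G \triangle F$, as required.
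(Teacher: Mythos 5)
Your proof is correct and follows essentially the same route as the paper's: you show $V(H)$ cannot be split across components of $G \triangle F$ (splitting a clique of size at least $k+2$ forces at least $k+1$ deleted edges into $F$) and that the component containing $V(H)$ cannot absorb an outside vertex (which would force at least $\card{H} \geq k+2$ added edges into $F$). The only difference is cosmetic: you phrase it as a contradiction with an explicit case analysis and supply the convexity computation $\sum_{i<j}\card{X_i}\card{X_j} \geq p-1$, which the paper simply asserts.
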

\begin{proof}
Fix a solution $F$ for $(G, k, \eta)$, and consider an unmanageable component $H$.  Consider the cluster graph $G \triangle F$. Recall that $H$ is a clique of size at least $k + 2$. And as $\card{H} \geq k + 2$, we have to delete at least $k + 1$ edges from $H$ to separate the vertices of $H$ into two or more connected components. Since $\card{F} \leq k$, and in particular $\card{F \cap E(G)} \leq k$, we can conclude that $H$ is fully contained in a connected component of $G \triangle F$. Notice now that $H$ is indeed a connected component of $G \triangle F$, for otherwise, there exists a vertex $v \in V(G) \setminus V(H)$ such that $H$ and $v$ are in the same connected component of $G \triangle F$. Then $F$, and in particular $F \cap (\binom{V(G)}{2} \setminus E(G))$, must contain all the edges between $v$ and each vertex of $H$. That is,  $F$ must contain at least $\card{H} \geq k + 2$ edges. But this is not possible as $\card{F} \leq k$. 
\end{proof}

We now only need to bound the number of vertices that belong to the unmanageable components. For this, we use arguments that are nearly identical to the ones we used for \bcd. So, we sketch the sequence of arguments without proofs in the following itemised list. 
\begin{enumerate}
    \item Let $H_1, H_2,\ldots, H_r$ be the unmanageable components of $G$ such that $\card{H_1} \leq \card{H_2} \leq \cdots \leq \card{H_r}$. 

    \item Lemma~\ref{lem:editing-unmanageable} shows that no solution modifies an unmanageable component. Hence, if $(G, k, \eta)$ is a yes-instance and $F \subseteq \binom{V(G)}{2}$ is a solution for $(G, k, \eta)$, then for every $i \in [r]$, $H_i$ is also a component of $G \triangle F$. So if $\card{H_r}- \card{H_{1}} > \eta$, then we return a trivial no-instance (identical to Reduction Rule~\ref{rule:deletion-large-cliques-1} for \bcd). 
    
    \item 
    To bound the number of unmanageable components, 
    {\bf we define $\bm{s \in \set{0, 1}}$ as follows: $\bm{s = 0}$ if $\bm{\card{H_1} > 10(2k + 1)^3}$, and $\bm{s = 1}$ otherwise.} Our choice of the bound $10(2k + 1)^3$ is based on Observation~\ref{obs:editing-H1-bound}, which says that the number of vertices that do not belong to the unmanageable components is at most $10(2k + 1)^3$. 

    \item 
    Now, if $s + 1 < r$, then we delete the components $H_{s + 1}, H_{s + 2},\ldots, H_{r - 1}$ (identical to Reduction Rule~\ref{rule:deletion-large-cliques-2} for \bcd). 
    So, to bound $\card{G}$, we now only need to bound $\card{H_r}$; in particular, we now have $\card{G} = \cO(k^3) + \card{H_r}$. 
    
    \item 
    We can show that if $(G, k, \eta)$ is a yes-instance, then $\card{H_r} \leq 10(2k + 1)^3 + \eta$. To prove this, we can argue that since Reduction Rule~\ref{rule:editing-sanity-check} is not applicable, the graph $G$ and consequently the graph $G \triangle F$ for any solution $F$ for $(G, k, \eta)$, contains a component of size at most $10(2k + 1)^3$. Hence, if $\card{H_r} > 10(2k + 1)^3 + \eta$, then we return a trivial no instance. 

    \item\label{item:editing-Hr} 
    Recall that every component of $G$, except possibly $H_r$, has size at most $10(2k + 1)^3$. 
    If $\card{H_r} \leq 20(2k + 1)^3$, then since Reduction Rule~\ref{rule:editing-bound-eta} is not applicable, we have $\eta \leq \max \set{\lcomp(G), k} \leq 20(2k + 1)^3$, and thus $(G, k, \eta)$ is a kernel with $\cO(k^3)$ vertices. 

    \item\label{item:editing-eta} 
    If $\eta \leq 20(2k + 1)^3$, then as observed above, we have $\card{H_r} \leq 10(2k + 1)^3 + \eta \leq 30(2k + 1)^3$, and thus we again have a kernel with $\cO(k^3)$ vertices. 

    \item 
    In light of items~\ref{item:editing-Hr} and \ref{item:editing-eta} above, we can assume that $\card{H_r} > 20(2k + 1)^3$ and $\eta > 20(2k + 1)^3$. Also, as every component of $G$ except $H_r$ has size at most $10(2k + 1)^3$, $H_r$ is the unique largest component of $G$, and thus $\lcomp(G) = \card{H_r}$. 
    {\bf We set $\bm{\eta' = 20(2k + 1)^3}$ and $\bm{N = \lcomp(G) - (\eta - \eta')}$.} 

    \item As Reduction Rule~\ref{rule:editing-bound-eta} is not applicable, we have $\eta \leq \max\set{\lcomp(G), k}$. As $\lcomp(G) = \card{H_r} > k$, we can conclude that $\eta \leq \lcomp(G)$. Therefore, $N = \lcomp(G) - (\eta - \eta') \geq \eta - (\eta - \eta') = \eta' = 20(2k + 1)^3$. We also have $\lcomp(G) = \card{H_r} \leq 10(2k + 1)^3 + \eta$, which implies that $N = \lcomp(G) - (\eta - \eta') \leq 10(2k + 1)^3 + \eta - (\eta - \eta') = 10(2k + 1)^3 + \eta' = 30(2k + 1)^3$. Thus, $\bm{20(2k + 1)^3 \leq N \leq 30(2k + 1)^3}$.  

    \item 
    Finally, we delete $\lcomp(G) - N$ (arbitrarily chosen) vertices from $H_r$; let us denote the resulting component by $H'$ and the resulting graph by $G'$. We return the instance $(G', k, \eta')$, which is the required kernel with $\cO(k^3)$ vertices. We thus have the following theorem. 
    
\end{enumerate}

\begin{theorem}
\label{thm:bce-par-solution-size}
\bce\ admits a kernel with $\cO(k^3)$ vertices. 
\end{theorem}

\section{FPT Algorithms for Balanced Cluster Modification Problems}
\label{sec:fpt}

In this section, we show that \bcc, \bcd\ and \bce\ admit single-exponential \FPT\ algorithms. We first show that \bcc\ and \bcd\ admit algorithms that run in time \algobcdruntime; these results are presented in Sections~\ref{sec:branching} and \ref{sec:deletion-fpt}. We then show in Section~\ref{sec:fpt-completion-editing} that \bcc\ admits another algorithm that runs in time \algobccruntime; this algorithm relies on a correspondence between solutions for \bcc\ and integer-partitions. We also show that the same idea can be exploited to design an algorithm for \bce\ that runs in time \algobccruntime. Finally, in Section~\ref{sec:fast}, we show that by refining the arguments used in Section~\ref{sec:fpt-completion-editing}, we can design algorithms for \bcc\ and \bce\ that run in time \fastbccruntime.

\subsection{FPT  Algorithm for Balanced Cluster Completion}\label{sec:branching}

We first design an algorithm for \bcc\ that runs in time \algobcdruntime. The algorithm is a branching procedure based on the following simple observation: If $(G, k, \eta)$ is a yes-instance of \bcc\ and $G$ is not $\eta$-balanced, then every smallest component $H$ of $G$ must be merged with at least one other component of size $j$ for some $j$, which will require the addition of exactly $j \card{H}$ edges. So we branch into all possible choices for $j$.\footnote{We are grateful to an anonymous reviewer of an earlier version of this manuscript, who pointed out that this simple observation would lead to an algorithm with running time \algobcdruntime.} 

\begin{lemma}\label{lem:bralgobcc-correctness}
Let $(G, k, \eta)$ be a yes-instance  of \bcc, and let $H$ be any smallest component of $G$. Then, either $G$ is $\eta$-balanced, or there exists a positive integer $\ell$ such that $\ell \card{H} \leq k$, $G$ contains at least one component $H' \neq H$ of size exactly $\ell$, and for every component $H' \neq H$ of size exactly $\ell$, $(G + F_{HH'}, k - \card{F_{HH'}}, \eta)$ is a yes-instance of \bcc, where $F_{HH'}$ is the set of all possible edges with one endpoint in $V(H)$ and the other in $V(H')$.  
\end{lemma}

\begin{proof}
    Let $F$ be a solution for $(G, k, \eta)$.  Assume that $G$ is not $\eta$-balanced. 
    Then, $F \neq \emptyset$, and in particular, $F$ must modify $H$. 
    Let $H' \neq H$ be a component of $G$ such that $F$ merges $H$ and $H'$ together; then $F_{HH'} = \set{uv \in \binom{V(G)}{2} ~|~ u \in V(H), v \in V(H')} \subseteq F$. Let let $\ell = \card{H'}$. Notice that $\card{F_{HH'}} = \ell \card{H}$, and as $F_{HH'} \subseteq F$, we have  $\ell \card{H} = \card{F_{HH'}} \leq \card{F} \leq k$. Notice also that $G + F_{HH'}$ is a cluster graph, and that $F \setminus F_{HH'}$ is a solution for the instance $(G + F_{HH'}, k - \card{F_{HH'}}, \eta)$, and hence $(G + F_{HH'}, k - \card{F_{HH'}}, \eta)$ is a yes-instance of \bcc. The lemma now follows from the fact  that for any component $H'' \neq H$ such that $\card{H''} = \card{H'} = \ell$, we may ``replace'' $H'$ with $H''$ in $F$; that is, we have $\card{F_{HH'}} = \card{F_{HH''}} = \ell \card{H}$, where $F_{HH'} = \set{uv \in \binom{V(G)}{2} ~|~ u \in V(H), v \in V(H'')}$, the cluster graphs $G + F_{HH'}$ and $G + F_{HH''}$ are isomorphic, and therefore, $(G + F_{HH'}, k - \card{F_{HH'}}, \eta)$ is a yes-instance of \bcc\ if and only if $(G + F_{HH''}, k - \card{F_{HH''}}, \eta)$ is a yes-instance of \bcc.  
\end{proof}

We are now ready to describe our algorithm, which we call \bralgobcc. 

\subparagraph*{\bralgobcc.}Given an instance $(G, k, \eta)$ of \bcc\ as input, we proceed as follows. 
Recall that $G$ is a cluster graph. 
\begin{description}
\item[Step 1.] If $G$ is  $\eta$-balanced, then we return that $(G, k, \eta)$ is a yes-instance. 

\item[Step 2.] Fix a smallest connected component, say $H$, of $G$. 

\item[Step 2a.] If there does not exist a positive integer $\ell$ such that $\ell \card{H} \leq k$ and $G$ has a component $H' \neq H$ of size exactly $\ell$, then we return that $(G, k, \eta)$ is a no-instance. 

\item[Step 2b.] Otherwise, for every positive integer $\ell$ such that $\ell \card{H} \leq k$ and $G$ has a component $H' \neq H$ of size exactly $\ell$, we arbitrarily pick such a component $H'$, and recursively call \bralgobcc\ on the instance $(G + F_{HH'}, k - \card{F'}, \eta)$, where $F_{HH'} = \set{uv \in \binom{V(G)}{2} ~|~ u \in V(H), v \in V(H')}$. If at least one recursive call returns yes, then we return that $(G, k, \eta)$ is a yes-instance; otherwise, we return that $(G, k, \eta)$ is a no-instance. 
\end{description}

The correctness of \bralgobcc\ follows from Lemma~\ref{lem:bralgobcc-correctness}. We now show that the algorithm runs in time \algobcdruntime. To see this, notice that the most time-consuming step of the algorithm is Step 2b, where we branch into all possible choices for $\ell$ and make recursive calls. And notice that $\ell$ could range from $1$ to $k$ (when $\card{H} = 1$), and on each branch, the parameter drops by $\ell \card{H}$, which also could range from $1$ to $k$. Thus the number of recursive calls is governed by the recurrence relation $T(k) =  T(k - 1) + T(k -2) + \cdots + T(0)$ with $T(0) = 1$. Using induction on $k$, it is straightforward to verify that $T(k) \leq 2^k$. That is, the algorithm makes at most $2^k$ recursive calls, and notice that each recursive call can be executed in polynomial time. We thus have the following result. 
\begin{theorem}\label{thm:bcc-branching-algorithm}
    \bcc\ admits an algorithm that runs in time \algobcdruntime. 
\end{theorem}

\subsection{FPT Algorithm for Balanced Cluster Deletion} 
\label{sec:deletion-fpt}
In this section, we show that \bcd\ admits an algorithm that runs in time \algobcdruntime. And in particular, we show that \bcd\ is polynomial-time solvable when the input graph is a cluster graph. 
To design our  algorithm, we consider an auxiliary problem, where rather than requiring that the size difference between the components of be bounded, we simply demand that the size of each component be within a specified range. Formally, for integers $\gamma_1, \gamma_2 \geq 1$, we say that a graph $G$ is $(\gamma_1, \gamma_2)$-cardinality-constrained if $\gamma_1 \leq \card{H} \leq \gamma_2$ for every connected component $H$ of $G$. We now define the \cccdfull\ problem as follows. 

\defproblem{\cccdfull\ (\cccd)}{A graph $G$ and non-negative integers $\gamma_1, \gamma_2$ and $k$.}{Decide if there exists $F \subseteq E(G)$ such that $\card{F} \leq k$ and $G - F$ is a $(\gamma_1, \gamma_2)$-cardinality-constrained cluster graph.} 

Consider an instance $(G, k, \eta)$ of \bcd. Notice that $(G, k, \eta)$ is a yes-instance of \bcd\ if and only if there exist positive integers $\gamma_1$ and $\gamma_2$ with $\gamma_2 - \gamma_1 \leq \eta$ such that $(G, k, \gamma_1, \gamma_2)$ is a yes-instance of \cccd. Thus, given an instance of \bcd, we can guess all possible choices for $\gamma_1$ and $\gamma_2$ and solve the corresponding instance of \cccd. As $1 \leq \gamma_1, \gamma_2 \leq n$, we only have at most $n^2$ valid guesses. In short, to solve \bcd, it is enough to solve at most $n^2$ instances of \cccd. Hence, assuming that we can solve \cccd\ in time \algobcdruntime, we have the following theorem.  
\begin{theorem}
\label{thm:deletion-fpt}
\bcd\ admits an algorithm that runs in time \algobcdruntime. 
\end{theorem}
As noted above, to prove Theorem~\ref{thm:deletion-fpt}, it is enough to prove the following theorem. 
\begin{theorem}
\label{thm:deletion-cap-fpt}
\cccd\ admits an algorithm that runs in time \algobcdruntime. 
\end{theorem}

The rest of this section is dedicated to proving Theorem~\ref{thm:deletion-cap-fpt}. To that end, we first prove the following lemma, which says that \cccd\ is polynomial-time solvable if the input graph is a clique. 

\begin{lemma}
\label{lem:deletion-fpt-clique}
There is an algorithm that, given a clique $G$ on $n$ vertices and positive integers $\gamma_1, \gamma_2 \leq n$, runs in polynomial time, and either returns $\card{F}$, where $ F \subseteq E(G)$ is a minimum-sized set of edges such that $G - F$ is a $(\gamma_1, \gamma_2)$-cardinality-constrained cluster graph, or correctly reports that no such set $F$ (of any size) exists. 
\end{lemma}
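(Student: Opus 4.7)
The plan is to reduce the problem to a one-dimensional integer-partition optimization that admits a straightforward polynomial-time dynamic program.

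First, I would observe that because $G$ is a clique, any $F \subseteq E(G)$ such that $G - F$ is a cluster graph corresponds to a partition $\{P_1, P_2, \ldots, P_t\}$ of $V(G)$: each component of $G - F$ is an induced subgraph of a clique and therefore itself a clique $G[P_i]$, and $F$ consists precisely of the edges between distinct parts. Thus
\[
|F| \;=\; \binom{n}{2} - \sum_{i=1}^{t} \binom{|P_i|}{2}.
\]
Consequently, $G - F$ is a $(\gamma_1, \gamma_2)$-cardinality-constrained cluster graph if and only if $\gamma_1 \leq |P_i| \leq \gamma_2$ for every $i$, and minimizing $|F|$ is equivalent to maximizing $\sum_i \binom{|P_i|}{2}$ over all partitions of $V(G)$ whose part sizes lie in $[\gamma_1, \gamma_2]$. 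Since the objective depends only on the multiset of part sizes, the task reduces to writing the integer $n$ as a sum $s_1 + s_2 + \cdots + s_t$ with $\gamma_1 \leq s_i \leq \gamma_2$ so as to maximize $\sum_i \binom{s_i}{2}$.

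Next, I would solve this integer optimization by a one-dimensional dynamic program. Define $f(m) \in \mathbb{Z} \cup \{-\infty\}$ to be the maximum value of $\sum_i \binom{s_i}{2}$ over all ways of writing $m$ as a sum of positive integers in $[\gamma_1, \gamma_2]$, with $f(m) = -\infty$ if no such sum exists. Set $f(0) = 0$ and, for $m \geq 1$,
\[
f(m) \;=\; \max_{\gamma_1 \,\leq\, s \,\leq\, \min(m,\,\gamma_2)} \Bigl( f(m - s) + \tbinom{s}{2} \Bigr),
\]
with the convention that the maximum over an empty set is $-\infty$. Computing $f(0), f(1), \ldots, f(n)$ in order takes time $O(n \cdot \gamma_2) = O(n^2)$. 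If $f(n) = -\infty$, we report that no valid $F$ exists; otherwise we return $\binom{n}{2} - f(n)$.

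Correctness follows directly from the bijection in the first paragraph and a standard bottom-up induction on $m$ for the recurrence. There is no genuine obstacle here: the key conceptual move is simply to notice that, because $G$ is a clique, the edge-deletion problem collapses to the integer-partition optimization above, after which the dynamic program is textbook. (If desired, an actual minimum $F$ can also be recovered by storing, for each $m$, an argmax $s$ and backtracking; but the lemma only requires returning $|F|$.)
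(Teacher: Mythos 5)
Your proof is correct, but it takes a genuinely different route from the paper. You reduce the problem to maximizing $\sum_i \binom{s_i}{2}$ over all ways of writing $n$ as a sum of part sizes $s_i \in [\gamma_1,\gamma_2]$ (valid, since in a clique the cost depends only on the multiset of part sizes, and $|F| = \binom{n}{2} - \sum_i \binom{s_i}{2}$), and then solve this by a knapsack-style dynamic program over $m = 0,1,\ldots,n$ in $O(n\gamma_2)$ time, detecting infeasibility via $f(n) = -\infty$. The paper instead argues structurally: it observes that the number of parts $t$ must satisfy $\lceil n/\gamma_2\rceil \le t \le \lfloor n/\gamma_1\rfloor$, constructs for each such $t$ an explicit greedy partition (first fill every part to size $\gamma_1$, then top parts up to $\gamma_2$ in order), proves its optimality for fixed $t$ by an exchange argument, and even shows the overall optimum is attained at $t = \lceil n/\gamma_2\rceil$. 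Your DP buys a shorter, essentially mechanical correctness proof with no exchange argument, at the price of giving no closed-form description of the optimal partition; the paper's approach yields that structural characterization (and a near-constant-time computation once $t$ is fixed) but needs the more delicate combinatorial argument. Both run in polynomial time, return $|F|$, and correctly report infeasibility, so your proposal fully establishes the lemma.
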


\begin{proof}
Let $G, \gamma_1, \gamma_2$ be given, where $G$ is a clique on $n$ vertices. Throughout this proof, by an optimal solution, we mean a minimum-sized set $F \subseteq E(G)$ such that $G - F$ is a $(\gamma_1, \gamma_2)$-cardinality-constrained cluster graph (if such an $F$ exists). 
 First, if $G$ is $(\gamma_1, \gamma_2)$-cardinality-constrained, then $F = \emptyset$ is the unique optimal solution, and we return $\card{F} = 0$ accordingly. So, assume that this is not the case.  
Then, every optimal solution $F$ must split $G$ into connected components, say $t$ connected components $H_1, H_2,\ldots, H_t$, such that $H_i$ is a clique and $\gamma_1 \leq \card{H_i} \leq \gamma_2$ for every $i \in [t]$. Then $\set{V(H_1), V(H_2), \ldots, V(H_t)}$ is a partition of $V(G)$, and as $G$ is a clique and $F$ is precisely the set of all edges in $G$ with exactly one endpoint in $V(H_i)$ and the other in $V(H_j)$ for some distinct $i, j \in [t]$, we have  $\card{F} = \sum_{1 \leq i < j \leq t}\card{H_i} \card{H_j}$. Also, as $\gamma_1 \leq \card{H_i} \leq \gamma_2$ for every $i \in [t]$, we must have $n \leq t \gamma_2$ and $t \gamma_1 \leq n$, or equivalently, $\ceil{n /\gamma_2} \leq t \leq \floor{n / \gamma_1}$. 
Informally, to find an optimal solution $F$, we guess $t$, i.e., the number of components of $G - F$, and find a partition $\set{X_1, X_2, \ldots, X_t}$ of $V(G)$ into $t$ parts that minimises $\sum_{1 \leq i < j \leq t} \card{X_i} \card{X_j}$.\footnote{We will show later in the proof that we need not guess $t$; the quantity $\sum_{1 \leq i < j \leq t} \card{X_i} \card{X_j}$ is minimised when $t = \ceil{n/\gamma_2}$.} 

In light of the above  observations, our algorithm works as follows. {\bf Step 1:} If there does not exist an integer $t$ such that $\ceil{n/\gamma_2} \leq t \leq \floor{n/\gamma_1}$, then we report that there does not exist $F \subseteq E(G)$ such that $G - F$ is a $(\gamma_1, \gamma_2)$-cardinality-constrained cluster graph. {\bf Step 2:} Otherwise, for each $t$ such that $\ceil{n/\gamma_2} \leq t \leq \floor{n/\gamma_1}$, we construct the following partition $\set{X^t_1, X^t_2,\ldots, X^t_t}$ of $V(G)$ into $t$ parts. {\bf Step 2.1:} We first initialise $X^t_1 = X^t_2 = \cdots = X^t_t = \emptyset$. {\bf Step 2.2:} If  there exists a vertex $v \in V(G) \setminus \bigcup_{i \in [t]} X^t_i$, we assign $v$ to one of the $t$ sets as follows. {\bf Step 2.2.1:} If there exists $i \in [t]$ such that $\card{X^t_i} < \gamma_1$, then we assign $v$ to $X^t_i$ where $i$ is the least index in $[t]$ with $\card{X^t_i} < \gamma_1$; {\bf Step 2.2.2:} Otherwise, we assign $v$ to $X^t_j$ where $j$ is the least index in $[t]$ such that $\card{X^t_j} < \gamma_2$. {\bf Step 3:} Finally, we return $\min_{t} \sum_{1 \leq i < j \leq t} \card{X^t_i} \card{X^t_j}$, where the minimum is over all integers $t$ with $\ceil{n/\gamma_2} \leq t \leq \floor{n/ \gamma_1}$. 

Before proving that this algorithm is indeed correct, notice that for each choice of $t$, we can construct $\set{X^t_1, X^t_2,\ldots, X^t_t}$ and compute $\sum_{1 \leq i < j \leq t} \card{X^t_i} \card{X^t_j}$ in polynomial time. And as $t$ has only at most $n$ choices, our algorithm runs in polynomial time.  

Now, to see that our algorithm is correct, fix $t$ and consider the corresponding partition $\set{X^t_1, X^t_2,\ldots, X^t_t}$ of $V(G)$ that our algorithm constructs. Notice that in Step 2.2,1, we assign exactly $\gamma_1$ vertices to each $X^t_i$; and in Step 2.2,2, once $\card{X^t_i} = \gamma_2$, we no longer assign vertices to $X^t_i$.  Also, in Step 2.2.2, we assign a vertex to $X^t_j$ only if $\card{X^t_i} = \gamma_2$ for every $i < j$. These observations imply that the partition $\set{X^t_1, X^t_2,\ldots, X^t_t}$ has the following properties. 
\begin{enumerate}[(A).]
\item For every $i \in [t]$, we have $\gamma_1 \leq \card{X^t_i} \leq \gamma_2$. 
\item We have $\card{X^t_1} \geq \card{X^t_2} \geq \cdots \geq \card{X^t_t}$. 
\item\label{item:except-possibly} Finally, exactly one of the following statements holds: (i) $n = t \gamma_2$ in which case $\card{X^t_i} = \gamma_2$ for every $i \in [t]$; (ii) $n = t \gamma_1$ in which case $\card{X^t_i} = \gamma_1$ for every $i \in [t]$; there exists a unique index $r \in [t]$ such that $\card{X^t_i} = \gamma_2$ for every $i < r$ and $\card{X^t_i} = \gamma_1$ for every $i > r$.  
\end{enumerate}
We now prove that the partition $\set{X^t_1, X^t_2,\ldots, X^t_t}$ is indeed optimal. 
\begin{claim}
For each fixed integer $t$ with $\ceil{n /\gamma_2} \leq t \leq \floor{n/ \gamma_1}$, we have $\sum_{1 \leq i < j \leq t} \card{X^t_i} \card{X^t_j} = \min \sum_{1 \leq i < j \leq t} \card{Z_i} \card{Z_j}$, where the minimum is over all partitions $\set{Z_1, Z_2,\ldots, Z_t}$ of $V(G)$ into $t$ parts with $\gamma_1 \leq \card{Z_i} \leq \gamma_2$ for every $i \in [t]$. 
\end{claim}
\begin{claimproof}
    Fix an integer $t$ with $\ceil{n /\gamma_2} \leq t \leq \floor{n/ \gamma_1}$ and a partition $\set{Y_1, Y_2,\ldots, Y_t}$ of $V(G)$ such that $\gamma_1 \leq \card{Y_i} \leq \gamma_2$ for every $i \in [t]$ and $\sum_{1 \leq i < j \leq t} \card{Y_i} \card{Y_j} = \min \sum_{1 \leq i < j \leq t} \card{Z_i} \card{Z_j}$, where the minimum is over all partitions $\set{Z_1, Z_2,\ldots, Z_t}$ of $V(G)$ into $t$ parts with $\gamma_1 \leq \card{Z_i} \leq \gamma_2$ for every $i \in [t]$. Assume without loss of generality that $\card{Y_1} \geq \card{Y_2} \geq \cdots \geq \card{Y_t}$. Notice that to prove the claim, it is enough to prove that $\card{X^t_i} \leq \card{Y_i}$ for every $i \in [t]$, which will imply that $\sum_{1 \leq i < j \leq t} \card{X^t_i} \card{X^t_j} \leq \sum_{1 \leq i < j \leq t} \card{Y_i} \card{Y_j}$; in fact, $\card{X^t_i} \leq \card{Y_i}$ for every $i \in [t]$ will imply that $\card{X^t_i} = \card{Y_i}$ for every $i \in [t]$, as $\sum_{i \in [t]} \card{X^t_i} = \sum_{i \in [t]} \card{Y_i}$, and this will in turn imply that $\sum_{1 \leq i < j \leq t} \card{X^t_i} \card{X^t_j} = \sum_{1 \leq i < j \leq t} \card{Y_i} \card{Y_j}$

    We now prove that $\card{X^t_i} \leq \card{Y_i}$ for every $i \in [t]$. Assume for a contradiction that there exists an index $r \in [t]$ such that $\card{X^t_r} > \card{Y_r}$, and assume without loss of generality that $r$ is the least such index. Then, as $\sum_{i \in [t]} \card{X^t_i} = n = \sum_{i \in [t]} \card{Y_i}$,    there exists another index $s \in [t]$ such that $\card{Y_s} > \card{X^t_s}$. We claim that $s > r$. To see this, recall that we have $ \gamma_ 1 \leq \card{X^t_j}, \card{Y_j} \leq \gamma_2$ for every $j \in [t]$; then, as $\card{X^t_r} > \card{Y_r}$, we can conclude that $\card{X^t_r} > \gamma_1$. But then  Property~(\ref{item:except-possibly}) above implies that $\card{X^t_j} = \gamma_2$ for every $j < r$, and therefore there cannot exist $s < r$ such that $\card{Y_s} > \card{X^t_s}$. We can thus conclude that there exists $s > r$ such that $\card{Y_s} > \card{X^t_s}$. 
    Notice that we have $\gamma_2 \geq \card{X^t_r} > \card{Y_r}$ and $\gamma_1 \leq \card{X^t_s} < \card{Y_s}$.  
    Let $\set{Y'_1, Y'_2,\ldots, Y'_{t}}$ be the partition of $V(G)$ obtained from $\set{Y_1, Y_2,\ldots, Y_t}$ by moving exactly one vertex from $Y_{s}$ to $Y_{r}$. Notice then that $\card{Y'_s} = \card{Y_s} - 1 \geq \gamma_1$, $\card{Y'_{r}} = \card{Y_r} + 1 \leq \gamma_2$, and $\card{Y'_i} = \card{Y_i}$ for every $i \in [t] \setminus \set{r, s}$. In particular, $\gamma_1 \leq \card{Y'_i} \leq \gamma_2$ for every $i \in [t]$.  
    
    We will show that $\sum_{1 \leq i < j \leq t} \card{Y'_i} \card{Y'_j} < \sum_{1 \leq i < j \leq t} \card{Y_i} \card{Y_j}$, which will contradict the definition of  $\set{Y_1, Y_2,\ldots, Y_t}$. 
To prove this, observe the following facts. 
\begin{enumerate}
    \item For  $i, j \in [t] \setminus \set{r, s}$, we have $\card{Y'_i} = \card{Y_i}$ and $\card{Y'_{j}} = \card{Y_{j}}$, and hence $\card{Y'_i} \card{Y'_{j}} = \card{Y_i} \card{Y_j}$. 
    \item \begin{sloppypar}We have $\sum_{j \in [t] \setminus \set{r, s}} (\card{Y'_j} \card{Y'_r} + \card{Y'_j} \card{Y'_{s}}) = \sum_{j \in [t] \setminus \set{r, s}} \card{Y_j} (\card{Y_r} + 1 + \card{Y_{s}} - 1) = \sum_{j \in [t] \setminus \set{r, s}} (\card{Y_j} \card{Y_r} + \card{Y_j} \card{Y_{s}})$.\end{sloppypar}
    \item  We have $\card{Y'_r} \card{Y'_{s}} < \card{Y_r} \card{Y_{s}}$. To see this, notice that $\card{Y'_r} \card{Y'_{s}} = (\card{Y_r} + 1)(\card{Y_{s}} - 1) = \card{Y_r} \card{Y_{s}} - \card{Y_r} + \card{Y_{s}} - 1 < \card{Y_r} \card{Y_s}$, where the last inequality holds because $\card{Y_r} \geq \card{Y_{s}}$ as $r < s$ and $\card{Y_1} \geq \card{Y_2} \geq \cdots \geq \card{Y_t}$. 
\end{enumerate}  
We thus have  
\begin{align*}
\sum_{1 \leq i < j \leq t} \card{Y'_i} \card{Y'_j} &= \card{Y'_r} \card{Y'_{s}} +  \sum_{j \in [t] \setminus \set{r, s}} (\card{Y'_j} \card{Y'_r} + \card{Y'_j} \card{Y'_{s}}) + \sum_{i, j \in [t] \setminus \set{r , s}} \card{Y'_i} \card{Y'_{j}}  \\
&< \card{Y_r} \card{Y_{s}}  + \sum_{j \in [t] \setminus \set{r, s}} (\card{Y_j} \card{Y_r} + \card{Y_j} \card{Y_{s}}) + \sum_{i, j \in [t] \setminus \set{r , s}} \card{Y_i} \card{Y_{j}} \\
&=\sum_{1 \leq i < j \leq t} \card{Y_i} \card{Y_{j}}, 
\end{align*}
which is a contradiction. 
\end{claimproof}
We have thus shown that for each $t$, the partition $\set{X^t_1, X^t_2,\ldots, X^t_t}$ is optimal. \emph{While this shows that our algorithm is correct and thus completes the proof of the lemma, we can improve our algorithm as follows.} In Step 2, we need not go over all possible choices of $t$ between $\ceil{n/\gamma_2}$ and $\floor{n/\gamma_1}$; we will argue that an optimal solution is obtained at $t = \ceil{n/\gamma_2}$. That is, we will prove that 
\begin{equation*}
\sum_{1 \leq i < j \leq \ceil{n/\gamma_2}} \card{X^{\ceil{n/\gamma_2}}_i} \card{X^{\ceil{n/\gamma_2}}_j} = \min_{\substack{t \\ \ceil{n/\gamma_2} \leq t \leq \floor{n/\gamma_1}}} ~ \sum_{1 \leq i < j \leq t} \card{X^t_i} \card{X^t_j}. \tag{$\blacklozenge \blacklozenge \blacklozenge$}\label{eq:min}
\end{equation*}
Observe that the following claim implies Equation~(\ref{eq:min}). 
\begin{claim}
\label{claim:ceiling-minimum}
For each integer $t$ with $\ceil{n /\gamma_2} \leq t \leq \floor{n/ \gamma_1} - 1$, we have $\sum_{1 \leq i < j \leq t} \card{X^t_i} \card{X^t_j} \leq \sum_{1 \leq i < j \leq t + 1} \card{X^{t + 1}_i} \card{X^{t + 1}_j}$. 
\end{claim}
\begin{claimproof}
For each $t$ with $\ceil{n /\gamma_2} \leq t \leq \floor{n/ \gamma_1}$, let $h(t) = \sum_{1 \leq i < j \leq t} \card{X^t_i} \card{X^t_j}$. Now 
fix $t$ such that $\ceil{n /\gamma_2} \leq t \leq \floor{n/ \gamma_1} - 1$. We will show that $h(t) \leq h(t + 1)$. 
Observe first that $n \leq t \gamma_2$. Consider the partitions ${X^t_1, X^t_2,\ldots, X^t_t}$ and ${X^{t + 1}_1, X^{t + 1}_2,\ldots, X^{t + 1}_{t + 1}}$ of $V(G)$ that our algorithm constructs. Notice that we must have $\card{X^{t + 1}_{t + 1}} = \gamma_1$ for our algorithm would not assign a vertex to $X^{t + 1}_{t + 1}$ in Step 2.2.2 as $n \leq t \gamma_2$. For every $j \in  [t]$, as $\card{X^{t + 1}_j} \leq \card{X^t_j}$, we also assume without loss of generality that $X^{t + 1}_j = X^{t}_j \setminus X^{t + 1}_{t + 1}$.  

Now, to prove that $h(t) \leq h(t + 1)$, we calculate the difference $h(t + 1) - h(t)$ and show that this quantity is at least $0$. And to calculate $h(t + 1) - h(t)$, notice that we only need to account for the contribution of edges incident with $X^{t + 1}_{t + 1}$ to both $h(t)$ and $h(t + 1)$; as for the other edges, notice that as $X^{t + 1}_j = X^t_j \setminus X^{t + 1}_{t + 1}$ for every $j \in [t]$, each edge not incident with $X^{t + 1}_{t + 1}$ contributes equally to both $h(t)$ and $h(t + 1)$.  In what follows, we will use the fact that $X^{t + 1}_{t + 1}$ is the disjoint union of the $t $ sets $X^{t + 1}_{t + 1} \cap X^t_{j}$ for all $j \in [t]$, and therefore, $\gamma_1 = \card{X^{t + 1}_{t + 1}} = \sum_{j \in [t]} \card{X^{t + 1}_{t + 1} \cap X^t_{j}}$. 

Let us first compute the number of edges that contribute $1$ to $h(t + 1)$ and $0$ to $h(t)$. Notice that each such edge has exactly one endpoint in $X^{t + 1}_{t + 1}$. In particular, the edge is between $X^{t + 1}_{t + 1}$ and $X^{t + 1}_j$ for some $j \in [t]$. As noted above, $X^{t + 1}_{t + 1}$ is the disjoint union of $(X^{t + 1}_{t + 1} \cap X^t_j)$ and $\bigcup_{i \in [t] \setminus \set{j}} X^{t + 1}_{t + 1} \cap X^t_i$. Notice also that each edge between $X^{t + 1}_{t + 1} \cap X^t_j$ and $X^{t + 1}_j$ contributes $1$ to $h(t + 1)$ and $0$ to $h(t)$. And the number of such edges is $\card{X^{t + 1}_{t + 1} \cap X^t_j} \cdot \card{X^{t + 1}_j}$. %
As for the remaining edges between $X^{t + 1}_{t + 1}$ and $X^{t + 1}_j$, notice that each such edge is between $X^{t + 1}_{t + 1} \cap X^{t + 1}_i \subseteq X^{t}_i$ and $X^{t + 1}_j \subseteq X^t_j$ for some $i \in [t] \setminus \set{j}$, and each such edge contributes $1$ to both $h(t)$ and $h(t + 1)$. Thus the number of edges  that contribute $1$ to $h(t + 1)$ and $0$ to $h(t)$ is precisely 
\begin{align*}\sum_{j \in [t]} \card{X^{t + 1}_{t + 1} \cap X^t_j} \cdot \card{X^{t + 1}_j} &\geq \gamma_1 \cdot \sum_{j \in [t]} \card{X^{t + 1}_{t + 1} \cap X^t_j} ~~~~~~~~~~~~~~~~~~~~~~~~~ (\text{because }\card{X^{t + 1}_j} \geq \gamma_1)\\
&= \gamma_1 \cdot \card{X^{t + 1}_{t + 1}} \\
&= \gamma_1^2. 
\end{align*}  
Let us now compute the number of edges that contribute $1$ to $h(t)$ and $0$ to $h(t + 1)$. Notice that each such edge has both its endpoints in $X^{t + 1}_{t + 1}$. To compute the  contribution of such edges to $h(t)$, notice that for every distinct pair $i, j \in [t]$, each edge between $X^{t + 1}_{t + 1} \cap X^{t}_i$ and $X^{t + 1}_{t + 1} \cap X^{t}_j$ contributes exactly $1$ to $h(t)$, and the number of such edges is exactly $\card{X^{t + 1}_{t + 1} \cap X^{t}_i} \cdot \card{X^{t + 1}_{t + 1} \cap X^{t}_j}$. The only remaining edges with both endpoints in $X^{t + 1}_{t + 1}$ are those with both their endpoints in $X^{t + 1}_{t + 1} \cap X^{t}_i$ for some $i \in [t]$, and each such edge contributes exactly $0$ to $h(t)$. Thus the number of edges  that contribute $0$ to $h(t + 1)$ and $1$ to $h(t)$ is precisely 
\begin{align*}
    \sum_{1 \leq i < j \leq t} \card{X^{t + 1}_{t + 1} \cap X^{t}_i} \cdot \card{X^{t + 1}_{t + 1} \cap X^{t}_j}
    &\leq \frac{1}{2}\Lb{\sum_{j \in [t]} \card{X^{t + 1}_{t + 1} \cap X^{t}_j}}^2 \\
    &=\frac{1}{2} \Lb{\card{X^{t + 1}_{t + 1}}}^2 \\
    &=(1/2)\gamma_1^2,
\end{align*}
where the first inequality holds because $\sum_{1 \leq i < j \leq t} 2x_i x_j \leq \Lb{\sum_{j \in [t]} x_j}^2$ for any non-negative numbers $x_1, x_2,\ldots, x_t$. We have thus shown that $h(t + 1) - h(t) \geq \gamma_1^2 - (1/2) \gamma_1^2 \geq 0$, and therefore $h(t + 1) \geq h(t)$. 
\end{claimproof}
Claim~\ref{claim:ceiling-minimum} implies Equation~(\ref{eq:min}), which in turn implies that in Step 2 of our algorithm, we need to compute $\sum_{1 \leq i < j \leq t} \card{X^{t}_i} \card{X^{t}_j}$ for only $t = \ceil{n/\gamma_2}$. 
\end{proof}

Observe that we can use the algorithm of Lemma~\ref{lem:deletion-fpt-clique} to solve \cccd\ when the input graph is a cluster graph; we simply invoke the algorithm of Lemma~\ref{lem:deletion-fpt-clique} on each connected component of $G$. We thus have the following corollary. 
\begin{corollary}
\label{cor:deletion-fpt-cluster}
\cccd\ is polynomial-time solvable on cluster graphs. 
\end{corollary}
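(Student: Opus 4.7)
}
The plan is to reduce the problem on a cluster graph to independent subproblems on its connected components, each of which is a clique, and then apply Lemma~\ref{lem:deletion-fpt-clique} componentwise. Given an instance $(G, k, \gamma_1, \gamma_2)$ of \cccd\ where $G$ is a cluster graph, let $C_1, C_2, \ldots, C_m$ be the connected components of $G$, each of which is a clique. The crucial observation is that since every edge of $G$ lies entirely within some $C_i$, any $F \subseteq E(G)$ decomposes uniquely as $F = F_1 \cup F_2 \cup \cdots \cup F_m$ with $F_i \subseteq E(C_i)$, and each connected component of $G - F$ is a connected component of $C_i - F_i$ for some $i$. Hence $G - F$ is a $(\gamma_1, \gamma_2)$-cardinality-constrained cluster graph if and only if $C_i - F_i$ is a $(\gamma_1, \gamma_2)$-cardinality-constrained cluster graph for every $i \in [m]$, and $\card{F} = \sum_{i \in [m]} \card{F_i}$.

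Given this decomposition, the minimum $\card{F}$ over all valid choices equals $\sum_{i \in [m]} f_i$, where $f_i$ is the minimum number of edges we need to delete from $C_i$ to turn it into a $(\gamma_1, \gamma_2)$-cardinality-constrained cluster graph (with $f_i = +\infty$ if no such deletion exists). Our algorithm simply invokes the algorithm of Lemma~\ref{lem:deletion-fpt-clique} on each $C_i$ (with the same $\gamma_1$ and $\gamma_2$) to compute $f_i$; if the algorithm reports that no valid deletion exists for some $C_i$, we return a \no-instance. Otherwise, we return \yes\ if and only if $\sum_{i \in [m]} f_i \leq k$.

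Correctness follows immediately from the decomposition above, and the running time is polynomial because we perform at most $m \leq n$ invocations of the polynomial-time algorithm of Lemma~\ref{lem:deletion-fpt-clique}. There is no real obstacle here: the only substantive content is the observation that the problem separates cleanly across components when we are restricted to edge deletions (since deletions can never merge components), and all the optimization work is already handled by Lemma~\ref{lem:deletion-fpt-clique}.
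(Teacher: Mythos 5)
Your proposal is correct and follows essentially the same route as the paper: the paper's proof is simply the observation that one invokes the algorithm of Lemma~\ref{lem:deletion-fpt-clique} on each connected component (a clique) of the cluster graph, which is exactly your componentwise decomposition with the per-component minima summed and compared against $k$. Your write-up just makes explicit the (correct) justification that edge deletions never merge components, so the problem separates cleanly.
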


 In light of Corollary~\ref{cor:deletion-fpt-cluster}, it is straightforward to design an \FPT\ algorithm for \cccd\ on general graphs. Recall that a graph $G$ is a cluster graph if and only if $G$ does not contain $P_3$ as an induced subgraph. So, given an instance $(G, k, \gamma_1, \gamma_2)$ of \cccd, we branch on induced $P_3$s until the graph becomes $P_3$-free and then invoke Corollary~\ref{cor:deletion-fpt-cluster}. We now formally describe our algorithm, which we call \algocccd. 

\subparagraph*{\algocccd.} Given an instance $(G, k,\gamma_1, \gamma_2)$ of \cccd, we do as follows.
\begin{description}
    \item[Step 1:] If $G$ is a cluster graph, then we use the algorithm of Corollary~\ref{cor:deletion-fpt-cluster} to solve the problem. 
    \item[Step 2:] If Step 1 is not applicable and $k > 0$, then we greedily find an induced $P_3$, say $uvw$ (i.e., $uv, vw \in E(G)$ and $uw \notin E(G)$) and recursively call \algocccd\ on the two instances $(G-uv, k - 1, \gamma_1, \gamma_2)$ and $(G-vw, k - 1, \gamma_1, \gamma_2)$. 
    \item[Step 3:] If the previous two steps are not applicable, then $G$ is not a cluster graph and $k = 0$, and we return that $(G, k, \gamma_1, \gamma_2)$ is a no-instance. 
\end{description}

Observe that \algocccd\ runs in time \algobcdruntime. In each execution of Step 2, we make $2$ recursive calls, and we recurse only until $k$ becomes $0$, and hence we make at most $2^k$ recursive calls. All the other steps of the algorithm take only polynomial time. This completes the proof of Theorem~\ref{thm:deletion-cap-fpt}.

\subsection{FPT Algorithms for BCC and BCE Using Correspondence Between Solutions and Integer Partitions}
\label{sec:fpt-completion-editing}
In this subsection, we design algorithms for both \bcc\ and \bce\ that run in time \algobccruntime. These algorithms rely on a correspondence between a solution (for either problem) and a family of integer partitions. 
First, consider \bcc. Even though Theorem~\ref{thm:bcc-branching-algorithm} guarantees a faster  algorithm for \bcc\ with a running time of \algobcdruntime, we present another algorithm here, as the arguments are instructive. 
Notice that for a yes-instance $(G, k, \eta)$ of \bcc, every solution $F \subseteq \binom{V(G)}{2} \setminus E(G)$ for $(G, k, \eta)$ ``merges'' certain  components of $G$ together to form components of $G + F$. 
So, to solve \bcc, we only need to decide which components of $G$ are merged  with which other components. %
Observe that an algorithm that runs in time $2^{\cO(k \log k)} n^{\cO(1)}$ is then quite straightforward: We simply need to go over all possible partitions of the connected components of $G$,  and check if we can merge each part into a clique by adding at most $k$ edges in total, and if the resulting cluster graph is $\eta$-balanced. 
To design a \algobccruntime\ time  algorithm, we argue that we can naturally associate each solution $F$ with a partition of $\card{V(F)}$, i.e., the number of vertices that $F$ modifies, and then we leverage the fact that the number of partitions of a positive integer $\ell$ is $2^{\cO(\sqrt{\ell})}$.  

\subparagraph*{Partitions of an integer.} For a positive integer $\ell$, by a partition of $\ell$, we mean writing $\ell$ as a sum of positive integers where we ignore the order of the summands. 
For example, $1 + 4 + 6 + 6$ is a partition of $17$. 
As mentioned, the order of the summands is immaterial; so $4 + 6 + 1 + 6$, $1 + 6 + 4 + 6$, and $6 + 6 + 1 + 4$ are all the same partition of $17$. 
For convenience, we represent a  partition $x_1 + x_2 + \cdots + x_t$ of $\ell$ by the multiset $\set{x_1, x_2,\ldots, x_t}$; for example, we say that $\set{1, 4, 6, 6}$ is a partition of $17$. Consider $\ell \in \mathbb{N}$. Following standard convention, we use $p(\ell)$ to denote the number of partitions of $\ell$. 
We will use a result due to Hardy and Ramanujan~\cite{hardy1918asymptotic} that bounds $p(\ell)$, and the fact that we can enumerate all partitions of a given integer $\ell$ in time proportional to $p(\ell)$ (see, for example, \cite{knuth2014art}).  

\begin{proposition}[follows from \cite{hardy1918asymptotic}]
\label{prop:partition-number}
There exists a constant $C$ such that $p(\ell) \leq 2^{C \sqrt{\ell}}$ for every positive integer $\ell$. 
Moreover, there is an algorithm that, given a positive integer $\ell$ as input, runs in time $2^{\cO(\sqrt{\ell})}$, and enumerates all partitions of $\ell$. 
\end{proposition}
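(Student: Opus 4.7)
The plan is to split the proposition into its two assertions and handle them separately. For the bound $p(\ell) \le 2^{C\sqrt{\ell}}$, the first step would be to invoke the celebrated Hardy--Ramanujan asymptotic
\[
p(\ell) \sim \frac{1}{4\ell\sqrt{3}} \exp\!\left(\pi\sqrt{2\ell/3}\right) \qquad \text{as } \ell \to \infty,
\]
which is the main result of~\cite{hardy1918asymptotic}. Taking logarithms base $2$, this asymptotic immediately gives $\log_2 p(\ell) = (\pi\sqrt{2/3}/\ln 2)\sqrt{\ell} + o(\sqrt{\ell})$, so for any constant $C_0 > \pi\sqrt{2/3}/\ln 2$ there exists $\ell_0$ such that $p(\ell) \le 2^{C_0\sqrt{\ell}}$ for every $\ell \ge \ell_0$. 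For the finitely many remaining values $1 \le \ell < \ell_0$, one simply enlarges the constant to some $C \ge C_0$ large enough that the inequality $p(\ell) \le 2^{C\sqrt{\ell}}$ continues to hold, which is possible since $p$ takes only finitely many values on this finite range.

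For the algorithmic claim, the plan is to describe a standard backtracking enumeration. Encode each partition of $\ell$ as a non-increasing sequence $\lambda_1 \ge \lambda_2 \ge \cdots \ge \lambda_t$ of positive integers with $\sum_i \lambda_i = \ell$. The recursive procedure $\textsc{Enum}(r, b, \sigma)$ takes a remaining sum $r$, an upper bound $b$ on the next part, and the sequence $\sigma$ built so far; it iterates $\lambda$ over $[1,\min(r,b)]$, appends $\lambda$ to $\sigma$, and recurses on $(r - \lambda,\, \lambda,\, \sigma \cdot \lambda)$, outputting $\sigma$ when $r = 0$. Starting from $\textsc{Enum}(\ell, \ell, \emptyset)$, every partition of $\ell$ is produced exactly once. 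A straightforward amortization argument shows that the work spent between two consecutive outputs is polynomial in $\ell$: each partition has at most $\ell$ parts, and the recursion tree has at most $\poly(\ell) \cdot p(\ell)$ nodes. Combined with the first part, the total running time is $p(\ell) \cdot \poly(\ell) = 2^{\cO(\sqrt{\ell})}$, as required. This is the procedure described, for example, in Knuth~\cite{knuth2014art}.

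The main ``obstacle,'' if any, is purely bookkeeping: one has to verify that the constant $C$ from the counting bound correctly absorbs both the asymptotic regime and the finite tail, and that the $\poly(\ell)$ overhead of the enumeration is absorbed into the $\cO(\sqrt{\ell})$ exponent (it is, since $\log_2 \poly(\ell) = \cO(\log \ell) = o(\sqrt{\ell})$). Neither step requires any new ideas beyond the Hardy--Ramanujan asymptotic and a textbook recursion, so no serious difficulty is expected.
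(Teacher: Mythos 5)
Your proposal is correct and matches the paper's own justification, which simply invokes the Hardy--Ramanujan asymptotic for the bound $p(\ell) \leq 2^{C\sqrt{\ell}}$ and the standard partition-enumeration procedure (as in Knuth) with polynomial delay per partition, so the total time is $p(\ell)\cdot \ell^{\cO(1)} = 2^{\cO(\sqrt{\ell})}$. The details you supply (absorbing the finitely many small $\ell$ into the constant $C$, and noting the polynomial overhead vanishes into the exponent) are exactly the routine bookkeeping the paper leaves implicit.
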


\subparagraph*{Idea behind our algorithm for \bcc.} Consider a yes-instance $(G, k, \eta)$ of \bcc, and let $F \subseteq \binom{V(G)}{2} \setminus E(G)$ be a solution for $(G, k, \eta)$. 
Let $G_1, G_2,\ldots, G_{t}$ be the components of $G + F$ that were modified by $F$; and for each $i \in [t]$, let $G_{i1}, G_{i2},\ldots, G_{i r_i}$ be the components of $G$ that were merged together to form the component $G_i$ of $G + F$. 
Now, since $\card{F} \leq k$, we have $\card{V(F)} \leq 2k$. 
That is, $F$ modifies at most $2k$ vertices of $G$. 
And observe that $F$ modifies a vertex $v$ if and only if $F$ modifies every vertex in the connected component of $G +F$ that contains $v$. 
We thus have  $\card{G_1} + \card{G_2} + \cdots + \card{G_t} = \card{V(F)}$. In other words, $\set{\card{G_1}, \card{G_2},\ldots, \card{G_t}}$ is a partition of $\card{V(F)}$. 
Also, for each $i \in [t]$, we have $\card{G_{i1}} + \card{G_{i2}} + \cdots + \card{G_{i r_i}} = \card{G_i}$; that is, $\set{\card{G_{i1}}, \card{G_{i2}}, \ldots, \card{G_{i r_i}}}$ is a partition of $\card{G_i}$. 
Based on these observations, we design our algorithm, which works as follows. 
We guess $\card{V(F)}$ (at most $2k$ guesses); for each such guess, we guess the partition $\set{\card{G_1}, \card{G_2},\ldots, \card{G_t}}$; and for each such combination of guesses, we guess the partition $\set{\card{G_{i1}}, \card{G_{i2}}, \ldots, \card{G_{i r_i}}}$ for each $i \in [t]$; and we check if there is indeed a solution that is consistent with our guesses. 
Bounding the running time requires a careful analysis, for which we use Proposition~\ref{prop:partition-number}. 

\subsubsection*{Notation and Preliminary Observations}

To formally describe our algorithms, we first introduce the following terminology and state a few preparatory results.

\subparagraph*{The sum of pairwise products function $\spp$. } 
For each partition $X = \set{x_1, x_2,\ldots, x_t}$ of a positive integer, we define $\spp(X)$ as follows: if $X$ is a singleton set, i.e., $t = 1$, then $\spp(X) = 0$; otherwise, $\spp(X) = \sum_{\substack{i, j \in [t] \\ i \neq j}} x_i x_j$. 
That is, $\spp(X)$ is the sum of  pairwise products of the elements of $X$. 
For example, for the partition $X = \set{1, 4, 6, 6}$ of 17, we have $\spp(X) = (1 \times 4) + (1 \times 6) + (1 \times 6) + (4 \times 6) + (4 \times 6) + (6 \times 6) = 4 + 6 + 6 + 24 + 24 + 36 = 100$. 

\subparagraph*{The completion of a cluster graph w.r.t. a multiset of partitions.} Consider a cluster graph $G$ and a positive integer $\ell$. 
Let $X = \set{x_1, x_2,\ldots, x_t}$ be a partition of $\ell$. 
We say that $X$ is $G$-valid if $G$ contains $t$ distinct connected components $G_1, G_2,\ldots, G_t$ such that $\card{G_i} = x_i$ for every $i \in [t]$. 
Now, consider $G, \ell$ and $X = \set{x_1, x_2,\ldots, x_t}$ as before. 
Let $X_1, X_2,\ldots,X_t$ be such that for each $i \in [t]$,  $X_i$ is a partition of $x_i$. 
Notice then that the multiset\footnote{When dealing with the union of multisets $A$ and $B$, by $A \cup B$, we mean the ``additive union'' of $A$ and $B$---that is, for each element $x \in A \cup B$, the multiplicity of $x$ in $A \cup B$ is the sum of the multiplicity of $x$ in $A$ and the multiplicity of $x$ in $B$. For example, if $A = \set{x, y, y, z, z, z}$ and $B = \set{x, x, x, y, w}$, then $A \cup B = \set{x, x, x, x, y, y, y, z, z, z, w}$.} $X' = \bigcup_{i \in [t]}X_i$ is also a partition of $\ell$. 
For $G, \ell, X = \set{x_1, x_2,\ldots, x_t}, X_1, X_2,\ldots, X_t$ and $X'$ as before such that $X'$ is $G$-valid, 
 we define a \emph{completion of $G$ with respect to $(X, \set{X_1, X_2,\ldots, X_t})$} as follows. 
 For each $i \in [t]$, let $X_i = \set{x_{i1}, x_{i2},\ldots, x_{i r_i}}$. As $X' = \bigcup_{i \in [t]} X_i$ is $G$-valid, $G$ contains $\sum_{i \in [t]} r_i$ distinct connected components $G_{11}, G_{12},\ldots, G_{1 r_1}, G_{21}, G_{22},\ldots, G_{2 r_2},\ldots,G_{t1},$ $ G_{t2},\ldots, G_{t r_t}$ such that for every $i \in [t]$, we have $\card{G_{ij}} = x_{ij}$ for every $j \in [r_i]$. 
 Fix such a choice of $\sum_{i \in [t]} r_i$ components $G_{11}, G_{12},\ldots, G_{1 r_1}, G_{21}, G_{22},\ldots, G_{2 r_2},\ldots,G_{t1}, G_{t2},\ldots, G_{t r_t}$. 
 By a completion of $G$ with respect to $(X, \set{X_1, X_2,\ldots, X_t})$, we mean the graph obtained from $G$ by turning the subgraph induced by $\bigcup_{j \in [r_i]} V(G_{ij})$ into a clique for every $i \in [t]$. 
 Notice that as $G$ is a cluster graph, each $G_{ij}$ is a clique, and hence to turn $\bigcup_{j \in [r_i]} V(G_{ij})$ into a clique, we need to add exactly zero edges if $r_i = 1$, and otherwise, we need to add exactly  $\sum_{\substack{j, j' \in [r_i] \\ j \neq j'}} \card{G_{ij}} \card{G_{ij'}} = \sum_{\substack{j, j' \in [r] \\ j \neq j'}} x_{ij} x_{ij'}$ edges. 
 In either case, we need to add exactly $\spp(X_i)$ edges; recall that $\spp(X_i) = 0$ if $X_i = \set{x_{i1}}$ and $\spp(X_i) = \sum_{\substack{j, j' \in [r_i] \\ j \neq j'}} x_{ij} x_{ij'}$ otherwise. 
Thus, a completion of $G$ with respect to $(X, \set{X_1, X_2,\ldots, X_t})$ requires the addition of exactly $\sum_{i \in [t]} \spp(X_i)$ edges to $G$. 
Notice that a completion of $G$ with respect to $(X, \set{X_1, X_2,\ldots,X_t})$ is also a cluster graph. 
Notice also that such a completion of $G$  need not be unique as there may be multiple choices for the components $G_{11}, G_{12},\ldots, G_{1 r_1}, G_{21}, G_{22},\ldots, G_{2 r_2},\ldots,G_{t1}, G_{t2},\ldots, G_{t r_t}$. But notice that if $G'$ and $G''$ are two distinct completions of $G$ with respect to $(X, \set{X_1, X_2,\ldots,X_t})$, then $G'$ and $G''$ both have the same number of components. Moreover, for any $j \in \mathbb{N}$, $G'$ and $G''$ have exactly the same number of components of size exactly $j$; that is, if $G'_1, G'_2,\ldots, G'_s$ are the components of $G'$ and $G''_1, G''_2,\ldots, G''_s$ the components of $G''$, then $\{\card{G'_1}, \card{G'_2},\ldots, \card{G'_s}\} = \{\card{G''_1}, \card{G''_2},\ldots, \card{G''_s}\}$. In particular, $G'$ is $\eta$-balanced if and only if $G''$ is $\eta$-balanced. 
We summarise this discussion in the following observations. 

\begin{observation}
\label{obs:completion-fpt-correctness}
Consider a cluster graph $G$, $\ell \in \mathbb{N}$, a partition $X = \set{x_1, x_2,\ldots, x_t}$ of $\ell$, and $X_1, X_2,\ldots, X_t$, where $X_i$ is a partition of $x_i$ for every $i \in [t]$. Let $X'$ be $G$-valid, where $X' = \bigcup_{i \in [t]} X_i$.  
Then, \emph{every} completion of $G$ with respect to $(X, \set{X_1, X_2,\ldots, X_t})$ requires the addition of exactly $\sum_{i \in [t]} \spp(X_i)$ edges to $G$. 
In other words, each completion of $G$ with respect to $(X, \set{X_1, X_2,\ldots, X_t})$ is precisely a supergraph $G + F$ of $G$, where $F \subseteq \binom{V(G)}{2} \setminus E(G)$, with the following properties: (i) $G + F$ is a cluster graph, (ii) $\card{F} = \sum_{i \in [t]} \spp(X_i)$ and (iii) $\card{V(F)} = \ell$. 
\end{observation}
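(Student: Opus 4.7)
The observation is essentially a bookkeeping verification: given the definition of a completion, we just need to trace through what edges are added and which vertices are touched. The plan is to fix a completion $G+F$ as in the definition and check each of properties (i)--(iii) in turn.

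Concretely, by definition of a completion, there is a choice of $\sum_{i \in [t]} r_i$ distinct connected components $G_{ij}$ of $G$ (where $j \in [r_i]$) with $\card{G_{ij}} = x_{ij}$ such that $F$ is precisely the set of non-edges of $G$ whose addition turns $V_i := \bigcup_{j \in [r_i]} V(G_{ij})$ into a clique for every $i \in [t]$. Since $G$ is a cluster graph and the $G_{ij}$ are distinct components of $G$, the vertex sets $V_1, V_2,\ldots, V_t$ are pairwise disjoint, and within each $V_i$ the induced subgraph of $G$ is already the disjoint union of the cliques $G_{i1},\ldots, G_{ir_i}$.

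For property (ii), I would count the edges added group-by-group. Within $V_i$, to merge the $r_i$ vertex-disjoint cliques into one, we must add exactly the edges between vertices lying in different components $G_{ij}, G_{ij'}$ with $j \neq j'$, which totals $\sum_{1 \leq j < j' \leq r_i} x_{ij} x_{ij'} = \spp(X_i)$ (this equals $0$ when $r_i = 1$, matching the definition). Since the $V_i$ are vertex-disjoint, the edge sets contributed by different groups are disjoint as well, so summing yields $\card{F} = \sum_{i \in [t]} \spp(X_i)$.

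For property (i), observe that every connected component of $G + F$ is either an unchanged component of $G$ not contained in any $V_i$ (still a clique) or, exactly the clique on some $V_i$; no edge of $F$ crosses between two different $V_i$'s or to a vertex outside $\bigcup_i V_i$, so no further merging occurs. Hence $G+F$ is a cluster graph. For property (iii), a vertex $v \in V(G)$ is modified by $F$ iff it is incident to at least one edge of $F$. Any vertex of $V_i$ with $r_i \geq 2$ lies in some $G_{ij}$ and is joined in $F$ to every vertex of $G_{ij'}$ with $j' \neq j$, so all of $V_i$ is modified; a vertex outside any $V_i$ (or inside a $V_i$ with $r_i = 1$) is not incident to any edge of $F$. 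Assuming each group is non-trivial (i.e.\ $r_i \geq 2$ for every $i$), which is the meaningful case for our algorithm since the $G_i = G[V_i]$ are the components of $G+F$ that are actually modified, we get $\card{V(F)} = \sum_{i \in [t]} \card{V_i} = \sum_{i \in [t]} x_i = \ell$.

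There is no real obstacle here; the only point to be careful about is the convention that trivial groups ($r_i = 1$) contribute zero to both $\spp(X_i)$ and to $\card{V(F)}$, which is consistent with the definition of $\spp$ and with the usage of the observation in the algorithm, where the partitions are constructed from the actually modified components of $G + F$.
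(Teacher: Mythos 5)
Your proposal is correct and follows essentially the same route as the paper, which states this observation without a separate proof as a summary of the preceding definitional discussion (the per-group counting of $\spp(X_i)$ added edges, disjointness of the merged vertex sets, and the fact that the result is again a cluster graph). Your explicit caveat that property (iii) literally requires $r_i \geq 2$ for every $i$ is a fair and slightly more careful reading than the paper's blanket statement, and it is consistent with how the observation is actually used (in Lemma~\ref{lem:completion-fpt-correctness} the partitions $X_i$ arise from components of $G+F$ that are genuinely modified, so each $r_i \geq 2$, while the algorithm's soundness only needs (i) and (ii)).
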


\begin{observation}
\label{obs:completion-fpt-runtime}
Given $G, \ell, X = \set{x_1, x_2,\ldots, x_t}$ and $X'$, where $G$ is a cluster graph, $\ell \in \mathbb{N}$, $X = \set{x_1, x_2,\ldots, x_t}$ is a partition of $\ell$, $X_i$ is a partition of $x_i$ for every $i \in [t]$, and $X' = \bigcup_{i \in [t]} X_i$, we can perform the following operations in polynomial time. 
\begin{enumerate}
\item Check if $X'$ is $G$-valid; this only requires checking for every $x \in X'$ if $G$ contains at least $\mul(x, X')$ distinct components, each of size exactly $x$, where $\mul(x, X')$ is the multiplicity of $x$ in $X'$. 
\item Compute $\sum_{i \in [t]} \spp(X_i)$. 
\item Construct a completion of $G$ with respect to $(X, \set{X_1, X_2,\ldots, X_t})$ (provided $X'$ is $G$-valid), and check whether it is $\eta$-balanced. 
\end{enumerate}
\end{observation}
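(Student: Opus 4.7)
The plan is to verify each of the three items by exhibiting an explicit polynomial-time procedure; no step presents a genuine obstacle, so the proof is essentially bookkeeping. The only mildly nontrivial point is that the bounds should be polynomial in the size of $G$ and of the inputs $X, X_1, \ldots, X_t$, which together have total bit-length bounded by a polynomial in $\ell + |V(G)|$.

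For item~(1), I would first compute, via BFS or DFS in time $O(|V(G)| + |E(G)|)$, the multiset $\Sigma$ of sizes of the connected components of $G$. Then I would compute from $X'$ the multiset of its distinct values together with the function $x \mapsto \mul(x, X')$ in time $O(|X'|)$, for instance by sorting. By definition, $X'$ is $G$-valid if and only if $\mul(x, X') \leq \mul(x, \Sigma)$ for every $x$ appearing in $X'$; checking this condition reduces to at most $|X'|$ comparisons, all of which are polynomial.

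For item~(2), I would iterate over $i \in [t]$ and compute $\spp(X_i)$ directly. If $|X_i| = 1$ the value is $0$ by definition; otherwise, writing $X_i = \{x_{i1}, \ldots, x_{ir_i}\}$, the identity
\[
\spp(X_i) \;=\; \frac{1}{2}\!\left(\Bigl(\sum_{j \in [r_i]} x_{ij}\Bigr)^{\!2} \;-\; \sum_{j \in [r_i]} x_{ij}^{2}\right)
\]
lets me compute $\spp(X_i)$ using $O(r_i)$ arithmetic operations on integers of polynomial bit-length. Summing the $t$ results yields $\sum_{i \in [t]} \spp(X_i)$ within the same bound.

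For item~(3), assuming the validity check of item~(1) has succeeded, I would construct a completion greedily. Group the components of $G$ by size (computed in item~(1)); then, scanning $X'$ value by value, assign to each occurrence $x_{ij}$ of a value in $X_i \subseteq X'$ a hitherto unused component $G_{ij}$ of $G$ of size exactly $x_{ij}$. This assignment succeeds because $X'$ is $G$-valid, and it runs in polynomial time. I would then, for each $i \in [t]$ with $r_i \geq 2$ and each pair $j \neq j' \in [r_i]$, add every non-edge between $V(G_{ij})$ and $V(G_{ij'})$; the resulting graph $G^{\star}$ is, by construction, a completion of $G$ with respect to $(X, \{X_1, \ldots, X_t\})$ and is a cluster graph by Observation~\ref{obs:completion-fpt-correctness}. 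Finally, I would compute the connected components of $G^{\star}$ in polynomial time, obtain $\lcomp(G^{\star})$ and $\scomp(G^{\star})$, and by Observation~\ref{obs:blocker-balanced}(2) decide $\eta$-balancedness via the single comparison $\lcomp(G^{\star}) - \scomp(G^{\star}) \leq \eta$.
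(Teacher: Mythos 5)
Your proposal is correct and fills in exactly the routine bookkeeping the paper leaves implicit for this observation (which it states without a formal proof): compute component sizes, compare multiplicities, evaluate $\spp$ arithmetically, greedily pick components and add the required edges, then compare $\lcomp$ and $\scomp$. No further comment is needed.
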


We will also need the following lemma to bound the running time of our algorithms.   %
\begin{lemma}
\label{lem:completion-bound}
For $\ell \in \mathbb{N}$, the number of choices for the pair $(X, \set{X_1, X_2,\ldots, X_t})$, where $X = \set{x_1, x_2,\ldots, x_t}$ is a partition of $\ell$ and $X_i$ is a partition of $x_i$ for every $i \in [t]$, is $2^{\cO(\ell)}$. 
\end{lemma}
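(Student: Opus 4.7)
The plan is to upper bound the count by a straightforward two-level summation, using only Proposition~\ref{prop:partition-number} and the trivial inequality $\sqrt{x} \leq x$ for positive integers.

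First, I would observe that the number of pairs in question is at most
\[
\sum_{X} \prod_{i \in [t]} p(x_i),
\]
where the sum is over all partitions $X = \set{x_1, x_2, \ldots, x_t}$ of $\ell$. The reason is that, given a fixed partition $X$ of $\ell$, the number of admissible multisets $\set{X_1, X_2, \ldots, X_t}$ is bounded above by the number of ordered tuples $(X_1, \ldots, X_t)$ with $X_i$ a partition of $x_i$, and that number is exactly $\prod_{i \in [t]} p(x_i)$.

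Next, I would apply Proposition~\ref{prop:partition-number} to replace each factor $p(x_i)$ by $2^{C \sqrt{x_i}}$ for a universal constant $C$. This yields
\[
\prod_{i \in [t]} p(x_i) \;\leq\; 2^{C \sum_{i \in [t]} \sqrt{x_i}}.
\]
Since each $x_i$ is a positive integer, we have $\sqrt{x_i} \leq x_i$, and therefore $\sum_{i \in [t]} \sqrt{x_i} \leq \sum_{i \in [t]} x_i = \ell$. Hence $\prod_{i \in [t]} p(x_i) \leq 2^{C \ell}$ uniformly over all partitions $X$ of $\ell$.

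Finally, summing over $X$ contributes a factor of $p(\ell) \leq 2^{C \sqrt{\ell}}$, again by Proposition~\ref{prop:partition-number}. Combining, the total count is at most $2^{C \sqrt{\ell}} \cdot 2^{C \ell} = 2^{\cO(\ell)}$, which is the claimed bound. There is no real obstacle here: the entire argument is a two-line computation once Proposition~\ref{prop:partition-number} and the inequality $\sqrt{x_i} \leq x_i$ are in hand. The only mild subtlety is not to overcount multisets $\set{X_1, \ldots, X_t}$ by treating them as tuples, but since we are proving an upper bound this direction of slack is harmless.
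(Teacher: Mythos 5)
Your proof is correct and follows essentially the same route as the paper: bound the number of multisets for a fixed partition $X$ by $\prod_i p(x_i)$, apply the Hardy--Ramanujan bound from Proposition~\ref{prop:partition-number} together with $\sqrt{x_i} \leq x_i$ and $\sum_i x_i = \ell$, and multiply by $p(\ell)$ for the choice of $X$. The only cosmetic difference is that the paper routes the summation over $X$ through a maximizing partition $X^*$, whereas you bound every term uniformly by $2^{C\ell}$; the two computations are interchangeable.
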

\begin{proof}
 Recall that for every $\ell \in \mathbb{N}$, $p(\ell)$ denotes the number of partitions of $\ell$. 
 Recall also  that by Proposition~\ref{prop:partition-number}, there exists a constant $C$ such that $p(\ell) \leq 2^{C \sqrt{\ell}}$ for every $\ell \in \mathbb{N}$. 
 
 Now, fix $\ell \in \mathbb{N}$. 
 For  convenience, for each partition $X$ of $\ell$, where $X = \set{x_1, x_2,\ldots, x_t}$, let us denote the product $p(x_1) p(x_2) \cdots p(x_t)$ by $\alpha_{\ell}(X)$. 
 Now, among all the partitions of $\ell$, let $X^* = \set{x^*_1, x^*_2,\ldots,x^*_{t^*}}$ be a partition for which the function $\alpha_{\ell}$ attains the maximum value; that is,  $\alpha_{\ell}(X^*) = \max_{X}  \alpha_{\ell}(X)$, where the maximum is over all partitions $X$ of $\ell$.  

 Now, for each partition $X$ of $\ell$, where $X = \set{x_1, x_2,\ldots, x_t}$, notice that there are at most $p(x_1) p(x_2) \cdots p(x_t) = \alpha_{\ell}(X)$ choices for the multiset $\set{X_1, X_2,\ldots, X_t}$, where $X_i$ is a partition of $x_i$ for every $i \in [t]$. 
 Hence the number of choices for the pair $(X, \set{X_1, X_2,\ldots, X_t})$ is at most 
\begin{align*}
  \sum_{\substack{X \\ X \text{ is a partition of } \ell}} \alpha_{\ell}(X) &\leq \sum_{\substack{X \\ X \text{ is a partition of } \ell}} \alpha_{\ell}(X^*) ~~~~~~~ ~~~~~
    ~~~  (\text{By the definition of } X^*)\\
    &=\sum_{\substack{X \\ X \text{ is a partition of } \ell}} p(x^*_1) p(x^*_2) \cdots p(x^*_{t^*}) \\
    &=p(\ell)p(x^*_1) p(x^*_2) \cdots p(x^*_{t^*}) \\
    &\leq 2^{C \sqrt{\ell}} \cdot 2^{C \sqrt{x^*_1}} \cdot 2^{C \sqrt{x^*_2}} \cdots 2^{C \sqrt{x^*_{t^*}}} ~~~~~~~ ~~~
     \text{(By Proposition~\ref{prop:partition-number})}\\
    &=2^{C(\sqrt{\ell} + \sqrt{x^*_1} + \sqrt{x^*_2} + \cdots + \sqrt{x^*_{t^*}})} \\
    &\leq 2^{C(\ell + x^*_1 + x^*_2 + \cdots + x^*_{t^*})} \\&=2^{\cO(\ell)}, 
\end{align*}
where the last equality holds because $x^*_1 + x^*_2 + \cdots + x^*_{t^*}= \ell$. 
\end{proof}
\subsubsection*{The Main Technical Lemma for BCC}
We now prove the following lemma, which will establish the correctness of our algorithm. 
Recall that in an instance $(G, k, \eta)$ of \bcc, the input graph $G$ is a cluster graph. 
\begin{lemma}
\label{lem:completion-fpt-correctness}
Consider an instance $(G, k, \eta)$ of \bcc. 
If $(G, k, \eta)$ is a yes-instance, then either $G$ is $\eta$-balanced or there exist an integer $\ell \in [2k] \setminus \set{1}$ and a partition $X = \set{x_1, x_2,\ldots, x_t}$ of $\ell$ such that there exists a multiset $\set{X_1, X_2,\ldots, X_t}$ of partitions, where $X_i$ is a partition of $x_i$ for each $i \in [t]$, with the following properties: 
\begin{enumerate}
\item $X'$ is $G$-valid, where $X' = \bigcup_{i \in [t]} X_i$; 

\item $\sum_{i \in [t]}\spp(X_i) \leq k$; and

\item a completion of $G$ with respect to $(X, \set{X_1, X_2,\ldots, X_t})$ is $\eta$-balanced. 
\end{enumerate}
\end{lemma}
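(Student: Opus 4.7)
The plan is to take any solution $F$ for $(G,k,\eta)$ and read off the required partitions directly from the structure of the cluster graph $G+F$ relative to $G$. I will split into two cases depending on whether $F$ modifies any component at all. If $F$ modifies no component, then $G+F=G$, and since $F$ is a solution the graph $G$ is itself an $\eta$-balanced cluster graph, putting us in the first alternative of the lemma. In particular, this handles the case $F=\emptyset$.

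From now on I will assume $F$ modifies at least one component, so $F\neq\emptyset$. Let $G_1,G_2,\ldots,G_t$ be the connected components of $G+F$ that are modified by $F$. Since $G+F$ is obtained from $G$ by only adding edges, each $G_i$ is the union of some set of components of $G$, and by the remark made just before the lemma, $F$ modifies a vertex $v$ iff it modifies every vertex of the component of $G+F$ containing $v$; hence for each $i\in[t]$ I can write $G_i$ as the merger of uniquely determined components $G_{i1},G_{i2},\ldots,G_{ir_i}$ of $G$. I then set $\ell=\card{V(F)}$, $x_i=\card{G_i}$, $X=\set{x_1,\ldots,x_t}$, $X_i=\set{\card{G_{i1}},\card{G_{i2}},\ldots,\card{G_{ir_i}}}$, and $X'=\bigcup_{i\in[t]}X_i$. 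The fact that $X$ is a partition of $\ell$ follows because the components $G_1,\ldots,G_t$ are exactly the components into which $V(F)$ is partitioned by $G+F$; and $X_i$ is a partition of $x_i$ because $G_{i1},\ldots,G_{ir_i}$ partition $V(G_i)$.

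Next I will verify the three bulleted properties. For $G$-validity of $X'$: the components $G_{ij}$ are pairwise distinct components of $G$ with the required sizes, so $X'$ is $G$-valid by definition. For the bound $\sum_{i\in[t]}\spp(X_i)\le k$: for each $i$, the edges of $F$ incident to $V(G_i)$ are exactly those needed to merge the cliques $G_{i1},\ldots,G_{ir_i}$ into the single clique $G_i$; by the computation in Observation~\ref{obs:completion-fpt-correctness} this count is exactly $\spp(X_i)$, and summing over $i$ gives $\sum_i\spp(X_i)=\card{F}\le k$. For the $\eta$-balanced completion: the graph $G+F$ is itself a completion of $G$ with respect to $(X,\set{X_1,\ldots,X_t})$, and as noted in the discussion preceding Observation~\ref{obs:completion-fpt-correctness} any two completions of $G$ with respect to the same pair have the same multiset of component sizes, so \emph{some} completion is $\eta$-balanced iff \emph{every} completion is (and $G+F$ is one by assumption that $F$ is a solution).

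It remains to check the range constraints $\ell\in[2k]\setminus\set{1}$. Since each edge in $F$ has two endpoints, $\ell=\card{V(F)}\le 2\card{F}\le 2k$. On the other hand, $F\neq\emptyset$ by the case assumption, so $F$ contains at least one edge, whose two endpoints both lie in $V(F)$; hence $\ell\ge 2$, ruling out $\ell=1$ (and also $\ell=0$, which is consistent with the partition interpretation). I do not anticipate any real obstacle here: the statement is essentially a careful bookkeeping of which components of $G$ get merged into which components of $G+F$, together with the count of edges added to perform each merge. The only thing to be slightly careful about is remembering to peel off the case $F=\emptyset$ up front, which is why the lemma's first alternative ``$G$ is $\eta$-balanced'' appears.
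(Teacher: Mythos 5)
Your proposal is correct and follows essentially the same route as the paper: extract the modified components $G_1,\ldots,G_t$ of $G+F$ and their constituent components of $G$, set $\ell=\card{V(F)}$, read off $X$ and the $X_i$ from the component sizes, and observe that $G+F$ is itself an $\eta$-balanced completion with $\sum_i \spp(X_i)=\card{F}\le k$. The only cosmetic difference is that you case-split on $F=\emptyset$ while the paper case-splits on whether $G$ is $\eta$-balanced (using that a non-balanced $G$ forces $F\neq\emptyset$ and hence $\ell\ge 2$); both handle the degenerate case equally well.
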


\begin{proof}
Consider an instance $(G, k, \eta)$ of \bcc. 
Assume that $(G, k, \eta)$ is a yes-instance. 
If $G$ is $\eta$-balanced, then the lemma trivially holds. 
So, assume that $G$ is not $\eta$-balanced. 
Let $F \subseteq \binom{V(G)}{2} \setminus E(G)$ be a solution for $(G, k, \eta)$. 
Let $\ell = \card{V(F)}$; that is, $\ell$ is the number of vertices that $F$ modifies. 
As $G$ is not $\eta$-balanced, $F \neq \emptyset$, and hence $\ell = \card{V(F)} \geq 2$. 
Also, as $\card{F} \leq k$, we have $\ell  = \card{V(F)} \leq 2k$. Thus $\ell \in [2k] \setminus \set{1}$. 

We now show that $F$ corresponds to a partition $X = \set{x_1, x_2,\ldots, x_{t}}$ of $\ell$ and a multiset $\set{X_1, X_2,\ldots, X_t}$ of partitions that satisfy the properties required by the lemma. 
Let $G_1, G_2,\ldots,G_t$ be the connected components of $G + F$ that were modified by $F$; and for each $i \in [t]$, let $G_{i1}, G_{i2},\ldots, G_{ir_i}$ be the connected components of $G$ such that $V(G_i) = \bigcup_{j \in [r_i]} V(G_{ij})$. 
 That is, for each $i \in [t]$, $F$ ``merges'' the components $G_{i1}, G_{i2},\ldots, G_{ir_i}$ together to form the component $G_{i}$ of $G + F$. 
 Thus, $F = \bigcup_{i \in [t]} F_i$, where for each $i \in [t]$, $F_i$ is the set containing all possible edges between $V(G_{ij})$ and $V(G_{i'j'})$ for every pair of distinct $j, j' \in [r_i]$, i.e., $F_i = \{uv ~| \exists j, j' \in [r_i] \text{ with } j \neq j' \text{ such that } u \in V(G_{ij}), v \in V(G_{ij'})\}$. 
 Notice that $\card{F_i} = \sum_{\substack{j, j' \in [r_i] \\ j \neq j'}} \card{G_{ij}} \card{G_{ij'}}$, and as the sets $F_1, F_2,\ldots, F_t$ are pairwise disjoint, we have $\card{F} = \sum_{i \in [t]}\card{F_i}.$  
 Also, as the set of vertices that are modified by $F$ (i.e., $V(F)$) is precisely $\bigcup_{i \in [t]} V(G_i)$, we have $\card{\bigcup_{i \in [t]} V(G_i)} = \sum_{i \in [t]} \card{G_i} = \card{V(F)} = \ell$. 
 
 It is now straightforward to define the partitions $X$ and $X_1, X_2,\ldots, X_t$, as required by the statement of the lemma. 
 For each $i \in [t]$, let $x_i = \card{G_i}$ and $x_{ij} = \card{G_{ij}}$ for every $j \in [r_i]$. Thus $X = \set{\card{G_1}, \card{G_2},\ldots, \card{G_t}} = \set{x_1, x_2,\ldots, x_t}$ is a partition of $\sum_{i \in [t]} \card{G_i} = \card{V(F)} = \ell$;  and for each $i \in [t]$, $X_i = \set{\card{G_{i1}}, \card{G_{i2}}, \card{G_{i r_i}}} = \set{x_{i1}, x_{i2},\ldots, x_{i r_i}}$ is a partition of $x_i$. 
 As $G$ contains the distinct components $G_{11}, G_{12},\ldots, G_{1 r_1}, G_{21}, G_{22},\ldots, G_{2 r_2},\ldots,G_{t1}, G_{t2},\ldots, G_{t r_t}$, $X'$ is $G$-valid, where $X' = \bigcup_{i \in [t]} X_i$. 
 Also, observe that the graph $G + F$ is precisely a completion of $G$ with respect to $(X, \set{X_1, X_2,\ldots, X_t})$; recall that we obtain $G + F$ from $G$ by turning each $\bigcup_{j \in [r_i]} V(G_{ij})$ into a clique for every $i \in [t]$. 
 And as $G + F$ is $\eta$-balanced, we can conclude that a completion of $G$ with respect to $(X; X_1, X_2,\ldots, X_t)$ is $\eta$-balanced. 
 Finally, as observed earlier, for each $i \in [t]$, we have $\card{F_i} = \sum_{\substack{j, j' \in [r_i] \\ j \neq j'}} \card{G_{ij}} \card{G_{ij'}} = \sum_{\substack{j, j' \in [r_i] \\ j \neq j'}} x_{ij} x_{ij'} = \spp(X_i)$; and $k \geq \card{F} = \sum_{i \in [t]} \card{F_i} = \sum_{i \in [t]} \spp(X_i)$. This completes the proof of the lemma. 
\end{proof}

\subsubsection*{Algorithm for BCC}
We are now ready to describe our algorithm, which we call \algobcc. 

\subparagraph*{\algobcc.}Given an instance $(G, k, \eta)$ of \bcc\ as input, we proceed as follows. 
Recall that $G$ is a cluster graph. 
\begin{description}
\item[Step 1.] If $G$ is  $\eta$-balanced, then we return that $(G, k, \eta)$ is a yes-instance, and terminate. 

\item[Step 2.] If $k \leq 0$, then we return that $(G, k, \eta)$ is a no-instance, and terminate. 

\item[Step 3.] We use the algorithm of Proposition~\ref{prop:partition-number} to generate all partitions of $\ell$ for all $\ell \in [2k]$.

\item[Step 4.] For each $\ell \in [2k] \setminus \set{1}$, we do as follows.

\item[Step 4.1.] For each partition $X$ of $\ell$, we do as follows. 

\item[Step 4.1.1.] Let $X = \set{x_1, x_2,\ldots, x_t}$. For each multiset  $\set{X_1, X_2,\ldots, X_t}$, where $X_i$ is a partition of $x_i$ for every $i \in [t]$, we do as follows. 

\item[Step 4.1.1.1.] We  consider the partition $X' = \bigcup_{i \in [t]} X_i$ of $\ell$. If $X'$ is $G$-valid, $\sum_{i \in [t]}\spp(X_i) \leq k$, and a completion of $G$ with respect to $(X, \set{X_1, X_2,\ldots, X_t})$ is $\eta$-balanced, then we return that $(G, k, \eta)$ is a yes-instance, and terminate. 

\item[Step 5.] We return that $(G, k, \eta)$ is a no-instance, and terminate. 
\end{description}
The correctness of the algorithm follows from Observation~\ref{obs:completion-fpt-correctness} and Lemma~\ref{lem:completion-fpt-correctness}. We now analyse the running time. 
\begin{lemma}
\label{lem:completion-fpt-runtime}
\algobcc\ runs in time \algobccruntime. 
\end{lemma}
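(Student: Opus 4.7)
The plan is to simply bound the work done at each step of \algobcc\ and combine these bounds. Steps 1 and 2 are easily seen to run in polynomial time and constant time respectively, so the interesting work is in Steps 3 and 4.

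For Step 3, I would apply Proposition~\ref{prop:partition-number} directly: for each $\ell \in [2k]$, enumerating all partitions of $\ell$ takes time $2^{\cO(\sqrt{\ell})}$, and summing over $\ell \in [2k]$ gives a total of $\sum_{\ell \in [2k]} 2^{\cO(\sqrt{\ell})} = 2^{\cO(\sqrt{k})}$, which is comfortably within the claimed budget. I would also note that the total number of partitions generated across all values of $\ell$ is $2^{\cO(\sqrt{k})}$, which bounds the number of outer iterations in Step 4.

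The heart of the running-time analysis is Step 4.1.1, where we iterate over all multisets $\set{X_1, X_2,\ldots, X_t}$ of further partitions. Here I would invoke Lemma~\ref{lem:completion-bound}, which is exactly tailored for this: for a fixed $\ell$, the number of pairs $(X, \set{X_1,\ldots, X_t})$ with $X = \set{x_1,\ldots, x_t}$ a partition of $\ell$ and each $X_i$ a partition of $x_i$ is $2^{\cO(\ell)}$. Summing over $\ell \in [2k] \setminus \set{1}$, the total number of iterations of Step 4.1.1.1 is at most $\sum_{\ell = 1}^{2k} 2^{\cO(\ell)} = 2^{\cO(k)}$. Enumerating these inner partitions themselves is no more expensive than enumerating partitions of the individual $x_i$'s, which is again within $2^{\cO(k)}$ by Proposition~\ref{prop:partition-number}.

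It then only remains to confirm that each iteration of Step 4.1.1.1 runs in polynomial time in $n$. This is exactly the content of Observation~\ref{obs:completion-fpt-runtime}: checking $G$-validity of $X'$, computing $\sum_i \spp(X_i)$, constructing a completion, and testing whether it is $\eta$-balanced are all polynomial-time operations. Multiplying the $2^{\cO(k)}$ iteration bound by a $n^{\cO(1)}$ per-iteration cost gives the claimed running time of \algobccruntime. The only subtle point --- and the step I would be most careful about --- is making sure that the enumeration in Step 4.1.1 is counted correctly; one must be careful not to double-count by treating the multiset of partitions $\set{X_1,\ldots, X_t}$ as an ordered tuple, but Lemma~\ref{lem:completion-bound} already handles the multiset count, so this causes no harm.
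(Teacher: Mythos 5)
Your proposal is correct and follows essentially the same route as the paper's own proof: polynomial time per execution of Step 4.1.1.1 via Observation~\ref{obs:completion-fpt-runtime}, the $2^{\cO(\sqrt{k})}$ bound for Step 3 from Proposition~\ref{prop:partition-number}, and the $2^{\cO(\ell)}$ bound on the number of pairs $(X, \set{X_1,\ldots,X_t})$ from Lemma~\ref{lem:completion-bound}, combined over the at most $2k$ choices of $\ell$. No gaps.
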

\begin{proof}
Observe that Steps 1, 2, 5  take only polynomial time. 
So does each execution of Step 4.1.1.1 (Observation~\ref{obs:completion-fpt-runtime}). 
By Proposition~\ref{prop:partition-number}, Step 3 takes time $2^{\cO(\sqrt{k})}$. 
Now, in Step 4, we only need to go over at most $2k - 1$ choices of $\ell$; and we can assume without loss of generality that $k \leq \binom{n}{2}$. 
That is, there are only $n^{\cO(1)}$ choices for $\ell$. 
So to prove the lemma, it is enough to prove that for each choice of $\ell$, the number of choices for $(X, \set{X_1, X_2,\ldots, X_t})$ is at most $2^{\cO(\ell)}$; Lemma~\ref{lem:completion-bound} proves precisely this. The lemma now follows from the fact that we only consider $\ell \leq 2k$ in Step 4. 
\end{proof}

We have thus proved the following result. 
\begin{theorem}
\label{thm:completion-fpt}
\bcc\ admits an algorithm that runs in time \algobccruntime. 
\end{theorem}

\subsubsection*{Towards our algorithm for BCE}
We now deal with \bce, the editing version. Specifically, we prove the following theorem. 
\begin{theorem}
\label{thm:editing-fpt}
\bce\ admits an algorithm that runs in time \algobccruntime. 
\end{theorem}
To prove this theorem, we first argue that we can solve \bce\ in time \algobccruntime\ when the input graph is a cluster graph. Specifically, we prove the following theorem. 
\begin{theorem}
\label{thm:editing-fpt-cluster}
\bce\ on cluster graphs admits an algorithm that runs in time \algobccruntime. 
\end{theorem}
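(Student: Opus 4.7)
To prove Theorem~\ref{thm:editing-fpt-cluster} I reduce \bce\ on cluster graphs to ``guess the deletion structure, then invoke \algobcc.'' For any solution $F$ for $(G,k,\eta)$, the sets $F^- := F \cap E(G)$ and $F^+ := F \setminus E(G)$ partition $F$; because $G$ is a cluster graph, so is the intermediate graph $G' := G - F^-$, and then $F^+$ acts on $G'$ as a pure edge completion producing $G \triangle F = G' + F^+$. Thus, once $F^-$ is fixed, the residual task is precisely the \bcc\ instance $(G', k - |F^-|, \eta)$, which \algobcc\ solves in time $2^{\cO(k)} n^{\cO(1)}$ by Theorem~\ref{thm:completion-fpt}.

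I encode $F^-$ in direct analogy with Lemma~\ref{lem:completion-fpt-correctness}, as a pair $(C^-, \{C^-_1, \ldots, C^-_m\})$ where $C^- = \{c^-_1, \ldots, c^-_m\}$ lists the sizes of the cliques of $G$ that $F^-$ splits and each $C^-_b$ is a partition of $c^-_b$ giving the resulting sub-clique sizes. As in the completion case, every clique that $F$ touches has all of its vertices in $V(F)$, so $\sum_b c^-_b = |V(F^-)| \leq |V(F)| \leq 2k$; moreover, the number of deleted edges is exactly $\sum_b \spp(C^-_b)$, since splitting a $c$-clique into pieces with sizes summing to $c$ removes precisely the inter-piece edges.

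Algorithm \algobcec\ iterates over $\ell^- \in \{0, 1, \ldots, 2k\}$ and, using Proposition~\ref{prop:partition-number}, enumerates all pairs $(C^-, \{C^-_b\})$ with $\sum_b c^-_b = \ell^-$. For each such pair it checks via Observation~\ref{obs:completion-fpt-runtime} that $G$ contains disjoint cliques with the required multiset of sizes; if so it picks an arbitrary such collection, splits each clique according to the corresponding $C^-_b$ to obtain a cluster graph $G'$, sets $c := \sum_b \spp(C^-_b)$, and (provided $c \leq k$) invokes \algobcc\ on $(G', k-c, \eta)$. The final answer is \yes\ iff some invocation returns \yes. By Lemma~\ref{lem:completion-bound} the outer enumeration has size $\sum_{\ell^- \leq 2k} 2^{\cO(\ell^-)} = 2^{\cO(k)}$, each \algobcc\ call runs in time $2^{\cO(k)} n^{\cO(1)}$, and all validity and construction steps are polynomial; hence the overall running time is $2^{\cO(k)} n^{\cO(1)}$.

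The forward direction of correctness is immediate: any solution $F$ induces an encoding that appears in the enumeration, and although the specific cliques we select may differ from those split by $F^-$, the resulting $G'$ has the same multiset of clique sizes as $G - F^-$, so---since the \bcc\ answer depends only on this multiset---\algobcc\ will find a suitable $F^+$. The main subtlety is the reverse direction: given $F^+$ returned by \algobcc\ on a guessed $G'$, the intersection $F^- \cap F^+$ can be non-empty (an edge we deleted could in principle be ``re-added''), so I set $F := F^- \oplus F^+$ (symmetric difference); then $|F| \leq |F^-| + |F^+| \leq k$, and a short computation using $F^- \subseteq E(G)$ and $F^+ \cap E(G') = \emptyset$ gives $G \triangle F = G' + F^+$, which is the required $\eta$-balanced cluster graph.
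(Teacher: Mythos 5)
Your proof is correct, and it takes a somewhat different, more modular route than the paper. The paper's \algobcec\ also views a solution as delete-then-complete, but it enumerates \emph{both} halves of the size structure---the pair $(Y,\set{Y_1,\ldots,Y_s})$ describing the split cliques and the pair $(X,\set{X_1,\ldots,X_t})$ describing the merges---and checks validity, cost and $\eta$-balance directly (Lemma~\ref{lem:editing-fpt-correctness}); it also needs a separate Step~1 invoking the \bcc\ and \bcd\ algorithms (Theorems~\ref{thm:completion-fpt} and~\ref{thm:deletion-fpt}) because its structural lemma assumes $\ell_1,\ell_2\geq 2$. You instead enumerate only the deletion structure $(C^-,\set{C^-_b})$, build one canonical realization $G'$, and hand the completion half to \algobcc\ as a black box with the reduced budget $k-\sum_b\spp(C^-_b)$; this eliminates the joint enumeration, removes any dependence on the \bcd\ algorithm, and handles the pure-completion/pure-deletion cases automatically via $\ell^-=0$ and \algobcc's base case, while the running time still comes out to \algobccruntime\ by the same Lemma~\ref{lem:completion-bound} bound. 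Your observation that replacing $G-F^-$ by an isomorphic cluster graph with the same component-size multiset preserves the \bcc\ answer, and your symmetric-difference fix $F:=F^-\oplus F^+$ for the possibility that \algobcc\ re-adds a deleted edge, are both sound and are points the paper handles implicitly by constructing the edit set from the guessed structures. One small justification to tighten: the intermediate graph $G-F^-$ is \emph{not} a cluster graph merely ``because $G$ is a cluster graph'' (deleting arbitrary edges from a clique can create induced $P_3$s); it is a cluster graph because $F^-=E(G)\setminus E(G\triangle F)$, so within each clique of $G$ the surviving edges are exactly those joining vertices that lie in a common component of the cluster graph $G\triangle F$, whence each clique splits into cliques---the same fact the paper asserts in its proof sketch of Lemma~\ref{lem:editing-fpt-correctness}, and the fact you need for your per-clique encoding of $F^-$ and for the bound $\sum_b c^-_b=\card{V(F^-)}\leq 2k$.
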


Assuming Theorem~\ref{thm:editing-fpt-cluster}, let us first complete the proof of Theorem~\ref{thm:editing-fpt}. 
\begin{proof}[Proof of Theorem~\ref{thm:editing-fpt}]
Recall that a graph $G$ is a cluster graph if and only if $G$ does not contain $P_3$ as an induced subgraph. Given an instance $(G, k, \eta)$ of \bce, our algorithm works as follows. 
\begin{description}
    \item[Step 1] If $G$ is a cluster graph, then we use the algorithm of Theorem~\ref{thm:editing-fpt-cluster} to solve the problem. 
    \item[Step 2]  If Step 1 is not applicable and $k > 0$, then we branch on induced $P_3$s; more precisely, we greedily find an induced $P_3$, say $uvw$, (i.e., $u, v, w \in V(G)$ with $uv, vw \in E(G)$ and $uw \notin E(G)$) and recursively call our algorithm on the three instances $(G - uv, k - 1, \eta)$, $(G - vw, k - 1, \eta)$ and $(G + uw, k - 1, \eta)$. 
    \item[Step 3] If Steps 1 and 2 are not applicable, then $G$ is not a cluster graph and $k = 0$, and we return that $(G, k, \eta)$ is a no-instance of \bce.
\end{description} 
The correctness of this algorithm follows from the fact that we must modify every induced $P_3$ $uvw$ in $G$ by either deleting from $G$ one of the edges $uv$ and $vw$ or by adding the edge $uw$ to $G$. As for the running time, Step 3 runs in polynomial time; we make $3$ recursive calls in each execution of Step 2, and we recurse only until $k = 0$, resulting in at most $3^k$ recursive calls; by Theorem~\ref{thm:editing-fpt-cluster}, Step 1 takes time \algobccruntime. Thus the overall running time is bounded by \algobccruntime. 
\end{proof}

We now prove Theorem~\ref{thm:editing-fpt-cluster}, for which we adopt a familiar strategy: We associate solutions with partitions of an integer and use the bound for the number of partitions of an integer. 

\subparagraph*{Idea behind our algorithm for \bce\ on cluster graphs.} Consider an instance $(G, k, \eta)$ of \bce, where $G$ is a cluster graph. Let $F \subseteq \binom{V(G)}{2}$ be a solution for $(G, k, \eta)$, and let $F_1 \subseteq F$ be the edges that we delete from $G$ and $F_2 \subseteq F$ be the set of edges that we add to $G$. We may think of the cluster graph $G \triangle F$ as obtained from $G$ by a two-step process: first by deleting from $G$ the edges in $F_1$ and then by adding to $G - F_1$ the edges in $F_2$. Let $G' = G - F_1$. Then $G'$ must be a cluster graph, and we have  $G \triangle F = (G - F_1) + F_2 = G' + F_2$. Let $\ell_1 = \card{V(F_1)}$ and $\ell_2 = \card{V(F_2)}$. Similar to what we did in our algorithm for \bcc, we can argue that $F_1$  corresponds to a partition $Y = \set{y_1, y_2,\ldots, y_{s}}$  of $\ell_1$ and a multiset $\set{Y_1, Y_2,\ldots, Y_s}$,  and $F_2$ corresponds to a partition $X = \set{x_1, x_2,\ldots, x_{t}}$ of $\ell_2$ and a corresponding  multiset $\set{X_1, X_2,\ldots, X_t}$. Informally, each $Y_i = \set{y_{i1}, y_{i2},\ldots, y_{i q_i}}$ is a partition of $y_i$ and $Y_i$ corresponds to a component of $G$ of size exactly $y_i$ that gets split into components of sizes $y_{i1}, y_{i2},\ldots, y_{q_i}$ when we delete $F_1$. Similarly, each $X_i = \set{x_{i1}, x_{i2},\ldots, x_{i r_i}}$ corresponds to a component of $G' + F_2 = G \triangle F$ of size $x_i$ that was formed by  merging $r_i$  components of $G'$ of sizes $x_{i1}, x_{i2},\ldots, x_{i r_i}$. \lipicsEnd

 To formalise the above idea, we now define a deletion of cluster graph with respect to a multiset of partitions, similar to a completion of a cluster graph with respect to a multiset of partitions that we defined earlier. 

\subparagraph*{Deletion of a cluster graph w.r.t. a multiset of partitions.}
Consider a cluster graph $G$ and a positive integer $\ell$. Let $Y = \set{y_1, y_2,\ldots, y_s}$ be a partition of $\ell$. Recall that we say that $Y$ is $G$-valid if $G$ contains $s$ distinct components $G_1, G_2,\ldots, G_s$ such that $\card{G_j} = y_j$ for every $j \in [s]$. Now, consider $G, \ell$ and $X = \set{y_1, y_2,\ldots, y_s}$ such that $Y$ is $G$-valid. Let $Y_1, Y_2,\ldots, Y_s$ be such that for each $j \in [s]$, $Y_j = \set{y_{j1}, y_{j2}, \ldots, y_{j q_j}}$ is a partition of $y_j$. Since $Y$ is $G$-valid, $G$ contains $s$ distinct components $G_1, G_2,\ldots, G_s$ with $\card{G_j} = y_j$ for each $j \in [s]$. Fix such a set of $s$ components; recall that each $G_j$ is a clique. By a deletion of $G$ with respect to $(Y, \set{Y_1, Y_2,\ldots, Y_s})$, we mean the subgraph of $G$ obtained by deleting edges so that each $G_j$ is turned into a cluster graph with exactly $q_j$ connected components of sizes $y_{j1}, y_{j2},\ldots, q_{j q_j}$. Notice that if $q_j = 1$, i.e., $Y_j = \set{y_j}$, then $G_j$ itself is the required cluster graph and we do not need to delete any edges; otherwise $q_j \geq 2$, and we need to delete exactly $\sum_{\substack{i, i' \in [q_j] \\ i \neq i'}} y_{ji} y_{ji'}$ edges. In either case, we need to delete exactly $\spp(Y_j)$ edges to turn $G_j$ into a cluster graph with the required component sizes. Thus a deletion of $G$ with respect to $(Y, \set{Y_1, Y_2,\ldots, Y_s})$ requires the deletion of exactly $\sum_{j \in [s]} \spp(Y_j)$ edges. 

\subsubsection*{The Main Technical Lemma for \bce\ on cluster graphs}
\begin{lemma}
\label{lem:editing-fpt-correctness}
Consider an instance $(G, k, \eta)$ of \bce, where $G$ is a cluster graph. 
If $(G, k, \eta)$ is a yes-instance, then one of the following statements holds. 
\begin{enumerate}
    \item The instance $(G, k, \eta)$ is a yes-instance of \bcc. 
    \item The instance $(G, k, \eta)$ is a yes-instance of \bcd. 
    \item There exist integers $\ell_1, \ell_2 \in [2k] \setminus \set{1}$ and  partitions $Y = \set{y_1, y_2,\ldots, y_{s}}$ of $\ell_1$ and $X = \set{x_1, x_2,\ldots, x_t}$ of $\ell_2$ such that there exist  multisets $\set{Y_1, Y_2,\ldots, Y_s}$ and $\set{X_1, X_2,\ldots, X_t}$ of partitions, where  $Y_j$ is a partition of $y_j$ for every $j \in [s]$ and $X_i$ is a partition of $x_i$ for each $i \in [t]$, with the following properties: 
\begin{enumerate}
\item $Y$ is $G$-valid; 

\item $X'$ is $G'$-valid, where $X' = \bigcup_{i \in [t]} X_i$ and $G'$ is a deletion of $G$ with respect to $(Y, \set{Y_1, Y_2,\ldots, Y_s})$; 

\item a completion of $G'$ with respect to $(X, \set{X_1, X_2,\ldots, X_t})$ is $\eta$-balanced; and

\item $\sum_{j \in [s]}\spp(Y_j) + \sum_{i \in [t]}\spp(X_i) \leq k$.
\end{enumerate}
\end{enumerate}
\end{lemma}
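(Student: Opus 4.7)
The plan is to take a solution $F \subseteq \binom{V(G)}{2}$ for $(G, k, \eta)$ and decompose it into the deletion and completion parts: $F_1 = F \cap E(G)$ and $F_2 = F \setminus E(G)$, so that $G \triangle F = (G - F_1) + F_2$. The degenerate cases are immediate: if $F_1 = \emptyset$, then $F = F_2 \subseteq \binom{V(G)}{2} \setminus E(G)$ is a solution for $(G, k, \eta)$ viewed as an instance of \bcc, giving case~1; symmetrically, if $F_2 = \emptyset$, then $F = F_1 \subseteq E(G)$ is a solution for $(G, k, \eta)$ viewed as an instance of \bcd, giving case~2.

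For the main case, assume $F_1, F_2 \neq \emptyset$ and let $G' = G - F_1$. The key structural step---and the main obstacle of the proof---is to show that $G'$ is itself a cluster graph. Suppose toward contradiction that $G'$ contains an induced $P_3$ on vertices $u, v, w$ with $uv, vw \in E(G')$ and $uw \notin E(G')$. Then $uv, vw \in E(G) \setminus F_1$. If $uw \notin E(G)$, then $uvw$ is already an induced $P_3$ in $G$, contradicting the hypothesis that $G$ is a cluster graph. Otherwise $uw \in E(G)$, which together with $uw \notin E(G')$ forces $uw \in F_1 \subseteq F$; but then in $G \triangle F$ the edges $uv, vw$ still appear (they belong to $E(G)$ and lie outside $F$) while $uw$ is toggled off, producing an induced $P_3$ in $G \triangle F$---contradicting that $G \triangle F$ is a cluster graph. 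Hence $G'$ is a cluster graph.

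The rest mirrors, in two parallel applications, the partition-based analysis in the proof of Lemma~\ref{lem:completion-fpt-correctness}. For $F_1$: let $H_1, \ldots, H_s$ be the components of $G$ whose edges are touched by $F_1$; since $G$ and $G'$ are both cluster graphs and all neighbours of $V(H_j)$ in $G$ lie inside $V(H_j)$, $F_1$ partitions each $H_j$ into the cluster-components $H_{j,1}, \ldots, H_{j, q_j}$ of $G'$ contained in $V(H_j)$, with $q_j \geq 2$. Setting $y_j = \card{H_j}$ and $Y_j = \set{\card{H_{j,1}}, \ldots, \card{H_{j, q_j}}}$ yields a partition $Y = \set{y_1, \ldots, y_s}$ of $\ell_1 = \card{V(F_1)}$ that is $G$-valid, with $\sum_{j \in [s]}\spp(Y_j) = \card{F_1}$ by direct count of the deleted edges. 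For $F_2$: let $C_1, \ldots, C_t$ be the components of $G \triangle F$ that $F_2$ modifies; each $C_i$ arises by merging certain components $C_{i,1}, \ldots, C_{i,r_i}$ of $G'$. Setting $x_i = \card{C_i}$ and $X_i = \set{\card{C_{i,1}}, \ldots, \card{C_{i, r_i}}}$ gives a partition $X = \set{x_1, \ldots, x_t}$ of $\ell_2 = \card{V(F_2)}$, with $X' = \bigcup_{i \in [t]} X_i$ being $G'$-valid, $\sum_{i \in [t]} \spp(X_i) = \card{F_2}$, and $G \triangle F$ itself serving as a completion of $G'$ with respect to $(X, \set{X_1, \ldots, X_t})$.

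Verifying the four required properties is then immediate: $Y$ is $G$-valid and $X'$ is $G'$-valid by construction; the constructed completion is $\eta$-balanced because it equals $G \triangle F$; and summing the two cost identities gives $\sum_{j \in [s]}\spp(Y_j) + \sum_{i \in [t]}\spp(X_i) = \card{F_1} + \card{F_2} = \card{F} \leq k$. Finally, $\ell_1, \ell_2 \in [2k] \setminus \set{1}$ because $F_1, F_2 \neq \emptyset$ forces $\ell_1, \ell_2 \geq 2$, while $\card{F} \leq k$ forces $\ell_1, \ell_2 \leq 2k$. Only the cluster-graph property of $G - F_1$ genuinely exploits that the hypothesis is about \bce\ rather than \bcc\ or \bcd; once that intermediate structure is secured, the two halves of the decomposition plug directly into the existing \bcc-style machinery.
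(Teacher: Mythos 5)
Your proposal is correct and follows essentially the same route as the paper's (sketched) proof: split a solution $F$ into the deleted edges $F_1 = F \cap E(G)$ and added non-edges $F_2 = F \setminus E(G)$, handle the degenerate cases via \bcc\ and \bcd, and read off $(Y,\set{Y_1,\ldots,Y_s})$ from how $F_1$ splits components of $G$ and $(X,\set{X_1,\ldots,X_t})$ from how $F_2$ merges components of $G' = G - F_1$, with the cost identities $\sum_j \spp(Y_j) = \card{F_1}$ and $\sum_i \spp(X_i) = \card{F_2}$. The only difference is that you explicitly verify the intermediate claim that $G - F_1$ is a cluster graph (via the induced-$P_3$ argument), which the paper's proof sketch asserts without justification---a welcome addition rather than a deviation.
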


\begin{proof}[Proof Sketch]
Assume that $(G, k, \eta)$ is a yes-instance of \bce, where $G$ is a cluster graph.  
If $(G, k, \eta)$ is a yes-instance of \bcc\ or a yes-instance of \bcd, then the lemma trivially holds. 
So, assume that neither of those is true. 
Let $F \subseteq \binom{V(G)}{2}$ be a solution for $(G, k, \eta)$, and let $F_1 = F \cap E(G)$ and $F_2 = F \setminus F_1$. That is, $F = F_1 \cup F_2$ and in particular, $F_2 = F \cap (\binom{V(G)}{2} \setminus E(G))$. 
 
Let $\ell_1 = \card{V(F_1)}$ and $\ell_2 = \card{V(F_2)}$; that is, $\ell_1$ is the number of vertices that $F_1$ modifies and $\ell_2$ is the number of vertices that $F_2$ modifies. 
As $G$ is not a yes-instance of \bcd, $F_1 \neq \emptyset$, and as $(G, k, \eta)$ is not a yes-instance of \bcc, $F_2 \neq \emptyset$, and hence $\ell_1 = \card{V(F_1)} \geq 2$ and $\ell_2 = \card{V(F_2)} \geq 2$ 
Also, as $\card{F_1} \leq \card{F} \leq k$ and $\card{F_2} \leq \card{F} \leq k$, we have $\ell_1  = \card{V(F_1)} \leq 2k$ and  $\ell_2  = \card{V(F_2)} \leq 2k$. Thus $\ell_1, \ell_2 \in [2k] \setminus \set{1}$. 

We may think of the cluster graph $G \triangle F$ as obtained from $G$ by a two-step process: first by deleting from $G$ the edges in $F_1$ and then by adding to $G - F_1$ the edges in $F_2$. Let $G' = G - F_1$. Then $G'$ is a cluster graph, and we have  $G \triangle F = (G - F_1) + F_2 = G' + F_2$. 
We can argue that $F_1$  corresponds to a partition $Y = \set{y_1, y_2,\ldots, y_{s}}$  of $\ell_1$ and the corresponding multiset $\set{Y_1, Y_2,\ldots, Y_s}$,  and $F_2$ corresponds to a partition $X = \set{x_1, x_2,\ldots, x_{t}}$ of $\ell_2$ and the corresponding  multiset $\set{X_1, X_2,\ldots, X_t}$ of partitions that satisfy the properties required by the lemma. 
Intuitively, the  partition $Y_j = \set{y_{j 1}, y_{j 2},\ldots, y_{j q_j}}$ of $y_j$ corresponds to a component of $G$ of size exactly $y_j$ that was split into components of sizes $y_{j 1}, y_{j 2},\ldots, y_{j q_j}$ by deleting the edges in $F_1$. Consequently, $G'$ is a deletion of $G$ with respect to $(Y, \set{Y_1, Y_2,\ldots, Y_s})$. Each $X_i = \set{x_{i1}, x_{i2},\ldots, x_{ir_i}}$ corresponds to the components of $G'$ of sizes $x_{i1}, x_{i2},\ldots, x_{ir_i}$ that were merged into a single component of $G' + F_2 = G \triangle F$ of size $x_i$. 
That is, $G' + F_2 = G \triangle F$ is a completion of $G'$ with respect to $(X, \set{X_1, X_2,\ldots, X_t})$. 
These arguments will also imply  that  $\sum_{j \in [s]} \spp(Y_j) = \card{F_1}$ and $\sum_{i \in [t]} \spp(X_i) = \card{F_2}$, and thus $\sum_{j \in [s]} \spp(Y_j) + \sum_{i \in [t]} \spp(X_i) = \card{F_1} + \card{F_2} = \card{F} \leq k$. 
\end{proof}

\subsubsection*{Algorithm for BCE on cluster graphs}

We now design an algorithm for \bce\ on cluster graphs, which we call \algobcec. The correctness of the algorithm follows from Lemma~\ref{lem:editing-fpt-correctness}. 

\subparagraph*{\algobcec.}Given an instance $(G, k, \eta)$ of \bce\ as input, where $G$ is a cluster graph, we proceed as follows.  
\begin{description}
\item[Step 1.] If $(G, k, \eta)$ is a yes-instance of \bcc\ or a yes-instance of \bcd, we return that $(G, k, \eta)$ is a yes-instance of \bce, and terminate. To do this, we use the algorithms of Theorems~\ref{thm:completion-fpt} and \ref{thm:deletion-fpt}. 

\item[Step 2.] If $k \leq 0$, then we return that $(G, k, \eta)$ is a no-instance, and terminate. 

\item[Step 3.] We use the algorithm of Proposition~\ref{prop:partition-number} to generate all partitions of $\ell$ for all $\ell \in [2k]$.

\item[Step 4.] For every choice of  $\ell_1, \ell_2 \in [2k] \setminus \set{1}$, we do as follows.

\item[Step 4.1.] For each choice of partitions $Y$ of $\ell_1$ and $X$ of $\ell_2$, we do as follows. 

\item[Step 4.1.1.] Let $Y = \set{y_1, y_2,\ldots, y_s}$ and $X = \set{x_1, x_2,\ldots, x_t}$. For each choice of multisets $Y = \set{Y_1, Y_2,\ldots, Y_s}$ and $X = \set{X_1, X_2,\ldots, X_t}$, where $Y_j$ is a partition of $y_j$ for every $j \in [s]$ and $X_i$ is a partition of $x_i$ for every $i \in [t]$, we do as follows. 

\item[Step 4.1.1.1.] If $Y$ is $G$-valid, then we construct the graph $G'$, where $G'$ is a deletion of $G$ with respect to $(Y, \set{Y_1, Y_2,\ldots, Y_s})$, and we consider the partition $X' = \bigcup_{i \in [t]} X_i$ of $\ell_2$. If $X'$ is $G'$-valid, a completion of $G'$ with respect to $(X, \set{X_1, X_2,\ldots, X_t})$ is $\eta$-balanced and $\sum_{j \in [s]}\spp(Y_j) + \sum_{i \in [t]}\spp(X_i) \leq k$, then we return that $(G, k, \eta)$ is a yes-instance, and terminate. 

\item[Step 5.] We return that $(G, k, \eta)$ is a no-instance, and terminate. 
\end{description}

We now analyse the running time of \algobcec. 

\begin{lemma}
\label{lem:editing-fpt-runtime}
\algobcec\ runs in time \algobccruntime. 
\end{lemma}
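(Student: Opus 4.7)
The plan is to bound the running time contribution of each of the five steps of \algobcec\ separately and show that each is either polynomial in $n$ or at most $2^{\cO(k)}n^{\cO(1)}$. Steps 2 and 5 are trivial constant-time/polynomial-time checks. Step 1 invokes the algorithms of Theorems~\ref{thm:completion-fpt} and (the as-yet-unstated) Theorem~\ref{thm:deletion-fpt}, each of which runs in \algobccruntime\ by assumption. Step 3 invokes Proposition~\ref{prop:partition-number} for each $\ell \in [2k]$, which takes total time $\sum_{\ell = 1}^{2k} 2^{\cO(\sqrt{\ell})} = 2^{\cO(\sqrt{k})} \cdot k = 2^{\cO(k)}$; the enumerated lists can be stored for later use in Step 4.

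The heart of the analysis is Step 4, which is a quadruple loop over $(\ell_1, \ell_2)$, then over $(Y, X)$, then over $(\set{Y_1,\ldots,Y_s}, \set{X_1,\ldots,X_t})$, and finally a polynomial-time check at the innermost body. There are at most $(2k)^2 = \cO(k^2)$ choices for the pair $(\ell_1, \ell_2)$, which contributes only a polynomial factor. For each fixed $\ell_1$, the number of choices for the pair $(Y, \set{Y_1,\ldots, Y_s})$ is $2^{\cO(\ell_1)} = 2^{\cO(k)}$ by Lemma~\ref{lem:completion-bound}, and analogously for $(X, \set{X_1,\ldots, X_t})$ given $\ell_2$. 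Multiplying the two, the total number of iterations of Step 4.1.1.1 is bounded by $\cO(k^2) \cdot 2^{\cO(\ell_1)} \cdot 2^{\cO(\ell_2)} = 2^{\cO(k)}$.

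It remains to argue that each single execution of Step 4.1.1.1 runs in polynomial time. Checking whether $Y$ is $G$-valid, constructing a deletion $G'$ of $G$ with respect to $(Y, \set{Y_1,\ldots,Y_s})$, checking whether $X'$ is $G'$-valid, constructing a completion of $G'$ with respect to $(X, \set{X_1,\ldots,X_t})$, testing the resulting graph for $\eta$-balancedness, and computing $\sum_j \spp(Y_j) + \sum_i \spp(X_i)$ are all straightforward; these are direct analogues of the claims in Observation~\ref{obs:completion-fpt-runtime}, applied once for deletion and once for completion. Combining all the above bounds gives an overall running time of $2^{\cO(k)} n^{\cO(1)} = \algobccruntime$, as required. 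The only place where care is needed is to make sure that the polynomial factor coming from constructing $G'$ and the completion graph, and from checking $G$- and $G'$-validity, does not inflate the exponential term; this is immediate since all these operations touch at most $n^{\cO(1)}$ bits of the instance.
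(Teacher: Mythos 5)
Your proposal is correct and follows essentially the same route as the paper's own proof: polynomial time for Steps 2, 5 and each execution of Step 4.1.1.1, Theorems~\ref{thm:completion-fpt} and \ref{thm:deletion-fpt} for Step 1, Proposition~\ref{prop:partition-number} for Step 3, a polynomial bound on the choices of $(\ell_1,\ell_2)$, and Lemma~\ref{lem:completion-bound} to bound the choices of the partition pairs by $2^{\cO(\ell_1)}\cdot 2^{\cO(\ell_2)} = 2^{\cO(k)}$. The only cosmetic difference is that the paper separately notes the $2^{\cO(\sqrt{k})}$ bound on the choices of $(Y,X)$ before applying Lemma~\ref{lem:completion-bound}, which your argument subsumes.
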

\begin{proof}
Observe that Steps 2 and 5 take only polynomial time. 
So does each execution of Step 4.1.1.1. 
By Theorems~\ref{thm:completion-fpt} and \ref{thm:deletion-fpt}, Step 1 takes time \algobccruntime. By Proposition~\ref{prop:partition-number}, Step 3 takes time $2^{\cO(\sqrt{k})}$. 
Now, in Step 4, we only need to go over at most $2k - 1$ choices of $\ell_1$ and at most $2k - 1$ choices of $\ell_2$; and we can assume without loss of generality that $k \leq \binom{n}{2}$. 
That is, there are only $n^{\cO(1)}$ choices for the pair $(\ell_1, \ell_2)$. In Step 4.1, we go over all possible choices of $(Y, X)$. As $Y$ is a partition of $\ell$, by Proposition~\ref{prop:partition-number}, $Y$ has $p(\ell_1) = 2^{\cO(\sqrt{\ell_1})}$ choices. Similarly,$X$ has $p(\ell_2) = 2^{\cO(\sqrt{\ell_2})}$ choices. Thus, there are  $2^{\cO(\sqrt{\ell_1})} \cdot 2^{\cO(\sqrt{\ell_2})} = 2^{\cO(\sqrt{\ell_1} + \sqrt{\ell_2})} = 2^{\cO(\sqrt{k})}$ choices for the pair $(Y, X)$.  
 Finally, by Lemma~\ref{lem:completion-bound}, for each choice of $(\ell_1, \ell_2)$, the number of choices for $((Y, \set{Y_1, Y_2,\ldots, Y_s}), (X, \set{X_1, X_2,\ldots, X_t}))$  is at most $2^{\cO(\ell_1)} \cdot 2^{\cO(\ell_2)} = 2^{\cO(k)}$. Thus, the overall running time of the algorithm is bounded by \algobccruntime.
\end{proof}
This completes the proof of Theorem~\ref{thm:editing-fpt-cluster}.

\subsection{Faster Algorithms for BCC and BCE}
\label{sec:fast} 
In Section~\ref{sec:fpt-completion-editing}, we designed algorithms for \bcc\ and \bce\ that run in time \algobccruntime. By refining the ideas that we used for these algorithms, 
we now design algorithms for \bcc\ and \bce\ that run in time \fastbccruntime.

\subparagraph*{The high level idea.} We use \bcc\ to illustrate the idea behind our algorithms; this idea extends to \bce\ as well.  %
Consider a yes-instance $(G, k, \eta)$ of \bcc, and suppose $F \subseteq \binom{V(G)}{2} \setminus E(G)$ is the solution that we are looking for; and $G_1, G_2,\ldots, G_t$ are the components of the $\eta$-balanced cluster graph $G + F$ that were modified by $F$, and for each $i \in [t]$, $G_{i1}, G_{i2},\ldots, G_{ir_i}$ are the components of $G$ that were merged together to form the component $G_i$ of $G$.  In our previous algorithm for \bcc\ (\algobcc), we guessed $\card{V(F)}$; then we guessed $X = \set{\card{G_1}, \card{G_2},\ldots, \card{G_t}}$, and for each $i \in [t]$, we guessed $X_i = \set{\card{G_{i1}}, \card{G_{i2}}, \ldots, \card{G_{i r_i}}}$. As $X$ is a partition of $\card{V(F)} \in [2k]$, and as the number of partitions of any  integer $\ell$ is $2^{o(\ell)}$ (Proposition~\ref{prop:partition-number}),  we can guess $X$ in time $2^{o(k)}$. But  guessing each $X_i$ is what led to the $2^{\cO(k)}$ factor in the running time (Lemma~\ref{lem:completion-bound}); we now design a faster algorithm by avoiding this time-consuming guess. To do this, we guess $\card{V(F)}$ and $X$ as before, and we guess $X' = \bigcup_{i \in [t]} X_i$ instead of guessing each $X_i$ separately. Notice that as $X$ and $X'$ are partitions of $\card{V(F)} \in [2k]$, we only make $2k \cdot 2^{\cO(\sqrt{k})} \cdot 2^{\cO(\sqrt{k})} = 2^{o(k)}$ guesses. Once we make our guesses, we check in time $2^{2k} = 4^k$ if we can merge the components by adding at most $k$ edges; this is the more non-trivial step of our algorithm. 

 To check if we can merge the components, we think of the components $G_1, G_2,\ldots, G_t$ of $G + F$ as bins, and for all $i \in [t]$ the components $G_{i1}, G_{i2},\ldots, G_{ir_i}$  of $G$ as balls, where the size a ball (or the capacity of a bin) is the size of the corresponding component. And in the balls-and-bins parlance, merging the components of $G$ to form the components of $G + F$ simply means assigning balls to bins so that the total size of balls assigned to each bin does not exceed the capacity of that bin. Also, notice that to assign a ball of size $\card{G_{ij}}$ to a bin of capacity $\card{G_i}$, we need to add exactly $\card{G_{ij}} (\card{G_i} -  \card{G_{ij}})$ edges incident with $V(G_{ij})$; we think of this quantity as the cost of assigning this ball to this bin. So guessing the partitions $X$ and $X'$ means guessing the capacities of the bins and the sizes of the balls. And for each $(X, X')$, we simply need to check if we can assign the balls to the bins subject to capacity and cost constraints; we discuss below how we can do this in time $4^k n^{\cO(1)}$. Notice that as $X$ and $X'$ are partitions of $\card{V(F)} \leq 2k$, we have $\card{X}, \card{X'} \leq 2k$; that is, we have at most $2k$ balls and at most $2k$ bins. \lipicsEnd

 To exploit these ideas and design our algorithms for \bcc\ and \bce, we first define an auxiliary problem called \binbfull\ (\binb) and design an algorithm for \binb. We then show that we can reduce \bcc\ and \bce\ to \binb. 

\subsubsection{An Algorithm for  \binbfull}
 Consider a set of $s$ balls (indexed by $1, 2,\ldots, s$) with sizes $b_1, b_2,\ldots, b_s \in \mathbb{N}$, a set of $t$ bins (indexed by $1, 2,\ldots, t$) with capacities $x_1, x_2, \ldots, x_t \in \mathbb{N}$. We are interested in assigning these balls to the bins in such a way that for each bin $j$, the sum of the sizes of balls assigned to bin $j$ does not exceed the capacity of that bin. Formally, consider $B \subseteq [s]$. We define the volume of $B$, denoted by $\vol(B)$, to be the sum of the sizes of the balls in $B$, i.e., $\vol(B) = \sum_{i \in B} b_i$ if $B \neq \emptyset$ and  $\vol(B) = 0$ if $B = \emptyset$.  
By an assignment of the balls in $B$ to the $t$ bins (or simply an assignment of $B$, for short), we mean a function $\fn{\beta}{B}{[t]}$ such that $\vol(\beta^{-1}(j)) \leq x_j$ for every $j \in [t]$. %
If $\beta(i) = j$ for $i \in [s]$ and $j \in [t]$, then we say that  ball $i$ is assigned to bin $j$ (or that the $i$th ball is assigned to the $j$th bin) under the assignment $\beta$. %
Now, consider a cost function $\fn{\cost}{[s] \times [t]}{\mathbb{N}}$, where for every $i \in [s]$ and $j \in [t]$, $\cost(i, j)$ is the cost of assigning the $i$th ball to the $j$th bin. 
For $B' \subseteq B \subseteq [s]$, and an assignment $\fn{\beta}{B}{[t]}$, the cost of $B'$ under the assignment $\beta$ is $\sum_{i \in B'} \cost(i, \beta(i))$, %
and we denote this quantity by $\cost_{\beta}(B')$; when the assignment $\beta$ is clear from the context, we may omit the subscript and simply write $\cost(B')$. 

We now define the following problem. 

\defproblem{\binbfull\ (\binb)}{A set of $s$ balls (indexed by $1, 2,\ldots, s$) with sizes $b_1, b_2,\ldots, b_s \in \mathbb{N}$, a set of $t$ bins (indexed by $1, 2,\ldots, t$) with capacities $x_1, x_2, \ldots, x_t \in \mathbb{N}$, a budget $W \in \mathbb{N}$, and a cost function $\fn{\cost}{[s] \times [t]}{\mathbb{N} \cup \set{0}}$ where $\cost(i, j)$ is the cost of assigning the $i$th ball to the $j$th bin for every $i \in [s]$ and $j \in [t]$.}{Decide if there exists an assignment of the balls in $[s]$ to the $t$ bins such that the cost of the assignment is at most $W$.}
Observe that we can design a straightforward dynamic programming algorithm for \binb\ that runs in time $\cO^{\star}(3^s)$ (assuming $\cost(\cdot, \cdot)$ and $W$ are encoded in unary):  for each subset $B \subseteq [s]$, $j \in [t]$ and $q \in [W]_0$, we simply need to check if there is an assignment of $B$ to the first $j$ bins with cost at most $q$.  To execute this, we need to go over all subsets $B$ of $[s]$, and check if there exist a subset $B' \subseteq B$ and $q' \leq q$ such that we can assign $B'$ to the $j$th bin with cost at most $q'$ and assign $B \setminus B'$ to the first $j - 1$ bins with cost at most $q - q'$, which will take time $\cO^{\star}(\sum_{B \subseteq [s]} 2^{\card{B}}) = \cO^{\star}(\sum_{i = 0}^s \binom{s}{i} 2^i) = \cO^{\star}(3^s)$. We now show that we can execute this idea in time $\cO^{\star}(2^s)$ by using fast polynomial multiplication.\footnote{We can also use the fast subset convolution (FSC) algorithm of Bj{\"o}rklund et al.~\cite{DBLP:conf/stoc/BjorklundHKK07} to accelerate this DP to run in time $\cO^{\star}(2^s)$. But instead of using FSC, we resort to fast Fourier transform (FFT)-based polynomial multiplication to achieve the same running time. While both tools guarantee the running time of $\cO^{\star}(2^s)$, FFT is arguably preferable to FSC when it comes to implementation.  
As noted by Cygan and Pilipczuk~\cite{DBLP:journals/tcs/CyganP10}, FFT is widely known in the computer science community, with a number of libraries readily available for its implementation~\cite{gambron2020comparison}. Also, FSC relies on M{\"o}bius transforms, which requires repeated additions and subtractions, and this can lead to large rounding errors in floating point arithmetic~\cite{doi:10.1080/03610926.2014.894070}.} Specifically, we prove the following theorem.\footnote{\binbfull\ is a special case of the the {\sc Generalised Assignment Problem (GAP)}, which is often phrased in terms of assigning jobs to machines. While GAP has been studied intensively in the approximation algorithms framework, where the typical goals have been  minimising the cost and makespan (the maximum time taken by any machine to complete the jobs assigned to it), we are unaware of an exact exponential time algorithm for GAP, along the lines of our algorithm in Theorem~\ref{thm:binb}. We refer the reader to Williamson and Shmoys~\cite[Chapter 11]{DBLP:books/daglib/0030297} for approximation algorithms for GAP and its variants; and to Kundakcioglu and Alizamir~\cite{DBLP:reference/opt/KundakciogluA09} for an overview of literature on GAP.} 
\begin{theorem}
\label{thm:binb}
\binbfull\ admits an algorithm that runs in time \algobinbruntime. 
\end{theorem}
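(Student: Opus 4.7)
My plan is to encode \binbfull\ as a polynomial-valued dynamic programming over subsets of balls and to evaluate the resulting recurrence with the fast subset convolution of Björklund, Husfeldt, Kaski and Koivisto. For each bin $j \in [t]$ I would define $A_j \colon 2^{[s]} \to \mathbb{Z}[z]/(z^{W+1})$ by $A_j(B) = z^{\sum_{i \in B} \cost(i, j)}$ whenever $\vol(B) \leq x_j$, and $A_j(B) = 0$ otherwise; intuitively, $A_j(B)$ records the unique cost of placing precisely the balls in $B$ into bin $j$, when that placement is feasible. Then define $G_j \colon 2^{[s]} \to \mathbb{Z}[z]/(z^{W+1})$ by $G_0(\emptyset) = 1$, $G_0(B) = 0$ for $B \neq \emptyset$, and
\[
G_j(B) \;=\; \sum_{B' \subseteq B} G_{j-1}(B') \cdot A_j(B \setminus B') \quad \text{for } j \geq 1.
\]
A straightforward induction on $j$ shows that the coefficient of $z^q$ in $G_j(B)$ equals the number of valid assignments of precisely the balls in $B$ to the first $j$ bins having total cost exactly $q$. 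Consequently, the given instance is a yes-instance iff $G_t([s])$ has a nonzero coefficient at some $z^q$ with $0 \leq q \leq W$, which can be read off in $\cO(W)$ time once $G_t([s])$ is known.

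The crux of the plan is that each step $G_{j-1} \mapsto G_j$ is precisely a subset convolution over the coefficient ring $R = \mathbb{Z}[z]/(z^{W+1})$, since the pair $(B', B \setminus B')$ ranges over ordered disjoint partitions of $B$. Using the rank-indexed zeta-transform / Möbius-inversion technique, a single subset convolution over $R$ can be carried out with $\cO(2^s s^2)$ ring operations (the dominating contribution being the $\cO(s^2)$ pointwise products in the rank variable per subset). In $R$, schoolbook polynomial multiplication costs $\cO(W^2)$ and addition costs $\cO(W)$, so one subset convolution costs $\cO(2^s s^2 W^2)$. Performing this for each of the $t$ bins, together with the (cheaper) preprocessing of all values $A_j(B)$ (the quantities $\vol(B)$ and $\sum_{i \in B}\cost(i, j)$ can be tabulated across all subsets $B$ in total time $\cO(2^s)$ each by Gray-code traversal of $2^{[s]}$, with an additional $\cO(s)$ factor to pay for the rank-indexed bookkeeping per bin), yields the overall bound \algobinbruntime.

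The main obstacle is not the combinatorics of the DP, which is routine, but handling the polynomial coefficients rigorously: the coefficient of $z^q$ in $G_j(B)$ is a non-negative integer count of assignments, which a priori could be as large as $t^s$ and thus require $\Theta(s \log t)$ bits. This is still polynomial in the instance size, so integer arithmetic can be absorbed into the $\cO(\cdot)$ notation; alternatively, since we only need to detect whether at least one assignment exists, one may work modulo a uniformly random prime of polylogarithmic bitsize and obtain a one-sided-error Monte Carlo algorithm with the same asymptotic running time. Either way, combining correctness of the recurrence with the subset-convolution runtime bound yields Theorem~\ref{thm:binb}.
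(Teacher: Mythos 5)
Your proposal is correct, but it reaches Theorem~\ref{thm:binb} by a genuinely different route than the paper. The paper encodes a subset $B \subseteq [s]$ as the exponent $\chi(B)$ of a single indeterminate, indexes its DP by the triple (number of balls $i$, bin prefix $j$, budget $q$), multiplies polynomials of degree up to $2^s$ with the $\cO(d \log d)$ algorithm of Proposition~\ref{prop:polynomial-multiplication}, enforces disjointness via the Hamming-weight projection $\ca{H}_i(\cdot)$ together with Proposition~\ref{prop:disjoint-monomial}, and keeps every coefficient equal to $1$ with the representative-polynomial operator $\ca{R}(\cdot)$, so it is a pure feasibility DP with no coefficient growth. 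You instead run a subset-indexed DP with values in $\mathbb{Z}[z]/(z^{W+1})$ and realise each bin step as one fast subset convolution (Bj\"orklund--Husfeldt--Kaski--Koivisto), which makes disjointness automatic and costs $\cO(2^s s^2)$ ring operations per bin, i.e.\ $\cO(2^s s^2 t W^2)$ overall, matching (indeed slightly beating) \algobinbruntime; your correctness argument (induction on $j$, non-negativity of coefficients making the truncation modulo $z^{W+1}$ harmless, $A_j(\emptyset) = 1$ permitting empty bins) is sound, and your version additionally solves the counting variant. The one point to tidy is coefficient growth: rather than passing to a random prime (which would make the algorithm Monte Carlo, weaker than the paper's deterministic statement) or absorbing $\Theta(s \log t)$-bit arithmetic into the $\cO$-notation, you can simply replace every nonzero coefficient of $G_j$ by $1$ after each bin's convolution --- legitimate because all coefficients are non-negative, so the support of $G_j$ is unchanged and only the support is needed downstream --- which is exactly the role the paper's $\ca{R}(\cdot)$ operator plays and keeps the stored numbers as small as in the paper's own analysis.
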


To prove Theorem~\ref{thm:binb}, we rely on the fact that we can efficiently multiply two polynomials.  
\begin{proposition}[\cite{DBLP:conf/issac/Moenck76}]
\label{prop:polynomial-multiplication}
The product of two polynomials of degree $d$ can be computed in time $\cO(d \log d)$. 
\end{proposition}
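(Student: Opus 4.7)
The plan is to prove this classical result via the Fast Fourier Transform (FFT). Given polynomials $P$ and $Q$ of degree at most $d$, their product $PQ$ has degree at most $2d$ and is therefore uniquely determined by its values at any $2d+1$ distinct points. The overall strategy is evaluate--multiply--interpolate: evaluate $P$ and $Q$ at a carefully chosen set of $n$ points, where $n$ is the smallest power of two exceeding $2d$ (so $n = \cO(d)$); compute the values of $PQ$ at these points by $n$ pointwise scalar multiplications; and then interpolate the coefficients of $PQ$ from these values.

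The judicious choice of evaluation points---the $n$-th roots of unity $1, \omega, \omega^2, \ldots, \omega^{n-1}$, where $\omega = e^{2\pi i /n}$---is what enables both evaluation and interpolation in $\cO(n \log n)$ time. First I would describe the FFT for evaluation. Write $P(x) = P_{\text{e}}(x^2) + x \cdot P_{\text{o}}(x^2)$, where $P_{\text{e}}$ and $P_{\text{o}}$ collect the even-indexed and odd-indexed coefficients of $P$, respectively, each being a polynomial of degree less than $n/2$. Observe that for $0 \le k \le n-1$, we have $P(\omega^k) = P_{\text{e}}(\omega^{2k}) + \omega^k P_{\text{o}}(\omega^{2k})$, and crucially $\omega^{2k}$ ranges over only the $(n/2)$-th roots of unity as $k$ ranges over $\set{0, 1,\ldots, n-1}$. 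Thus, recursively evaluating $P_{\text{e}}$ and $P_{\text{o}}$ at the $(n/2)$-th roots of unity and then combining the results in $\cO(n)$ time yields the recurrence $T(n) = 2 T(n/2) + \cO(n)$, whose solution is $T(n) = \cO(n \log n)$.

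Next I would handle interpolation. The evaluation-at-roots-of-unity map can be expressed as multiplication by a Vandermonde matrix $V$ with entries $V_{jk} = \omega^{jk}$. A direct calculation using $\sum_{k = 0}^{n - 1} \omega^{(j - j')k} = n \cdot \mathbf{1}[j = j']$ shows that $V^{-1}$ has entries $(1/n) \omega^{-jk}$, so interpolation is just the same FFT algorithm run with $\omega$ replaced by its inverse $\omega^{-1}$, followed by scaling every coefficient by $1/n$; this is again $\cO(n \log n)$ time. Combining the two forward FFTs on $P$ and $Q$, the $n$ pointwise multiplications (which take $\cO(n)$ time), and the single inverse FFT yields the required $\cO(n \log n) = \cO(d \log d)$ bound.

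The main obstacle in formalising the argument is the algebraic setting: the Moenck reference~\cite{DBLP:conf/issac/Moenck76} addresses multiplication over fairly general commutative rings which may not contain true primitive $n$-th roots of unity, and this is handled by working in appropriate ring extensions or by using number-theoretic transforms over finite fields of suitable characteristic in place of the complex roots of unity. For the use in our algorithm for \binb, where coefficients are non-negative integers bounded polynomially in the input, the straightforward complex-FFT version above is sufficient, provided one uses enough bits of precision in each FFT arithmetic operation to recover the integer coefficients of $PQ$ exactly by rounding at the end.
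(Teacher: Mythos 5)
The paper does not prove this statement at all: Proposition~\ref{prop:polynomial-multiplication} is imported as a black box, with the citation to Moenck standing in for the proof, and the result is then used only inside the algorithm of Theorem~\ref{thm:binb}. Your write-up is therefore not comparable to a proof in the paper, but it is a correct and complete rendering of the standard evaluate--multiply--interpolate argument: padding to the smallest power of two $n > 2d$, the even/odd splitting giving $T(n) = 2T(n/2) + \cO(n)$, and inversion of the DFT via the identity $\sum_{k=0}^{n-1}\omega^{(j-j')k} = n\cdot\mathbf{1}[j=j']$ are exactly the right ingredients, and your closing caveat correctly identifies the only real subtlety, namely that the $\cO(d\log d)$ bound counts ring/arithmetic operations and that an exact implementation over the integers needs either a ring with suitable roots of unity (Moenck's setting, or a number-theoretic transform) or a complex FFT run with enough precision to round correctly. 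For the way the proposition is used in this paper that caveat is harmless: the polynomials in Theorem~\ref{thm:binb} have degree at most $2^s$ and, thanks to the representative-polynomial operator $\ca{R}(\cdot)$, coefficients that stay bounded by roughly $2^s$ before each truncation, so $\cO(s)$ bits of precision (or a prime modulus of that size) suffice and are absorbed into the polynomial factors of the stated running time \algobinbruntime. In short: the proposal is correct; it simply supplies the textbook FFT proof for a fact the paper takes on citation.
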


We now introduce the following notation and terminology, which we borrow from~\cite{DBLP:journals/tcs/CyganP10,DBLP:conf/ijcai/Gupta00T21}.  
\subparagraph*{Notation and Terminology.} Consider $B \subseteq [s]$. The characteristic vector of $B$, denoted by $\chi(B)$,  is the $s$-length binary string whose $i$th bit is $1$ if and only if $i \in B$. The Hamming weight of a binary string is the number of $1$s in that string. Notice that we can interpret a binary string as the binary encoding of an integer and vice-versa. For example, we interpret the string $(1, 0, 0, 1, 0)$ to be the number 18 as the binary encoding of 18 is exactly $10010$. For $B \subseteq [s]$, consider the monomial $z^{\chi(B)}$, where $z$ is an indeterminate and $\chi(B)$ is the characteristic vector of $B$, interpreted as the binary encoding of an integer. The Hamming weight of a monomial $z^{i}$ is the Hamming weight of the binary encoding (interpreted as a binary string) of $i$. Consider a polynomial $P(z)$ and an integer $i$. The Hamming projection of $P(z)$ to $i$, denoted by $\ca{H}_{i}(P(z))$, is the sum of the monomials of $P$ of Hamming weight exactly $i$. That is, $\ca{H}_{i}(P(z))$ is the polynomial obtained from $P(z)$ by removing all those monomials of Hamming weight not equal to $i$. In particular, if $P(z)$ has no monomial of Hamming weight exactly $i$, then $\ca{H}_{i}(P(z))$ is the zero polynomial. Finally, we define the representative polynomial of $P(z)$, denoted by $\ca{R}(P(z))$, to be the polynomial obtained from $P(z)$ by replacing the coefficient of each (non-zero) monomial with $1$. For example, consider the polynomial $P(z) = z^8 + 2z^6 + z^5 + 3z^2 + 9$. Then $\ca{H}_{2}(P(z)) = 2z^6 + z^5$ as the binary encodings of 6 and 5 respectively are 0110 and 0101, and they both have Hamming weight 2, whereas the binary encodings of $8, 2$ and $0$ (the monomial $9$ is $9z^0$) respectively are 1000, 0010 and 0000, none of which has Hamming weight 2. And $\ca{R}(P(z)) = z^8 + z^6 + z^5 + z^2 + 1$. 

\subparagraph*{Outline of our algorithm for \binb.}
\begin{sloppypar}
Consider an instance of \binb\ with $s$ balls, $t$ bins and budget $W$. The idea is to encode all possible assignments of balls in $[s]$ to the $t$ bins as a polynomial, each monomial of which will correspond to an  assignment. More specifically, for each $i \in [s]_0, j \in [t], q \in [W]_0$, we define a polynomial $P_{i, j, q}(z)$, where each monomial of $P_{i, j, q}(z)$ will  correspond to an  assignment of a subset of exactly $i$ balls to the first $j$ bins so that the total cost of this assignment is at most $q$ (if at least one such  assignment exists, in which case each monomial will have Hamming weight exactly $i$; otherwise, $P_{i, j, q}(z)$ will just be the zero polynomial). In particular, for $B \subseteq [s]$ with $\card{B} = i$, the polynomial $P_{i, j, q}(z)$ will contain the monomial $z^{\chi(B)}$ if and only if there is an assignment of $B$ to the first $j$ bins with cost at most $q$. We compute these polynomials iteratively by going from smaller to larger values of $j$.  The polynomial that we ultimately need to compute is $P_{s, t, W}(z)$, and we will have the guarantee that the given instance of \binb\ is a yes-instance if and only if $P_{s, t, W}(z)$ contains the monomial $z^{\chi([s])}$.   \lipicsEnd 
\end{sloppypar}

Before proceeding to the proof of Theorem~\ref{thm:binb}, we state one crucial result that we will rely on to establish the correctness of our algorithm. In our algorithm, we only check if there is an assignment of the balls to the bins; we do not guess which subset of balls is assigned to which bin. We must nonetheless ensure that the subsets of balls assigned to different bins are indeed disjoint. For that, we use the following result, which hardwires  this disjointness requirement into the the polynomial $P_{i, j, q}(z)$ that we will define.    
\begin{proposition}[\cite{DBLP:journals/tcs/CyganP10}]
\label{prop:disjoint-monomial}
Let $B_1, B_2 \subseteq [s]$. Then $B_1$ and $B_2$ are disjoint if and only if the Hamming weight of the monomial $z^{\chi(B_1) + \chi(B_2)}$ is $\card{B_1} + \card{B_2}$.
\end{proposition}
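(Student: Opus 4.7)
The plan is to prove both directions by analysing the binary integer addition $\chi(B_1) + \chi(B_2)$ and relating the Hamming weight of the result to the intersection $B_1 \cap B_2$.

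First I would handle the easy forward direction. If $B_1 \cap B_2 = \emptyset$, then at no bit position do $\chi(B_1)$ and $\chi(B_2)$ both hold a $1$, so the integer addition $\chi(B_1) + \chi(B_2)$ produces no carries and equals $\chi(B_1 \cup B_2)$ (viewing the $s$-length binary strings as non-negative integers in the standard way). The Hamming weight of the resulting monomial is therefore $\card{B_1 \cup B_2} = \card{B_1} + \card{B_2}$, as required.

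For the reverse direction, I would use the bitwise identity $a + b = (a \oplus b) + 2(a \wedge b)$ for non-negative integers $a,b$, where $\oplus$ and $\wedge$ denote bitwise XOR and AND. Combined with the basic fact that $\mathrm{wt}(x + y) \leq \mathrm{wt}(x) + \mathrm{wt}(y)$ for all non-negative integers $x,y$, and with the identity $\mathrm{wt}(a \oplus b) = \mathrm{wt}(a) + \mathrm{wt}(b) - 2\,\mathrm{wt}(a \wedge b)$, this yields $\mathrm{wt}(a + b) \leq \mathrm{wt}(a) + \mathrm{wt}(b) - \mathrm{wt}(a \wedge b)$. Setting $a = \chi(B_1)$ and $b = \chi(B_2)$ and noting that $\chi(B_1) \wedge \chi(B_2) = \chi(B_1 \cap B_2)$, we obtain $\mathrm{wt}(\chi(B_1) + \chi(B_2)) \leq \card{B_1} + \card{B_2} - \card{B_1 \cap B_2}$. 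So if $B_1 \cap B_2 \neq \emptyset$ then the Hamming weight is strictly less than $\card{B_1} + \card{B_2}$, giving the contrapositive of the reverse direction.

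The only mildly technical step is the inequality $\mathrm{wt}(x+y) \leq \mathrm{wt}(x) + \mathrm{wt}(y)$, but this is standard and follows by a short carry-tracking induction on the bit length (each carry collapses two $1$-bits and creates at most one new $1$-bit, so weight can only decrease). With that ingredient in place, both implications reduce to a direct translation between subsets, characteristic vectors and binary arithmetic, and the proposition follows.
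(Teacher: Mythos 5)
Your proof is correct. Note that the paper itself gives no argument for this proposition: it is imported verbatim from Cygan and Pilipczuk, so there is no in-paper proof to compare against. Your forward direction (disjoint supports means the integer addition $\chi(B_1)+\chi(B_2)$ is carry-free, hence equals $\chi(B_1\cup B_2)$ and has weight $\card{B_1}+\card{B_2}$) is the standard observation, and your reverse direction via the identity $a+b=(a\oplus b)+2(a\wedge b)$, the fact that doubling preserves Hamming weight, $\mathrm{wt}(a\oplus b)=\mathrm{wt}(a)+\mathrm{wt}(b)-2\,\mathrm{wt}(a\wedge b)$, and subadditivity of Hamming weight under addition gives the clean bound $\mathrm{wt}(\chi(B_1)+\chi(B_2))\leq\card{B_1}+\card{B_2}-\card{B_1\cap B_2}$, which strictly separates the non-disjoint case; this is just an algebraic packaging of the usual carry-counting argument (each carry destroys two $1$-bits and creates at most one, so any common $1$-bit forces a strict drop in weight). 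The one ingredient you flag as needing proof, $\mathrm{wt}(x+y)\leq\mathrm{wt}(x)+\mathrm{wt}(y)$, is indeed standard and your carry-tracking induction sketch suffices, so the argument is complete as a self-contained proof of the cited proposition.
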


We are now ready to prove Theorem~\ref{thm:binb}. 
\begin{proof}[Proof of Theorem~\ref{thm:binb}]

Consider an instance of \binb, where we are given  $s$ balls (indexed by $1, 2,\ldots, s$) with sizes $b_1, b_2,\ldots, b_s$, $t$ bins (indexed by $1, 2,\ldots, t$) with capacities $x_1, x_2,\ldots, x_t$, a cost function $\fn{\cost}{[s] \times [t]}{\mathbb{N} \cup \set{0}}$ and a budget $W \in \mathbb{N}$. 

For each $i \in [s]_0, j \in [t], q \in [W]_0$, we will define a polynomial $P_{i, j, q}(z)$, where each monomial of $P_{i, j, q}(z)$ will correspond to a (partial) assignment of a set of exactly $i$ balls to the first $j$ bins such that the cost of the assignment is at most $q$. To that end, we will first define an auxiliary polynomial $A_{i, j, q}(z)$, each monomial of which will correspond to an assignment of a set of exactly $i$ balls to the $j$th bin such that the  cost of the assignment is at most $q$. 
Formally, for every $i \in [s]_0, j \in [t], q \in [W]_0$, we define 
\[
A_{i, j, q}(z) = \sum_{\substack{B \subseteq [s] \\ \card{B} = i \\ \vol(B) \leq x_j \\ \sum_{r \in B} \cost(r, j) \leq q}} z^{\chi(B)}. 
\]
Notice that $A_{i, j, q}(z)$ contains the monomial $z^{\chi(B)}$ if and only if $\card{B} = i$ and the balls in $B$ can be assigned to the $j$th bin with cost at most $q$. We now define $P_{i, j, q}(z)$ for every $i \in [s]_0, j \in [t], q \in [W]_0$ as follows. We define $P_{i, 1, q}(z) = A_{i, 1, q}(z)$. For $j > 1$, notice that there exists an assignment of $i$ balls to the first $j$ bins with cost at most $q$ if and only if there exist an assignment of exactly $i'$ balls to $j$th bin with cost at most $q'$ and an assignment of $i - i'$ balls to the first $j - 1$ bins with cost at most $q - q'$, for some $i' \in [i]_0$ and $q' \in [q]_0$. We thus define   
\begin{equation*}
P_{i, j, q}(z) = \ca{R}\Lb{\ca{H}_{i}\Lb{\sum_{(i', q') \in [i]_0 \times [q]_0} A_{i', j, q'}(z) \cdot P_{i - i', j - 1, q - q'}(z)}} \tag{$\star \star \star$}\label{eq:poly},
\end{equation*}
where $\ca{H}_{i}(\cdot)$ is the Hamming projection to $i$ and $\ca{R}(\cdot)$ is the representative polynomial. The $\ca{H}_i(\cdot)$ operator ensures that $P_{i, j, q}(z)$ contains only monomials of Hamming weight exactly $i$ (which correspond to assignments of exactly $i$ balls). And the $\ca{R}(\cdot)$ operator ensures that the coefficients of the polynomials we multiply always remain $1$. In particular, the coefficient of each (non-zero) monomial of $P_{i, j, q}(z)$ (and $A_{i, j, q}(z)$) is $1$.  

We first compute $A_{i, j, q}(z)$ for every $i \in [s]_0, j \in [t]$ and $q \in [W]_0$ by going over all subsets $B \subseteq [s]$. We then compute $P_{i, j, q}(z)$ using Equation~(\ref{eq:poly}). In particular, we use the algorithm of Proposition~\ref{prop:polynomial-multiplication} to compute the product $A_{i', j, q'}(z) \cdot P_{i - i', j - 1, q - q'}(z)$ in Equation~(\ref{eq:poly}) for every choice of $(i', q')$. Finally, we return that the given instance of \binb\ is a yes-instance if and only if the polynomial $P_{s, t, W}(z)$ contains the monomial $z^{\chi([s])}$. (Notice that if the given instance of \binb\ is a yes-instance, then we can find an assignment of $[s]$ with cost at most $W$ by backtracking from $P_{s, t, W}(z)$.)

Let us now analyse the running time of our algorithm. Since $i \in [s]_0, j \in [t]$ and $q \in [W]_0$, we compute $\cO(s \cdot t \cdot W)$ polynomials. Each polynomial has degree at most $2^s$ as the number represented by any $s$-length binary string is at most $2^s$. Also, to compute $P_{i, j, q}(z)$ using Equation~(\ref{eq:poly}), we go over $\cO(s \cdot W)$ choices of $(i', q')$, and for each such choice, multiply two polynomials of degree at most $2^s$; by Proposition~\ref{prop:polynomial-multiplication}, we can multiply two polynomials of degree at most $2^s$ in time $\cO(2^s \cdot s)$. Finally, notice that we can apply the operators $\ca{R}(\cdot)$ and $\ca{H}_i(\cdot)$ in time proportional to $2^s$. Hence the total running time is bounded by $\cO((s \cdot t \cdot W) \cdot  (s \cdot W) \cdot (2^s \cdot s)) = \algobinbruntime$. 

Finally, to establish the correctness of our algorithm, we prove the following claim. 

\begin{claim}
\label{claim:binb-correctness}
Consider $i \in [s], j \in [t]$ and $q \in [W]_0$. 
For every $B \subseteq [s]$ with $\card{B} = i$, there exists an assignment $\fn{\beta}{B}{[j]}$ with $\cost_{\beta}(B) \leq q$ if and only if the polynomial $P_{i, j, q}(z)$ contains the monomial $z^{\chi(B)}$.  
\end{claim}

Assuming for now that Claim~\ref{claim:binb-correctness} holds, let us complete the proof of the theorem. 
Recall that our algorithm returns yes if and only if the polynomial $P_{s, t, W}(z)$ contains the monomial $z^{\chi([s])}$, which by Claim~\ref{claim:binb-correctness}, holds if and only if there is an assignment of $[s]$ with cost at most $W$. We can thus conclude that our algorithm returns yes if and only if there is an assignment of $[s]$ with cost at most $W$. 

\begin{claimproof}[Proof of Claim~\ref{claim:binb-correctness}]
Fix $B \subseteq [s]$ with $\card{B} = i$. We prove the claim by induction on $j$. 

Consider $j = 1$. Observe that there exists an assignment $\beta$ that assigns $B$ to the first bin with cost at most $q$ if and only if $\vol(B) \leq x_1$ and $\cost_{\beta}(B) = \sum_{r \in B} \cost(r, 1) \leq q$, which, by the definition of $A_{i, 1, q}(z)$,  holds if and only if the polynomial $A_{i, 1, q}(z) = P_{i, 1, q}(z)$ contains the monomial $z^{\chi(B)}$. 

Assume now that the claim holds for $j = j' - 1$. We prove that it holds for $j = j'$. 

Suppose that there exists an assignment $\fn{\beta}{B}{[j']}$ with $\cost(B) \leq q$. Let $\hat B \subseteq B$ be the set of balls that are assigned to bin $j'$ under $\beta$. Let $\hat i = \card{\hat B}$ and $\hat q = \cost_{\beta}(\hat B)$.  %
First, as the $\hat i$ balls in $\hat B$ are assigned to bin $j'$ under $\beta$ with cost $\hat q$, the polynomial $A_{\hat i, j', \hat q}(z)$ contains the monomial $z^{\chi(\hat B)}$.
Second, by  the definition of $\hat B$, the $i - \hat i = \card{B \setminus \hat B}$ balls in $B \setminus \hat B$ are assigned to the first $j' - 1$ balls under $\beta$ with cost  $\cost(B) - \cost(\hat B) \leq q - \hat q$. 
Hence, by the induction hypothesis, the polynomial $P_{i - \hat i, j' - 1, q - \hat q}(z)$ contains the monomial $z^{\chi(B \setminus \hat B)}$.  
Therefore the product $A_{\hat i, j', \hat q}(z) \cdot P_{i - \hat i, j' - 1, q - \hat q}(z)$ contains the monomial $z^{\chi(\hat B) + \chi(B \setminus \hat B)}$. 
As $\card{\hat B} + \card{B \setminus \hat B} = i$ and $\cost(\hat B) + \cost(B \setminus \hat B) = \cost(B) \leq q$, the sum on the right hand side of Equation~(\ref{eq:poly}) contains the summand  $A_{\hat i, j', \hat q}(z) \cdot P_{i - \hat i, j' - 1, q - \hat q}(z)$, and hence the monomial $z^{\chi(\hat B) + \chi(B \setminus \hat B)}$. 
Finally, as the sets $\hat B$ and $B \setminus \hat B$ are disjoint, by Proposition~\ref{prop:disjoint-monomial}, the monomial $z^{\chi(\hat B) + \chi(B \setminus \hat B)}$ has Hamming weight exactly $\card{\hat B} + \card{B \setminus \hat B} = \card{B} = i$. And applying the operators $\ca{H}_i(\cdot)$ and $\ca{R}(\cdot)$ does not change the Hamming weight of this monomial. 
Finally, notice again that as the sets $\hat B$ and $B \setminus \hat B$ are disjoint, $\chi(\hat B) + \chi(B \setminus \hat B)$ (interpreted as a binary sum) is precisely $\chi(\hat B \cup (B \setminus \hat B)) = \chi(B)$. 
We can thus conclude that $P_{i, j', q}(z)$ contains the monomial $z^{\chi(B)}$. 

Conversely, assume that the polynomial $P_{i, j', q}(z)$ contains the monomial $z^{\chi(B)}$. 
Then, by Equation~(\ref{eq:poly}), there exist $i' \in [i]_0$ and $q' \in [q]_0$ such that the product $A_{i', j', q'}(z) \cdot P_{i - i', j' - 1, q - q'}(z)$ contains the monomial $z^{\chi(B)}$. For convenience, let $i'' = i - i'$ and $q'' = q - q'$. 
Let $z^a$ be a  monomial of $A_{i', j', q'}(z)$ and $z^p$ be a monomial of $P_{i'', j - 1, q''}$ such that $z^a \cdot z^p = z^{a + p} = z^{\chi(B)}$.  
Notice that every monomial in $A_{i', j', q'}(z)$ has Hamming weight exactly $i'$ and every monomial in $P_{i', j' - 1, q''}(z)$ has Hamming weight exactly $i''$. 
We can therefore conclude that the monomials $z^a$ and $z^p$ have Hamming weights $i'$ and $i''$, respectively. 

We will now show that the monomials $z^a$ and $z^p$ respectively correspond to assignments of  subsets $B' \subseteq B$ and $B'' \subseteq [B]$ with $B' \cup B'' = B$ and $B' \cap B'' = \emptyset$. First, by the definition of $A_{i', j', q'}(z)$, each monomial of $A_{i', j', q'}(z)$ corresponds to an assignment of a set of exactly $i'$ balls to bin $j'$ with cost at most $q'$. 
Let $B' \subseteq [s]$ with $\card{B'} = i'$ be the set of exactly $i'$ balls such that the monomial $z^a$ of $A_{i', j', q'}(z)$ corresponds to the assignment of $B'$ to bin $j'$ (with cost at most $q'$); let this assignment be denoted by $\beta'$. 
In particular, we have $z^a = z^{\chi(B')}$. Second, let $B''  \subseteq [s]$ be such that $\chi(B'')$ (interpreted as the binary encoding of an integer) is precisely $p$. That is, $z^p = z^{\chi(B'')}$. Notice that such a set $B''$ exists as the monomial $z^p$ (and in particular, the binary encoding of $p$) has Hamming weight $i'' \leq s$. Because of the same reason, we can conclude that $\card{B''} = i''$. We thus have $B', B'' \subseteq [s]$ with $\card{B'} = i'$, $\card{B''} = i'' = i - i'$ and  $z^{\chi(B')} \cdot z^{\chi(B'')} = z^{\chi(B') + \chi(B'')} = z^{\chi(B)}$. As  the monomial $z^{\chi(B)} = z^{\chi(B') + \chi(B'')}$ has Hamming weight $\card{B} = i = i' + i'' = \card{B'} + \card{B''}$, by Proposition~\ref{prop:disjoint-monomial}, we can conclude that the sets $B'$ and $B''$ are disjoint.  Observe then  that we must have $B', B'' \subseteq B$, for otherwise we would not have $\chi(B') + \chi(B'') = \chi(B)$. As $\card{B''} = i'' = i - i' = \card{B} - \card{B'}$, we can conclude further that $B'' = B \setminus B'$. 
We have thus argued that the polynomial $P_{i'', j' - 1, q''}$ contains the monomial $z^{p} = z^{\chi(B'')}$. By the induction hypothesis, there exists an assignment $\fn{\beta''}{B''}{[j' - 1]}$ of $B''$ with $\cost_{\beta''}(B'') \leq q''$. 

Let $\fn{\gamma}{B}[j']$ be the assignment of $B$ obtained by combining $\beta'$ and $\beta''$ as follows: We have $\gamma(\Tilde{i}) = \beta'(\Tilde{i}) = j'$ if $\Tilde{i} \in B'$ and $\gamma(\Tilde{i}) = \beta''(\Tilde{i})$ if $\Tilde{i} \in B''$. As $B'$ and $B''$ are disjoint and $B' \cup B'' = B$, the assignment $\gamma$ is well-defined. Notice that $\cost_{\gamma}(B) = \cost_{\beta'}(B') + \cost_{\beta''}(B'') \leq q' + q'' = q' + (q - q'') = q$. Thus, $\gamma$ is the required assignment of $B$ to the first $j'$ bins with cost at most $q$.  
\end{claimproof}
This completes the proof of Theorem~\ref{thm:binb}. 
\end{proof}

\subsubsection{An Algorithm for an Annotated Variant of Cluster Modification}
To show that that \bcc\ and \bce\ admit algorithms that runs in time \fastbccruntime, we do as follows. We first define an annotated variant of {\sc Cluster Completion} called \annocmfull\ (\annocm), and show that solving \bcc\ and \bce\ amounts to solving $2^{o(k)}$ many instances of \annocm, each instance of which can be reduced to an instance of \binb\ with $s \leq 2k$ balls, $t \leq 2k$ bins and budget $W = 2k$. Theorem~\ref{thm:binb} will then yield \fastbccruntime\ time algorithms for \bcc\ and \bce. 

\subparagraph*{Notation and Terminology.} 
For multisets $A$ and $B$, by the set difference $A \setminus B$, we mean the ``additive set difference'' of $A$ and $B$; that is, $A \setminus B \subseteq A$ and for each $x \in A$, $\mul(x, A \setminus B) = \max \set{0, \mul(x, A) - \mul(x, B)}$. For example, if $A = \set{x, y, y, y, z, z, z}$ and $B = \set{x, x, y, w}$, then $A \setminus B = \set{y, y, z, z, z}$. For a multiset $X$ of non-negative integers and $\eta \geq 0$, we say that $X$ is $\eta$-balanced if $\card{x - y} \leq \eta$ for every $x, y \in X$. %
For a graph $G$, by the set of component sizes of $G$, denoted by $\CS(G)$, we mean the multiset of positive integers where for each $j \in \mathbb{N}$, the multiplicity of $j$ in $\CS(G)$ is precisely the number of components of $G$ of size exactly $j$; that is, $\CS(G)$ is the multiset  $\set{\card{H} ~|~ H \text{ is a connected component of } G}$. In particular, a graph $G$ is $\eta$-balanced if and only if $\CS(G)$ is $\eta$-balanced. Consider a cluster graph $G$ and an integer $\ell \in \mathbb{N}$. Recall that for a partition $X' = \set{x'_1, x'_2,\ldots, x'_s}$ of $\ell$, we say that $X'$ is $G$-valid if $G$ contains $s$ distinct components $H_1, H_2,\ldots, H_s$ such that $\card{H_i} = x'_i$ for every $i \in [s]$. For partitions $X$ and $X'$ of $\ell$ such that $X'$ is $G$-valid, by a modification of $G$ with respect to $(X, X')$, we mean a cluster graph $\hat G$ obtained from $G$ by adding edges so  that $\CS(\hat G) = (\CS(G) \setminus X') \cup X$; that is, we obtain $\hat G$ from $G$ by merging components so that $\card{X'}$ components of $G$ of sizes from $X'$ are ``replaced'' by $\card{X}$ components of sizes from $X$.  Notice that a modification of $G$ w.r.t. $(X, X')$ need not exist, and that a modification need not be unique if one exists. Consider a modification $\hat G$ of $G$ w.r.t. $(X, X')$ (assuming one exists). Notice that we can identify   $\hat G$ with a function that maps $X'$ to $X$: Each $x' \in X'$ is mapped to $x \in X$ if and only if the component of $G$ corresponding to  $x'$ is merged with other components to form a component of $\hat G$ of size exactly $x$, and we call this function a witness for the modification $\hat G$ and denote it by $\wit_{\hat G}$, or simply $\wit$ when $\hat G$ is clear from the context. That is, the witness for $\hat G$ is the function $\fn{\wit}{X'}{X}$, where for each $x \in X$, a component of $\hat G$ of size $x$ replaces exactly $\card{\wit^{-1}(x)}$ components of $G$ of sizes from the multiset $\wit^{-1}(x)$; we thus have $\sum_{x' \in \wit^{-1}(x)} x' = x$ for every $x \in X$. \lipicsEnd 

\begin{observation}
\label{obs:witness}Consider a cluster graph $G$, partitions $X = \set{x_1, x_2,\ldots, x_t}$ and $X' = \set{x'_1, x'_2,\ldots, x'_s}$ of $\ell \in \mathbb{N}$ such that $X'$ is $G$-valid, and a modification $\hat G$ of $G$ w.r.t. $(X, X')$ (assuming a modification exists). 
Corresponding to each $x'_i \in X'$ and $x_j \in X$ such that $\wit(x'_i) = x_j$, a component $H$ of $G$ with $\card{H} = x'_i$ is merged with some other components of $G$ to form a component $\hat H$ of $\hat G$ with $\card{\hat H} = x_j$. To form $\hat H$, notice that we need to add exactly $\card{\hat H} - \card{H} = x_j - x'_i$ edges incident with each vertex of $H$; in total, we need to add exactly $\card{H}(\card{\hat H} - \card{H}) = x'_i(x_j - x'_i)$ edges incident with $V(H)$. Thus, $\card{E(\hat G) \setminus E(G)}$, i.e., the total number of edges we need to add to $G$ to form $\hat G$ is precisely $ (1/2) \sum_{i \in [s]} \Lb{x'_i \cdot (\wit(x'_i) - x'_i)}$; the $1/2$ accounts for the fact that each edge in $E(\hat G) \setminus E(G)$ is counted twice in the sum $\sum_{i \in [s]} \Lb{x'_i \cdot (\wit(x'_i) - x'_i)}$. 
\end{observation}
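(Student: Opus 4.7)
The plan is to verify the edge count by a direct double-counting argument that exploits the cluster-graph structure of both $G$ and $\hat G$. First, for each $i \in [s]$, I would fix a specific connected component $H_i$ of $G$ with $\card{H_i} = x'_i$, chosen in accordance with the modification $\hat G$, so that the family $\set{H_i : i \in [s]}$ enumerates exactly those components of $G$ that are merged into larger components when forming $\hat G$. By the definition of the witness function $\wit_{\hat G}$, the set $V(H_i)$ is contained in a unique component $\hat H_i$ of $\hat G$ with $\card{\hat H_i} = \wit_{\hat G}(x'_i)$.

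Next I would establish the per-component edge count. Since $G$ is a cluster graph, each $H_i$ is already a clique; and since $\hat G$ is a cluster graph, each $\hat H_i$ is a clique. Thus for every vertex $v \in V(H_i)$, the set of \emph{new} neighbours of $v$ --- i.e., vertices adjacent to $v$ in $\hat G$ but not in $G$ --- is exactly $V(\hat H_i) \setminus V(H_i)$, a set of size $\card{\hat H_i} - \card{H_i} = \wit_{\hat G}(x'_i) - x'_i$. Summing over the $x'_i$ vertices of $H_i$ yields $x'_i(\wit_{\hat G}(x'_i) - x'_i)$ new edge-endpoints lying in $V(H_i)$, which is the local contribution to $E(\hat G) \setminus E(G)$.

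Finally, I would aggregate over $i \in [s]$ and apply double counting. Any edge $uv \in E(\hat G) \setminus E(G)$ must join vertices of two \emph{distinct} components of $G$ (because each component of $G$ is already a clique, so no new edges can appear inside it), say $H_i$ and $H_{i'}$ with $i \neq i'$, both of which are merged into a common component of $\hat G$. Hence both endpoints of $uv$ lie in $\bigcup_{i \in [s]} V(H_i)$, so $uv$ is counted once as a new edge-endpoint in $V(H_i)$ and once as a new edge-endpoint in $V(H_{i'})$. Consequently $uv$ is counted exactly twice in the sum $\sum_{i \in [s]} x'_i(\wit_{\hat G}(x'_i) - x'_i)$, whence
$$\card{E(\hat G) \setminus E(G)} = \tfrac{1}{2} \sum_{i \in [s]} x'_i \bigl(\wit_{\hat G}(x'_i) - x'_i\bigr),$$
as claimed.

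The only real subtlety --- and hence the ``hardest'' step, though it is still routine --- is justifying that no new edge of $\hat G$ can escape the union of the modified components $\bigcup_i V(H_i)$. This is immediate once one notes that a new edge $uv$ cannot lie inside any single component of $G$ (since those are cliques), and both endpoints must land in the \emph{same} component of $\hat G$ (since $\hat G$ is a cluster graph), forcing both of their $G$-components to be merged and hence to appear in our list $\set{H_i : i \in [s]}$.
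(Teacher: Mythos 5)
Your proof is correct and follows essentially the same route as the paper's own justification, which is embedded in the observation itself: count, for each merged component $H$ of size $x'_i$, the $x'_i(\wit(x'_i)-x'_i)$ new edge-endpoints forced by the fact that both $H$ and its target component $\hat H$ are cliques, and then halve the sum because every new edge joins two distinct merged components and is therefore counted once from each side. Your extra remark that no new edge can escape $\bigcup_i V(H_i)$ is a fair (slightly more explicit) spelling-out of the same double-counting step the paper performs.
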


We now define the following problem. 

\defproblem{\annocmfull\ (\annocm)}{A cluster graph, a non-negative integer $k$ and two partitions $X$ and $X'$ of an integer $\ell \in \mathbb{N}$ such that $X'$ is $G$-valid.}{Decide if there exists a modification $\hat G$ of $G$ with respect to $(X, X')$ such that $\card{E(\hat G) \setminus E(G)} \leq k$.}

\begin{theorem}
\label{thm:annocm}
\annocm\ admits an algorithm that runs in time $2^\ell \cdot n^{\cO(1)}$. 
\end{theorem}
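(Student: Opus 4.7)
The plan is to reduce \annocm\ to \binb\ and then invoke Theorem~\ref{thm:binb}. Given an instance $(G, k, X, X')$ of \annocm\ with $X = \set{x_1, \ldots, x_t}$ and $X' = \set{x'_1, \ldots, x'_s}$ partitioning $\ell$, I will fix arbitrary orderings of the elements of these multisets and construct an instance of \binb\ as follows: for each $i \in [s]$, ball $i$ has size $b_i = x'_i$; for each $j \in [t]$, bin $j$ has capacity $x_j$; the budget is $W = 2k$; and the cost function is $\cost(i, j) = x'_i(x_j - x'_i)$ when $x_j \geq x'_i$, and $\cost(i, j) = 0$ otherwise (the latter case is moot because the capacity constraint of \binb\ rules out any such assignment).

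For correctness, I will crucially use the fact that $\sum_{i \in [s]} b_i = \sum_{i \in [s]} x'_i = \ell = \sum_{j \in [t]} x_j$. This equality of totals means that any assignment $\beta$ satisfying the capacity constraints $\vol(\beta^{-1}(j)) \leq x_j$ for every $j \in [t]$ must in fact satisfy $\vol(\beta^{-1}(j)) = x_j$ for every $j$, and hence naturally induces a function $\wit \colon X' \to X$ that witnesses a modification of $G$ with respect to $(X, X')$. The reverse direction, translating a witness $\wit_{\hat G}$ of a modification $\hat G$ into a feasible assignment, is immediate. By Observation~\ref{obs:witness}, the number of edges in $E(\hat G) \setminus E(G)$ equals $(1/2) \sum_{i \in [s]} x'_i(\wit(x'_i) - x'_i)$, which is exactly half the total cost $\sum_{i \in [s]} \cost(i, \beta(i))$ of the corresponding assignment. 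Thus at most $k$ edges are added if and only if the total assignment cost is at most $W = 2k$.

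For the running time, I observe that $s, t \leq \ell$, since both $X$ and $X'$ are partitions of $\ell$, and I may assume $k \leq \binom{n}{2}$ without loss of generality, so $W = 2k$ is polynomially bounded in $n$. The reduction itself runs in polynomial time. By Theorem~\ref{thm:binb}, the resulting instance of \binb\ can be solved in time $\cO(2^s \cdot s^3 \cdot t \cdot W^2) = \cO(2^\ell \cdot \ell^4 \cdot k^2)$, which is $2^\ell \cdot n^{\cO(1)}$. I do not anticipate a real obstacle here; the main conceptual step is identifying the correct per-ball cost via Observation~\ref{obs:witness}, after which the claim follows directly from Theorem~\ref{thm:binb}.
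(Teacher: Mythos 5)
Your proof is correct and follows essentially the same route as the paper: the identical reduction to \binb\ with balls of sizes $x'_i$, bins of capacities $x_j$, budget $W=2k$, cost $x'_i(x_j-x'_i)$, and an appeal to Theorem~\ref{thm:binb}; the only cosmetic difference is that the paper sets the cost of an oversized ball--bin pair to $2k+1$ while you set it to $0$ and correctly observe that the capacity constraint already forbids such assignments. Your equal-totals argument (total ball volume $=\ell=$ total bin capacity forces every bin to be filled exactly) is a nice explicit justification of the equivalence that the paper states but leaves to the reader.
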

\begin{proof}
We will show that we can reduce \annocm\ to \binb. The theorem will then follow from  Theorem~\ref{thm:binb}. 

Given an instance $(G, k, X, X')$ of \annocm, where $X' = \set{x'_1, x'_2,\ldots, x'_s}$ and $X = \set{x_1, x_2,\ldots, x_t}$ are partitions of $\ell \in \mathbb{N}$, we construct an instance of \binb\ as follows: We take  $s$ balls indexed by $1, 2,\ldots, s$ with sizes $x'_1, x'_2,\ldots, x'_s$, and $t$ bins indexed by $1, 2,\ldots, t$ with capacities $x_1, x_2,\ldots, x_t$, a budget $W = 2k$ and a cost function $\fn{\cost}{[s]}{[t]}$ where $\cost(i, j) = x'_i(x_j - x'_i)$ if $x_j \geq x'_i$ and $\cost(i, j) = 2k + 1$ otherwise. 

Notice that we can construct this instance of \binb\ in polynomial time. Also, as $X'$ and $X$ are partitions of $\ell$, we have $s = \card{X'} \leq \ell$ and $t = \card{X} \leq \ell$; as $X'$ is $G$-valid, we also have $\ell \leq n$; finally, we can assume without loss of generality that $k \leq \binom{n}{2}$.  As we have $s, t \leq \ell \leq n$ and $W = 2k \leq 2 \binom{n}{2}$, Theorem~\ref{thm:binb} implies that we can  solve this  instance of \binb\  in time $2^{\ell} n^{\cO(1)}$. We  return that the given instance of \annocm\ is a yes-instance if and only if the constructed instance of \binb\ is a yes-instance. 

To complete the proof, we only need to show that the two instances are equivalent. This follows from the fact that every modification $\hat G$ of $G$ w.r.t. $(X, X')$ with $\card{E(\hat G) \setminus E(G)} \leq k$ corresponds to an assignment $\fn{\beta}{[s]}{[t]}$ with cost at most $W = 2k$ and vice versa: For every $i \in [s]$ and $j \in [t]$, we will have $\beta(i) = j$ if and only if $\wit_{\hat G}(x'_i) = x_j$. As it is straightforward to verify that this correspondence between $\hat G$ and $\beta$ indeed shows the equivalence between the two instances, we omit the technical details. 
\end{proof}

\subsubsection{A Faster Algorithm for BCC}

We now prove the following lemma, which shows that solving \bcc\ amounts to solving $2^{o(k)}$ many instances of \annocm. 

\begin{lemma}
\label{lem:completion-fast-fpt-correctness}
Consider an instance $(G, k, \eta)$ of \bcc. 
If $(G, k, \eta)$ is a yes-instance, then either $G$ is $\eta$-balanced or there exist an integer $\ell \in [2k] \setminus \set{1}$ and partitions $X$ and $X'$ of $\ell$ such that $X'$ is $G$-valid, the multiset $(\CS(G) \setminus X') \cup X$ is $\eta$-balanced, and $(G, k, X, X')$ is a yes-instance of \annocm. 
\end{lemma}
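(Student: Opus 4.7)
The plan is to mimic the structure of Lemma~\ref{lem:completion-fpt-correctness}, but to stop short of guessing the inner partitions $X_1, \ldots, X_t$; those will be recovered for us by the \annocm\ algorithm (which ultimately reduces to \binb). So given a solution $F \subseteq \binom{V(G)}{2} \setminus E(G)$ for a yes-instance $(G, k, \eta)$ in which $G$ is not already $\eta$-balanced, I will read off the two partitions $X$ and $X'$ directly from the components of $G+F$ and $G$ that are touched by $F$, and then exhibit $G+F$ itself as a witness that $(G, k, X, X')$ is a yes-instance of \annocm.

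Concretely, let $G_1, \ldots, G_t$ denote the components of $G + F$ that contain at least one vertex of $V(F)$, and for each $i \in [t]$ let $G_{i1}, \ldots, G_{i r_i}$ be the components of $G$ whose vertex sets partition $V(G_i)$. Set $X = \set{\card{G_1}, \ldots, \card{G_t}}$ and $X' = \bigcup_{i \in [t]} \set{\card{G_{i1}}, \ldots, \card{G_{i r_i}}}$ (as a multiset union), and let $\ell = \card{V(F)}$. Both $X$ and $X'$ are partitions of $\ell$, and $X'$ is $G$-valid by construction. Since $G$ is not $\eta$-balanced we have $F \neq \emptyset$, hence $\ell \geq 2$; and $\ell \leq 2\card{F} \leq 2k$, so $\ell \in [2k] \setminus \set{1}$.

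I would then verify the multiset identity $\CS(G+F) = (\CS(G) \setminus X') \cup X$: the components of $G+F$ are precisely the components of $G$ not modified by $F$ (contributing exactly $\CS(G) \setminus X'$) together with the new components $G_1, \ldots, G_t$ (contributing exactly $X$). Because $G+F$ is $\eta$-balanced, the right-hand side is $\eta$-balanced as well. Finally, $G+F$ is itself a modification of $G$ with respect to $(X, X')$, and $\card{E(G+F) \setminus E(G)} = \card{F} \leq k$, which witnesses that $(G, k, X, X')$ is a yes-instance of \annocm.

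The only subtle point I need to be careful about is justifying that $F$ modifies a vertex $v$ if and only if $F$ modifies every vertex in the component of $G + F$ containing $v$; this is what makes the partition of $V(G_i)$ into $V(G_{i1}), \ldots, V(G_{i r_i})$ well-defined and makes the identity $\CS(G+F) = (\CS(G) \setminus X') \cup X$ exact rather than just an inclusion of multisets. However, this is a direct consequence of both $G$ and $G+F$ being cluster graphs, and is already invoked (in the same form) in the proof of Lemma~\ref{lem:completion-fpt-correctness}, so the lemma should follow essentially by unpacking the structure of $F$ without requiring any new technical ingredient.
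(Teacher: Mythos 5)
Your proposal is correct and amounts to essentially the same argument as the paper's: the paper obtains the decomposition by invoking Lemma~\ref{lem:completion-fpt-correctness} (taking $X' = \bigcup_{i \in [t]} X_i$ and using a completion $\hat G$ of $G$ w.r.t.\ $(X, \set{X_1,\ldots,X_t})$, together with Observation~\ref{obs:completion-fpt-correctness}, as the witness that $(G,k,X,X')$ is a yes-instance of \annocm), whereas you re-derive the same component-level decomposition directly from a solution $F$ and use $G+F$ itself as the witnessing modification. Both routes are sound; yours is slightly more self-contained and skips the $\spp$ bookkeeping, at the cost of repeating the analysis already packaged in Lemma~\ref{lem:completion-fpt-correctness}, and your noted subtle point (that $F$ modifies a vertex iff it modifies its whole component of $G+F$) is indeed the same fact the paper records before Lemma~\ref{lem:enlarged-cliques}.
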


\begin{proof}
Consider an instance $(G, k, \eta)$ of \bcc. 
Assume that $(G, k, \eta)$ is a yes-instance. 
If $G$ is $\eta$-balanced, then the lemma trivially holds; recall that  $G$ is a cluster graph.  
So, assume that $G$ is not $\eta$-balanced. We will show that $\ell$, $X$ and $X'$, as required by the statement of lemma, exist by invoking Lemma~\ref{lem:completion-fpt-correctness}. 

By Lemma~\ref{lem:completion-fpt-correctness}, there exist $\ell \in [2k] \setminus \set{1}$ and a partition $X = \set{x_1, x_2,\ldots, x_{t}}$ of $\ell$ such that for each $i \in [t]$, there exists a partition $X_i$ of $x_i$ with the following properties: (1) $\bigcup_{i \in [t]}X_i$ is $G$-valid, (2) $\sum_{i \in [t]}\spp(X_i) \leq k$, and (3) a completion  of $G$ w.r.t. $(X, \set{X_1, X_2,\ldots, X_t})$ is $\eta$-balanced. 

Let $\hat G$ be a completion of $G$ w.r.t. $(X, \set{X_1, X_2,\ldots, X_t})$. Then, by Observation~\ref{obs:completion-fpt-correctness}  and property (3) above, $\hat G$ is $\eta$-balanced; by Observation~\ref{obs:completion-fpt-correctness}  and property (2) above, $\card{E(\hat G) \setminus E(G)} = \sum_{i \in [t]}\spp(X_i) \leq k$. 
We take $X' = \bigcup_{i \in [t]}X_i$. Then, by the definition of a modification of $G$ w.r.t. $(X, X')$, $\hat G$ is a modification of $G$ w.r.t. $(X, X')$. And as $\card{E(\hat G) \leq E(G)} \leq k$, we can conclude that $(G, k, X, X')$ is a yes-instance of \annocm. Again, as $\hat G$ is a modification of $G$ w.r.t. $(X, X')$, we also have $(\CS(G) \setminus X') \cup X = \CS(\hat G)$, which is $\eta$-balanced as $\hat G$ is an $\eta$-balanced graph. Thus, $\ell, X$ and $X'$ satisfy all the properties required by the statement of the lemma.  
\end{proof}

We are now ready to describe our algorithm for \bcc, which we call \fastalgobcc. 

\subparagraph*{\fastalgobcc.} Given an instance $(G, k, \eta)$ of \bcc\ as input, we proceed as follows. 
Recall that $G$ is a cluster graph. 
\begin{description}
\item[Step 1.] If $G$ is $\eta$-balanced, then we return that $(G, k, \eta)$ is a yes-instance, and terminate. 

\item[Step 2.] If $k \leq 0$, then we return that $(G, k, \eta)$ is a no-instance, and terminate. 

\item[Step 3.] We use the algorithm of Proposition~\ref{prop:partition-number} to generate all partitions of $\ell$  for all $\ell \in [2k]$.

\item[Step 4.] For each $\ell \in [2k] \setminus \set{1}$, we do as follows.

\item[Step 4.1.] For each pair of partitions $X$ and $X'$ of $\ell$ such that $(\CS(G) \setminus X') \cup X$ is $\eta$-balanced, we do as follows. 

\item[Step 4.1.1.] We consider the instance $(G, k, X, X')$ of \annocm, and use the algorithm of Theorem~\ref{thm:annocm} to solve this instance. If $(G, k, X, X')$ is a yes-instance of \annocm, then we return that $(G, k, \eta)$ is a yes-instance of \bcc, and terminate. 

\item[Step 5.] We return that $(G, k, \eta)$ is a no-instance, and terminate. 
\end{description}

The correctness of \fastalgobcc\  follows from Lemma~\ref{lem:completion-fast-fpt-correctness}. To analyse its running time, observe that the most time-consuming step is Step 4.1.1., each execution of which requires time $2^{\ell} n^{\cO(1)}$ (by Theorem~\ref{thm:annocm}). Also, by Proposition~\ref{prop:partition-number}, the number of partitions of an integer $\ell$ is $2^{\cO(\sqrt{\ell})}$, and hence  the number of pairs $(X, X')$ that we consider in Step 4.1 is  $2^{\cO(\sqrt{\ell})} \cdot 2^{\cO(\sqrt{\ell})} = 2^{o(\ell)}$. Finally, we go over $2k - 1$ choices of $\ell$ in Step 4, and as we only consider $\ell \leq 2k$, we can conclude that the total running time is bounded by $2k \cdot 2^{o(2k)} \cdot 2^{2k} n^{\cO(1)} = \fastbccruntime$. 
We thus have the following result. 
\begin{theorem}
\label{thm:completion-fast-fpt}
\bcc\ admits an algorithm that runs in time \fastbccruntime. 
\end{theorem}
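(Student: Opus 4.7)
The plan is to prove Theorem~\ref{thm:completion-fast-fpt} by establishing that the algorithm \fastalgobcc\ described just above the theorem statement is correct and runs within the claimed time bound. Since all the technical infrastructure is already in place (Lemma~\ref{lem:completion-fast-fpt-correctness}, Theorem~\ref{thm:annocm}, Proposition~\ref{prop:partition-number}), the proof reduces to two clean verifications.

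For correctness, the plan is to argue that \fastalgobcc\ returns ``yes'' if and only if $(G,k,\eta)$ is a yes-instance of \bcc. The ``if'' direction splits naturally into two cases. If $G$ is already $\eta$-balanced, Step 1 catches it. Otherwise, Lemma~\ref{lem:completion-fast-fpt-correctness} supplies $\ell \in [2k]\setminus\{1\}$ and partitions $X,X'$ of $\ell$ such that $X'$ is $G$-valid, $(\CS(G)\setminus X')\cup X$ is $\eta$-balanced, and $(G,k,X,X')$ is a yes-instance of \annocm; this exact pair is enumerated in Steps 3--4.1, so Theorem~\ref{thm:annocm} invoked in Step 4.1.1 will return yes. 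The ``only if'' direction is immediate: any yes answer produced in Step 4.1.1 yields (by definition of \annocm) a modification $\hat G$ of $G$ w.r.t.\ $(X,X')$ with $|E(\hat G)\setminus E(G)|\le k$, and the Step 4.1 filter guarantees $\CS(\hat G) = (\CS(G)\setminus X')\cup X$ is $\eta$-balanced, so the set $F = E(\hat G)\setminus E(G)$ is a valid solution for $(G,k,\eta)$.

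For the running-time bound, the plan is to multiply together the costs of the nested loops. Step 1, Step 2, and Step 5 take polynomial time; Step 3 takes $2^{\cO(\sqrt{k})}$ time by Proposition~\ref{prop:partition-number} and also bounds the total number of partitions enumerated. The outer loop (Step 4) iterates over at most $2k-1$ values of $\ell$, and for each such $\ell$, Proposition~\ref{prop:partition-number} gives $p(\ell)\le 2^{C\sqrt{\ell}}$ choices of each of $X$ and $X'$, hence at most $2^{2C\sqrt{\ell}} = 2^{o(\ell)}$ pairs considered in Step 4.1. The dominant cost is Step 4.1.1, where Theorem~\ref{thm:annocm} solves the constructed instance of \annocm\ in time $2^{\ell} n^{\cO(1)}$. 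Multiplying these factors and using $\ell\le 2k$, the total running time is
\[
(2k-1)\cdot 2^{o(2k)}\cdot 2^{2k}\cdot n^{\cO(1)} \;=\; 4^{k+o(k)} n^{\cO(1)},
\]
which matches the claimed bound.

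I do not anticipate a serious obstacle: the only subtlety is ensuring that the $2^{o(\sqrt{\ell})}\cdot 2^{o(\sqrt{\ell})}$ factor over all pairs $(X,X')$ gets absorbed into a $2^{o(k)}$ multiplicative term rather than blowing up when multiplied by $2^{2k}$, and this is fine because $\sqrt{\ell}\le\sqrt{2k} = o(k)$. Nothing else in the proof should require new ideas beyond bookkeeping the loop structure of \fastalgobcc.
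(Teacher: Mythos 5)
Your proposal is correct and follows essentially the same route as the paper: correctness via Lemma~\ref{lem:completion-fast-fpt-correctness} (plus the definition of a modification for soundness), and the running-time bound by multiplying the $\cO(k)$ choices of $\ell$, the $2^{\cO(\sqrt{\ell})}\cdot 2^{\cO(\sqrt{\ell})}=2^{o(k)}$ pairs $(X,X')$, and the $2^{\ell}n^{\cO(1)}\leq 4^{k}n^{\cO(1)}$ cost of each call to the algorithm of Theorem~\ref{thm:annocm}. Your write-up is in fact slightly more explicit than the paper's about the ``only if'' direction, but no new ideas are involved.
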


\subsubsection{A Faster Algorithm for BCE}

We now turn to \bce. To design an algorithm for \bce, notice that as we did earlier, we only need to design an algorithm for \bce\ when the input graph is a cluster graph, as we can branch on induced $P_3$s  otherwise. Specifically, we prove the following theorem. 
\begin{theorem}
\label{thm:editing-fast-fpt-cluster}
\bce\ on cluster graphs admits an algorithm that runs in time \fastbccruntime. 
\end{theorem}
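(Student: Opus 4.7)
The plan is to adapt the idea behind \fastalgobcc\ (Theorem~\ref{thm:completion-fast-fpt}) to the editing setting, using the two-step decomposition underlying Lemma~\ref{lem:editing-fpt-correctness}. Given an instance $(G,k,\eta)$ of \bce\ with $G$ a cluster graph, any solution $F$ splits as $F=F_1\cup F_2$ with $F_1\subseteq E(G)$ and $F_2\cap E(G)=\emptyset$, and we have $G\triangle F=(G-F_1)+F_2=G'+F_2$ where $G'=G-F_1$ is an intermediate cluster graph obtained from $G$ by splitting some components. I would first handle the boundary cases $F_1=\emptyset$ and $F_2=\emptyset$ by invoking \fastalgobcc\ and the \algobcdruntime\ time algorithm of Theorem~\ref{thm:deletion-fpt}. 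Otherwise, the idea is to guess, in $2^{o(k)}$ total time, integers $\ell_1,\ell_2\in[2k]\setminus\set{1}$ together with four partitions: $Y$ of $\ell_1$ (sizes of components of $G$ split by $F_1$), $Y'$ of $\ell_1$ (sizes of the resulting pieces in $G'$), $X'$ of $\ell_2$ (sizes of components of $G'$ merged by $F_2$), and $X$ of $\ell_2$ (sizes of the resulting merged components of $G\triangle F$). By Proposition~\ref{prop:partition-number}, each partition contributes a factor of $2^{\cO(\sqrt{k})}$, so the total number of guesses is $\cO(k^2)\cdot 2^{\cO(\sqrt{k})}=2^{o(k)}$.

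For each guess, the following polynomial-time checks determine feasibility of the ``shape'': $Y$ must be $G$-valid; $X'$ must be a submultiset of $\CS(G')=(\CS(G)\setminus Y)\cup Y'$ (ensuring $X'$ is $G'$-valid regardless of how we choose the splits); and the multiset $(\CS(G')\setminus X')\cup X=\CS(G\triangle F)$ must be $\eta$-balanced. The remaining combinatorial question is whether we can choose partitions $Y_1,\ldots,Y_s$ of the elements of $Y$ with $\bigcup_j Y_j=Y'$ and partitions $X_1,\ldots,X_t$ of the elements of $X$ with $\bigcup_i X_i=X'$ satisfying $\sum_j \spp(Y_j)+\sum_i \spp(X_i)\leq k$. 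Since the first sum depends only on $(Y,Y')$ and the second only on $(X,X')$, the two subproblems decouple, and I would iterate over all budget splits $k=k_1+k_2$ (only $\cO(k)$ choices) and solve them independently.

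Each subproblem is structurally identical to \annocm\ and reduces to an instance of \binb\ by the trick used in Theorem~\ref{thm:annocm}: for the merging subproblem the balls are the elements of $X'$, the bins are the elements of $X$, and the cost of assigning ball $i$ to bin $j$ is $x'_i(x_j-x'_i)$, whose total over a bin filled to its capacity equals twice the number of edges added. An analogous reduction handles the splitting subproblem, since the same formula $y'_i(y_j-y'_i)$ computes exactly twice the number of edges deleted when a component of size $y_j$ is split into pieces with sizes summing to $y_j$ from $Y'$. Both resulting \binb\ instances have $s,t\leq 2k$ and budget $W\leq 2k$, so Theorem~\ref{thm:binb} solves each in $\cO(2^{2k}\cdot k^6)=4^k\cdot n^{\cO(1)}$ time. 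Combined with the $2^{o(k)}$ guesses, the total running time is $\fastbccruntime$ as required. Correctness would follow from a statement analogous to Lemma~\ref{lem:completion-fast-fpt-correctness}, obtained by starting from Lemma~\ref{lem:editing-fpt-correctness} and setting $Y'=\bigcup_j Y_j$ and $X'=\bigcup_i X_i$ from the guessed fine partitions, then invoking the \annocm-to-\binb\ reductions.

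The main subtlety, and the only step requiring fresh bookkeeping, is verifying the splitting reduction: one must ensure that partitioning $Y'$ among the bins of $Y$ with the \emph{equality} constraint $\sum_{i\in\beta^{-1}(j)} y'_i=y_j$ (rather than just $\leq y_j$) is correctly enforced by \binb. As in the reduction of Theorem~\ref{thm:annocm}, this is automatic because $Y$ and $Y'$ are both partitions of $\ell_1$, so if every ball is assigned within capacity, equality must hold at every bin. Once this is in place, the entire proof is a direct combination of the machinery already developed for \bcc\ in Theorem~\ref{thm:completion-fast-fpt} and the decomposition argument for \bce\ in Lemma~\ref{lem:editing-fpt-correctness}.
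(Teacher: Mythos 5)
Your proposal is correct and follows essentially the same route as the paper: its algorithm \fastalgobcec\ likewise guesses $k_1,k_2,\ell_1,\ell_2$ and the four partitions $Y,Y',X,X'$, checks validity and $\eta$-balance of the multiset $(\CS(G')\setminus X')\cup X$ where $\CS(G')=(\CS(G)\setminus Y)\cup Y'$, and solves the two decoupled subproblems via the \binb\ algorithm of Theorem~\ref{thm:binb}, yielding $2^{o(k)}\cdot 4^{k_1+k_2}n^{\cO(1)}$ time. The only cosmetic difference is that the paper handles the deletion step by running \annocm\ ``in reverse'' on an intermediate cluster graph $G'$ (merging the $Y'$-components back into $G$), which is exactly your direct splitting-to-\binb\ reduction with the same cost function $y'_i(y_j-y'_i)$ and the same automatic equality-at-capacity argument.
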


Assuming Theorem~\ref{thm:editing-fast-fpt-cluster}, let us first prove the following result. 

\begin{theorem}
\label{thm:editing-fast-fpt}
\bce\ admits an algorithm that runs in time \fastbccruntime. 
\end{theorem}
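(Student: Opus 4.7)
The plan is to follow the exact same template as the proof of Theorem~\ref{thm:editing-fpt}, simply substituting the faster algorithm of Theorem~\ref{thm:editing-fast-fpt-cluster} (which handles the cluster-graph case in time \fastbccruntime) for the slower Theorem~\ref{thm:editing-fpt-cluster}. That is, given a general instance $(G,k,\eta)$ of \bce, I would first test whether $G$ is already a cluster graph; if so, I would invoke Theorem~\ref{thm:editing-fast-fpt-cluster} directly. Otherwise, as long as $k > 0$, I would locate (in polynomial time) an induced $P_3$ on some triple $u,v,w$ with $uv, vw \in E(G)$ and $uw \notin E(G)$, and recursively solve the three sub-instances $(G - uv,\,k-1,\,\eta)$, $(G - vw,\,k-1,\,\eta)$, and $(G + uw,\,k-1,\,\eta)$, returning \yes\ iff at least one of the three recursive calls returns \yes. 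If $k = 0$ but $G$ is not a cluster graph, return \no.

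Correctness of the branching step is the same one-line argument as before: any solution $F$ for $(G,k,\eta)$ must eliminate the induced $P_3$ on $u,v,w$ when we pass from $G$ to $G \triangle F$, so $F$ contains at least one of $uv$, $vw$, or $uw$, and hence at least one of the three recursive instances is equivalent to $(G,k,\eta)$. Termination is guaranteed since $k$ strictly decreases along every branch, and correctness at the leaves of the recursion is delegated to Theorem~\ref{thm:editing-fast-fpt-cluster}.

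For the running time, the branching tree has depth at most $k$ and out-degree $3$, so there are at most $3^k$ leaves. Each leaf is an instance on a cluster graph with residual budget $k' \le k$, and by Theorem~\ref{thm:editing-fast-fpt-cluster} each such instance is disposed of in time \fastbccruntime. The non-leaf work done along any root-to-leaf path is only polynomial (finding an induced $P_3$ and performing a single edge edit). Putting these together gives a total running time bounded by $3^k \cdot 4^{k + o(k)} \cdot n^{\cO(1)}$, which is the product the theorem is claiming collapses to \fastbccruntime.

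The main obstacle, and the only place where this proof requires genuine care as opposed to routine bookkeeping, is the collapse of the running-time expression. In the slow Theorem~\ref{thm:editing-fpt} the analogous product is $3^k \cdot 2^{\cO(k)} = 2^{\cO(k)}$, and the $3^k$ factor is absorbed harmlessly into the $2^{\cO(k)}$. With the sharper base $4$ the same absorption is no longer automatic at the level of the constant in the exponent: the running-time statement $\fastbccruntime = 4^{k + o(k)}n^{\cO(1)}$ should be understood with the constant hidden in the exponent permitted to absorb the $\log_2 3$ contributed by the branching, so that the claimed bound is really an equality of the form $2^{\cO(k)}$-style expressions with the explicit leading coefficient $4$ appearing only in the cluster-graph subroutine. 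Making this interpretation precise and showing that no additional overhead beyond the $o(k)$ error term is accumulated by the recursion are the two points I would need to verify carefully to finish the argument.
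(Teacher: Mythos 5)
Your algorithm is exactly the paper's (branch on an induced $P_3$ into three sub-instances, hand cluster graphs to Theorem~\ref{thm:editing-fast-fpt-cluster}), and the correctness part is fine, but the running-time analysis has a genuine gap, and the repair you sketch is not valid. Bounding every leaf uniformly by \fastbccruntime\ and multiplying by the $3^k$ leaves gives $3^k\cdot 4^{k+o(k)}n^{\cO(1)} = 12^{k+o(k)}n^{\cO(1)}$, and this cannot be read as \fastbccruntime\ by "letting the exponent absorb $\log_2 3$": the extra factor is $3^k = 4^{k\log_4 3}$, i.e.\ an extra $\Theta(k)$ term in the exponent, which the $o(k)$ slack cannot swallow. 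So as written the bound you obtain is strictly weaker than the theorem claims.

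The missing idea --- and the paper's actual argument --- is to track the residual budget down each branch. A leaf reached after $i$ branching steps is a cluster-graph instance with parameter $k-i$, so the subroutine there costs $4^{k-i+o(k-i)}n^{\cO(1)}$, not $4^{k+o(k)}n^{\cO(1)}$. The total work is then at most $\sum_{i=0}^{k} 3^i\cdot 4^{k-i+o(k-i)}n^{\cO(1)} \leq 4^{k+o(k)}n^{\cO(1)}\sum_{i\geq 0}(3/4)^i$, which is \fastbccruntime\ precisely because the branching factor $3$ is strictly smaller than the base $4$ of the cluster-graph algorithm. You in fact noted that each leaf has residual budget $k'\leq k$; you only need to use $k' = k-i$ rather than the uniform bound $k$, after which the rest of your argument coincides with the paper's proof.
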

\begin{proof}[Proof Sketch]
The proof is nearly identical to the proof of Theorem~\ref{thm:editing-fpt}. 

Given an instance $(G, k, \eta)$ of \bce, our algorithm works as follows.  If $G$ is a cluster graph, then we use the algorithm of Theorem~\ref{thm:editing-fast-fpt-cluster} to solve the problem. Otherwise, we find an induced $P_3$, say $uvw$ and recursively call our algorithm on the three instances $(G - uv, k - 1, \eta), (G - vw, k - 1, \eta)$ and $(G + uw, k - 1, \eta)$. 

As for the running time, notice that each time we find a $P_3$, we make $3$ recursive calls, until $k$ becomes $0$ or the graph becomes a cluster graph $G$ (whichever happens earlier). Hence the total number of recursive calls is at most $\sum_{i = 0}^k 3^i$. Now, on any branch of the computation, as we first make $i \in [k]_0$ edge modifications until the graph becomes a cluster graph and only then invoke the algorithm of Theorem~\ref{thm:editing-fast-fpt-cluster} with parameter $k - i$, the latter step will only take time $4^{k - i + o(k - i)} n^{\cO(1)}$. Thus the total running time is bounded by $\sum_{i = 0}^k 3^i \cdot 4^{k - i + o(k - i)} n^{\cO(1)} \leq \fastbccruntime$.  
\end{proof}

Now we only have to prove Theorem~\ref{thm:editing-fast-fpt-cluster}. Before moving to a formal proof of Theorem~\ref{thm:editing-fast-fpt-cluster}, we  highlight some of the key ideas first.  

\subparagraph*{Ideas we use in our algorithm for \bce\ on cluster graphs.} For the most part, we use a combination of arguments we already used in our previous algorithm for \bce\ on cluster graphs (\algobcec) and our  \fastbccruntime\ time algorithm for \bcc\ (\fastalgobcc). But there are subtle differences. First of all, observe that a cluster graph $G$ is completely specified by its component sizes, i.e., the multiset $\CS(G)$;  any two cluster graphs $H$ and $H'$ are isomorphic if and only if $\CS(H) = \CS(H')$. We will rely on this fact. 
Now, suppose $(G, k, \eta)$ is a yes-instance of \bce, where $G$ is a cluster graph, and suppose $F \subseteq \binom{V(G)}{2}$ is the solution that we are looking for. Let $F_1 \subseteq F$ be the set of edges that we delete from $G$ and $F_2 = F \setminus F_1$ be the set of edges that we add to $G$. Then $G - F_1$ is a cluster graph. Also, $F_1$ corresponds to a pair of partitions $(Y, Y')$ of $\ell_1 = \card{V(F_1)}$:  each $y \in Y$ corresponds to a component of $G$ of size $y$ that was split into smaller components when we deleted $F_1$, and each $y' \in Y'$ corresponds to one of the ``new'' components of $G - F_1$ of size $y'$ that was formed when we deleted $F_1$. Thus $\CS(G - F_1) = (\CS(G) \setminus Y) \cup Y'$. Once again, 
we obtain $G'$ from $G$ by deleting the edges of $F_1$, or equivalently, we obtain $G$ from $G - F_1$ by adding the edges of $F_1$. In other words, $(G - F_1, \card{F_1}, Y, Y')$ is a yes-instance of \annocm.  But then, so is $(G', \card{F_1}, Y, Y')$ for \emph{any cluster}  graph $G'$ with $\CS(G') = \CS(G - F_1)$. %
The rest of the arguments are identical to those in \fastalgobcc. We obtain $G \triangle F$ by adding the edges of $F_2$ to the cluster graph $G - F_1$. That is, for appropriate partitions $X$ and $X'$ of $\card{V(F_2)}$, $(G - F_1, \card{F_2}, X, X')$ is a yes-instance of \annocm, and therefore, $(G', \card{F_2}, X, X')$ is a yes-instance of 
\annocm. 
So in our algorithm,  we guess $\card{F_1}, \card{F_2}, \card{V(F_1)}, \card{V(F_2)}, Y, Y'$ and $X, X'$ and we simply check if $(G', \card{F_1}, Y, Y')$ is a yes-instance of \annocm, and if $(G', \card{F_2}, X, X')$ is a yes-instance of \annocm, where $G'$ is an arbitrary cluster graph with $\CS(G') = \CS(G) \setminus Y) \cup Y'$. \lipicsEnd

We formalise the above ideas in the following lemma, which will establish the correctness our algorithm for \bce\ on cluster graphs. 

\begin{lemma}
\label{lem:editing-fast-fpt-correctness}
Consider an instance $(G, k, \eta)$ of \bce, where $G$ is a cluster graph. 
If $(G, k, \eta)$ is a yes-instance, then one of the following statements holds. 
\begin{enumerate}
    \item The instance $(G, k, \eta)$ is a yes-instance of \bcc. 
    \item The instance $(G, k, \eta)$ is a yes-instance of \bcd. 
    \item There exist integers $k_1, k_2, \ell_1, \ell_2$ with $k_1 + k_2 \leq k$ and $\ell_1 \in [2k_1] \setminus \set{1}, \ell_2 \in [2k_2] \setminus \set{1}$  such that there exist partitions $Y$ and $Y'$ of $
    \ell_1$ and $X$ and $X'$ of $\ell_2$ with the following properties: %
\begin{enumerate}
\item $Y$ is $G$-valid; 

\item $(G', k_1, Y, Y')$ is a yes-instance of \annocm, where $G'$ is a cluster graph with $\CS(G') = (\CS(G) \setminus Y) \cup Y'$;

\item $X'$ is $G'$-valid; 

\item the multiset $(\CS(G') \setminus X') \cup X$ is $\eta$-balanced; 

\item $(G', k_2, X, X')$ is a yes-instance of \annocm. 
\end{enumerate}
\end{enumerate}
\end{lemma}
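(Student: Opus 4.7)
The plan is to mimic the decomposition used implicitly in Lemma~\ref{lem:editing-fpt-correctness}, but phrased in the language of \annocm\ rather than of nested multisets of partitions, leveraging the fact that \annocm\ depends on the input cluster graph only through its multiset of component sizes.

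First I would take a solution $F \subseteq \binom{V(G)}{2}$ for $(G, k, \eta)$ and split it into its deletion and addition parts, $F_1 := F \cap E(G)$ and $F_2 := F \setminus E(G)$. If $F_1 = \emptyset$ then $F = F_2$ witnesses that $(G, k, \eta)$ is a yes-instance of \bcc, and if $F_2 = \emptyset$ then $F = F_1$ witnesses a yes-instance of \bcd; so it suffices to handle the case $F_1, F_2 \neq \emptyset$. Let $G'' := G - F_1$, which is a cluster graph since $G$ already is one, and observe $G \triangle F = G'' + F_2$. Set $k_1 := \card{F_1}$, $k_2 := \card{F_2}$, $\ell_1 := \card{V(F_1)}$, $\ell_2 := \card{V(F_2)}$. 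Then $k_1 + k_2 = \card{F} \leq k$, and non-emptiness of $F_1, F_2$ with $\card{F_i} \leq k_i$ gives $\ell_1 \in [2k_1] \setminus \set{1}$ and $\ell_2 \in [2k_2] \setminus \set{1}$.

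Next I would extract the four partitions. Let $Y$ be the multiset of sizes of those components of $G$ that $F_1$ modifies (a partition of $\ell_1$), and $Y'$ the multiset of sizes of the smaller components into which these components of $G$ split inside $G''$; then $Y'$ is again a partition of $\ell_1$, and $\CS(G'') = (\CS(G) \setminus Y) \cup Y'$. By construction $Y$ is $G$-valid. Analogously, let $X'$ be the multiset of sizes of the components of $G''$ modified by $F_2$ (a partition of $\ell_2$, and $G''$-valid), and let $X$ be the multiset of sizes of the components of $G \triangle F = G'' + F_2$ that contain a modified vertex (a partition of $\ell_2$); then $\CS(G \triangle F) = (\CS(G'') \setminus X') \cup X$, which is $\eta$-balanced since $G \triangle F$ is.

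Finally I would transfer these two witnesses to an arbitrary cluster graph $G'$ with $\CS(G') = \CS(G'') = (\CS(G) \setminus Y) \cup Y'$. The graph $G$ itself is a modification of $G''$ with respect to $(Y, Y')$ and satisfies $\card{E(G) \setminus E(G'')} = \card{F_1} = k_1$, so $(G'', k_1, Y, Y')$ is a yes-instance of \annocm; likewise $G \triangle F$ is a modification of $G''$ with respect to $(X, X')$ using $k_2$ added edges, so $(G'', k_2, X, X')$ is a yes-instance. Since cluster graphs are determined up to isomorphism by their multisets of component sizes, and since a modification with respect to $(X, X')$, the number of edges it adds (by Observation~\ref{obs:witness}), and the validity of a partition all depend only on the multiset $\CS(\cdot)$, these properties transfer verbatim from $G''$ to $G'$: $X'$ is $G'$-valid, and both $(G', k_1, Y, Y')$ and $(G', k_2, X, X')$ are yes-instances of \annocm. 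The anticipated subtlety, and the only place where care is needed, is exactly this last transfer step: one must verify that the notion of modification (and its edge count) is purely a function of component sizes, so that fixing any representative cluster graph $G'$ with the prescribed $\CS(G')$ does not lose or gain solutions relative to $G''$.
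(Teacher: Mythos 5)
Your proposal takes essentially the same route as the paper's: split the solution $F$ into the deletion part $F_1$ and the addition part $F_2$, set $k_i = \card{F_i}$ and $\ell_i = \card{V(F_i)}$, read off $(Y, Y')$ from the components of $G$ split by $F_1$ and $(X, X')$ from the components of $G - F_1$ merged by $F_2$, and transfer the two \annocm\ witnesses to an arbitrary cluster graph $G'$ with $\CS(G') = (\CS(G) \setminus Y) \cup Y'$ using the fact that \annocm\ (validity, modifications, and their edge counts) depends on the graph only through its multiset of component sizes. One small repair to your wording: that $G - F_1$ is a cluster graph does \emph{not} follow merely from $G$ being one (deleting an edge of a clique creates an induced $P_3$); it follows because any induced $P_3$ in $G - F_1$ would have its missing pair $uw$ inside a clique of $G$, so $uw \in E(G)$ cannot be re-added by $F_2 \subseteq \binom{V(G)}{2} \setminus E(G)$, and the $P_3$ would persist in $G \triangle F = (G - F_1) + F_2$, contradicting that $G \triangle F$ is a cluster graph.
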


\begin{proof}[Proof Sketch]
The proof closely follows the proof of Lemma~\ref{lem:editing-fpt-correctness}. We only highlight the most salient arguments. Assume that $(G, k, \eta)$ is a yes-instance of \bce, and let $F$ be a solution for $(G, k, \eta)$. Let $F_1 \subseteq F$ be the set of edges that we delete from $G$ and $F_2 = F \setminus F_1$ be the set of edges that we add to $G$. Then $G' = G - F_1$ is a cluster graph. 

We take $k_i = \card{F_i}$ and $\ell_i = \card{V(F_i)}$ for each $i \in [2]$. We associate $F_1$ with a pair of partitions $Y$ and $Y'$ of $\ell_1$; the partition $Y$ corresponds to the components of $G$ that are  split into smaller components when we delete  $F_1$, and the partition $Y'$ corresponds to the ``new'' components of $G' = G - F_1$ that are formed when we delete $F_1$. In other words, we obtain $G$ from $G'$ by adding edges, and in particular, $G$ is a modification of $G'$ w.r.t. $(Y, Y')$. As $E(G) \setminus E(G') = F_1$ and $\card{F_1} = k_1$, $(G', k_1, Y, Y')$ is a yes-instance of \annocm. We  associate $F_2$ with a pair of partitions $X$ and $X'$ of $\ell_2$. The partition $X'$ corresponds to the components of $G'$ that are merged together to form the components of $G' + F_2 = G \triangle F$; the partition $X$ corresponds to the ``new'' components of $G' + F_2$ that are formed by merging together components of $G'$. In other words, $G' + F_2 = G \triangle F$ is a modification of $G$ w.r.t. $(X, X')$, and thus $\CS(G \triangle F) = (\CS(G' + F_2) \setminus X') \cup X$. As $E(G \triangle F) \setminus E(G') = F_2$ and $\card{F_2} = k_2$, $(G', k_2, X, X')$ is a yes-instance of \annocm, and as the graph $G \triangle F$ is $\eta$-balanced, the multiset $\CS(G \triangle F) = (\CS(G' + F_2) \setminus X') \cup X$ is $\eta$-balanced. 
\end{proof}

Based on Lemma~\ref{lem:editing-fast-fpt-correctness}, we design an algorithm for \bce\ on cluster graphs, which we call \fastalgobcec. 
\subparagraph*{\fastalgobcec.} Given an instance $(G, k, \eta)$ of \bce\ as input, where $G$ is a cluster graph, we proceed as follows.  

\begin{description}
\item[Step 1.] If $(G, k, \eta)$ is a yes-instance of \bcc\ or a yes-instance of \bcd, we return that $(G, k, \eta)$ is a yes-instance of \bce, and terminate. To do this, we use the algorithms of Theorems~\ref{thm:completion-fast-fpt} and \ref{thm:deletion-fpt}. 

\item[Step 2.] If $k \leq 0$, then we return that $(G, k, \eta)$ is a no-instance, and terminate. 

\item[Step 3.] We use the algorithm of Proposition~\ref{prop:partition-number} to generate all partitions of $\ell$ for all $\ell \in [2k]$.

\item[Step 4.] For every choice of $k_1, k_2 \in [k]$ with $k_1 + k_2 \leq k$ and $\ell_1 \in [2k_1] \setminus \set{1}, \ell_2 \in [2k_2] \setminus \set{1}$, we do as follows.

\item[Step 4.1.] For each pair of partitions $Y$ and $Y'$ of $\ell_1$ and each pair of partitions $X$ and $X'$ of $\ell_2$, such that $Y$ is $G$-valid, we do as follows. 

\item[Step 4.1.1.] We construct a cluster graph $G'$ with $\CS(G') = (\CS(G) \setminus Y) \cup Y'$. 

\item[Step 4.1.2] If $X'$ is $G'$-valid, the multiset $(\CS(G') \setminus X') \cup X$ is $\eta$-balanced, and both $(G', k_1, Y, Y')$ and $(G', k_2, X, X')$ are yes-instances of \annocm, then we return that $(G, k, \eta)$ is a yes-instance of \bce, and terminate. To check if $(G', k_1, Y, Y')$ and $(G', k_2, X, X')$ are yes-instances of \annocm, we use the algorithm of Theorem~\ref{thm:annocm}. 

\item[Step 5.] We return that $(G, k, \eta)$ is a no-instance, and terminate. 
\end{description}

The correctness of the algorithm follows from  Lemma~\ref{lem:editing-fast-fpt-correctness}. We now analyse the running time. 
\begin{lemma}
\label{lem:editing-fast-fpt-runtime}
\fastalgobcec\ runs in time \fastbccruntime. 
\end{lemma}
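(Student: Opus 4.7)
The plan is to bound each step of \fastalgobcec\ in turn. Steps 1, 2, 3 and 5 are easily controlled: Step 1 calls the algorithms of Theorems~\ref{thm:completion-fast-fpt} and \ref{thm:deletion-fpt}, whose running times are \fastbccruntime\ and \algobcdruntime\ respectively, both within the target budget; Step 3 enumerates the partitions of every integer up to $2k$, which by Proposition~\ref{prop:partition-number} costs $2^{\cO(\sqrt{k})}$; and Steps 2 and 5 are trivially polynomial. All of the substantive cost therefore sits in Step 4.

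Next I would count the iterations of Step 4. Assuming without loss of generality that $k \leq \binom{n}{2}$, the outer loop over $(k_1, k_2, \ell_1, \ell_2)$ contributes at most $\cO(k^4) = n^{\cO(1)}$ iterations. For each fixed tuple, Step 4.1 ranges over quadruples $(Y, Y', X, X')$ in which $Y, Y'$ are partitions of $\ell_1$ and $X, X'$ are partitions of $\ell_2$; by Proposition~\ref{prop:partition-number} the number of such quadruples is at most $p(\ell_1)^2 \cdot p(\ell_2)^2 = 2^{\cO(\sqrt{\ell_1} + \sqrt{\ell_2})} = 2^{\cO(\sqrt{k})} = 2^{o(k)}$, using the fact that $\ell_1 + \ell_2 \leq 2(k_1 + k_2) \leq 2k$.

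For each iteration of Step 4.1, Step 4.1.1 constructs the auxiliary cluster graph $G'$ in polynomial time, since realising a cluster graph with a prescribed multiset of component sizes is straightforward. Step 4.1.2 then makes up to two calls to the algorithm of Theorem~\ref{thm:annocm}, one with parameter $\ell_1$ and one with parameter $\ell_2$, together with polynomial-time checks of $G'$-validity and $\eta$-balance; the combined cost is $(2^{\ell_1} + 2^{\ell_2})\, n^{\cO(1)}$. The crucial observation is that $\ell_i \leq 2k_i \leq 2k$ for each $i$, whence this per-iteration cost is at most $2 \cdot 2^{2k}\, n^{\cO(1)} = 4^k\, n^{\cO(1)}$. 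Multiplying the number of iterations by the per-iteration cost gives the overall bound $n^{\cO(1)} \cdot 2^{o(k)} \cdot 4^k\, n^{\cO(1)} = 4^{k + o(k)}\, n^{\cO(1)}$, matching \fastbccruntime.

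There is no real obstacle in the argument; the only thing I would double-check is that every sub-exponential overhead---the partition enumerations of Step 3 and Step 4.1, the $\cO(k^4)$ outer iteration count---is absorbed into the $o(k)$ in the exponent rather than inflating the base $4$ of the dominant $4^k$ factor. This is immediate because each of these overheads is individually $2^{o(k)}$ and their product remains $2^{o(k)}$, while the only place where the base cannot be shrunk is the single per-iteration invocation of Theorem~\ref{thm:annocm}, whose bound $2^{\ell_1} + 2^{\ell_2}$ is tight precisely at $\ell_1 + \ell_2 = 2k$.
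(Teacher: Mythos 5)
Your proposal is correct and follows essentially the same route as the paper's proof: polynomial or sub-exponential bounds for Steps 1--3 and 5, an $n^{\cO(1)}$ count of the tuples $(k_1,k_2,\ell_1,\ell_2)$, a $2^{\cO(\sqrt{k})}$ bound on the quadruples $(Y,Y',X,X')$ via Proposition~\ref{prop:partition-number}, and a $4^{k}n^{\cO(1)}$ per-iteration cost from the two invocations of Theorem~\ref{thm:annocm}. The only cosmetic difference is that you bound $2^{\ell_1}+2^{\ell_2}\leq 2\cdot 4^{k}$ directly, whereas the paper uses the slightly tighter $4^{k_1}+4^{k_2}\leq 4^{k_1+k_2}=4^{k}$; both give the same \fastbccruntime\ bound.
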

\begin{proof}
Observe that Steps 2 and 5 take only polynomial time. 
By Theorems~\ref{thm:completion-fast-fpt} and \ref{thm:deletion-fpt}, Step 1 takes time at most \fastbccruntime. By Proposition~\ref{prop:partition-number}, Step 3 takes time $2^{\cO(\sqrt{k})} = 2^{o(k)}$.  
Now, in Step 4, we need to go over at most $k + 1$ choices of $k_i$ and at most $2k_i - 1$ choices of $\ell_i$ for each $i \in [2]$; and we can assume without loss of generality that $k \leq \binom{n}{2}$. 
That is, there are only $n^{\cO(1)}$ choices for the tuple $(k_1, k_2, \ell_1, \ell_2)$. In Step 4.1, we go over all possible choices of $(Y, Y', X, X')$. As $Y$ and $Y'$ are partitions of $\ell_1 \leq 2k_1 \leq 2k$, by Proposition~\ref{prop:partition-number}, each of them  has at most $p(2k) = 2^{\cO(\sqrt{k})}$ choices;  similarly, each of $X$ and $X'$ also has at most $p(2k) = 2^{\cO(\sqrt{k})}$ choices. Thus, there are  $2^{\cO(\sqrt{k})} \cdot 2^{\cO(\sqrt{k})} \cdot 2^{\cO(\sqrt{k})} \cdot 2^{\cO(\sqrt{k})} = 2^{\cO(\sqrt{k})} = 2^{o(k)}$ choices for the tuple $(Y, Y', X, X')$.  
 Finally, in Step 4.1.2, for each choice of $(Y, Y', X, X')$, we invoke the algorithm of Theorem~\ref{thm:annocm} twice; on the instance $(G', k_1, Y, Y')$, the algorithm takes time $2^{\ell_1} n^{\cO(1)} \leq 2^{2k_1} n^{\cO(1)} = 4^{k_1} n^{\cO(1)}$; similarly, on the instance $(G', k_1, Y, Y')$, it takes time at most $4^{k_2} n^{\cO(1)}$. Thus each execution of Step 4.1.2 takes time $4^{k_1} n^{\cO(1)} + 4^{k_2} n^{\cO(1)} \leq 4^{k_1 + k_2} n^{\cO(1)} = 4^{k} n^{\cO(1)}$. Therefore, the overall running time of the algorithm is bounded by \fastbccruntime.
\end{proof}
This completes the proof of Theorem~\ref{thm:editing-fast-fpt-cluster}.

\section{FPT Algorithms for the Counting Versions of BCC, BCD and BCE}\label{sec:counting}
In this section, we show that the \emph{counting} versions of \bcc, \bcd, and \bce\ admit \FPT\ algorithms. %
We show that same ideas that we used in Section~\ref{sec:fpt-completion-editing} can be adapted to work for the counting versions of the problems as well. In particular, the idea of associating solutions with sets of integer partitions adapt nicely to the counting setting---we can enumerate these sets of partitions and count the number of solutions that correspond to each set of partitions. We formally define the counting version of \bcc\ as follows. 

\defproblem{\countbccfull\ (\countbcc)}{A cluster graph $G$ and non-negative integers $k$ and $\eta$.}{Output the number of sets $F \subseteq \binom{V(G)}{2} \setminus E(G)$ such that $\card{F} \leq k$ and $G + F$ is an $\eta$-balanced cluster graph.}

We define the counting versions of \bcd\ and \bce\ analogously. We show that \countbcc\ can be solved in time \algobccruntime, and \countbcd\ and \countbce\ can be solved in time \countbceruntime. 

\subparagraph*{Notation.} Recall the following notation. For a multiset $X$ and $x \in X$,  $\mul(x, X)$ denotes the multiplicity of $x$ in $X$. And $\US(X)$ denotes the underlying set of $X$. We say that a multiset $X$ of non-negative integers is $\eta$-balanced if $\card{x - y} \leq \eta$ for every $x, y \in X$. 
For a graph $G$, $\CC(G)$ denotes the set of connected components of $G$, and  $\CS(G)$ denotes the multiset of the component sizes of $G$. In this section, whenever the graph $G$ is clear from the context, we use $\beta(j)$ to denote the number of connected components of $G$ of size exactly $j$, i.e., $\beta(j) = \mul(j, \CS(G))$. For a decision problem $\Pi$ and its natural counting version $\# \Pi$, we use $\nsol_{\Pi}(I)$ to denote the number of solutions for $\Pi$ on the input instance $I$; for example, $\nsol_{BCC}(G, k, \eta)$. And whenever the problem is clear from the context, we may drop the subscript and simply write $\nsol(I)$. 

\subsection{FPT Algorithm for (a Colour-Restricted Variant of) \#BCC}

We begin with \countbcc. But instead of solving \countbcc\ directly, we solve a more general version of the problem in which the each component of the input graph has one of $q + 1$ colours $0, 1,\ldots, q$, and certain colour-based constraints  disallow adding edges between certain components. This colour-restricted version will be useful in designing our algorithm for \countbce\ as well. 

We now formally introduce the colour-restricted variant of \bcc. 
Consider a graph $G$. Recall that $\CC(G)$ denotes the set of connected components of $G$. 
 Consider a colouring function $\fn{\mu}{\CC(G)}{[q]_0}$ that assigns colours to the connected components of $G$. Notice that $\mu$ assigns colours to the components of $G$ and not to individual vertices. But with a slight abuse of notation, for a vertex $v$, we define $\mu(v) = \mu(H)$, where $H$ is the connected component of $G$ that contains $v$. 
 Consider $G$ and $\fn{\mu}{\CC(G)}{[q]_{0}}$. For $F \subseteq \binom{V(G)}{2}$, we say that $F$ is \emph{colour-restricted} if for every $uv \in F$, either $\mu(u) = \mu(v) = 0$ or $\mu(u) \neq \mu(v)$. That is, $F$ is colour-restricted if and only if for every $uv \in F$, either both $u$ and $v$ have colour $0$, or $u$ and $v$ have different colours. 

 \defproblem{\crbccfull\ (\crbcc)}{A cluster graph $G$ along with a colouring function $\fn{\mu}{\CC(G)}{[q]_0}$, and a non-negative integer $k$ }{Decide if there exists $F \subseteq \binom{V(G)}{2} \setminus E(G)$ such that $F$ is colour-restricted, $\card{F} \leq k$ and $G + F$ is an $\eta$-balanced cluster graph.}

Observe that \crbcc\ is identical to \bcc, except that the solution has to be colour-restricted. In particular, \bcc\ is a special case of \crbcc\ in which all components are of colour $0$. Recall that to solve \bcc\ (and its counting version \countbcc), we relied on the fact that each solution $F$ for \bcc\ could naturally be associated with a multiset of integer partitions in which each partition  describes the sizes of the components of $G$ that are merged together to form a component of $G + F$. But this is not quite the case with \crbcc, as the colours of the components have to be taken into account. Nonetheless, notice that we can still associate each solution $F$ for \crbcc\ with multisets of (size, colour) pairs that describe the size and colour of individual components of $G$ that are merged together to form a component of $G + F$. That is, each solution $F$ for \crbcc\ can be associated with a pair $(X, \set{X_1, X_2,\ldots, X_t})$, where $X = \set{x_1, x_2,\ldots, x_t}$ is a partition of $\card{V(F)}$ and for every $i \in [t]$, $X_i = \set{(x_{i1}, \tau_{i1}), (x_{i2}, \tau_{i2}),\ldots, (x_{i r_i}, \tau_{i r_i})}$ is such that $\set{x_{11}, x_{12}, \ldots, x_{i r_i}}$ is a partition of $x_i$, and $\tau_{ij} \in [q]_0$; for each $i \in [t]$, $x_{i1}, x_{i2},\ldots, x_{i r_i}$ are the sizes of the $r_i$ components of $G$ that are merged together by $F$ to form a component of $G + F$ of size $x_i$, and the corresponding $\tau_{ij}$s are the colours of those $r_i$ components. We extend the earlier definitions of various  terms involving integer partitions---for example, definitions of a partition $X$ being $G$-valid, a completion of a cluster graph w.r.t. a pair $(X, \set{X_1, X_2,\ldots, X_t})$ etc.---to include the colours. Similarly, by suitably adapting the proof of Lemma~\ref{lem:completion-bound}, we derive the following result that bounds the number of choices for the pair $(X, \set{X_1, X_2,\ldots, X_t})$. 

\begin{lemma}\label{lem:colours-bound}
    Consider positive integers $\ell$ and $q$. The number of choices for the pair $(X, \set{X_1, X_2,\ldots, X_t})$, where $X = \set{x_1, x_2,\ldots, x_t}$ is a partition of $\ell$, and for every $i \in [t]$, $X_i = \set{(x_{i1}, \tau_{i1}), (x_{i2}, \tau_{i2}),\ldots, (x_{i r_i}, \tau_{i r_i})}$ is such that $\set{x_{11}, x_{12}, \ldots, x_{i r_i}}$ is a partition of $x_i$, and $\tau_{ij} \in [q]_0$ for every $j \in [r_i]$, is $2^{\cO(\ell \log q)}$. 
\end{lemma}

\begin{proof}
 Recall that for every $\ell \in \mathbb{N}$, $p(\ell)$ denotes the number of partitions of $\ell$. 
 Recall also  that by Proposition~\ref{prop:partition-number}, there exists a constant $C$ such that $p(\ell) \leq 2^{C \sqrt{\ell}}$ for every $\ell \in \mathbb{N}$. Among all the partitions of $\ell$, let $X^* = \set{x^*_1, x^*_2,\ldots,x^*_{t^*}}$ be a partition that maximises the product  $p(x_1) (q + 1)^{x_1} \cdot p(x_2) (q + 1)^{x_2} \cdots p(x_{t^*}) (q + 1)^{x_{t^*}}$.  
 
 Now, for each fixed partition $x_{i1} + x_{i2} + \cdots + x_{ir_i}$ of $x_i$, notice that the number of choices for the multiset $X_i = \set{(x_{i1}, \tau_{i1}), (x_{i2}, \tau_{i2}),\ldots, (x_{i r_i}, \tau_{i r_i})}$, where $\tau_{ij} \in [q]_0$, is precisely $(q + 1)^{r_i} \leq (q + 1)^{x_i}$; this is true because each $\tau_{ij}$ has $q + 1$ choices, and as $\set{x_{i1}, x_{i2},\ldots, x_{i r_i}}$ is a partition of $x_i$, we have  $r_i \leq x_i$. Hence for each $x_i$, the number of choices for the multiset $X_i$ is at most $p(x_i) (q + 1)^{x_i}$. Therefore, for each partition $X$ of $\ell$, where $X = \set{x_1, x_2,\ldots, x_t}$, there are at most $p(x_1) (q + 1)^{x_1} \cdot p(x_2) (q + 1)^{x_2} \cdots p(x_t) (q + 1)^{x_t}$ choices for the multiset $\set{X_1, X_2,\ldots, X_t}$. Finally, by summing over all partitions of $\ell$, we get that the number of choices for $(X, \set{X_1, X_2,\ldots, X_t})$ is at most 
 \begin{align*}
 \sum_{X} \prod_{i \in [t]} p(x_i) (q + 1)^{x_i}  &\leq \sum_{X} \prod_{j \in [t^*]} p(x^*_j) (q + 1)^{x^*_j} \\
 &= p(\ell) \prod_{j \in [t^*]} p(x^*_j) (q + 1)^{x^*_j} \\
 &\leq 2^{C \sqrt{\ell}} \prod_{j \in [t^*]} 2^{C\sqrt{x^*_j}} (q + 1)^{x^*_j} \\ 
 &= 2^{C(\sqrt{\ell} + \sqrt{x^*_1} + \sqrt{x^*_2} + \cdots + \sqrt{x^*_{t^*}})} \cdot (q + 1)^{x^*_1 + x^*_2 + \cdots x^*_{t^*}} \\
 &\leq 2^{C(\ell + x^*_1 + x^*_2 + \cdots x^*_{t^*})} \cdot (q + 1)^{x^*_1 + x^*_2 + \cdots x^*_{t^*}} \\
 &=2^{\cO(\ell)} \cdot (q + 1)^{\ell}. 
 \end{align*}
\end{proof}

Consider a graph $G$ and $\fn{\mu}{\CC(G)}{[q]_0}$. For $j \in \mathbb{N}$ and $\tau \in [q]_0$, let $\beta(j, \tau)$ be the number of components of $G$ of size $j$ and colour $\tau$. Also, let $\ncom(G, \mu, (X, \set{X_1, X_2,\ldots, X_t}))$ be the number of completions of $G$ w.r.t. $(X, \set{X_1, X_2,\ldots, X_t})$, which we compute next. 

\begin{lemma}\label{lem:colours-completion}
Consider a cluster graph $G$, a colouring function $\fn{\mu}{\CC(G)}{[q]_0}$ and a positive integer $\ell$. Let $(X, \set{X_1, X_2,\ldots, X_t})$ be as follows: (a) $X = \set{x_1, x_2,\ldots, x_t}$ is a partition of $\ell$, (b) for every $i \in [t]$, $X_i = \set{(x_{i1}, \tau_{i1}), (x_{i2}, \tau_{i2}),\ldots, (x_{i r_i}, \tau_{i r_i})}$ is such that $\set{x_{11}, x_{12}, \ldots, x_{i r_i}}$ is a partition of $x_i$, and $\tau_{ij} \in [q]_0$ for every $j \in [r_i]$, and (c) $X'$ is $G$-valid, where $X' = \bigcup_{i \in [t]} X_i$.  %
For $i \in [t], j \in \mathbb{N}$ and $\tau \in [q]_0$, let $d(i, j, \tau)$ be defined as follows: $d(1, j, \tau) = 0$ and $d(i, j, \tau) = \sum_{s = 1}^{i - 1} \mul((j, \tau), X_s)$ for $i > 1$. For $\hat X \in \set{X_1, X_2,\ldots, X_t}$, let $m(\hat X)$ denote the multiplicity of $\hat X$ in $\set{X_1, X_2,\ldots, X_t}$. 
Then, 
\[
\ncom(G, \mu, (X, \set{X_1, X_2,\ldots, X_t})) = \frac{\prod_{i \in [t]} \prod_{(j, \tau) \in \US(X_i)} \binom{\beta(j, \tau) - d(i, j, \tau)}{\mul((j, \tau), X_i)}}{\prod_{\hat X \in \US(\set{X_1, X_2,\ldots, X_t})} m(\hat X)!}. 
\]
Moreover, given $G, \mu$ and $(X, \set{X_1, X_2,\ldots, X_t})$, we can compute $\ncom(G, \mu, (X, \set{X_1, X_2,\ldots, X_t}))$ in polynomial time. 
\end{lemma}
\begin{proof}
Observe that $\ncom(G, \mu, (X, \set{X_1, X_2,\ldots, X_t}))$ is precisely the number of ways in which we can choose $t$ pairwise-disjoint sets $\ca{D}_{1}, \ca{D}_2,\ldots, \ca{D}_t$ of connected components of $G$ such that for each $i \in [t]$,  $\ca{D}_i$ consists of $r_i$ distinct components $H_{i1}, H_{i2},\ldots, H_{i r_i}$ with $\card{H_{ij}} = x_{ij}$ and $\mu(H_{ij}) = \tau_{ij}$ for every $j \in [r_i]$. For $i = 1$ to $t$ in this order, let us count the number of ways in which we can choose $\ca{D}_i$. Observe that for each $(j, \tau) \in \US(X_1)$,  $\ca{D}_1$ consists of $\mul((j, \tau), X_1)$  components of size $j$ and colour $\tau$. And $G$ has $\beta(j, \tau)$ components of $G$ of size $j$ and colour $\tau$. Therefore, we can choose $\mul((j, \tau), X_1)$ many such components in $\binom{\beta(j, \tau)}{\mul((j, \tau), X_1)}$ ways, and hence the number of choices for $\ca{D}_1$ is $\prod_{(j, \tau) \in \US(X_1)} \binom{\beta(j, \tau)}{\mul((j, \tau), X_1)}$. Now, consider $i > 1$ and $(j, \tau) \in \US(X_i)$.  Out of the $\beta(j, \tau)$ components of size $j$ and colour $\tau$, for each $s < i$, we choose exactly $\mul((j, \tau), X_s)$ components; so when we come to $i$, we will have already chosen $\sum_{s = 1}^{i - 1} \mul((j, \tau), X_s) = d(i, j, \tau)$ such components, and we will be left with $\beta(j, \tau) - d(i, j, \tau)$ components. Hence we can choose $\mul((j, \tau), X_i)$ of those components in $\binom{\beta(j, \tau) - d(i, j, \tau)}{\mul((j, \tau), X_i)}$ ways, which implies that the number of choices for $\ca{D}_i$ is precisely $\prod_{(j, \tau) \in \US(X_i)} \binom{\beta(j, \tau) - d(i, j, \tau)}{\mul((j, \tau), X_i)}$. Thus the number of choices for the ordered sequence $(\ca{D}_1, \ca{D}_2, \ldots, \ca{D}_t)$ is $\prod_{i \in [t]} \prod_{(j, \tau) \in \US(X_i)} \binom{\beta(j, \tau) - d(i, j, \tau)}{\mul((j, \tau), X_i)}$. 

Now, notice that the order of the sets $\ca{D}_1, \ca{D}_2,\ldots, \ca{D}_t$ is immaterial. And for each $\hat X \in \US(\set{X_1, X_2,\ldots, X_t})$, $\set{\ca{D}_1, \ca{D}_2,\ldots, \ca{D}_t}$ contains $m_{\hat X}$ identical sets; we think of $\ca{D}_i$ and $\ca{D}_{i'}$ as identical if $X_i = X_{i'}$. Any permutation of these $m_{\hat X}$ sets would all lead to the same set $\set{\ca{D}_1, \ca{D}_2,\ldots, \ca{D}_t}$, and there are $m_{\hat X}!$ such permutations. Thus, to obtain the number of choices for $\set{\ca{D}_1, \ca{D}_2,\ldots, \ca{D}_t}$, we must divide the product $\prod_{i \in [t]} \prod_{(j, \tau) \in \US(X_i)} \binom{\beta(j, \tau) - d(i, j, \tau)}{\mul((j, \tau), X_i)}$ by $\prod_{\hat X \in \US(\set{X_1, X_2,\ldots, X_t})} m_{\hat X}!$. Finally, as all the numbers involved in both these products can be encoded using $n^{\cO(1)}$ bits, we can compute $\ncom(G, \mu, (X, \set{X_1, X_2,\ldots, X_t}))$ in polynomial time. 
\end{proof}

Consider an instance $(G, \mu, k, \eta)$ of \crbcc. To solve \crbcc\ and its counting variant \countcrbcc,  we can use the correspondence between solutions for $(G, \mu, k, \eta)$ and completions of $G$ w.r.t. pairs $(X, \set{X_1, X_2,\ldots, X_t})$ with appropriate properties.  Among other properties, we would in particular need the $X_i$s to be colour-restricted; for $X_i = \set{(x_{i1}, \tau_{i1}), (x_{i2}, \tau_{i2}),\ldots, (x_{i r_i}, \tau_{i r_i})}$, where each $\tau_{ij} \in [q]_0$, we say that $X_i$ is colour-restricted if for each $p \in [q]$, there exists at most one index $j \in [r_i]$ such that $\tau_{ij} = p$. Notice that if $X_1, X_2,\ldots, X_t$ are colour-restricted, then $F$ is colour-restricted for any $F \subseteq \binom{V(G)}{2} \setminus E(G)$ such that $G + F$ is a completion of $G$ w.r.t. $(X, \set{X_1, X_2,\ldots, X_t})$. 
Now, as in the case of \bcc, given an instance $(G, \mu, k, \eta)$ of \crbcc, by going over all possible choices for $(X, \{X_1, X_2,\ldots, X_t\})$ and summing up $\ncom(G, \mu, (X, \set{X_1, X_2,\ldots, X_t}))$ we can count the number of solutions for the instance $(G, \mu, k, \eta)$. Thus,  Lemmas~\ref{lem:colours-bound} and \ref{lem:colours-completion} together imply the following result. 

\begin{lemma}\label{lem:colours-counting}
\countcrbcc\ admits an algorithm that runs in time \colourruntime. 
\end{lemma}

As \bcc\ is a special case of \crbcc\ with just one colour (i.e., $q + 1 = 1$), Lemma~\ref{lem:colours-counting} immediately implies the following result. 

\begin{theorem}
    \countbcc\ admits an algorithm that runs in time \algobccruntime. 
\end{theorem}

\subsection{FPT Algorithms for \#BCD  and \#BCE}

We now turn to the counting versions of \bcd\ and \bce, and design algorithms for these problems that run in time \countbceruntime. These algorithms, as in the case of our algorithm for \countbcc, work by enumerating all representative solutions and counting the number of solutions that correspond to each representative. In particular, we reduce the problem to $2^{\cO(k \log k)}$ many instances of appropriately defined variants of \countbcc, \countbcd\ or \countbce, where the input graphs are cluster graphs. And then we can easily associate the representative solutions with a set of integer partitions, just as we did in our algorithms for (the decision versions of) \bcc\ and \bce. 

\subparagraph*{Notation.} %
Consider a multiset $Y = \set{y_1, y_2,\ldots, y_s}$ of positive integers. By $\tsum(Y)$, we denote the quantity $\sum_{i \in [s]} y_j$. In particular, if $Y$ is a partition of $\ell$ for some positive integer $\ell$, then $\tsum(Y) = \ell$. For a clique $G$ on $y_1$ vertices and a partition $Y_1$ of $y_1$, let $\ndel(y_1, Y_1)$ denote the number of deletions of $G$ with respect to $(\set{y_1}, Y_1)$. For a cluster graph $G$, a partition $Y = \set{y_1, y_2,\ldots, y_s}$ (of some positive integer $\ell$) such that $Y$ is $G$-valid, and a partition $Y_i$ of $y_i$ for each $i \in [s]$, let $\ndel(G, (Y, \set{Y_1, Y_2,\ldots, Y_s}))$ denote the number of deletions of $G$ with respect to $(Y, \set{Y_1, Y_2,\ldots, Y_{s}})$. \lipicsEnd

 We begin by proving the following lemma, which counts the number of deletions of a clique with respect to an integer partition. 

\begin{lemma}\label{lem:count-deletion-clique}
Let $G$ be a clique on $y_1$ vertices, and let $Y_1 = \set{y_{11}, y_{12}, \ldots, y_{1 q_1}}$ be a partition of $y_1$. For $i \in [q_1]$, let $b_i$ be defined as follows: $b_1 = 0$ and for $i > 1$, $b_i = \sum_{t = 1}^{i - 1} y_{1 t}$. For $y \in \US(Y_1)$, let $m(y) = \mul(y, Y_1)$.  Then,  
\[
\ndel(\set{y_1}, Y_1) = \frac{\prod_{i = 1}^{q_1} \binom{y_1 - b_i}{y_{1i}}}{\prod_{y \in \US(Y_1)} m(y)!}. 
\]
\end{lemma}
\begin{proof}
Observe that $\ndel(\set{y_1}, Y_1)$ is precisely the number of ways in which we can partition $V(G)$ into exactly $q_1$ parts, say $D_1, D_2,\ldots, D_{q_1}$ such that $\card{D_{i}} = y_{1i}$ for every $i \in [q_1]$. Now, the number of choices for $D_1$ is precisely the number of ways in which we can choose $y_{11}$ vertices from $y_1$ vertices, which is precisely $\binom{y_1}{y_{11}}$. Having chosen $D_1$, the number of ways in which we can choose $D_2$ is precisely the number of ways in which we can choose $y_{12}$ vertices from $y_1 - y_{11} = y_1 - b_2$ vertices, which is precisely $\binom{y_1 - b_2}{y_{12}}$. More generally, having chosen $D_1, D_2,\ldots, D_{i - 1}$, the number of choices for $D_i$ is precisely the number of ways in which we can choose $y_{1i}$ vertices from $y_1 - \sum_{t = 1}^{i - 1}{y_{1t}} = y_1 - b_i$ vertices, which is precisely $\binom{y_1 - b_i}{y_{1i}}$. Therefore, the number of choices for the tuple $(D_1, D_2,\ldots, D_{q_1})$ is $\prod_{i = 1}^{q_1} \binom{y_1 - b_i}{y_{1i}}$. Now, as $\set{D_1, D_2,\ldots, D_{q_1}}$ is a partition of $V(G)$, the order of the sets in the partition is immaterial. For each $y \in \US(Y_1)$, the partition $\set{D_1, D_2,\ldots, D_{q_1}}$ contains exactly $\mul(y, Y_1) = m(y)$ sets of size $y$; any permutation of these $m(y)$ sets would all lead to the same partition of $V(G)$, and there are $m(y) !$ such permutations. Thus, to obtain, $\ndel(\set{y_1}, Y_1)$, we must divide the product $\prod_{i = 1}^{q_1} \binom{y_1 - b_i}{y_{1i}}$ by $\prod_{y \in \US(Y_1)} m(y) !$, which proves the lemma. 
\end{proof} 

We now extend Lemma~\ref{lem:count-deletion-clique} to count the number of deletions of a cluster graph w.r.t. a multiset of integer-partitions. 
\begin{lemma}\label{Lem:count-deletion-cluster}
Consider a cluster graph $G$, a positive integer $\ell$, and a pair $(Y, \set{Y_1, Y_2,\ldots, Y_s})$, where $Y = \set{y_1, y_2,\ldots, y_s}$ is a partition of $\ell$, and $Y_i$ is a partition of $y_i$ for each $i \in [s]$, and $Y$ is $G$-valid. 
\begin{enumerate}  
\item Let $\US(\set{Y_1, Y_2,\ldots, Y_s} = \set{Y'_1, Y'_2,\ldots, Y'_r}$.  
\item For $j \in [r]$, let $m(Y'_j) = \mul(Y'_j, \set{Y_1, Y_2,\ldots, Y_s})$, and $y'_j = \tsum(Y'_j)$. 
\item Consider the $r$ integers $c_1, c_2,\ldots, c_r$ defined as follows: $c_1 = 0$, and for $j > 1$, $c_j = \sum_{\substack{i \in [j - 1] \\ y'_i = y'_j}} m(Y'_i)$. 
\end{enumerate}
Then, 
\[
\ndel(G, (Y, \set{Y_1, Y_2,\ldots, Y_s})) = \Lb{\prod_{j \in [r]} \binom{\beta(y'_j) - c_j}{m(j)}} \Lb{\prod_{i \in [s]} \ndel(\set{y_i}, Y_i)}.
\]
\end{lemma}
\begin{proof}
To prove the lemma, observe that $\ndel(G, (Y, \set{Y_1, Y_2,\ldots, Y_s}))$ is precisely product $\zeta_1 \cdot \zeta_2$, where $\zeta_1$ is the number of choices for $s$ components $H_1, H_2,\ldots, H_s$ such that each $H_i$ corresponds to the partition $Y_i$, and $\zeta_2$ is the number of ways in which the $s$ components can be split w.r.t. to the partitions $Y_1, Y_2,\ldots, Y_s$. Observe that $\zeta_2$ is precisely the product of the number of deletions of each $H_i$ w.r.t. $(\set{y_i}, Y_i)$, i.e., $\zeta_2 = \prod_{i \in [s]} \ndel(\set{y_i}, Y_i)$. 

We now calculate $\zeta_1$, which is the number of choices for the components $H_1, H_2,\ldots, H_s$. We ought to be careful while counting the number of choices, as 
we cannot treat two components $H_i$ and $H_{i'}$ as interchangeable simply because $\card{H_i} = \card{H_{i'}}$; we also need to consider the corresponding partitions $Y_i$ and $Y_{i'}$ that respectively determine how $H_i$ and $H_{i'}$ will be split into smaller components. On the other hand, if the partitions $Y_i$ and $Y_{i'}$ are identical, then $H_i$ and $H_{i'}$ are indeed interchangeable. 
Thus, for each $Y'_j  \in \set{Y'_1, Y'_2,\ldots, Y'_r} = \US(\set{Y_1, Y_2,\ldots, Y_s})$, we need to count the number of ways in which we can choose $\mul(Y'_j, \set{Y_1, Y_2,\ldots, Y_s}) = m(Y'_j)$ components of size $y'_j = \tsum(Y'_j)$.  Now, notice that for distinct $j, j' \in [r]$, we have $Y'_{j} \neq Y'_{j'}$, but we may still have $y'_j = \tsum(Y'_j) = \tsum(Y'_{j'}) = y'_{j'}$. So we will have to account for this possibility as well. 

For each $j = 1$ to $r$ in this order, we count the number of ways in which we can choose $m(Y'_j)$ components of size $y'_{j}$. Notice that for $j = 1$, there are $\binom{\beta(y'_j)}{m(Y'_j)}$ choices for the required components. Consider $j > 1$. Out of the $\beta(y'_j)$ components of $G$ of size $y'_j$, for each $i < j$ such that $y'_i = y'_j$, we  have already chosen exactly $m(Y'_i)$ components; so when we come to $j$, we will have already chosen $\sum_{\substack{i \in [j - 1] \\ y'_i = y'_j}} m(Y'_i) = c_j$ components and we will be left with $\beta(y'_j) - c_j$ components of size $y'_j$. Hence we can choose $m(Y'_j)$ of those components in $\binom{\beta(y'_j) - c_j}{m(Y'_j)}$ ways. Thus the total number of choices for the $s$ components is precisely $\prod_{j \in [r]} \binom{\beta(y'_j) - c_j}{m(Y'_j)}$.  
\end{proof}

\subsection*{An algorithm for \#BCD}
To design our algorithm for \countbcd, we rely on some of the arguments that we used in our kernel for \bcd\ to bound the number of vertices that belong to the non-clique components. Specifically, we rely on the fact that the main reduction rule that we used to bound the non-clique part of the graph (Reduction Rule~\ref{rule:deletion-clique-or-indset}) does not change the number of solutions, and the other rules (Reduction Rules~\ref{rule:deletion-sanity-check-2} and \ref{rule:non-adjacent}) simply identify no-instances. By exhaustively applying these three rules, we can derive the following result. 
\begin{lemma}\label{lem:deletion-Z}
There is an algorithm that given an instance $(G, k, \eta)$ of \countbcd, runs in polynomial time, and either correctly reports that $(G, k, \eta)$ is a no-instance of \bcd, or  returns an instance $(G', k', \eta)$ of \countbcd\ and a set $Z \subseteq V(G')$ such that $\nsol(G, k, \eta) = \nsol(G', k', \eta)$, $\card{Z} = \cO(k^{4})$, $G' - Z$ is a cluster graph, and there is no edge between $Z$ and $V(G') \setminus Z$ (that is, $Z$ contains all the non-clique components of $G'$). 
\end{lemma}

Consider an instance $(G, k, \eta)$ of \countbcd. From now on, we assume that we have run the algorithm of Lemma~\ref{lem:deletion-Z} on our input instance and found the set $Z$; let us denote the resulting instance by $(G, k, \eta)$ as well. Now, to find $\nsol(G, k, \eta)$, we guess how each solution $F \subseteq E(G)$ intersects with $Z$. Notice that each edge of $G$ is either an edge of $G[Z]$ or an edge of $G - Z$, and hence each solution $F \subseteq E(G)$ for $(G, k, \eta)$ is the disjoint union of $F_0 = F \cap E(G[Z])$ and $F_1 = F \cap E(G - Z)$. Thus the components of $G - F$ are precisely the components of $G[Z] - F_0$ and the components of $(G - Z) - F_1$. That is, $\CC(G - F) = \CC(G[Z] - F_0) \cup \CC((G - Z) - F_1)$, and hence $\CS(G - F) = \CS(G[Z] - F_0) \cup \CS((G - Z) - F_1)$; as $G - F$ is $\eta$-balanced, we can conclude that $\CS(G[Z] - F_0) \cup \CS((G - Z) - F_1)$ is $\eta$-balanced. Finally, the facts that $G - F$ is an $\eta$-balanced cluster graph and $F_0 \cap F_1 = \emptyset$ together imply that $G[Z] - F_0$ and $(G - Z) - F_1$ are $\eta$-balanced cluster graphs. In particular, $F_1$ is a solution for the instance $(G - Z, k - \card{F_0}, \eta)$ of \bcd. And recall that by Lemma~\ref{lem:deletion-Z}, $G - Z$ is a cluster graph, and therefore, we can associate $F_1$ with a pair $(Y, \set{Y_1, Y_2,\ldots, Y_s})$, where $Y$ is a partition of $\card{V(F_1)}$. And by going over all possible $(Y, \set{Y_1, Y_2,\ldots, Y_s})$, we can count the number of solutions.  %
We formalise this below.    

Formally, for $F_0 \subseteq E(G[Z])$ and a pair $(Y, \set{Y_1, Y_2,\ldots, Y_s})$, we say that the pair $(F_0, (Y, \set{Y_1, Y_2,\ldots, Y_s}))$ is good for $(G, k, \eta)$ if the following conditions hold:
\begin{itemize}
\item $F_0$ is a solution for the instance $(G[Z], k, \eta)$ of \bcd;
\item $k - \card{F_0} \geq 1$;
\item $(Y, \set{Y_1, Y_2,\ldots, Y_s})$ is such that $Y = \set{y_1, y_2,\ldots, y_s}$ is a partition of $\ell$, for some $\ell \in [2(k - \card{F_0})] \setminus \set{1}$, and $Y_i$ is a partition of $y_i$ for every $i \in [s]$;
\item $Y$ is $(G - Z)$-valid;
\item $\CS(G[Z] - F_0) \cup ((\CS(G - Z) \setminus Y) \cup \hat Y)$ is $\eta$-balanced, 
where $\hat Y = \bigcup_{i \in [s]}Y_i$; 
\item $\sum_{i \in [s]} \spp(Y_i) \leq k - \card{F_0}$.
\end{itemize}

Notice that for each $F_0$ and $(Y, \set{Y_1, Y_2,\ldots, Y_s})$, we can check in polynomial time if $(F_0, Y, \set{Y_1, Y_2,\ldots, Y_s})$ is good for $(G, k, \eta)$. Notice also that each good $(F_0, Y, \set{Y_1, Y_2,\ldots, Y_s})$ only accounts for solutions of the form $F = F_0 \cup F_1$, where $F_1 \neq \emptyset$. We also need to account for the number of solutions that are fully contained in $E(G[Z])$. Formally, let $f(G, k, \eta, Z)$ be the number of sets $F_0 \subseteq E(G[Z])$ such that $F_0$ is a solution for the instance $(G, k, \eta)$ of \bcd. 

Putting all this together, we have 
\begin{equation*}
\nsol(G, k, \eta) = f(G, k, \eta, Z) + \sum_{(F_0, Y, \set{Y_1, Y_2,\ldots, Y_s})} \ndel(G - Z, (Y, \set{Y_1, Y_2,\ldots, Y_s}),
\tag{\textsl{Eq. BCD}}\label{eq:bcd}
\end{equation*}
where the summation is over all good $(F_0, Y, \set{Y_1, Y_2,\ldots, Y_s})$. 

The algorithm for \countbcd\ is now straightforward. We compute $\nsol(G, k, \eta)$ using Equation (\ref{eq:bcd}). To do this, we first compute $f(G, k, \eta, Z)$ by going over all subsets $F_0 \subseteq E(G[Z])$ of size at most $k$, and checking if $F_0$ is indeed a solution for $(G, k, \eta)$; this takes time $\sum_{i = 0}^k\binom{\card{Z}}{i} n^{\cO(1)} = \binom{\cO(k^4)}{k} n^{\cO(1)} = 2^{\cO(k \log k)} n^{\cO(1)}$. We then go over all pairs $(F_0, (Y, \set{Y_1, Y_2,\ldots, Y_s}))$ that are good for $(G, k, \eta)$, and compute $\ndel(G - Z, (Y, \set{Y_1, Y_2,\ldots, Y_s})$ using Lemma~\ref{Lem:count-deletion-cluster}. As $F_0 \subseteq E(G[Z])$ and $\card{E(G[Z])} \leq \binom{\card{Z}}{2}$, $F_0$ has at most $\sum_{i = 0}^k \binom{\binom{\card{Z}}{2}}{i} = 2^{\cO(k \log k)}$ choices and $(Y, \{Y_1, Y_2,\ldots, Y_s\})$ has $2^{\cO(k)}$ choices (by Lemma~\ref{lem:completion-bound}), this step also takes time $2^{\cO(k \log k)} n^{\cO(1)}$. Notice that all the arithmetic operations we need to perform to compute $\ndel(G - Z, (Y, \set{Y_1, Y_2,\ldots, Y_s}))$ only involve numbers that can be encoded using $n^{\cO(1)}$ bits. We can therefore compute $\nsol(G, k, \eta)$ in time \countbceruntime. We thus have the following result. 

\begin{theorem}\label{thm:deletion-count}
    \countbcd\ admits an algorithm that runs in time \countbceruntime. 
\end{theorem}

\subsubsection*{An algorithm for \#BCE}

To design an algorithm for \countbce, we follow the same strategy that we used for \countbcd. But there are differences. We first discuss them, and outline our main arguments. 

\subparagraph*{Idea behind our algorithm for \countbce.} Consider an instance $(G, k, \eta)$ of \countbce. As in the case of \countbcd, we start with a subset of vertices $Z$ with similar properties: $\card{Z} = k^{\cO(1)}$, $G - Z$ is a cluster graph and there are no edges between $Z$ and $V(G) \setminus Z$. Every solution $F \subseteq \binom{V(G)}{2}$ is then the disjoint union of three (possibly empty) parts, $F_0$, $F_1$ and $F_2$, where $F_0$ is the intersection of $F$ with $G[Z]$, $F_1$ is the set of edges that we delete (apart from those in $F_0$) and $F_2$ is the set of edges that we add (again, apart from those in $F_0$). Notice that $G \triangle F_0$ must be a cluster graph. 
Now, to calculate $\nsol(G, k, \eta)$, we go over all possible choices for $F_0$, and count the number of possibilities for $F_1$ and $F_2$ that are consistent with $F_0$. And as before, we associate $F_1$ and $F_2$ with sets of partitions of $\card{V(F_1)}$ and $\card{V(F_2)}$. 
But quite unlike in the case of \countbcd, where for each $F_0$, the rest of the solution was ``untouched'' by $Z$ (meaning completely contained in $G - Z$), we do not have that guarantee for \countbce, as we  can now add edges to $G$ (i.e., $F_2$) and some of those edges could be incident with vertices in $Z$. So while counting the number of solutions we must ensure that no two components of $G[Z] \triangle F_0$ are merged together when we add the edges in $F_2$. A component of $G[Z] \triangle F_0$, however, may be merged with a component of $G - Z$. Now, as we did in our algorithm for (the decision version of) \bce, we think of $G \triangle F$ as being obtained from $G \triangle F_0$ by a two step process: first by deleting the edges of $F_1$, and then adding the edges of $F_2$. Notice that $F_1$ and $F_2$ are disjoint. So in this two step process,  $F_2$ does not merge together two components that were split by $F_1$; for example, consider any  component $H$ of $G - Z$ that was split into smaller components $H_1, H_2,\ldots, H_r$ when we deleted $F_1$; notice that $F_2$ does not merge two $H_i$s together as $F_1 \cap F_2 = \emptyset$.  We must take this fact into account while counting the number of choices for $F_2$. To deal with all these restrictions on certain pairs of components that cannot be merged together, we reduce the problem to \crbcc\. %
\lipicsEnd

\paragraph*{Solving \countbce}

We are finally ready to design an algorithm for \countbce. The last ingredient we need for our algorithm is a set $Z \subseteq V(G)$ that contains all the non-clique components of $G$. To find such a $Z$, we turn to our kernelization algorithm for \bce, and in particular, Reduction Rules~\ref{rule:editing-edge}, \ref{rule:editing-non-edge} and \ref{rule:editing-number-of-visible}  that we used to bound the number of vertices that belong to the non-clique components of $G$. Reduction Rules~\ref{rule:editing-edge} and \ref{rule:editing-non-edge} respectively delete and add edges that belong to every solution, and therefore preserve the number of solutions, and Reduction Rule~\ref{rule:editing-number-of-visible} simply eliminates no-instances. By exhaustively applying these three rules, we can derive the following result. 

\begin{lemma}\label{lem:editing-Z}
There is an algorithm that given an instance $(G, k, \eta)$ of \countbce, runs in polynomial time, and either correctly reports that $(G, k, \eta)$  is a no-instance of \bce, or returns an instance $(G', k', \eta)$ of \countbce\ and a set $Z \subseteq V(G')$ such that $\nsol(G, k, \eta) = \nsol(G', k', \eta)$, $\card{Z} = \cO(k^{3})$, $G' - Z$ is a cluster graph, and there is no edge between $Z$ and $V(G') \setminus Z$ (that is, $Z$ contains all the non-clique components of $G'$). 
\end{lemma}

Consider an instance $(G, k, \eta)$ of \countbce. From now on, we assume that we have run the algorithm of Lemma~\ref{lem:editing-Z} on our input instance and found the set $Z$; let us denote the resulting instance by $(G, k, \eta)$ as well. %
Notice that each solution $F \subseteq \binom{V(G)}{2}$ for $(G, k, \eta)$ is the disjoint union of three sets $F_0$, $F_1$ and $F_2$, where $F_0 = F \cap \binom{Z}{2}$ is the intersection of $F$ with $G[Z]$, $F_1 = (F \cap E(G)) \setminus F_0 $ is the edges that we delete from $G$ (apart from those in $F_0$, and thus $F_1 \subseteq E(G - Z)$), and $F_2 = (F \cap \binom{V(G)}{2} \setminus E(G)) \setminus F_0$ is the set of edges that we add to $G$ (apart from those in $F_0$). 
As $F = F_0 \cup F_1 \cup F_2$ is a solution, and $F_1 \cup F_2$ does not modify any (non-)edge with both its endpoints in $Z$, we can conclude that $G[Z] \triangle F_0$ is cluster graph; this, along with the fact that $G - Z$ is a cluster graph, implies that $G \triangle F_0$ is indeed a cluster graph. Notice that we cannot, however,  conclude that $F_0$ is a solution for the instance $(G[Z], k, \eta)$, as $G[Z] \triangle F_0$ need not be $\eta$-balanced. %
To count the number of solutions $F = F_0 \cup F_1 \cup F_2$ for $(G, k, \eta)$, we will now group the solutions into four classes depending on whether or not $F_1$ or $F_2$ is empty, and we will show that the number of solutions in each class can be computed in time \countbceruntime.  
In what follows, we will write $\bm{Y}$ as a shorthand for $(Y, \set{Y_1, Y_2,\ldots, Y_s})$. 

\begin{enumerate}
\item
{\bf When $F_1 \neq \emptyset$ and $F_2 \neq \emptyset$.} Let $A_1$ be the number of solutions $F = F_0 \cup F_1 \cup F_2$ for $(G, k, \eta)$ such that $F_1 \neq \emptyset$ and $F_2 \neq \emptyset$. 
For $F_0 \subseteq \binom{Z}{2}$ and $\bm{Y}$, we say that the pair $(F_0, \bm{Y})$ is semi-good for $(G, k, \eta)$ if the following conditions hold:
\begin{itemize}
\item $G \triangle F_0$ is a cluster graph;
\item $k - \card{F_0} \geq 1$;
\item $(Y, \set{Y_1, Y_2,\ldots, Y_s})$ is such that $Y = \set{y_1, y_2,\ldots, y_s}$ is a partition of $\ell$, for some $\ell \in [2(k - \card{F_0})] \setminus \set{1}$, and $Y_i$ is a partition of $y_i$ for every $i \in [s]$;
\item $Y$ is $(G - Z)$-valid;
\item $\sum_{i \in [s]} \spp(Y_i) \leq k - \card{F_0}$.
\end{itemize}

Consider a pair $(F_0, \bm{Y})$ that is semi-good for $(G, k, \eta)$. Based on $(F_0, \bm{Y})$, we define an instance $I_{F_0, \bm{Y}} = (G', \mu', k', \eta')$ of \crbcc\ as follows. 
For $i \in [s]$, let $Y_i = \set{y_{i1}, y_{i2},\ldots, y_{ir_i}}$. We first define a cluster graph $G'$ and a colouring function $\fn{\mu'}{\CC(G')}{[s + 1]_0}$ obtained from $G \triangle F_0$ as follows. Fix $s$ components $H_1, H_2,\ldots, H_s$ of $G - Z$ such that $\card{H_i} = x_i$ for every $i \in [s]$; notice that the $H_i$s are components of $G \triangle F_0$ as well. Let $G'$ be a deletion of $G \triangle F_0$ w.r.t. $(Y, \set{Y_1, Y_2,\ldots, Y_s})$ in which each $H_i$ is split into $r_i$ components $H_{i1}, H_{i2},\ldots, H_{ir_i}$ such that $\card{H_{ij}} = y_{ij}$ for every $j \in [r_i]$. We will  define $\mu'$ in such a way that no two $H_{ij}$ and $H_{ij'}$ can be merged together, and no two components of $G'$ that are fully contained in $Z$ can be merged together either. With this in mind, we define $\mu'(H_{ij}) = i$ for every $i \in [s]$, $j \in [r_i]$, and for every component $H \neq H_{ij}$ for any $i \in [s]$, $j \in [r_i]$, we define $\mu'(H) = s + 1$ if  $V(H) \subseteq Z$ and $\mu'(H) = 0$ otherwise. Also, we define $k' = k - \card{F_0} - \sum_{i \in [s]}\spp(Y_i)$, and $\eta' = \eta$.  

It is straightforward to argue that $F = F_0 \cup F_1 \cup F_2$ is solution for $(G, k, \eta)$ with $F_1, F_2 \neq \emptyset$ if and only if $\card{V(F_1)}$ can be associated with an appropriately defined $\bm{Y}$ such that $(F_0, \bm{Y})$ is semi-good for $(G, k, \eta)$ and $F_2$ is a non-empty solution for the instance $I_{F_0, \bm{Y}}$.  We therefore have
\[
A_1 = \sum_{(F_0, \bm{Y})} \nsol^{\neq \emptyset}(I_{F_0, \bm{Y}}), 
\]
where the summation is over all pairs $(F_0, \bm{Y})$ that are semi-good for $(G, k, \eta)$, and $\nsol^{\neq \emptyset}(I_{F_0, \bm{Y}})$ is the number of non-empty solutions for the instance $I_{F_0, \bm{Y}}$ of \crbcc. 
Also, we can compute $A_1$ in time \countbceruntime. To see this, notice that as $s \leq \card{V(F_1)} \leq 2k$, the function $\mu$ uses $s + 2$ colours (including $0$). Therefore, by Lemma~\ref{lem:colours-counting}, we can compute $\nsol^{\neq \emptyset}(I_{F_0, \bm{Y}})$ in time \countbceruntime. And the number of choices for the pair $(F_0, \bm{Y})$ is $\sum_{j = 0}^k\binom{\binom{\card{Z}}{2}}{j} \cdot 2^{\cO(k)} = \card{Z}^{\cO(k)} \cdot 2^{\cO(k)} = 2^{\cO(k \log k)}$. We can therefore compute $A_1$ in time \countbceruntime.  

\item {\bf When $F_1 = \emptyset$ and $F_2 \neq \emptyset$.} Let $A_2$ be the number of solutions $F = F_0 \cup F_1 \cup F_2$ for $(G, k, \eta)$ such that $F_1 = \emptyset$ and $F_2 \neq \emptyset$. As in the previous case, we will show that computing $A_2$ amounts to solving an instance of \countcrbcc. We define an instance $I_{F_0} = (G'', \mu'', k'', \eta'')$ of \crbcc\ as follows: $G'' = G \triangle F_0$, $k'' = k - \card{F_0}$, $\eta'' = \eta$, and for a component $H$ of $G''$, $\mu''(H) = 1$ if $V(H) \subseteq Z$ and $\mu''(H) = 0$ otherwise. Again, it is straightforward to argue that $F = F_0 \cup F_2$ with $F_2 \neq \emptyset$ is a solution for $(G, k, \eta)$ if and only if $F_2$ is a non-empty solution for the instance $I_{F_0}$ of \crbcc. Therefore, 
\[
A_2 = \sum_{F_0} \nsol^{\neq \emptyset}(I_{F_0}),
\]
where the summation is over all $F_0 \subseteq \binom{Z}{2}$ such that $G \triangle F_0$ is a cluster graph and $\card{F_0} \leq k$. Now, as $\mu''$ uses only $2$ colours, by Lemma~\ref{lem:colours-counting}, we can compute $\nsol^{\neq \emptyset}(I_{F_0})$ in time \algobccruntime. Also, $F_0$ has $\sum_{j = 0}^k\binom{\binom{\card{Z}}{2}}{j} = 2^{\cO(k \log k)}$ choices. We can therefore compute $A_2$ in time \countbceruntime. 

\item {\bf When $F_1 \neq \emptyset$ and $F_2 = \emptyset$.} Let $A_3$ be the number of solutions $F = F_0 \cup F_1 \cup F_2$ for $(G, k, \eta)$ such that $F_1 \neq \emptyset$ and $F_2 = \emptyset$. Notice that $F = F_0 \cup F_1$ with $F_1 \neq \emptyset$ is a solution for $(G, k, \eta)$ if and only if $F_1$ is a non-empty solution for the instance $(G \triangle F_0, k - \card{F_0}, \eta)$ of \bcd\ such that $F_2 \cap E(G[Z] \triangle F_0) = \emptyset$. It is then straightforward to associate $F_1$ with $\bm{Y}$ such that $(G - Z) - F_1$ is a deletion of $G - Z$ w.r.t. $\bm{Y}$.  Therefore,
\[
A_3 = \sum_{(F_0, \bm{Y})} \ndel(G - Z, \bm{Y}),
\]
where the summation is over all pairs $(F_0, \bm{Y})$ such that 
\begin{itemize}
\item $F_0 \subseteq \binom{Z}{2}$ and $G \triangle F_0$ is a cluster graph,
\item $k - \card{F_0} \geq 1$,
\item $(Y, \set{Y_1, Y_2,\ldots, Y_s})$ is such that $Y = \set{y_1, y_2,\ldots, y_s}$ is a partition of $\ell$, for some $\ell \in [2(k - \card{F_0})] \setminus \set{1}$, and $Y_i$ is a partition of $y_i$ for every $i \in [s]$;
\item $Y$ is $(G - Z)$-valid;
\item $(\CS(G \triangle F_0)  \setminus Y) \cup \hat Y$ is $\eta$-balanced, 
where $\hat Y = \bigcup_{i \in [s]}Y_i$; 
\item $\sum_{i \in [s]} \spp(Y_i) \leq k - \card{F_0}$.
\end{itemize}
Again, we can argue that the number of choices for $(F_0, \bm{Y})$ is $2^{\cO(k \log k)}$, and we can thus compute $A_3$ in time \countbceruntime. 

\item {\bf When $F_1 = F_2 = \emptyset$.} Let $A_4$ be the number of solutions $F = F_0 \cup F_1 \cup F_2$ for $(G, k, \eta)$ such that $F_1 = \emptyset$ and $F_2 \neq \emptyset$. Notice in this case that $F_0$ must be a solution for the instance $(G, k, \eta)$ of \bce. Thus, by going over all possible choices for $F_0$ and checking if it is a solution, we can compute $A_4$. In particular, 
$A_4 = \sum_{F_0} 1$, 
where the summation is over all $F_0 \subseteq \binom{Z}{2}$ such that $F_0$ is a solution for $(G, k, \eta)$. As before, $F_0$ has $2^{\cO(k \log k)}$ choices, and therefore, we can compute $A_4$ in time \countbceruntime. 
\end{enumerate}

We have shown that we can compute $A_1, A_2, A_3, A_4$, and consequently $\nsol_{BCE}(G, k, \eta) = A_1 + A_2 + A_3 + A_4$, in time \countbceruntime. We thus have the following result. 

\begin{theorem}\label{thm:counting-editing}
\countbce\ admits an algorithm that runs in time \countbceruntime. 
\end{theorem}

\section{Conclusion}
\label{sec:conclusion}
We studied variants of the widely studied {\sc Cluster Editing} problem, and designed single-exponential time \FPT\ algorithms and polynomial kernels for them. We also designed \FPT\ algorithms for the counting versions of the problems. These results add to a growing body of literature on algorithms for clustering under size or balance constraints. Our work triggers several questions for future research. First, while our kernel for the completion version has $\cO(k)$ vertices, our kernels for the deletion and editing versions have $\cO(k^4)$ and $\cO(k^3)$ vertices, respectively. Can we design $\cO(k)$ kernels for these problems as well? Steinvik~\cite{steinvik2020kernelization} also raised the same question. Second, it is known that {\sc Cluster Editing} and {\sc Cluster Deletion} do not admit sub-exponential time \FPT\ algorithms unless the Exponential Time Hypothesis (ETH) fails~\cite{DBLP:journals/dam/KomusiewiczU12}. This lower bound applies to \bcd\ and \bcc\ as well. But the case of \bcc\ is open: Can we design a sub-exponential time \FPT\ algorithm for \bcc\ or prove that such an algorithm does not exist unless ETH fails? %
Third, can we extend our results to similar edge modification problems under balance constraints, for example, the bipartite counterpart of {\sc Cluster Editing}, called {\sc Bicluster Editing}~\cite{DBLP:conf/tamc/GuoHKZ08,DBLP:journals/ipl/XiaoK22} and   generalizations of {\sc Cluster Editing} such as {\sc $s$-Plex Cluster Editing}~\cite{DBLP:journals/siamdm/GuoKNU10}?  Fourth, apart from these theoretical questions, it would be interesting to evaluate the practical competitiveness of our algorithms, particularly the efficacy of the reduction rules that we use in our kernelization algorithms on real-world  or synthetic instances. 
\bibliography{refs}
\newpage
\appendix
\section{\NPH ness of \bccfull}
\label{sec:completion-nph}
To show that \bcc\ is \NPH, we use a reduction due to Froese et al.~\cite{DBLP:conf/aaai/FroeseKN22}, which shows that a related problem called {\sc Cluster Transformation by Edge Addition} is \NPH. In {\sc Cluster Transformation by Edge Addition}, we are given a cluster graph $G$ and a non-negative integer $k$, and we have to decide if we can add \emph{exactly} $k$ edges to $G$ in such a way that the resulting graph is also a cluster graph. 
Froese et al.~\cite{DBLP:conf/aaai/FroeseKN22} showed that {\sc Cluster Transformation by Edge Addition} is \NPH\ by designing a reduction from a problem called {\sc Numerical 3D-Matching}. We will argue that this reduction also shows the \NPH ness of \bcc. Below, we reproduce the reduction from~\cite{DBLP:conf/aaai/FroeseKN22} and outline its correctness. 

In {\sc Numerical 3D-Matching}, which is known to be strongly \NPH ~\cite{DBLP:journals/siamcomp/GareyJ75}, we are given positive integers $t$, $a_1, a_2,\ldots, a_n$, $b_1, b_2,\ldots, b_n$ and $c_1, c_2,\ldots, c_n$, and we have to decide if there exist bijections $\fn{\alpha, \beta, \gamma}{[n]}{[n]}$ such that $a_{\alpha(i)} + b_{\beta(i)} + c_{\gamma(i)} = t$ for every $i \in [n]$.  
We reduce {\sc Numerical 3D-Matching} to \bcc. Informally, the reduction works as follows. Given an instance of {\sc Numerical 3D-Matching}, we construct a cluster graph $G$ by introducing a ``small'' clique corresponding to each $a_i$, a ``medium'' clique corresponding to each $b_i$ and a ``large'' clique corresponding to each $c_i$. We set $k$ appropriately so that the only way we can turn $G$ into a $0$-balanced cluster graph by adding at most $k$ is by merging together one small clique, one medium clique and one large clique.  

Consider an instance $I = (t, (a_i)_{i \in [n]}, (b_i)_{i \in [n]}, (c_i)_{i \in [n]})$ of {\sc Numerical 3D-Matching}. As the problem is strongly \NPH, we assume without loss of generality that $a_i, b_i, c_i \leq n^d$ for some constant $d$. We also assume that $t > a_i, b_i, c_i$ for every $i \in [n]$ and that $\sum_{i \in [t]} a_i + b_i + c_i = nt$, as otherwise $I$ is clearly a no-instance. 

We construct an instance $(G, k, \eta)$ of \bcc\ as follows. First of all, we set $\eta = 0$. Now, let $A = n^{2d}$, $B = n^{3d}$ and $C = n^{7d}$. For $i \in [n]$, let $a'_i = a_i + A$, $b'_i = b_i + B$ and $c'_i = c_i + C$, and we  add three cliques of sizes $a'_i, b'_i$ and $c'_i$ to $G$; we call these cliques small, medium and large, respectively.  For convenience, we refer to these cliques by their sizes; for example, we may refer to the clique $a'_i$. Let $t' = t + A + B + C$, and we set
\[
k = n \binom{t'}{2} - \card{E(G)} = n \binom{t'}{2} - \sum_{i \in [n]} \Lb{ \binom{a'_i}{2} + \binom{b'_i}{2} + \binom{c'_i}{2} }. 
\]

We now argue that $I$ is a yes-instance of {\sc Numerical 3D-Matching} if and only if $(G, k, \eta)$ is a yes-instance of \bcc. The forward direction is straightforward. Assume  that $I$ is a yes-instance, and let $\fn{\alpha, \beta, \gamma}{[n]}{[n]}$ be bijections such that $a_{\alpha(i)} + b_{\beta(i)} + c_{\gamma(i)} = t$ for every $i \in [n]$. Then  the cluster graph $G'$ obtained from $G$ by merging the three cliques  $a'_{\alpha(i)}$,  $b'_{\beta(i)}$ and   $c'_{\gamma(i)}$ for each $i \in [n]$ is $0$-balanced. And it is straightforward to verify that $\card{E(G') \setminus E(G)} = k$.   

The backward direction is more involved. Assume that $(G', k, \eta)$ is a yes-instance of \bcc, and let $F \subseteq \binom{V(G)}{2}$ be a solution for $(G, k, \eta)$. As $\eta = 0$, the cluster graph $G + F$ is $0$-balanced. To show that $I$ is a yes-instance, we use the following sequence of arguments. 
\begin{enumerate}
    \item For $n \geq 3$ and $d \geq 1$, we have $n^{10d + 1} \leq k \leq 2n^{d + 1}$~\cite[Lemma 7]{DBLP:conf/aaai/FroeseKN22}.  
    
    \item\label{item:no-two-large} As a consequence, no two large cliques can be merged together, because each large clique has size $\Omega(C) = \Omega(n^{7d})$, and merging together two large cliques needs the addition of $\Omega(n^{14})$ edges~\cite[Lemma 8]{DBLP:conf/aaai/FroeseKN22}. 
    
    \item\label{item:at-least-n} This implies that the graph $G + F$ has at least $n$ components (one corresponding to each large clique of $G$).
    \item\label{item:small-medium} As $G + F$ is $0$-balanced, we must merge each small clique and each medium clique with a large clique. To see this, notice that the total number of vertices in all the small and medium cliques together is $\sum_{i \in [n]} (a'_i + b'_i) = \sum_{i \in [n]} (a_i + A + b_i + B) = nA + nB + \sum_{i \in [n]} (a_i + b_i) = \cO(n^{2d + 1} + n^{3d + 1} + n^{d + 1}) = \cO(n^{3d + 1})$,  whereas each large clique has size $\Omega(C) = \Omega(n^{7d})$. Item~\ref{item:no-two-large} now implies that each small clique is merged with exactly one large clique, and similarly each medium clique is merged with exactly one large clique. %
    \item\label{item:exactly-n} Items~\ref{item:at-least-n} and \ref{item:small-medium} together imply that $G + F$ has exactly $n$ components. 

    \item\label{item:exactly-t'} As $\card{G + F} = \card{G} = \sum_{i \in [n]} (a'_i + b'_i + c'_i) = \sum_{i \in [n]} (a_i + A + b_i + B + c_i + C) = n(A + B + C + t) = nt'$, and $G + F$ is $0$-balanced with exactly $n$ components, we can conclude that each component of $G + F$ has size exactly $t'$. %

    \item\label{item:compatible} But then, as $t' = t + A + B + C = t + n^{2d} + n^{3d} + n^{7d}$ and $t \leq 3n^{d}$, each component of $G + F$ must be formed by merging together one small, one medium and one large clique~\cite[Lemma 13]{DBLP:conf/aaai/FroeseKN22}. In other words, there exist bijections $\fn{\alpha', \beta', \gamma'}{[n]}{[n]}$ such that for each $i \in [n]$, we have $a'_{\alpha'(i)} + b'_{\beta'(i)} + c'_{\gamma'(i)} = t'$

    \item Based on Item~\ref{item:compatible}, it is  straightforward to argue that $I$ is a yes-instance~\cite[Proof of Theorem 6]{DBLP:conf/aaai/FroeseKN22}. For each $i \in [n]$, we have $a_{\alpha'(i)} + b_{\beta'(i)} + c_{\gamma'(i)} = a'_{\alpha'(i)} - A + b'_{\beta'(i)} - B + c'_{\gamma'(i)} - C = t' - A - B - C = t$. This shows that $I$ is a yes-instance of {\sc Numerical 3D-Matching}.    
\end{enumerate}

\end{document}